\documentclass[12pt]{article}

\usepackage{amsmath,amsthm,amsfonts,amscd,latexsym,amssymb,amsbsy,dsfont,mathrsfs,color}
\usepackage[small, centerlast, it]{caption}
\usepackage{epsfig}
\usepackage[titles]{tocloft}
\usepackage[latin1]{inputenc}
\usepackage{verbatim} 
\usepackage[active]{srcltx} 
\usepackage{fancyhdr}
\usepackage{caption}
\usepackage{enumerate}
\usepackage{color}
\usepackage{hyperref}
\hypersetup{
    hidelinks
}

\definecolor{NiColor}{RGB}{77,77,255}
\definecolor{NiColoRed}{RGB}{255,77,77}
\definecolor{NiCitation}{RGB}{77,255,77}

\def\sp{\hskip -5pt}
\def\spa{\hskip -3pt}

\newsymbol\rest 1316    
\def\emptyset{\varnothing} 
\def\b1{{1\!\!1}}
\def\pperp{{\perp\!\!\!\!\perp}}

\def\cA{{\ca A}}
\def\cB{\mathscr{B}}

\def\cD{\mathscr{D}}
\def\cF{{\ca F}}

\def\cH{{\ca H}}

\def\cS{\mathscr{S}}

\def\cT{{\ca T}}

\def\cW{{\ca W}}
\def\cK{{\ca K}}

\def\sA{{\mathsf A}}

\def\sK{{\mathsf K}}

\def\sH{{\mathsf H}}

\def\sS{{\mathsf S}}

\def\sV{\mathsf{V}}
\def\sT{{\mathsf T}}

\def\bC{{\mathbb C}}           

\def\bN{{\mathbb N}}
\def\bM{{\mathbb M}}

\def\bR{{\mathbb R}}

\def\gA{{\mathfrak A}}       
\def\gB{{\mathfrak B}}

\def\gD{{\mathfrak D}}

\def\gF{{\mathfrak F}}
\def\gG{{\mathfrak G}}

\def\gM{{\mathfrak M}}

\def\gR{{\mathfrak R}}
\def\gS{{\mathfrak S}}

\def\gW{{\mathfrak W}}

\def\beq{\begin{eqnarray}}
\def\eeq{\end{eqnarray}}

\newcommand{\ca}[1]{{\cal #1}}         

\usepackage{sectsty}
\sectionfont{\normalsize}
\subsectionfont{\normalsize\normalfont\itshape}
\newtheoremstyle{TheoremStyle}
{3pt}
{3pt}
{\slshape}
{}
{\bf}
{:}
{.5em}
{}

\theoremstyle{TheoremStyle}
\newtheorem{theorem}{Theorem}[section]
\newtheorem{corollary}[theorem]{Corollary}
\newtheorem{proposition}[theorem]{Proposition}
\newtheorem{lemma}[theorem]{Lemma}
\newtheorem{definition}[theorem]{Definition}
\newtheorem{remark}[theorem]{Remark}
\newtheorem{notation}[theorem]{Notation}

\begin{document}


\hfill{\sl June 2026} 
\par 
\bigskip 
\par 
\rm


\par
\bigskip
\large
\noindent
{\bf  Spatial Localization of Relativistic Quantum Systems: The Com-
mutativity Requirement and the Locality Principle. \\Part II: A Model from Local QFT}
\bigskip
\par
\rm
\normalsize 


\noindent  {\bf Valter Moretti$^{a}$ }\\
\par

\noindent 
  Department of  Mathematics, University of Trento, and INFN-TIFPA \\
 via Sommarive 14, I-38123  Povo (Trento), Italy.\\
 $^a$valter.moretti@unitn.it, corresponding author\\

 \normalsize

\par

\rm\normalsize

\rm\normalsize


\par

\begin{abstract}
This paper is the second and final part of a two-part study. We construct a class of positive-energy relativistic spatial localization observables in Minkowski spacetime within the standard framework of quantum field theory, based on the stress--energy--momentum tensor smeared with suitable test functions. For each fixed timelike direction, the construction yields a family of positive operator-valued measures (POVMs) defined on spacelike hypersurfaces, which are well defined on each $n$-particle sector and satisfy a natural relativistic causality condition ruling out superluminal propagation of detection probabilities.
The proposed localization observables arise from local or quasi-local quantum-field-theoretic quantities, thereby providing a rigorous realization of previously heuristic constructions. In the one-particle sector, the scheme reduces to the observable introduced by the author in previous literature, and its first moment reproduces the Newton--Wigner position operator under suitable normalization and centering conditions.
Since the Reeh--Schlieder theorem implies that the normally ordered stress--energy--momentum tensor need not be positive on the full Fock space, we analyze the role of quantum energy inequalities and establish lower bounds that allow us to control deviations from positivity. This leads to the introduction of regularized families of operators, bounded from below, that approximate the localization effects.
We further construct conditional localization observables associated with finite laboratories by means of suitably modified local energy operators. In particular, by Haag duality, the resulting conditional POVMs are shown to belong to local von Neumann algebras and hence to commute when associated with causally separated regions, in agreement with the Araki--Haag--Kastler framework.
Our results provide a quantum-field-theoretic implementation of the idea that commutativity of localization observables should be recovered at the level of conditional measurements in spacetime regions of finite extent.\end{abstract}
\tableofcontents

\section{Introduction}
 \subsection{Localization, Causality, Commutativity, and QFT}
In \cite{Mor126}, we addressed the issue of commutativity of effects for POVMs describing the spatial localization of a quantum system in the rest space of a reference frame in Minkowski spacetime. The problem stems from the analysis of spatial localization by Halvorson and Clifton \cite{HC}, who showed that natural assumptions such as positive energy, additivity, etc., underlying any notion of localization are incompatible with the commutativity of the effects $A(\Delta)$, $A(\Delta')$ associated with spacelike separated regions $\Delta$, $\Delta'$, thus appearing to be in tension with the description of microcausality in the Araki--Haag--Kastler (AHK) framework of local quantum physics: $[A(\Delta), A(\Delta')] \neq 0$ in general.
Without entering into the details of specific localization observables, and drawing on an analysis of relativistic causality largely inspired by Busch and collaborators \cite{Buschloc1,Buschloc2,B07}, we showed in \cite{Mor126} that the requirement of commutativity is in fact not justified for observables describing the position of a relativistic quantum {\em particle}. Here, a particle is understood as a quantum system endowed with the propensity to assume a definite position when subjected to a complete detection procedure in which the entire rest space $\Sigma$ of an observer---more generally, a Cauchy surface---is filled with detectors. These detectors are represented by a POVM associated with the Borel sets $\Delta \in \cB(\Sigma)$ of the space: $\cB(\Sigma)\ni \Delta \mapsto A(\Delta)$. In this setting, there is no compelling reason to require commutativity of the effects associated with causally separated spatial regions because standard arguments based on no-signaling or relativistic consistency are not triggered. 
In that analysis, we worked within a framework {\em more basic} than the AHK one, without assuming that physically localized sets of observables carry any a priori algebraic structure. From the AHK perspective, the effects of the POVMs under consideration are therefore not elements of local operator algebras. When the AHK formalism is assumed, this conclusion can also be justified {\em a posteriori}, in particular in light of certain consequences of the Reeh--Schlieder theorem when one imposes that the localization probability of the vacuum state is zero.

On the other hand, in \cite{Mor126} we suggested that more realistic experimental situations can be considered, in which detectors are switched on only within a finite-size laboratory and one considers {\em conditional localization POVMs} $\cB(\Delta_0)\ni \Delta \mapsto B_{\Delta_0}(\Delta)$. {\em That is, these POVMs measure the probability of detecting a particle in a spatial subregion $\Delta$, given that it is detected somewhere in the finite rest space of the laboratory $\Delta_0 \supset \Delta$}. In these cases \cite{Mor126}, commutativity could in principle be recovered in the following sense: two effects $B_{\Delta_0}(\Delta)$ and $B_{\Delta'_0}(\Delta')$ associated with regions in causally separated laboratories with respective rest spaces $\Delta_0$, $\Delta'_0$ may commute:
\beq [B_{\Delta_0}(\Delta), B_{\Delta'_0}(\Delta')]=0\:.\label{Xero} \eeq
More precisely, they could be represented by elements of local operator algebras $B_{\Delta_0}(\Delta) \in \gW({\cal O})$, $B_{\Delta'_0}(\Delta') \in \gW({\cal O}')$ with $\Delta \subset {\cal O}$, $\Delta' \subset {\cal O}'$, in accordance with the AHK framework.

 Assume that a localization observable is given in terms of effects $A(\Delta)$ with $\Delta$ any Borel region of any complete rest space $\Sigma$, and the bounded $\Delta_0$ defines the spatially finite rest space of a laboratory.  Taking advantage of the so-called  {\em gentle measurement lemma}, it was suggested in \cite{Mor126} that the POVM normalized on $\Delta_0$  
\beq B_{\Delta_0}(\Delta) = V \frac{1}{\sqrt{A(\Delta_0)}} A(\Delta) \frac{1}{\sqrt{A(\Delta_0)}} V^\dagger \:, \quad \Delta \subset \Delta_0 \label{oneG}\eeq
where $V$ is a given unitary operator, has the properties of a conditional localization POVM  for states whose probability of finding the system in $\Delta_0$ (measured with the effect $VA(\Delta_0)V^\dagger$) is close to $1$.
This interpretation makes sense in the general case where the operators $A(\Delta)$ are positive and define a positive-operator-valued measure on $\Sigma$, but are not necessarily effects, i.e., bounded by $I$.

 The present paper has two main aims. First, we show that positive-energy  localization observables can be constructed within the standard formalism of QFT, in particular by exploiting the stress--energy--momentum tensor operator $:\spa \hat{T}_{\mu\nu}\spa :[f] = \int_{\bM} :\spa \hat{T}_{\mu\nu}\spa :(x) f(x) d^4x$ smeared with test functions $f \in \cD(\bM)$. 
This idea is consistent with a previous result in which a positive-energy localization observable was constructed {\em on the one-particle space} of a real scalar quantum field by restricting the stress--energy tensor to that space \cite{M23}, thereby placing an idea originally proposed by Terno on rigorous mathematical grounds and extending it \cite{Terno}. That construction was also shown \cite{M23,DM24} to satisfy a basic causality requirement due to Castrigiano \cite{Castrigiano2} who extended and generalized the causality constraints studied by Hegerfeldt in his celebrated works \cite{Hegerfeldt,Hegerfeldt2}.

For future reference, observe that, 
if $\bR^4 \ni x \mapsto U_x$ is the unitary representation of the translation group of $\bM^4$, then  
$U_x :\spa \hat{T}_{\mu\nu}\spa: [f] U_x^\dagger = :\spa \hat{T}_{\mu\nu}\spa: [f_x]$ where  
$f_x(y) := f(y-x)$. 
Concerning the generalization developed in this work, we consider an $n$-particle state $\Psi\in \cH^{(n)}$ with $n>0$. 
For every Borel set $\Delta\subset \Sigma$, where $\Sigma$ is a spacelike $3$-plane of Minkowski spacetime $\bM$, the effects $A_f^u(\Delta)$ of the notion of spatial localization we construct satisfy, where $u_\Sigma$ is the future-pointing unit vector normal to $\Sigma$,
$$ \langle \Psi| A_f^u(\Delta)\Psi\rangle =\lim_{\epsilon\to 0^+} \left\langle \Psi \left| \frac{1}{\sqrt{H^u+ \epsilon I}} \int_{\Delta} :\spa \hat{T}_{\mu\nu}\spa: [f_x^2] \: u^\mu u^\nu_\Sigma\: d\Sigma(x) \frac{1}{\sqrt{H^u+ \epsilon I}}\right.\Psi \right\rangle$$ 
for a choice of a future-directed unit timelike vector $u$ defining the frame of the detectors, and $f\in \cD_\bR(\bM)$ such that $\int_\bM f^2 d^4x =1$. Here, $H^u$ is the Hamiltonian operator in the $u$-direction, and the $\epsilon$-regularization is necessary since the Minkowski vacuum $\Omega$ lies in the kernel of this operator. In the identity above, the effect $A_f^u(\Delta)$ on the left-hand side is defined independently of the integral operator on the right-hand side, and the identity is valid when considering {\em expectation values}: it is generally false for generic off-diagonal matrix elements.

The family $A^u_f(\Delta)$ is a well-behaved positive-energy localization observable in each space $\cH^{(n)}$, in particular for one-particle states. Notably, the integral for $\Delta= \Sigma$ is normalized independently of the choice of $f\in \cD_\bR(\bM)$, provided $\int_\bM f^2 d^4x =1$. 
The causality requirement denoted by CC and already established in \cite{M23,DM24} also holds for the localization observables constructed here. We further analyze the physical meaning of the notion of localization introduced above, showing that it is associated with the center of $u$-energy of the field and that, for one-particle states, it reduces to the observable already introduced in \cite{M23}. In the one-particle case, one can say even more: if one takes $f(x^0,\vec{x}) = h'(x^0)h(\vec{x})$ with $f^2$ centered as specified later, then, in the non-relativistic large-mass limit, $h$ determines the precision of a von Neumann measurement scheme for detecting a single particle, whereas $h'$ can be chosen independently within the centered class (for instance even functions around the chosen time origin). 
 
A problem with the family of operators $ \frac{1}{\sqrt{H^u+ \epsilon I}} \int_{\Delta} :\spa \hat{T}_{\mu\nu}\spa: [f_x^2] \: u^\mu u^\nu_\Sigma\: d\Sigma(x) \frac{1}{\sqrt{H^u+ \epsilon I}}$ is that they do not define effects on the whole Fock space $\gF_s(\cH^{(1)})$, since the positivity condition 
$:\spa \hat{T}_{\mu\nu}\spa: [f] u_\mu v_\nu \geq 0$, for $u,v$ timelike and future-directed, 
fails as a consequence of the Reeh-Schlieder theorem (though they are bounded from below). This issue will be carefully analyzed from the perspective of quantum energy inequalities. Relying on known general results by Fewster and collaborators \cite{F2,FS08,F12}, we prove in particular that, given $u\in \sT_+$, for every $\eta>0$, it is possible to modify the temporal part $h'$ by enlarging its support in the smearing function $f$ in such a way that there is a constant $c^u_{\eta, f} >0$ such that $\left(:\spa \hat{T}_{\mu\nu}\spa :[f_x^2] - \eta g_{\mu\nu} I \right) u^\mu v^\nu \geq c^u_{\eta, f} |u\cdot v| I$ for every $v\in \sT_+$ and every $x\in \bM$. Also note that $J_\mu(x) := -\left(:\spa \hat{T}_{\mu\nu}\spa :[f_x^2]- \eta g_{\mu\nu} I \right) u^\mu$ is a conserved, causal, and future-directed (where it does not vanish) current. 
(The sign $-$ also in front of $\eta$ is due to the fact that we are adopting the signature $-,+,+,+$.)

Based on this result we pass to the second main goal of this work. The paper aims to construct conditional localization observables using a slightly modified (in order to remove negative energies) local energy operator, as indicated above.
Focusing on local-energy symmetric operators
$$\sH^u_{f, \eta}(\Delta) := \int_{\Delta} ( :\spa \hat{T}_{\mu\nu}\spa: [f_x^2] - \eta g_{\mu\nu} I) u^\mu u^\nu_\Sigma \:d\Sigma(x)$$
we use their closures (local Hamiltonians) $\overline{{\sH}}^u_{f,\eta}(\Delta):=\overline{{\sH}^u_{f,\eta}(\Delta)}$; these are the unique selfadjoint extensions as a consequence of a known result by Sanders \cite{Ko13} about essential selfadjointness of the renormalized stress--energy tensor in static spacetimes. Functional calculus for these selfadjoint closures produces a physically meaningful definition of local conditional POVMs of this type
$$B_{\Delta_0}(\Delta) = \overline{\frac{1}{\sqrt{\overline{{\sH}}^u_{f, \eta}(\Delta_0)}} \sH^u_{f, \eta}(\Delta) \frac{1}{\sqrt{\overline{{\sH}}^u_{f, \eta}(\Delta_0)}}}\:, \quad \Delta \subset \Delta_0\:. $$ At this point, 
{\em Haag duality}, together with a careful analysis of affiliation of the above  local Hamiltonians  with von Neumann algebras, eventually yields the desired commutativity result (\ref{Xero}) when $\Delta_0$ and $\Delta_0'$ are strongly causally separated.
More precisely, we prove that the operators $B_{\Delta_0}(\Delta)$ belong to local von Neumann algebras $\gW({\cal O})$ obtained from local Weyl algebras, in agreement with the AHK perspective.

We also prove a further result.
Defining positive operators
$$ \sA_{f,\epsilon,\eta}^u(\Delta) :=\frac{1}{\sqrt{H^u+ \epsilon I}} \int_{\Delta}  (:\spa \hat{T}_{\mu\nu}\spa: [f_x^2] - \eta g_{\mu\nu} I ) u^\mu u^\nu_\Sigma \: d\Sigma(x) \frac{1}{\sqrt{H^u+ \epsilon I}} $$
they allow one to approximate the localization effects $A^u_{f}(\Delta)$ with arbitrary precision in a laboratory based on the finite rest space $\Delta_0$, since
$$ |tr(\rho A^u_{f}(\Delta)) - \lim_{\epsilon\to 0^+} tr(\rho \sA^u_{f, \epsilon,\eta}(\Delta)) | \leq \eta \frac{|u\cdot u_\Sigma | |\Delta_0|}{m}$$
for every $n$-particle state $\rho$ with $n>0$, where $m$ is the mass of the particle, and $\Delta  \subset \Delta_0$ measurable.
We establish the following relation, in agreement with (\ref{oneG}),
$$B_{\Delta_0}(\Delta)^u_{f,\eta} = V^u_{f,\epsilon,\eta, \Delta_0} \frac{1}{\sqrt{\sA^u_{f, \epsilon,\eta}(\Delta_0)}} \sA^u_{f, \epsilon,\eta}(\Delta) \frac{1}{\sqrt{\sA^u_{f, \epsilon,\eta}(\Delta_0)}} V^{u\dagger}_{f,\epsilon,\eta,\Delta_0}\:,$$
valid for some unitaries $ V^u_{f,\epsilon,\eta,\Delta_0}$.

In the recent literature, there have been other analyses aimed at reconciling the notion of locality in QFT with the concept of spatial localization for relativistic quantum systems. We mention in particular two works written in a more physics-oriented style. One of them is the comparative study \cite{FC24}, where various aspects of the apparent violation of locality and causality are analyzed. The other is \cite{Tu}, where $(1+1)$-dimensional QFT is considered for both bosons and fermions, making use of the reduced density matrix formalism. In particular, it is shown there that scalar particles cannot be localized within any compact region.

The style of this work is deliberately elementary, pedagogical, and, as far as possible, self-contained with respect to free QFT in Minkowski spacetime. The relevant notions, including, in particular, some basic foundational aspects of the Araki--Haag--Kastler approach, such as the {\em Reeh-Schlieder property} and {\em Haag duality}, are introduced progressively. It is nevertheless assumed that the reader is familiar with the mathematical notions of $*$-algebra, $C^*$-algebra, von Neumann algebra, and general spectral theory. The broader goal is to bring the community working on quantum measurement theory closer to the community working on local quantum field theory.

\subsection{Structure of this work}

After a subsection devoted to listing the fundamental notions and notation used in this work, Section \ref{SEZloc}
provides a brief review of the notions introduced in \cite{Mor126} regarding localization observables in terms of POVMs and conditional localization observables.
Section \ref{QFTsec} introduces the basic notions of QFT in Minkowski spacetime for a free real massive scalar quantum field. Section \ref{SEC2agg} presents several crucial concepts from mathematical physics, such as the smeared normally ordered stress-energy tensor operator and its fundamental properties concerning locality and positivity.
Section \ref{SEC3agg} is devoted to the construction of a relativistic localization observable from the stress-energy operator and to the analysis of its fundamental mathematical and physical properties.
The final section \ref{SECFagg}, before the conclusions stated in Section \ref{CONC}, focuses on conditional localization POVMs arising from a local notion of the energy operator. In particular, we prove that these POVMs belong to local von Neumann algebras, as expected in the AHK framework, and in particular that they commute when associated with causally separated laboratories. Furthermore, we establish a relation between these POVMs and the relativistic localization observables defined on the whole spacetime introduced in the previous section.
The appendix contains several technical proofs of intermediate statements appearing in the main text.
\subsection{Notions and notations} \label{secdefgen} 
The reader may initially skip this section and come back to it later when necessary.

We assume $c=1$ and $\hbar=1$ throughout the rest of this paper, and the notation $A\subset B$ allows the case $A=B$. The Hilbert spaces we consider are complex, and symmetric operators are densely defined by definition.\\ 
 
\noindent {\bf A. Minkowski spacetime}. 
A four-dimensional real affine space whose space of translations $\sV$ is equipped with a bilinear, non-degenerate, symmetric form $g$ with signature $(-,+,+,+)$ is the {\bf Minkowski spacetime} $\bM$. The points of $\bM$ are called {\bf events} and $g$ is called the {\bf Minkowski metric}. We shall make use of the notation $u\cdot v :=g(u,v)$ if $u,v \in \sV$. We also use the dot to indicate the standard (positive) scalar product of $3$-vectors $\vec{u}\cdot \vec{v}$, viewing them as spacelike vectors (see below). 
Upon choosing an origin $o\in \bM$, the points $p\in \bM$ are in one-to-one correspondence with vectors of $\sV$ through the map $\bM \ni p \mapsto p-o \in \sV$. We shall take advantage of this identification several times in the rest of this paper. If $p\in \bM$ and $v\in \sV$, $q=p+v$ means that $q-p=v$.  
 
A vector $v\in \sV$ 
is {\bf spacelike} if $g(v,v)>0$ or $v=0$. 
 It is {\bf causal} if $g(v,v)\leq 0$ and $v\neq 0$. A causal vector $v$ is {\bf timelike} if $g(v,v) <0$, 
or {\bf lightlike} if $g(v,v) =0$. 
Smooth curves are classified analogously according to their tangent vectors if their type is constant along the curve. 
  
A set $\Lambda\subset \bM$ is {\bf achronal} if $p-q$ cannot be timelike for $p,q\in \Lambda$. A {\bf maximal achronal set} is an achronal set that is not a proper subset of another achronal set. $\Lambda \subset \bM$ is {\bf spacelike} if $p-q$ is spacelike 
 for $p,q\in \Lambda$. 
 
The set of timelike vectors is an open cone made up of two disjoint open connected halves. A choice of one of them $\sV_+$ defines a {\bf time orientation} of 
$\bM$. In what follows, $\bM$ is assumed to be {\bf time oriented}: $\sV_+\subset \sV$ is the open cone of {\bf future-directed timelike vectors}. $\overline{\sV_+}\setminus\{0\}$ is the cone of {\bf future-directed causal vectors}. Notice that if $v \in {\sV_+}$ and $u$ is causal, then $u \in \overline{\sV_+}$ if and only if $g(u,v) < 0$. 
We finally define the set of unit timelike future-directed vectors $\sT_+ := \{u\in \sV_+ \:|\: g(u,u)=-1\}$. 
 
If $A\subset \bM$, the {\bf causal future} $J^+(A):= \{p\in \bM \:|\: p-q \in \overline{\sV_+} \quad \mbox{for some $q\in A$}\}$ represents the events of $\bM$ in the future of $A$ that can be physically influenced by $A$. 
The {\bf causal past} $J^-(A):= \{p\in \bM \:|\: q-p \in \overline{\sV_+} \quad \mbox{for some $q\in A$}\}$ is defined symmetrically. 
 
$A,B \subset \bM$ are said to be {\bf causally separated} if $(J^+(A)\cup J^-(A)) \cap B= \emptyset$ (which is equivalent to $(J^+(B)\cup J^-(B)) \cap A= \emptyset$). 
 
If $R\subset \bM$, its {\bf causal complement} and {\bf causal completion} are, respectively,  \beq R^\pperp := \bM \setminus (J^+(R)\cup J^-(R)) \quad \mbox{and}\quad (R^{\pperp})^{\pperp}\:.\label{CComp}\eeq    It is easy to prove that $R\subset (R^{\pperp})^{\pperp}$. Furthermore, if $R$ is a subset of a flat spacelike 3-plane $\Sigma$, then $(R^{\pperp})^{\pperp}$ turns out to consist of the points $p\in \bM$ such that every causal {\em straight line} passing through $p$ meets $R$ somewhere\footnote{  
This is equivalent, in $\bM$, to the set of points $p$ such that every inextensible causal curve passing through $p$ also meets $R$ somewhere. In a generic spacetime $M$, this latter set is also called the {\em domain of dependence} of $R$ (usually defined when $R$ is achronal).}.

A {\bf  Minkowskian reference frame} -- physically representing an {\bf inertial reference frame} or an {\bf observer} -- is a unit timelike vector $n\in \sT_+$. 
A Cartesian coordinate system $\psi:  \bM \ni p \mapsto (x^0, x^1,x^2,x^3)\equiv (x^0, \vec{x})\in \bR \times \bR^3$ with origin $o\in \bM$ and axes $e_0,e_1,e_2,e_3 \in \sV$, 
is a {\bf Minkowskian coordinate system} if the basis $\{e_0,e_1,e_2,e_3\}= \{\partial_{x^0},\partial_{x^1},\partial_{x^2},\partial_{x^3}\}$ is $g$-{\bf orthonormal}: 
$g(e_a,e_b)= g(\partial_{x^a},\partial_{x^b})= g_{ab}$, where $[g_{ab}] = \eta := diag(-1,1,1,1) $ and  
$e_0=\partial_{x^0}\in \sT_+$. 
The Minkowskian coordinate system $\psi$ is {\bf adapted} to or {\bf comoving} with $u\in \sT_+$ if $\partial_{x^0}=u$. 
 
Given Minkowskian coordinates, vectors are decomposed as $\sV \ni v \equiv (v^0, \vec{v})$ where, according to the {\em Einstein summation convention} we adopt henceforth, $v=  
v^\mu e_\mu= v^\mu \partial_{x^\mu}$ so that 
$g(u,v)=u\cdot v = -u^0v^0 + \vec{u} \cdot \vec{v}\:.$ 
Here $v^0$ is called the {\bf temporal component} of $v$ and $\vec{v}$ are called the {\bf spatial components} of $v$ with respect to the chosen Minkowskian reference frame (or Minkowskian coordinate system). 
 
We shall take advantage of tensorial notation and of the raising- and lowering-index procedure, so that, for instance, in Minkowskian coordinates, $p_\mu = g_{\mu\nu}p^\nu$, ${T^\mu}_\nu = T^{\mu\alpha} g_{\alpha \nu}$. 
 
A {\bf rest space} of a Minkowski reference frame $u\in \sT_+$ is an affine $3$-plane $\Sigma\subset \bM$ $g$-normal to $u$, written $u\perp \Sigma$. 
If $o\in \bM$ is a given origin, the family of rest spaces of $u\in \sT_+$ is labeled by the time at which they occur in the chosen reference frame,  
$t_{o,\Sigma} =-(p-o)\cdot u$, which does not depend on $p\in \Sigma$.  
 Every spacelike affine $3$-plane is the rest space of a Minkowskian reference frame, indicated by $u_\Sigma$ and said to be {\bf adapted} to $\Sigma$, at some time. $u_\Sigma$ is the future-directed unit normal vector to $\Sigma$, which is necessarily timelike.  

A rest space meets exactly once every {\bf straight line} $p(r)= p_0 + ru$, $r\in \bR$, parallel to any given causal vector $u \in \overline{\sV_+}\setminus\{0\}$ and passing through any given $p_0\in \bM$. That is because a rest space is a {\em spacelike smooth Cauchy surface} \cite{Oneill} of $\bM$: more generally, it meets exactly once every {\em inextendible smooth causal curve}. 

$\cB(\Sigma)$ denotes the family of Borel subsets of a spacelike $3$-plane $\Sigma$ and $\cB_b(\Sigma)\subset \cB(\Sigma)$ the subfamily of bounded elements, boundedness being equivalently referred to any of the coordinate systems just described. 
 $d\Sigma$ denotes the natural translationally invariant Borel measure on the rest space $\Sigma$ of $u_\Sigma$, which coincides with   
the Lebesgue measure on the $\bR^3$ space of the spatial coordinates of any Minkowskian coordinate system $x^0,x^1,x^2,x^3$ comoving with $u_\Sigma$. It is easy to prove that $d\Sigma$ does not depend on the choice of such a Minkowskian coordinate system. 
We use the notation $|\Delta| = \int_\Sigma \chi_\Delta(x) d\Sigma(x)= \int_{\bR^3} \chi_\Delta(x) d^3x$ for the Lebesgue measure of  $\Delta \in \cB(\Sigma)$.  
 
The Lie group of metric-preserving affine maps $h: \bM \to \bM$ is known as the {\bf Poincar\'e group} $IO(1,3)$. The Lie subgroup of affine maps that also preserve the time orientation is called the {\bf orthochronous Poincar\'e group} $IO(1,3)_+$.  
The subgroup of $IO(1,3)_+$ that leaves fixed an arbitrarily chosen origin\footnote{Different choices of $o$ give rise to isomorphic definitions and the component $\Lambda$ of an element of $IO(1,3)_+$ does not depend on the choice of $o$.} $o\in \bM$ is the {\bf orthochronous Lorentz group} $O(1,3)_+$. 
 Elements $h\in IO(1,3)_+$ are in one-to-one correspondence with the pairs $(\Lambda, v)$ where $v\in \sV$ and $\Lambda \in O(1,3)_+$, and the action of $h$ on a point $q\in \bM$ is 
$ (\Lambda, v) q = o + v+ \Lambda (q-o)$.
In a given Minkowskian coordinate system centered at $o\in \bM$, the transformations $\Lambda \in O(1,3)_+$ are in one-to-one correspondence with the matrices, denoted by the same symbol,  
$\Lambda  
\in GL(4,\bR)$ 
such that $\Lambda^t \eta \Lambda =\eta$ and 
${\Lambda^0}_0 > 0$, where $\eta = diag(-1,1,1,1)$ as above.
 
The subgroup of $IO(1,3)_+$ known as the {\bf proper orthochronous Poincar\'e group} $ISO(1,3)_+$ is obtained by replacing $O(1,3)_+$ with the {\bf proper orthochronous Lorentz group} $SO(1,3)_+$. 
The latter, representing $O(1,3)_+$ in a Minkowskian coordinate system as above, is constructed by restricting to the Lorentz matrices with $\det \Lambda >0$. \\

\noindent {\bf B. Quantum observables in Hilbert space}. For every set ${\cal S}\subset \cH$, the latter being a Hilbert space, the  
{\bf span} of ${\cal S}$, denoted by $span \: {\cal S}\subset \cH$, is the subspace consisting of all complex {\em finite} linear combinations of elements of ${\cal S}$. 
 
Operators in Hilbert space have their own domains, $A: D(A)\to \cH$, where $D(A)\subset \cH$ is a subspace. The operations $A+B$, $AB$, $aA$ ($a\in \bC$) are defined on their standard domains: 
$D(A+B):= D(A)\cap D(B)$, $D(AB):= \{x\in D(B) \:|\: Bx \in D(A)\}$, $D(aA):=D(A)$ unless $a=0$, in which case $D(0A):=D(0) := \cH$. 
$A\geq 0$ means that $\langle \psi|A\psi \rangle \geq 0$ if $\psi \in D(A)$. In that case we say that the operator $A$ is {\bf positive}.   
 
$\gB(\cH)$ denotes the $C^*$-algebra of bounded operators $A:\cH\to \cH$. If $A,B \in \gB(\cH)$, $A\geq B$ (equivalently $B\leq A$) means that $A-B\geq 0$.

The (generalized) notion of observable that we shall use throughout is that of a {\bf Positive Operator-Valued Measure} (POVM) on a Hilbert space $\cH$. It is a map  
$\Sigma(X) \ni B \mapsto E(B)$, where $\Sigma(X)$ is a $\sigma$-algebra over the set $X$, each $E(B)\in \gB(\cH)$ is an {\bf effect}, i.e., satisfies $0\leq E(B)\leq I$, together with the {\bf normalization condition} $E(X)=I$, and, finally, the requirement that, 
for every $\psi \in \cH$, the associated map $\Sigma(X) \ni B \mapsto \langle \psi|E(B)\psi\rangle$ is $\sigma$-additive and therefore is a positive measure on $X$, which is a probability measure if $||\psi||=1$. By the positivity of the operators involved, this condition is equivalent to the strong $\sigma$-additivity of the map $\Sigma(X) \ni B \mapsto E(B)$, which, obviously, is also additive. 
$X$ is interpreted as the set of outcomes of the observable defined by the POVM $\Sigma(X) \ni B \mapsto E(B)$.  
 
Generally, mixed states $\rho$ are trace-class operators $\rho \in \gB_1(\cH)$, positive ($\rho \ge 0$), and normalized ($\mathrm{tr} \rho = 1$). The convex body of states will be denoted by $\sS(\cH)$. 
A special class of states is given by one-dimensional projectors $\rho = |\psi\rangle\langle \psi|$ for unit vectors, $\psi \in \cH$.  
These are pure states, i.e., extremal elements in the space of states if the von Neumann algebra of observables is the whole $\gB(\cH)$.
 
For a state $\rho\in \sS(\cH)$, $tr(E(B)\rho)= tr(\rho E(B))$ is interpreted as the probability of obtaining an outcome in $B$ when the system is in the state $\rho$. 
 
If $\Sigma(\bR)$ is the Borel $\sigma$-algebra $\cB(\bR)$ and all the effects $E(B)$ are orthogonal projectors, we have a standard {\bf Projector-Valued Measure} (PVM). As is well known, every PVM is in one-to-one correspondence with a selfadjoint operator $\hat{E}= \int_{\bR} xE(dx)$ through the spectral theorem. In this sense, a POVM is a generalized observable. 
 
In the special case where $\Sigma(X)$ is the power set of a countable set $X$, a POVM on $X$ is completely determined by the special effects $E_x:= E(\{x\})$ with $x\in X$. 
As a general textbook reference on this mathematical technology applied to physics, we suggest \cite{Buschbook}. References on general spectral theory as applied to physics that we shall use are \cite{Moretti1,Moretti2}. We assume the reader is familiar with basic properties of von Neumann algebras \cite{Takesaki,L} \\

\noindent {\bf C. Basic elements of the AHK approach}.
In the {\em Araki--Haag--Kastler (AHK) formalism} for local quantum theories in Minkowski spacetime $\bM$ \cite{Haag,Araki}, the von Neumann algebra $\gA$ of physically relevant operators of a quantum system described on the Hilbert space $\cH$ is generated by {\em local von Neumann algebras} $\gA({\cal O})$. 
There is a local von Neumann algebra $\gA({\cal O})$ for every open bounded set ${\cal O} \subset \bM$. Each such algebra contains operations and observables that are physically associated with ${\cal O}$: the corresponding physical operations and measurements are performed there.
More precisely, local observables are represented by selfadjoint operators that belong to these local algebras in the bounded case, or are {\em affiliated} with these local algebras in the unbounded case.
 The identity operator $I$ is, of course, common to all local algebras.
If ${\cal O}$ is open but not bounded, $\gA({\cal O})$ is the von Neumann algebra generated by the family of bounded open subsets of ${\cal O}$. {\em Isotony} holds:
$\gA({\cal O})\subset \gA({\cal O}_1)$ if ${\cal O}\subset {\cal O}_1$.

One of the fundamental assumptions is {\em relativistic locality}: operators belonging to algebras associated with causally separated regions must commute,
$$[A_1,A_2]=0\quad \mbox{if $A_1\in \gA({\cal O}_1)$,  $A_2\in \gA({\cal O}_2)$ and ${\cal O}_1 \cap (J^+({\cal O}_2) \cup J^-({\cal O}_2)) = \emptyset$.}$$ 
The net of algebras is assumed to admit a strongly continuous unitary representation of the Abelian translation group $\sV$ of $\bM$ satisfying the {\em spectral condition}: the joint spectrum of the self-adjoint generators must lie in $\overline{\sV_+}$. This representation is assumed to extend to a full (strongly continuous) unitary representation of $IO(1,3)_+$.

Finally, the Hilbert space contains a preferred Poincar\'e-invariant state, represented by a unit vector $\Omega$, the {\em vacuum vector state}, which has the property of being {\em cyclic}: the subspace spanned by the vectors $A\Omega$ with $A\in \gA$ is dense in $\cH$.

{Most features of this approach can be generalized to the case in which $\gA$ and every $\gA(\cal O)$ are {\em unital $*$-algebras}, in particular algebras of operators on a given Hilbert space with a common invariant domain. Some further notions and results of the AHK approach will be briefly presented in Sec.\ref{secHD}.}

\section{Localization observables and conditional localization}\label{SEZloc}

\subsection{Localization observables, causal conditions, non-commutativity} 
The notion of localization used in \cite{Mor126} and in this paper is encapsulated in the following definition of {\em relativistic spatial localization observable}. This notion, in a slightly simplified form, was first introduced and 
analyzed in depth by Castrigiano (see \cite{Castrigiano2} and references therein) under the 
name of {\em Poincar\'e covariant POL}. \\
 
\begin{definition}[Relativistic Spatial Localization Observable]\label{REMM0}  
A {\bf relativistic spatial localization observable} is a quadruple $(\cH, {\cal R}, A, U)$ where   
\begin{itemize}  
\item[(a)]  $\cH$ is a complex Hilbert space;
 \item[(b)] ${\cal R}= \cup \{\cB(\Sigma)\:|\: \Sigma \subset \bM \: \mbox{spacelike  $3$-plane}\}$ where $\cB(\Sigma)$  
is the Borel $\sigma$-algebra on $\Sigma$;  
    \item[(c)]  $A= \{A^s\}_{s\in \sS}$ is a family of  maps $A^s: {\cal R} \to \gB(\cH)$ -- where $\sS$ is a set of tensors\footnote{The dependence on the tensorial index $s\in \sS$ could be trivial, as happens for various relativistic spatial localization observables constructed in the literature, in particular for the fermionic  POLs in \cite{Castrigiano2} the bosonic ones in \cite{C23} where no such dependence exists. Conversely, it appears in the localization observable constructed from the stress--energy tensor  \cite{M23,DM24} and some of \cite{CDRM}.} 
  of definite order that is invariant under $O(1,3)_+$ --  
 such that every restriction  
    $A^s\spa \rest_{\cB(\Sigma)} : \cB(\Sigma) \to \gB(\cH)$ is a (normalized) POVM;  
\item[(d)]  $U: IO(1,3)_+ \to \gB(\cH)$  is a  strongly continuous unitary representation of the  orthochronous Poincar\'e group;  
\item[(e)] $A$ is $U$-covariant, i.e.,  $U_g A^s(\Delta) U_g^{-1} = A^{gs}(g\Delta)$ for every $g\in {IO(1,3)_+}$ and $\Delta \in {\cal R}$,  
 where   
$\sS \ni s\mapsto gs\in \sS$ denotes the action of  ${IO(1,3)_+}$ on the tensors in $\sS$.  \end{itemize}  
$(\cH, {\cal R}, A, U)$ is of {\bf positive-energy} type if the selfadjoint generator of spacetime translations $\bR \ni r \mapsto U_{(I,rv)}$ is positive for $v\in \sT_+$. \hfill $\blacksquare$\\
\end{definition}

 For future reference, observe that, if $V\in \gB(\cH)$ is a unitary operator and $(\cH, {\cal R}, A, U)$ is a relativistic spatial localization observable, then 
 $(\cH, {\cal R}, A_V, U_V)$, where $A^s_V(\Delta):= V A^s(\Delta) V^\dagger$ and $(U_{V})_g:= V U_g V^\dagger$, is another relativistic spatial localization observable. It also satisfies the causality condition below  and is of positive-energy type if $(\cH, {\cal R}, A, U)$ is.
 
Let us now turn to the {\em causality condition} imposed on relativistic localization observables in order to comply with locality constraints at the level of detection probabilities, which cannot evolve superluminally. The condition stated below, due to Castrigiano, is stronger than the original one formulated by Hegerfeldt, which, however, ruled out all positive-energy  relativistic spatial localization observables described by PVMs, such as the one associated with the triple of Newton-Wigner operators. A discussion of the relevance of this type of condition appears in  \cite{Castrigiano2,M23,C24,CDRM}. \\
 
\begin{definition}\label{REMM0444}  
{\em A relativistic spatial localization observable $(\cH, {\cal R}, A, U)$ is {\bf causal} if it satisfies the following. For every $\Delta \in {\cal R}$, every spacelike $3$-plane $\Sigma$, and every $s\in \sS$,  
\begin{itemize}  
    \item[{\bf (CC)}] $\quad$ $A^s(\Delta) \leq A^s(\Delta_\Sigma) \quad \mbox{where $\Delta_\Sigma:= \Sigma \cap (J^+(\Delta)\cup J^{-}(\Delta))$ provided $\Delta_\Sigma \in {\cal R}$.}$   \hfill $\blacksquare$ 
  \end{itemize} } 
\end{definition}  
The explicit examples of relativistic spatial localization observables, especially of positive-energy type,  presented in  
\cite{Castrigiano2, M23, DM24, C23, C24, CDRM} for various types of particles satisfy CC. They therefore show that Hegerfeldt's causality issue can in fact be regarded as harmless when one considers certain unsharp notions of localization, whereas sharp ones are ruled out.

In \cite{Mor126}, after a detailed analysis of relativistic locality from Busch's perspective \cite{Buschloc1,Buschloc2,B07}, we have asserted that, when dealing with relativistic spatial localization observables (also satisfying CC), there is no compelling reason to require that $[A^s(\Delta), A^{s'}(\Delta')]=0$ when $\Delta$ and $\Delta'$ are sharply causally separated, i.e., they are included in respective open regions which are causally separated.  There is no compelling reason, in particular,  if referring to  systems like {\em particles} which have the propensity to localize at a unique position of a rest space when the rest space is ideally filled with detectors. 
This result agrees with the celebrated theorem of Halvorson and Clifton \cite{HC} which proves that the above  commutativity is actually forbidden if $(\cH, {\cal R}, A, U)$ is of positive-energy type\footnote{This is even true with a weaker notion of localization observable than that of Definition \ref{REMM0444}, see the review in \cite{Mor126}.}.
Failure of commutativity has an important consequence.  If $\Delta$ and $\Delta'$ are bounded regions contained in causally separated open sets ${\cal O}$ and ${\cal O}'$ of $\bM$, one  concludes that $A^{s}(\Delta)$ and $A^{s}(\Delta')$ cannot  belong to the corresponding local algebras of observables in the AHK approach. Indeed, if this were the case, they would necessarily commute. This shows that the effects $A^{s}(\Delta)$ of a relativistic spatial localization observable are not local observables in the AHK sense.

\subsection{Conditional localization and commutativity}
As discussed in \cite{Mor126}, realistic localization experiments in $\bM$ are performed in {\em laboratories}, which are spatially finite regions that are also {\em causally complete} in their spacetime description: everything that may happen in such a spacetime region should be determined by physical actions performed within it, at least at a macroscopic level.

If $\Sigma$ is a flat $3$-dimensional plane, a {\bf laboratory} $L(\Delta_0)\subset \bM$, with {\bf space} given by a bounded set $\Delta_0\in \cB(\Sigma)$, is defined as the causal completion $ L(\Delta_0) := (\Delta_0^\pperp)^\pperp$. If $\Delta_0$ is open in the relative topology, the resulting open set $L(\Delta_0)$ is a globally hyperbolic spacetime in its own right, with $\Delta_0$ as a possible smooth spacelike Cauchy surface, as well as other {\em curved} smooth spacelike Cauchy surfaces.

As discussed in \cite{Mor126}, if we are given a relativistic spatial localization observable $A$ on the Hilbert space $\cH$ (we omit the tensorial index $s$ for simplicity), we can define a POVM $ B^V_{\Delta_0}$ in each laboratory $L(\Delta_0)$
\beq B^V_{\Delta_0}(\Delta)=V \frac{1}{\sqrt{A(\Delta_0)}}A(\Delta)\frac{1}{\sqrt{A(\Delta_0)}}V^\dagger \:, \quad \Delta \in \cB(\Delta_0)\label{BD222}
\eeq
where $V\in \gB(\cH)$ is a given unitary operator. We assumed above that $A(\Delta_0)$ is strictly positive; a discussion of this technical point can be found in \cite{Mor126}.

As established in \cite{Mor126}, for $0\leq \delta <1$ and all $\rho \in \sS(\cH)$, the so-called {\em gentle measurement lemma} entails
\beq\label{G}
tr(\rho A_V(\Delta_0)) \geq 1-\delta \quad \mbox{implies}\quad\left|tr\left({\rho} B^V_{\Delta_0}(\Delta) \right) -\frac{tr(\rho A_V(\Delta))}{tr(\rho A_V(\Delta_0))}\right| \leq 2\sqrt{\delta} + \delta\:,\eeq
where $A_V(\Delta) := VA(\Delta)V^{-1}$.
In other words, if we start with a state $\rho \in \sS(\cH)$ whose probability of finding the system in $\Delta_0$ is close to $1$, $B^V_{\Delta_0}(\Delta)$ measures the fraction of detections in $\Delta$ over the total number of detections {\em observed in the laboratory} $\Delta_0$, when an ensemble of identical systems is prepared in the initial state ${\rho}$ and the relativistic spatial localization observable with effects ${\cal R}\ni \Delta \mapsto A_V(\Delta) := V A(\Delta)V^\dagger$ is used to define the spatial localization of the system.
In this sense, $B^V_{\Delta_0}$ represents, at least in the above limit, a {\bf conditional localization observable}: $B^V_{\Delta_0}(\Delta)$ gives the probability of finding the system in $\Delta\subset \Delta_0$, provided that it is known to be found in the bounded measurable set $\Delta_0$. This interpretation becomes increasingly accurate as the initial probability of finding the system in $\Delta_0$ approaches unity. However, we can assume that this interpretation is valid also in general situations.

As discussed in \cite{Mor126}, in principle Definition (\ref{BD222}) can be extended to the case where the map $\cB(\Sigma) \ni \Delta \mapsto A(\Delta)$ used to construct $B^V_{\Delta_0}(\Delta)$, with $\Delta_0$ a bounded measurable set of the rest space $\Sigma$, is merely a non-normalized positive operator-valued measure. In this case, to apply the gentle measurement lemma, the hypothesis in (\ref{G}) must be replaced by $tr(\rho A'_V(\Delta_0)) \geq 1-\delta$, where we have defined the {\em effect} $A'_V(\Delta):= ||A_V(\Delta_0)||^{-1} A_V(\Delta)$. The definition of $B^V_{\Delta_0}$ (\ref{BD222}) and the fraction in the second equation in (\ref{G}) are invariant if we replace $A_V$ with $A'_V$. 

In \cite{Mor126} we asserted that, in contrast to what happens for the effects of relativistic spatial localization observables, the effects $B^V_{\Delta_0}(\Delta)$ and $B'^{V'}_{\Delta'_0}(\Delta')$ may commute if $\Delta_0$ and $\Delta'_0$ are (sharply) causally separated. The present paper aims to construct analogous local conditional localization observables for a system described by a free real scalar quantum field. More precisely, the operators $B^V_{\Delta_0}(\Delta)$ that we shall introduce will be elements of local von Neumann algebras of observables $\gW({\cal O})$ generated by the corresponding local Weyl algebras $\cW(\cal O)$, where ${\cal O}$ is a sufficiently large open double cone such that, in particular, ${\cal O}\supset \Delta_0$.
\section{Elements of free QFT in Minkowski spacetime}\label{QFTsec}

The next sections review the elementary rigorous formulation of quantum field theory for a free real scalar field in Minkowski spacetime, in the Fock representation associated with the Minkowski vacuum state. The exposition is deliberately elementary, pedagogical, and, as far as possible, self-contained (see \cite{KM} for a more advanced review of QFT in curved spacetime in the algebraic formalism). The relevant notions, including in particular some elementary foundational aspects of the Araki--Haag--Kastler approach, are introduced progressively. It is nevertheless assumed throughout that the reader is familiar with the mathematical notions of $*$-algebra, $C^*$-algebra, and von Neumann algebra. The goal is to bring the community working on quantum measurement theory closer to the community working on local quantum field theory.

\subsection{Bosonic Fock space and particle states}
We shall refer here to the mathematical formulation of the free real scalar boson quantum field as presented in Section X.7 of \cite{RS2}, but using different notation in order to make contact with the results in \cite{M23,DM24}. 
The complex {\bf Schwartz space} on $\bR^n$ is denoted by 
$\cS(\bR^n)$ and $\cD(\bR^n):= C_c^\infty(\bR^n)\subset \cS(\bR^n)$ denotes the space 
 of complex {\bf test functions}. The respective real subspaces of real-valued functions are denoted by $\cS_\bR(\bR^n)$ and $\cD_\bR(\bR^n)$.

If $\cH$ is a Hilbert space, we define
$$\cH^{(1)}:= \cH^{1} :=  \cH\:, \quad \cH^{n} := \cH\otimes \cdots \mbox{\scriptsize (n times)}\cdots \otimes \cH\quad \mbox{and}\quad  \cH^{(n)} = S_n\cH^{n} \quad \mbox{if $n>1$}\:,$$
where $S_n : \cH^{n} \to \cH^{n}$ is the orthogonal projector onto the completely symmetric subspace of $\cH^{(n)}$ under the action of the unitary representation of the permutation group of $n$ elements. 
The (separable) Hilbert space in which we develop our theory will be the {\bf bosonic Fock space} $ \gF_s(\cH^{(1)}) $ of scalar particles of mass $m>0$
\beq \gF_s(\cH) :=  \bigoplus_{n=0}^{+\infty} \cH^{(n)} \quad \mbox{where in our case} \quad \cH= \cH^{(1)} := \cH_m :=  L^2(\sV_{m,+}, d\mu_m(p))\:.\label{FoK}\eeq
In (\ref{FoK}),
   the symbol $\oplus$
denotes the orthogonal Hilbert direct sum of Hilbert spaces. The scalar product on the {\bf vacuum subspace} $\cH^{0} := \cH^{(0)} := \bC$ is standard multiplication. 
The {\bf future mass shell} $\sV_{m,+}$, with $m>0$, and the $O(1,3)_+$-{\bf invariant measure} on it are 
$$\sV_{m,+} := \{p \in V_+\:|\: p\cdot p = -m^2\}\:, \quad d\mu_m(p) := \frac{d^3p}{E(p)}$$
The latter identity is valid in every Minkowskian coordinate frame, where one identifies $\bR^3$ with the $3$-space of the spatial part of the four-momenta, and
\beq \label{defp}p \equiv (p^0,\vec{p} ) \quad \mbox{with}\quad p^0=- p_0 := E(p) := \sqrt{m^2+ \vec{p}^2}\:.\eeq

Some relevant terminology is listed below where, from now on, if $\Psi \in \gF_s(\cH^{(1)})$, $\Psi_n$ denotes its component in $\cH^{(n)}$: $\Psi = \oplus_{n=0}^{+\infty} \Psi_n$
\begin{itemize}
\item[(a)] Each normalized $\Psi= \Psi_n \in \cH^{(n)}$ is an {\bf $n$-particle vector state};
\item[(b)]$ \Omega =1 \in  \bC = \cH^{(0)}$ represents the {\bf Minkowski vacuum state};
\item[(c)] $\cH^{(1)}$
is the {\bf one-particle space}\footnote{Denoted by $\cH$ in \cite{M23}.} which determines the whole structure of 
 $\gF_s(\cH^{(1)})$.  
\item[(d)] A relevant dense subspace is the {\bf finite-particle subspace}, 
\beq \label{FZ}\gF_0 := \{ \Psi \in \gF_s(\cH^{(1)})  \:|\: \Psi_n \neq 0 \: \mbox{only for a finite set of $n\in \bN$ depending on $\Psi$}\}.\eeq
 \end{itemize}

As the above terminology suggests, unit vectors $\Psi \in \gF_s(\cH^{(1)})$, when written in terms of states $|\Psi\rangle \langle \Psi| \in \sS(\gF_s(\cH^{(1)}))$, represent pure quantum states of a quantum system of {\em quantum particles} whose mass is $m$ and whose elementary properties, such as the four-momentum, are described in the one-particle Hilbert space $\cH^{(1)}=\cH_m$. The system of particles is associated with a {\em quantum field} of real scalar bosonic type, which we shall introduce shortly.

All the notions presented are Poincar\'e-invariant under the unitary strongly continuous representation 
of the orthochronous Poincar\'e group defined on the Fock space:
\beq U: IO(1,3)_+\ni g \mapsto U_g := U_g^{(0)}\oplus  \bigoplus_{n=1}^{+\infty} U_g^{(1)}\otimes \cdots \mbox{\scriptsize (n times)}\cdots \otimes U^{(1)}_g \in \gB( \gF_s(\cH_m))\label{rep}\eeq
where, if $\psi \in \cH^{(1)} =\cH_m$ and $(\Lambda, a)\equiv g \in IO(1,3)_+$,
\beq\label{rep1}
(U^{(1)}_{(\Lambda, a)}\psi)(p) = e^{-ia\cdot p} \psi\left(\Lambda^{-1}p\right)\:,
\eeq
 and $U^{(0)}$ is the trivial representation of $IO(1,3)_+$ on $\cH^{(0)}= \bC$.
In particular, the Minkowski vacuum state represented by $\Omega$ is {\em Poincar\'e-invariant} by construction.

From now on, $\sV_{m,+}^n := \times_{j=1}^n \sV_{m,+}$ and 
$\cS(\sV^n_{m,+})$ is the space of maps $\psi : \sV_{m,+}^n  \to \bC$
which are in $\cS(\bR^{3n})$ in a given Minkowskian coordinate system where $\sV_{m,+}$ is identified with $\bR^3$ consisting of the spatial components $\vec{p}$ of the four-momenta $p\equiv (E(p), \vec{p})$. We 
therefore write
$\cS(\sV^n_{m,+}) \equiv \cS(\bR^{3n})$ in Minkowskian coordinates.
An analogous definition is given for $\cD(\sV^n_{m,+}) \equiv \cD(\bR^{3n})$. The spaces of distributions $\cS'(\sV^n_{m,+})$ and $\cD'(\sV^n_{m,+})$ are defined correspondingly.
It is easy to see that all these definitions do not depend on the chosen Minkowskian coordinate system.

A pair of dense subspaces is defined according to the previous definitions: the {\bf finite-particle Schwartz subspace}
\beq \gS_0:= \{ \Psi \in \gF_0 \:|\: \Psi_n \in \cS(\sV_{m,+}^n),\:\: n\in \bN \},\label{SF}\eeq
and the subspace of {\bf finite-particle smooth compactly supported vectors}
\beq \gD_0:= \{ \Psi \in \gF_0 \:|\: \Psi_n \in \cD(\sV_{m,+}^n),\:\: n\in \bN \}.\label{DF}\eeq
Evidently $\gD_0\subset \gS_0 \subset \gF_0$.

We leave to the reader the easy proof of the following technical result.\\
\begin{lemma} \label{lemmaH} If $v\in \sV$, consider the one-parameter subgroup $\bR\ni r \mapsto U_{(I, r v)} = e^{ir H^v}$ of the representation (\ref{rep}). The selfadjoint generator $H^v$ satisfies the following.
\begin{itemize} 
\item[(a)] It admits each $\cH^{(n)}$ as a reducing space\footnote{$P_n H^v \subset H^vP_n$ if $P_n$ is the orthogonal projector onto $\cH^{(n)}$.} and $\gS_0$ and $\gD_0$ as invariant subspaces.
\item[(b)] As $\gD_0$ is dense and made of analytic vectors, and $\gD_0\subset \gS_0$, $H^v$ is essentially selfadjoint on $\gS_0$ and $\gD_0$.
\item[(c)] $H^v\Omega =0$ and, for $n=1,2,\ldots$
\beq (H^v \Psi)(p_1,\ldots, p_n) = - \sum_{k=1}^n v\cdot p_k \Psi(p_1,\ldots, p_n)  \quad \mbox{if $\Psi \in  \cH^{(n)} \cap \gS_0$.}\label{Hv}\eeq
\end{itemize}
\end{lemma}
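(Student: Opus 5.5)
The plan is to make the structure of the representation (\ref{rep}) explicit on each $n$-particle sector and then read off (a)--(c) from the multiplicative form of the one-particle translations. By (\ref{rep1}), for $a=rv$ one has $(U^{(1)}_{(I,rv)}\psi)(p)=e^{-irv\cdot p}\psi(p)$. Hence on $\cH^{n}$ the tensor product $U^{(1)}_{(I,rv)}\otimes\cdots\otimes U^{(1)}_{(I,rv)}$ acts on $L^2(\sV^n_{m,+},d\mu_m^{\otimes n})$ as multiplication by $e^{-ir\sum_k v\cdot p_k}$. This multiplication commutes with permutations of the variables, so it preserves $\cH^{(n)}$, and on $\cH^{(0)}$ it is the identity.

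Therefore the one-parameter group on $\gF_s(\cH^{(1)})$ is the direct sum over $n$ of multiplication groups by $e^{ir\phi_n}$, where $\phi_n(p_1,\ldots,p_n):=-\sum_k v\cdot p_k$ is a real measurable function and $\phi_0:=0$. By Stone's theorem and the standard description of generators of multiplication groups, $H^v$ restricted to $\cH^{(n)}$ is the maximal multiplication operator by $\phi_n$. The generator of a direct sum of unitary groups is the direct sum of the generators. This proves the reducing property in (a), namely $P_nH^v\subset H^vP_n$, and gives $H^v\Omega=0$ together with formula (\ref{Hv}) on $\cH^{(n)}\cap\gS_0$, which is (c).

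For the invariance statements in (a), observe that in Minkowskian coordinates $v\cdot p_k=-v^0E(p_k)+\vec v\cdot\vec p_k$, where $E(p)=\sqrt{m^2+\vec p^2}$ is smooth because $m>0$. Every derivative of $E$ is polynomially bounded. Thus $\phi_n$ is a smooth function of polynomial growth with all derivatives of polynomial growth, so multiplication by $\phi_n$ maps $\cS(\bR^{3n})$ into itself and $\cD(\bR^{3n})$ into itself (supports are unchanged). It also preserves symmetry, since $\phi_n$ is symmetric, and it does not alter the finiteness of the particle number. Moreover $\gS_0\subset D(H^v)$, because $\phi_n\Psi_n\in L^2$ for $\Psi_n\in\cS$.

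For (b), take $\Psi\in\gD_0$, so that only finitely many $\Psi_n$ are nonzero, each with compact support $K_n$. On $K_n$ one has $|\phi_n|\le C_n$, so $\|(H^v)^k\Psi\|\le (\max_n C_n)^k\|\Psi\|$. The exponential series therefore converges for every $t$, and $\Psi$ is an entire analytic vector. The set $\gD_0$ is dense because $\cD$ is dense in $L^2$ of each sector (symmetrize by applying $S_n$) and finite-particle vectors are dense in the direct sum. $\gD_0$ is also $H^v$-invariant. Nelson's analytic vector theorem then gives essential selfadjointness of $H^v$ on $\gD_0$. Since $H^v{\rest_{\gD_0}}\subset H^v{\rest_{\gS_0}}\subset H^v$, and the closure of the smallest restriction already equals the selfadjoint $H^v$, the restriction to $\gS_0$ is essentially selfadjoint as well.

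The only mildly delicate point is the Schwartz-invariance in (a), which rests on the derivative estimates for $E(p)$. These estimates use $m>0$ and present no real obstruction.
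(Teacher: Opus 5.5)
Your proof is correct. The paper leaves this lemma to the reader, but item (b) itself points to your route: read off from (\ref{rep1}) that $U_{(I,rv)}$ acts on each $\cH^{(n)}$ as multiplication by $e^{-ir\sum_k v\cdot p_k}$, so $H^v$ is the direct sum of maximal multiplication operators by $-\sum_k v\cdot p_k$. This gives the reducing property and (c), and also the invariance of $\gS_0$ and $\gD_0$, since for $m>0$ this function is smooth with polynomially bounded derivatives. Your final step, Nelson's theorem on the invariant dense set $\gD_0$ of entire analytic vectors followed by the sandwich $H^v|_{\gD_0}\subset H^v|_{\gS_0}\subset H^v$, is the natural argument the statement of (b) suggests.
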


\noindent In the case $v\in \sV_+$, we call $H^v$ the {\bf Hamiltonian} operator associated with $v$.

\subsection{Free  Klein-Gordon quantum  field  in Minkowski spacetime}\label{Secphib}
If $\psi \in \cH^{1}$, consider the unique linear continuous extensions of the operators, for every given $n\in \bN$,
\beq
b_n(\psi) :   \cH^{n} \ni  \psi_1 \otimes \cdots \otimes \psi_n &\mapsto&   \langle \psi| \psi_1 \rangle \psi_2 \otimes \cdots \otimes \psi_n  \in \cH^{n-1}\:, \quad b_0 := 0\nonumber\\
b^\dagger_n(\psi) :   \cH^{n} \ni  \psi_1 \otimes \cdots \otimes \psi_n &\mapsto&  \psi \otimes \psi_1 \otimes \cdots \otimes \psi_n  \in \cH^{n+1}\:.\nonumber
\eeq 
Still indicating by $b_n(\psi)$ and $b^\dagger_n(\psi)$ the said extensions, 
the {\bf annihilation} and {\bf creation operators}, respectively $a(\psi) : \gF_0 \to \gF_0$ and $a^\dagger(\psi) : \gF_0 \to \gF_0$, are defined as the linear extensions 
to $\gF_0$ of the respective maps
\beq \label{aa*}
a(\psi)|_{\cH^{(n)}} :  \cH^{(n)} \ni \Psi&\mapsto& \sqrt{n} S_{n-1} b_n(\psi) \Psi\in \cH^{(n-1)}\:, \quad a(\psi)|_{\cH^{(0)}} := 0\\
a^\dagger(\psi)|_{\cH^{(n)}} :  \cH^{(n)} \ni \Psi &\mapsto&  \sqrt{n+1} S_{n+1} b^\dagger_n(\psi) \Psi \in \cH^{(n+1)}\:.
\eeq 
The operators $a(\psi)$ and $a^\dagger(\psi)$ thus obtained enjoy some elementary properties (see e.g. \cite{RS2}.)\\

\begin{proposition} \label{PROPAA}  $a(\psi) : \gF_0 \to \gF_0$ and $a^\dagger(\psi) : \gF_0 \to \gF_0$, $\psi \in \cH_m$, satisfy the following.
\begin{itemize}
\item[(a)] $\cH^{(1)} \ni \psi \mapsto a(\psi)$ is {\em antilinear} and $\cH^{(1)} \ni \psi \mapsto a^\dagger(\psi)$ is {\em linear}; both are $\bR$-linear. 
\item[(b)] On their dense and invariant domain $\gF_0$
 $$a^\dagger(\psi) = a(\psi)^\dagger|_{\gF_0}\quad \mbox{ and}\quad  a(\psi) = a^\dagger(\psi)^\dagger|_{\gF_0}.$$ 
 \item[(c)] The {\bf Bosonic commutation rules} hold for every $\psi,\psi'\in \cH^{(1)}$
$$[a(\psi), a^\dagger(\psi')] = \langle \psi|\psi'\rangle I,\quad [a(\psi), a(\psi')] = 0,\quad [a^\dagger(\psi), a^\dagger(\psi')] =0\:.$$

\item[(d)]  If $\psi \in \cH^{(1)}$ and $\Psi_n \in \cH^{(n)}$ then
  \beq ||a^\#(\psi)\cdots (\mbox{\scriptsize $k$ times}) \cdots a^\#(\psi) \Psi_n|| \leq \sqrt{(n+1) \cdots (n+k)} ||\psi||^k ||\Psi_n||\:.\label{stime}\eeq
where $a^\#$ represents either $a$ or $a^\dagger$.

\item[(e)] If $\psi \in \cH_m$ and $g\in IO(1,3)_+$ then
\beq
U_g a(\psi) U^{-1}_g = a(U_g^{(1)}\psi)\:, \quad  U_g a^\dagger(\psi) U^{-1}_g = a^\dagger(U_g^{(1)}\psi)\:.
\eeq
\end{itemize}
\end{proposition}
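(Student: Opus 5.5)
The plan is to check (a)--(e) directly on the $n$-particle components. Since $a(\psi)$ and $a^\dagger(\psi)$ are defined componentwise on $\gF_0$, it suffices to verify every identity on product vectors. Symmetrized products $S_n(\psi_1\otimes\cdots\otimes\psi_n)$ span a dense subspace of each $\cH^{(n)}$, and then linearity and continuity on each $\cH^{(n)}$ extend the identities. The continuity comes from the bound (d), which I prove first in its one-step form.

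For (a), I read it off the definitions. The map $b_n(\psi)$ contains $\langle\psi|\psi_1\rangle$, which is antilinear in $\psi$. The map $b_n^\dagger(\psi)$ is linear in $\psi$. Both $S_{n\pm1}$ and the factors $\sqrt{n}$, $\sqrt{n+1}$ are independent of $\psi$, so the stated linearity properties follow.

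For (d) with $k=1$, the key fact is that $b_n(\psi)$ already maps symmetric vectors to symmetric vectors. Hence on $\cH^{(n)}$ we have $a(\psi)=\sqrt n\, b_n(\psi)$ and $\|b_n(\psi)\|\le\|\psi\|$. For the creation operator, $\|S_{n+1}\|\le 1$ and $\|b_n^\dagger(\psi)\|=\|\psi\|$. This gives $\|a^\#(\psi)\Psi_n\|\le\sqrt{n+1}\,\|\psi\|\,\|\Psi_n\|$. The general case of (d) follows by induction. After $j$ applications the vector lies in $\cH^{(n')}$ with $n'\le n+j$, so the next step costs at most a factor $\sqrt{n+j+1}\,\|\psi\|$.

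For (b), I compute $\langle \Phi_{n+1}| a^\dagger(\psi)\Psi_n\rangle$ on product vectors. Since $\Phi_{n+1}$ is symmetric, $S_{n+1}$ can be dropped. One then gets $\sqrt{n+1}\langle \Phi_{n+1}|\psi\otimes\Psi_n\rangle=\sqrt{n+1}\langle b_{n+1}(\psi)\Phi_{n+1}|\Psi_n\rangle=\langle a(\psi)\Phi_{n+1}|\Psi_n\rangle$. This shows that $a$ and $a^\dagger$ are formal adjoints on $\gF_0$. So $a^\dagger(\psi)\subset a(\psi)^\dagger$, and the restriction to $\gF_0$ gives the equalities claimed. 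The second identity follows symmetrically.

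For (c), I would first prove the symmetrization formula
$$a(\psi)\,S_n(\psi_1\otimes\cdots\otimes\psi_n)=\frac{1}{\sqrt n}\sum_{j=1}^n\langle\psi|\psi_j\rangle\, S_{n-1}\big(\textstyle\bigotimes_{i\neq j}\psi_i\big).$$
This is the main combinatorial step and the only place where real care is needed: one has to track the normalizations of the symmetrizers consistently. From this formula, $a(\psi)a^\dagger(\psi')-a^\dagger(\psi')a(\psi)$ applied to $S_n(\psi_1\otimes\cdots)$ leaves only the term where $\psi$ hits $\psi'$. That term equals $\langle\psi|\psi'\rangle$ times the original vector. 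The relations $[a,a]=0$ and $[a^\dagger,a^\dagger]=0$ follow because $S_{n+2}$ is invariant under swapping the first two factors, and the first relation is also obtained from the second by taking adjoints via (b).

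For (e), I use that $U_g$ acts as $U^{(1)}_g\otimes\cdots\otimes U^{(1)}_g$ on each sector, commutes with $S_n$, and is unitary on $\cH^{(1)}$. Then $\langle\psi|U_g^{(1)-1}\psi_1\rangle=\langle U^{(1)}_g\psi|\psi_1\rangle$. Conjugating $b_n$ and $b_n^\dagger$ gives the covariance formulas on product vectors, and hence on all of $\gF_0$.
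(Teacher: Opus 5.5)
Your proposal is correct. The one-step bound $\|a^\#(\psi)\Psi_n\|\le\sqrt{n+1}\,\|\psi\|\,\|\Psi_n\|$ iterated along the particle number, the formal-adjointness computation, the contraction formula on symmetrized products (your $1/\sqrt n$ normalization is right), and extension by sectorwise boundedness and density together establish (a)--(e). The paper gives no argument for this proposition and simply refers to Reed--Simon, Sec.~X.7; your sector-by-sector verification is the standard computation behind that citation.
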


 As is well known already from the original formulation of QFT, trying to define field operators $\hat{\phi}(x)$ localized at each point $x$ of spacetime gives rise to insurmountable mathematical difficulties \cite{Haag}. What one can define is a quantum field operator {\em smeared} with test functions $f$ and denoted by $\hat{\phi}[f]$. 
We therefore move on to define the (free) {\em quantum-field operator} smeared with a {\em test function} $f\in  \cD(\bM)$,
 where $\bM\equiv \bR^4$ according to a Minkowskian coordinate system. We limit ourselves to stating the most relevant elementary technical features of this notion. More information about this classical construction and the physical motivations underpinning this crucial physical tool can be found in the vast literature on the subject (see e.g. \cite{PCT,Araki,Haag}). Regarding the smearing procedure, we have to stress that it is pervasive in rigorous QFT and is the practical procedure used to associate observables with regions of spacetime (where the supports of the smearing functions are localized) in agreement with the basic assumptions of the AHK formulation.

First of all, upon the choice of an origin of $\bM$, the {\bf covariant Fourier transform} of $f\in \cS(\bM)$ is
\beq \hat{f}(p) := \frac{1}{(2\pi)^2}\int_{\bM} e^{-i p\cdot x} f(x) d^4x  \:.\label{Fourier}\eeq
The measure $d^4x$ in the integral is the standard Lebesgue measure in every Minkowskian coordinate system, which turns out to be $IO(1,3)_+$-invariant. \\

\begin{definition}
{\em The {\bf real scalar field operator} of mass $m>0$ smeared with $f\in \cD(\bM)$ is the densely defined operator
\beq
\hat{\phi}[f] := \frac{1}{\sqrt{2}}\left( a({\kappa_m \overline{f}}) + a^\dagger(\kappa_m f)\right): \gF_0 \to \gF_s(\cH_m) \label{PHI}
\eeq
where we have used the 
 $\bC$-linear map
\beq \kappa_m : \cS(\bM)\ni f \mapsto\sqrt{2\pi} \hat{f}|_{\sV_{m,+}} \in L^2(\sV_{m,+}, d\mu_m(p)) = \cH_m\:.\label{kappa}\eeq
and the bar in $\overline{f}$ denotes complex conjugation. } \hfill $\blacksquare$\\
\end{definition}

\begin{remark}\label{REMran}
{\em 
\begin{itemize}
\item[(1)] Three alternative definitions -- {\em all equivalent in Minkowski spacetime} -- are used in the literature, where the space of smearing functions $\cD(\bR)$ \cite{Araki} is respectively replaced by $\cD_\bR(\bM)$ or $\cS(\bM)$ \cite{PCT} or $\cS_\bR(\bM)$. The equivalence is based on two facts. (a) In Minkowski spacetime and with the construction above,
$\hat{\phi}[f] = \hat{\phi}[Re(f)]+i \hat{\phi}[Im(f)]\:.$
(b) As is easy to prove, the real subspace $\kappa_m(\cD_\bR(\bM))$ satisfies
\beq
\overline{\kappa_m(\cD_\bR(\bM))} =\overline{\kappa_m(\cD_\bR(\bM))+ i\kappa_m(\cD_\bR(\bM)) }= \overline{\kappa_m(\cD(\bM))} = \cH_m \:, \label{gggg}
\eeq
where the bar denotes closure in the topology of $\cH_m$. (The same properties are valid if one replaces everywhere $\cD$ by $\cS$.)
Identity (\ref{gggg}) is equivalent to the fact that $\Omega$ is a {\em pure algebraic state} on the abstract $*$-algebra of the field operators (see, e.g., \cite{KM}).
\item[(2)] The smeared quantum fields constructed as above form an elementary system that satisfies the classical {\em G\aa rding-Streater-Wightman axioms} in Minkowski spacetime \cite{PCT,Araki}. \hfill $\blacksquare$\\
\end{itemize}}
\end{remark}

To proceed, consider the (non-homogeneous) {\bf Klein-Gordon equation} in $\bM$
\beq
(\Box - m^2)\phi =f\:.
\eeq
where the {\bf d'Alembert operator} $\Box$ is written as $\partial_\mu \partial^\mu = -\partial^2_{x^0}+\Delta_{\vec{x}}$ in every Minkowskian coordinate system.
That equation admits unique {\bf advanced} and {\bf retarded fundamental solutions}, linear maps $A: \cD(\bM)\to C^\infty(\bM)\quad\mbox{and}\quad R: \cD(\bM)\to C^\infty(\bM)$ respectively, completely defined by the requirement 
that $supp(Rf)\subset J^+(supp(f))$ and $supp(Af)\subset J^-(supp(f))$ and, obviously, $(\Box - m^2)Af= (\Box - m^2)Rf=f$ for every $f\in\cD(\bM)$. Their difference, called the {\bf causal propagator}
$E := A-R$,
has the consequent property that $\int_\bM f(x) (E g)(x) d^4x=0$ if $supp(f)$ and $supp(g)$ are causally separated.\\

\begin{proposition}\label{PROP35} The field operators defined above satisfy the following properties.
\begin{itemize}
\item[(a)] The $\gF_s(\cH_m)$ subspaces $\gF_0$, $\gS_0\subset \gF_0$, and $\gG_0\subset \gS_0$ defined as\footnote{Notice that in principle the ordering
of the factors in $\prod_{k=1}^n \hat{\phi}(f^{(n)}_k)\Omega$  is relevant.}
\beq \gG_0:= \left\{ \Psi \in \gF_0 \:\left|\: \Psi_n \in span\left\{ \prod_{k=1}^n\right.\hat{\phi}(f^{(n)}_k)\Omega,\quad  f^{(n)}_k \in   \cD(\bM) \:, k=1,\ldots, n\right\}\right\}\:, \label{GF}\eeq
are dense, $\hat{\phi}[f]$-invariant, and made of analytic vectors of $\hat{\phi}[f]$ with $f\in \cD(\bM)$.
\item[(b)] If $f\in \cD(\bR^4)$ then $\hat{\phi}[\overline{f}]\subset \hat{\phi}[f]^\dagger$ so that $\hat{\phi}[f]$ is closable.
 More strongly
$$\overline{\hat{\phi}[f]} =\hat{\phi}[f]^\dagger \quad \mbox{if $f\in \cD_\bR(\bM)$}\:,$$
namely $\hat{\phi}[f]$ is essentially selfadjoint if smeared with real functions. According to (a), $\gF_0$, $\gS_0$, and $\gG_0$ are cores of $\hat{\phi}[f]$
for $f\in \cD_\bR(\bM)$.

\item[(c)] $\cD(\bM) \ni f \mapsto \hat{\phi}[f]$ enjoys the following further properties.

\begin{itemize}
%

\item[(c1)] {\bf CCR}. The canonical commutation rules hold: 
 $$[\hat{\phi}[f],\hat{\phi}[g]] = -iE(f,g)I :=  -i\int_\bM f(x) (E g)(x) d^4x  I\quad \mbox{for}\quad f,g \in \cD(\bM)\:,$$
so that $[\hat{\phi}[f],\hat{\phi}[g]] =0$ if $supp(f)$ and $supp(g)$ are causally separated 

\item[(c2)]{\bf KG equation}. It solves the homogeneous Klein-Gordon equation in the distributional sense:
$$\hat{\phi}[(\Box - m^2)f]=0 \quad \mbox{for} \quad f \in \cD(\bM)\:.$$ 
    \item[(c3)] $IO(1,3)_+$-{\bf covariance}. The representation (\ref{rep}) of $IO(1,3)_+$ acts covariantly on the field operator:
    $$U_{g} \hat{\phi}[f]U^{-1}_{g} = \hat{\phi}[g_*f]\quad\mbox{for $g\in IO(1,3)_+$ and $f\in  \cD(\bM)$.}$$
    where $(g_*f)(x) := f(g^{-1}x)$ for every $x\in  \bM$.
    \item[(c4)] The {\bf Weyl generators} $W(f) := e^{i\overline{\hat{\phi}(f)}}$ satisfy the {\bf Weyl relations} for every $f,g \in \cD_\bR(\bM)$
    \beq W(f) W(g) = e^{i E(f,g)/2} W(f+g)\:,\quad W(f)^\dagger= W(-f)\:, \quad W(0) = I  \:. \label{Weyl}
    \eeq
    In particular $[W(f), W(g)] =0$ if $supp(f)$ and $supp(g)$ are causally separated.
\end{itemize}
    \item[(d)] {\bf Reeh-Schlieder property}. 
    \begin{itemize}
    \item[RS1]  If ${\cal O}\subset \bM$ is open, bounded, and non-empty, the unital $*$-algebra ${\cal A}({\cal O})$ generated by the field operators $\hat{\phi}[f]$ for smearing functions such that $supp(f) \subset {\cal O}$ satisfies $\overline{{\cal A}({\cal O})\Omega} = \gF_s(\cH_m)$.
    \item[RS2] If ${\cal O}\subset \bM$ is open, bounded,  and non-empty,  the $C^*$-algebra ${\cal W}({\cal O})$ generated by the Weyl operators $W(f)$ for smearing functions such that $supp(f) \subset {\cal O}$ satisfies $\overline{{\cal W}({\cal O})\Omega} = \gF_s(\cH_m)$. \end{itemize}
\end{itemize}
\end{proposition}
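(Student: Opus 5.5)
The plan is to prove the items of Proposition \ref{PROP35} in order, working almost entirely at the level of the creation and annihilation operators of Proposition \ref{PROPAA}.

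For (a), each subspace is invariant because $a(\psi)$ and $a^\dagger(\psi)$ map $\cH^{(n)}$ into $\cH^{(n\mp1)}$. For $\gS_0$ we also use that $\kappa_m f=\sqrt{2\pi}\hat f|_{\sV_{m,+}}$ is Schwartz on the mass shell when $f\in\cD(\bM)$. Symmetrized tensor products with a Schwartz function therefore stay Schwartz, and the contraction $\langle \kappa_m\overline f|\,\cdot\,\rangle$ in one variable also preserves $\cS$. Invariance of $\gG_0$ is immediate from its definition, as is the inclusion $\gG_0\subset\gS_0$: by induction, a product of $n$ fields applied to $\Omega$ has components that are symmetrized products of the Schwartz functions $\kappa_m f_k$, together with scalar contractions.

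Analyticity follows from the estimate (\ref{stime}). For $\Psi\in\cH^{(n)}$ we get $\|\hat\phi[f]^k\Psi\|\le 2^{k/2}\|\kappa_m f\|^k\sqrt{(n+k)!/n!}\,\|\Psi\|$, and $\sum_k t^k\sqrt{(n+k)!/n!}/k!$ converges for all $t$. The same holds for finite sums over $n$.

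Density of $\gF_0$ and $\gS_0$ is standard. Density of $\gG_0$ requires showing that the spans of the $\prod_k\hat\phi[f_k]\Omega$ restricted to each $\cH^{(n)}$ are dense there. I would argue by induction on $n$. The top component of $\prod_{k=1}^n\hat\phi[f_k]\Omega$ is $2^{-n/2}\sqrt{n!}\,S_n(\kappa_mf_1\otimes\cdots\otimes\kappa_mf_n)$, and the lower components lie in $\oplus_{j<n}\cH^{(j)}$, where they can be subtracted using the induction hypothesis. Then (\ref{gggg}) and polarization give density of the symmetric tensors. I expect this bookkeeping, not anything conceptual, to be the most delicate part of (a).

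For (b), the inclusion $\hat\phi[\overline f]\subset\hat\phi[f]^\dagger$ comes from Proposition \ref{PROPAA}(b) together with the antilinearity of $a$ and the linearity of $a^\dagger$. A densely defined operator whose adjoint is densely defined is closable. For real $f$ the operator $\hat\phi[f]$ is symmetric and has a dense invariant set of analytic vectors, so Nelson's analytic vector theorem gives essential selfadjointness on each of $\gF_0,\gS_0,\gG_0$. Each of these is therefore a core.

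For (c1), expand the commutator with the CCR of Proposition \ref{PROPAA}(c). This gives $\tfrac12(\langle\kappa_m\overline f|\kappa_m g\rangle-\langle\kappa_m\overline g|\kappa_m f\rangle)I=i\,\mathrm{Im}\langle\kappa_m\overline f|\kappa_mg\rangle I$ for real $f,g$ (and by bilinearity in general). The step to watch is identifying this with $-iE(f,g)$. Here I would use the Fourier representation of the causal propagator, $E$ being the difference of the positive- and negative-frequency mass-shell distributions $\int e^{ip\cdot(x-y)}d\mu_m(p)$ with the $(2\pi)$ normalizations fixed by (\ref{Fourier}). The vanishing for causally separated supports then follows from the support properties of $A$ and $R$. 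Signs and constants are the only obstacle here.

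(c2) holds because $\widehat{(\Box-m^2)f}(p)=-(p\cdot p+m^2)\hat f(p)$ vanishes on $\sV_{m,+}$. For (c3), invariance of $d^4x$ gives $\kappa_m(g_*f)=U^{(1)}_g\kappa_mf$, which matches (\ref{rep1}); then apply Proposition \ref{PROPAA}(e).

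For (c4), the operators $\hat\phi[f]$, $\hat\phi[g]$ and $\hat\phi[f+g]$ share the invariant analytic domain $\gF_0$, and their commutator is a scalar. Summing the exponential series on analytic vectors gives BCH, $e^{iA}e^{iB}=e^{-[A,B]/2}e^{i(A+B)}$. This extends by density and continuity, since everything is bounded and unitary. The remaining two Weyl relations are immediate.

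For (d), RS1 is the standard Reeh--Schlieder argument. Take $\Psi$ orthogonal to $\cA({\cal O})\Omega$. The function $(x_1,\dots,x_n)\mapsto\langle\Psi|\hat\phi[f_{1,x_1}]\cdots\hat\phi[f_{n,x_n}]\Omega\rangle$ is the boundary value of a function holomorphic in the forward tube, by the spectrum condition (Lemma \ref{lemmaH}: $H^v\ge0$ for $v\in\sV_+$). It vanishes for small translations, hence it vanishes identically by the edge-of-the-wedge/identity theorem. Covering $\bM$ by translates then shows that $\Psi$ is orthogonal to $\gG_0$, which is dense by (a). I expect this analytic continuation to be the main obstacle.

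RS2 follows from RS1. For $\Psi\perp\cW({\cal O})\Omega$, differentiate $t\mapsto\langle\Psi|W(t_1f_1)\cdots W(t_nf_n)\Omega\rangle$ at $0$. This is legitimate because vectors in $\gF_0$ are analytic, so the derivatives recover the polynomial field vectors, and hence $\Psi\perp\cA({\cal O})\Omega$.
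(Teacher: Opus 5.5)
Your proposal is correct and follows essentially the same route as the paper, which simply defers to the standard sources: Reed--Simon for (a)--(c), the Streater--Wightman analytic-continuation argument (forward-tube holomorphy from the spectrum condition, then the identity/edge-of-the-wedge theorem) for RS1, and Araki for RS2. Your one deviation is deriving RS2 from RS1 by differentiating $\langle\Psi|W(t_1f_1)\cdots W(t_nf_n)\Omega\rangle$ rather than citing it; this works provided you differentiate from the right ($t_n$ first, at $t_n=0$, then $t_{n-1}$, and so on), so that each Stone generator acts on a vector of $\gF_0$ rather than on a coherent vector $W(t_kf_k)\cdots W(t_nf_n)\Omega$.
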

\begin{proof}
(a), (b), and (c3) are consequences of Proposition \ref{PROPAA}, see e.g. \cite{RS2}, paying attention to the use of different notation and taking Remark \ref{REMran} into account. For (c) see \cite{RS2} and \cite{KM} for a generic curved globally hyperbolic spacetime and a quasifree state. A proof of the version of the Reeh-Schlieder property presented in (d)RS1 can be obtained by adapting the more general result stated in Theorem 4-2 of \cite{PCT}; see  Theorem 4.14 in \cite{Araki} for the version RS2 applied to the case of a free scalar field. \end{proof}

Another important feature of the free-field operators introduced above and the associated Weyl algebra is known as {\em Haag duality}. We shall briefly discuss it in Section \ref{secHD}.

Due to Remark \ref{REMran}, in Minkowski spacetime Proposition \ref{PROP35} is still valid\footnote{the CCR require a more delicate adaptation if one smears with non-compactly supported functions, because of the nature of the advanced and retarded solutions of the KG equation.} if one replaces $\cD(\bM)$ (resp. $\cD_\bR(\bM)$) by $\cS(\bM)$ (resp. $\cS_\bR(\bM)$) and changes the statements accordingly.
$(\gF_s(\cH_m),\gG_0, \pi, \Omega)$ is the {\em GNS structure} of the Minkowski-vacuum representation of the unital $*$-algebra $\cA$ called the {\em CCR algebra} generated by abstract field operators $\phi[f]$, where $\pi$ is the $*$-homomorphism induced by $\pi(\phi[h]) = \hat{\phi}[h]$ \cite{KM}.

\subsection{Free Quantum Fields in $\bM$ as Quadratic forms}  The stress-energy operator is a special case of a {\em Wick polynomial}.
 There are at least two procedures to define them in Minkowski spacetime: one can be extended to general globally hyperbolic spacetimes and refers to {\em Hadamard states}, using the powerful machinery of {\em microlocal analysis}
(see e.g. \cite{KM}). The other, the older one, quite familiar to theoretical physicists, is easier to handle when dealing with Minkowski spacetime and the Poincar\'e-invariant vacuum state $\Omega$.
The language used here is that of quadratic forms. In this pedagogical discussion this older approach is more suitable, also because in the specific case of Minkowski spacetime it permits explicit computations.
To this end it is convenient to define a pair of quadratic forms representing the formal operators denoted by $a_{p}$ and $a^\dagger_{p}$ in the physical literature, where $p\in \sV_{m,+}$.

If $p\in \sV_{m,+}$ and $\Psi \in \gS_0$, the operator $a_{p} : \gS_0 \to \gS_0$ such that $a_p(\cH^{(n)} \cap \gS_0) \subset  \cH^{(n-1)} \cap \gS_0$ is defined as the linear extension of 
\beq\label{prop}
(a_{p}\Psi) (k_1,\ldots, k_{n-1}) := \sqrt{n} \Psi(p,k_1,\ldots, k_{n-1})\quad \mbox{for $\Psi\in \cH^{(n)}$ and $a_{p}\Omega:=0$}\:.
\eeq
A {\em quadratic form} is well defined on $\gS_0 \times \gS_0$ as
\beq \langle \Psi'| a^\dagger_{p} \Psi \rangle = \sum_{n=1}^{+\infty}
\int_{\bR^{3n}} \overline{\Psi'_n}(k_1, \ldots, k_n) (a^\dagger_{p}\Psi)_n (k_1,\ldots,k_n) d\mu_m(k_1)\cdots d\mu_m(k_n) \label{quada*}\:.
\eeq
In the integral, $a^\dagger_p : \cH^{(n)} \cap \gS_0 \to \cS'(\sV_{m,+}^{n+1})$ is the map extended by linearity, such that 
\beq
(a^\dagger_{p}\Psi) (k_1,\ldots, k_{n+1}) := \sqrt{n+1} \sum_{l=1}^{n+1} \frac{\delta(p,k_l)\Psi(k_1,\ldots, k_{l-1},k_{l+1}, \ldots,  k_{n+1})}{n+1}\:, \quad \Psi \in \cH^{(n)}   \label{a*p}
\eeq
where the Dirac delta $\delta(p,k)$ refers to the mass shell $\sV_{m,+}$ and its invariant measure, and the integral in (\ref{quada*}) has a distributional meaning, as is appropriate since $\Psi \in \gS_0$. By direct inspection we have the adjunction relation of quadratic forms:
\beq \label{aggF} \langle \Psi'| a^\dagger_{p} \Psi \rangle =  \langle  a_{p} \Psi'| \Psi \rangle  \quad \mbox{for}\quad \Psi,\Psi' \in \gS_0\:.\eeq
According to this identity we can give a general definition.\\

\begin{definition}\label{DEFF}
{\em If $N,M=0,1,2,\ldots$ and the corresponding momenta are $p_1,\ldots, p_N\in \sV_{m,+}$,
$k_1,\ldots, k_M\in \sV_{m,+}$, we define the {\bf normally ordered quadratic form},
\beq \left\langle \Psi'\left| \prod_{j=1}^N a^\dagger_{p_j}\prod_{r=1}^M a_{k_r}\Psi\right. \right\rangle := \left\langle \prod_{j=1}^N a_{p_j} \Psi'\left| \prod_{r=1}^M a_{k_r}\Psi \right.\right\rangle  \quad \mbox{for}\quad \Psi,\Psi' \in \gS_0\:,\eeq
where the right-hand side is a proper inner product and we used definition (\ref{prop}). } \hfill $\blacksquare$\\
\end{definition}

\noindent It is easy to see that, if $\Psi,\Psi' \in \gS_0$, then the following map is a function in $\cS( \sV_{m,+}^{N+M} )$:
$$ \sV_{m,+}^{N+M} \ni (p_1,\ldots, p_N, k_1,\ldots, k_M) \mapsto  \left\langle \Psi'\left| \prod_{j=1}^N a^\dagger_{p_j}\prod_{r=1}^M a_{k_r}\Psi\right. \right\rangle\in \bC \quad$$
Proposition \ref{PROPaa} in the appendix states the most important properties of normally ordered quadratic forms.
  It would not be possible to define, analogously to Definition \ref{DEFF}, quadratic forms corresponding to a symbol like $\prod_{r=1}^M a_{k_r}  \prod_{j=1}^N a^\dagger_{p_j}$ with $M,N \geq 1$ because, from our perspective, $a^\dagger_{p}$ has to be understood in the sense of quadratic forms on $\gS_0$, and objects like 
$\langle a_{p'}  \Psi'| a_{p} \Psi \rangle $ are not defined if $\Psi,\Psi' \in \gS_0$. \\

\begin{definition}  {\em The {\bf quantum-field quadratic form} is the quadratic form
\beq \langle \Psi' |\hat{\phi}(x) \Psi \rangle := \frac{1}{\sqrt{2}(2\pi)^{3/2}}\int_{\bR^3}   e^{ip\cdot x} \langle \Psi' | a_{p} \Psi \rangle +      e^{-ip\cdot x}\langle \Psi' | a^\dagger_{p} \Psi \rangle\:\: d\mu_m(p)\:,  \quad\Psi,\Psi' \in \gS_0\label{PHIx}\eeq
for $x\in \bM$.} \hfill $\blacksquare$
\end{definition}
\begin{notation} {\em The above definition is formally written as
\beq \hat{\phi}(x) := \frac{1}{\sqrt{2}(2\pi)^{3/2}}\int_{\bR^3}   e^{ip\cdot x} a_{p}  +     e^{-ip\cdot x}  a^\dagger_{p} \:\: d\mu_m(p)\label{PHIxf}\:\eeq
and we shall use this notation throughout. \hfill $\blacksquare$.}\\
\end{notation}
The right-hand side of (\ref{PHIx}) defines a smooth, bounded function of $x\in \bM $ by (\ref{fund}).
This function can therefore be smeared with functions $f\in \cS(\bM)$ and $\cD(\bM)$.\\

\begin{proposition}\label{PROP41}  
$\cD(\bM)\ni f \mapsto \langle \Psi' |\hat{\phi}[f] \Psi \rangle$ is a distribution 
in $\cD'(\bM)$ if $\Psi,\Psi' \in \gS_0$ and
\beq \langle \Psi'|\hat{\phi}[f]\Psi \rangle = \int_{\bR^4} \langle \Psi' |\hat{\phi}(x) \Psi \rangle f(x) d^4x \quad \mbox{if $f \in\cD(\bM)$, $\Psi,\Psi' \in \gS_0$.}\label{UI}\eeq
\end{proposition}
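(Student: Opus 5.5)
The plan is a direct computation, comparing both sides component by component on the particle-number decomposition of $\gS_0$. By sesquilinearity and the finiteness of the particle decomposition of $\Psi,\Psi'\in\gS_0$, it suffices to treat $\Psi\in\cH^{(n)}\cap\gS_0$ and $\Psi'\in\cH^{(n\pm1)}\cap\gS_0$; all other pairings vanish on both sides. On the left I use definition (\ref{PHI}), $\hat\phi[f]=\frac{1}{\sqrt2}(a(\kappa_m\overline f)+a^\dagger(\kappa_m f))$, together with (\ref{aa*}). For $\Psi\in\cH^{(n)}$ this gives
$$(a(\psi)\Psi)(k_1,\ldots,k_{n-1})=\sqrt n\int\overline{\psi(p)}\,\Psi(p,k_1,\ldots,k_{n-1})\,d\mu_m(p),$$
since $\Psi$ is already symmetric and $S_{n-1}$ acts trivially. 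The creation part is handled by adjunction (Proposition \ref{PROPAA}(b)), $\langle\Psi'|a^\dagger(\psi)\Psi\rangle=\langle a(\psi)\Psi'|\Psi\rangle$.

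On the right I insert (\ref{PHIx}). The annihilation term is $\langle\Psi'|a_p\Psi\rangle$, which by (\ref{prop}) equals
$$\sqrt n\int\overline{\Psi'}(k)\,\Psi(p,k)\,d\mu_m(k)^{\otimes(n-1)}.$$
This is a Schwartz function of $p$, so it is bounded by $C\langle p\rangle^{-N}$. The creation term equals $\overline{\langle\Psi|a_p\Psi'\rangle}$ by (\ref{aggF}) and is of the same type. Multiplying by $e^{\pm ip\cdot x}$ and integrating in $d\mu_m(p)$ therefore gives a smooth bounded function of $x$. Its integral against $f\in\cD(\bM)$ is absolutely convergent on $\bM\times\sV_{m,+}\times\sV_{m,+}^{n-1}$, so Fubini applies and the $x$-integration can be done first:
$$\int_\bM e^{ip\cdot x}f(x)\,d^4x=(2\pi)^2\hat f(-p)=(2\pi)^2\,\overline{\hat{\overline f}(p)},\qquad \int_\bM e^{-ip\cdot x}f(x)\,d^4x=(2\pi)^2\hat f(p).$$

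Matching constants is now straightforward: $\frac{(2\pi)^2}{\sqrt2(2\pi)^{3/2}}=\frac{\sqrt{2\pi}}{\sqrt2}$, and $\sqrt{2\pi}\,\hat f|_{\sV_{m,+}}=\kappa_m f$. The annihilation term becomes $\frac1{\sqrt2}\langle\Psi'|a(\kappa_m\overline f)\Psi\rangle$, the antilinearity of $a(\cdot)$ accounting for the conjugate. The creation term becomes $\frac1{\sqrt2}\langle\Psi'|a^\dagger(\kappa_m f)\Psi\rangle$. Together these give (\ref{UI}).

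The distribution property then follows from the estimate
$$|\langle\Psi'|\hat\phi[f]\Psi\rangle|\le \frac{1}{\sqrt2(2\pi)^{3/2}}\,\sup_x|\langle\Psi'|\hat\phi(x)\Psi\rangle|\,\|f\|_{L^1},$$
since $\|f\|_{L^1}$ is controlled on each $\cD_K$ by $|K|\sup|f|$. Alternatively it follows from $\|\kappa_m f\|\le C\,\sup|(1+|x|^2)^2 f|$ combined with (\ref{stime}).

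The main obstacle is justifying Fubini and the pointwise meaning of the creation term. $a^\dagger_p$ is only a quadratic form involving $\delta(p,k_l)$ as in (\ref{a*p}), so I would never integrate the delta directly. I would rewrite that term through the adjunction (\ref{aggF}) as the genuine Schwartz function $\overline{\langle\Psi|a_p\Psi'\rangle}$ before any exchange of integrals; uniform integrability in $(x,p,k)$ then follows because $f$ has compact support and $\Psi,\Psi'$ decay rapidly.
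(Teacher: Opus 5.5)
Your proposal is correct and follows essentially the paper's route. The paper combines (\ref{PHI}), (\ref{manc}), (\ref{PHIx}) and Fubini--Tonelli, and gets the distribution property from the smoothness and boundedness of $x\mapsto\langle\Psi'|\hat{\phi}(x)\Psi\rangle$; you instead verify the $N+M=1$ case of (\ref{manc}) by hand and treat the creation term via the adjunction (\ref{aggF}), which amounts to the same argument. The only slip is cosmetic: drop the prefactor $\frac{1}{\sqrt{2}(2\pi)^{3/2}}$ in your final estimate, since it is already contained in $\langle\Psi'|\hat{\phi}(x)\Psi\rangle$ by (\ref{PHIx}). With constant $1$ the bound follows from (\ref{UI}) and suffices.
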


\begin{proof} (\ref{PHI}), (\ref{manc}), (\ref{PHIx}) and Fubini--Tonelli theorem yield (\ref{UI}). The map $\cD_\bR(\bR^4) \ni f \mapsto \int_{\bR^4} \langle \Psi' |\hat{\phi}(x) \Psi \rangle f(x) d^4x \in \bC$ defines a distribution in $\cD'(\bR^4)$, for every $\Psi,\Psi' \in \gS_0$, simply because $\bR^4 \ni x \mapsto  \langle \Psi' |\hat{\phi}(x) \Psi \rangle$ is smooth.
\end{proof}

\section{The stress-energy tensor operator of the Klein-Gordon field in $\bM$}\label{SEC2agg}
The construction in the previous section can be extended to {\em normally ordered Wick polynomials} in Minkowski spacetime and with respect to the Poincar\'e-invariant vacuum $\Omega$. 
We only consider two special cases, both of second order: the $\phi^2$ field and the {\em stress-energy tensor} $T_{\mu\nu}$ in Minkowski spacetime.
The completely covariant formulation in curved spacetime for Hadamard states has a rather long history (see \cite{KM} and references therein for a review); a specific discussion, together with recent results on {\em normally ordered} second-order Wick polynomials in curved spacetime referred to Hadamard states, appears in \cite{Ko2,Ko13}.

\subsection{$\phi^2$ and stress-energy-momentum tensor operators}
Classically, the {\bf stress-energy(-momentum) tensor} of a (smooth) real Klein-Gordon scalar field $\phi$ of mass $m>0$ is defined as the symmetric second-order tensor field on $\bM$ whose components in an (arbitrary) coordinate representation are
\beq\label{defT}
T_{\mu\nu}(x) :=\partial_\mu \phi(x) \partial_\nu \phi(x) -  \frac{1}{2} g_{\mu\nu} (\partial_\alpha \phi(x) \partial^\alpha \phi(x) + m^2 \phi(x)^2)\:.
\eeq
Due to the {\em Klein-Gordon equation} $(\Box - m^2)\phi =0$, the {\bf conservation equation}
\beq
\partial_\mu {T^{\mu \nu}}(x)=0\:
\eeq
is valid.
We now study the quantized version of the stress-energy tensor. Since it will be useful in Sect. \ref{sezsplit}, it is also convenient to introduce the quantum version of the square of the field $\phi^2$.\\

\begin{definition}
{\em  If $x \in \bM$, the (normally ordered) {\bf $\phi^2$ quadratic form} at $x$ is the quadratic form
$$
\gS_0\times \gS_0 \ni (\Psi',\Psi) \mapsto  \langle \Psi' |:\spa \hat{\phi}^2 \spa :(x) \Psi \rangle\:,  $$
whereas the 
{\bf stress-energy tensor quadratic form} at $x$ is the assignment, to every Minkowskian reference frame, of a corresponding set of $16$ quadratic forms for $\mu,\nu =0,1,2,3$
$$\gS_0\times \gS_0 \ni (\Psi',\Psi) \mapsto  \langle \Psi' |:\spa \hat{T}_{\mu\nu} \spa :(x) \Psi \rangle\:.$$ The right-hand sides of the formulas above are obtained by replacing, respectively, $\phi(x)$ with $\hat{\phi}(x)$ in $\phi(x)\phi(x)$ and in the expression (\ref{defT}) of $T_{\mu\nu}(x)$, then expanding $\hat{\phi}(x)$ according to (\ref{PHIxf}), then moving the operator
$a$ before the operator $a^\dagger$ in the resulting products, and finally computing the matrix element $ \langle \Psi' | \cdot  \Psi \rangle$ for $\Psi,\Psi' \in \gS_0$:
$$ \langle \Psi' |:\spa\hat{\phi}^2 \spa :(x) \Psi \rangle:=$$
$$ \int_{\sV_{m,+}^2}
\frac{ e^{i (p+k)\cdot x}\langle \Psi' | a_{p}a_{k}\Psi\rangle}{2(2\pi)^3} d^2\mu_m(p,k)+ \int_{\sV_{m,+}^2}  \frac{ e^{-i (p+k)\cdot x}   \langle \Psi' | a^\dagger_{p}a^\dagger_{k}\Psi\rangle}{2(2\pi)^3} d^2\mu_m(p,k)$$ \beq + \int_{\sV_{m,+}^2} e^{i (k-p)\cdot x} \frac{ \langle \Psi' | a^\dagger_{p}a_{k}\Psi \rangle}{2(2\pi)^3}d^2\mu_m(p,k)  +  \int_{\sV_{m,+}^2} e^{i (p-k)\cdot x} \frac{ \langle \Psi' |a^\dagger_{k}a_{p}\Psi \rangle}{2(2\pi)^3}d^2\mu_m(p,k)\:,\label{DECPhi}\eeq
$$ \langle \Psi' |:\spa \hat{T}_{\mu\nu} \spa :(x) \Psi \rangle:=$$
$$ \int_{\sV_{m,+}^2}\sp\sp
\frac{ e^{i (p+k)\cdot x}\langle \Psi' | a_{p}a_{k}\Psi\rangle}{2(2\pi)^3}  t_{\mu\nu}(p,-k)d^2\mu_m(p,k)+ \int_{\sV_{m,+}^2}\sp\sp  \frac{ e^{-i (p+k)\cdot x}   \langle \Psi' | a^\dagger_{p}a^\dagger_{k}\Psi\rangle}{2(2\pi)^3}  t_{\mu\nu}(p,-k)d^2\mu_m(p,k)$$ \beq +\spa \int_{\sV_{m,+}^2}\sp\sp \sp \sp e^{i (k-p)\cdot x} \frac{ \langle \Psi' | a^\dagger_{p}a_{k}\Psi \rangle}{2(2\pi)^3} t_{\mu\nu}(p,k)d^2\mu_m(p,k)  +  \int_{\sV_{m,+}^2}\sp\sp\sp\sp e^{i (p-k)\cdot x} \frac{ \langle \Psi' |a^\dagger_{k}a_{p}\Psi \rangle}{2(2\pi)^3} t_{\mu\nu}(p,k)d^2\mu_m(p,k)\label{DEC}\eeq
where we introduced the symmetric $(0,2)$ tensor
\beq\label{tpk}
t_{\mu\nu}(p,k) :=\frac{1}{2}[k_\mu p_\nu +  p_\mu k_\nu - g_{\mu\nu}(p\cdot k + m^2)]\:.
\eeq}
\hfill $\blacksquare$
\end{definition}
%

By construction, and taking (\ref{fund}) into account, if $\Psi,\Psi'\in \gS_0$,  the maps
 $\bM \ni x \mapsto  \langle \Psi' |:\spa\hat{\phi}^2  \spa :(x) \Psi \rangle$ and $\bM \ni x \mapsto  \langle \Psi' |:\spa \hat{T}_{\mu\nu} \spa :(x) \Psi \rangle$ are smooth bounded functions so that they can be smeared with $f \in \cD(\bR^4)$, giving rise to distributions in $\cD'(\bM)$. Furthermore, the components $  \langle \Psi' |:\spa \hat{T}_{\mu\nu} \spa :(x) \Psi \rangle$ for given $\Psi,\Psi' \in \gS_0$ and $x\in \bM$ define a symmetric $(0,2)$ tensor when the Minkowskian reference frame is changed. Notice that, once defined in Minkowskian coordinates, the quadratic form of the normally ordered stress-energy tensor operator can be defined as a general symmetric $(0,2)$ tensor, independently of the choice of the type of local coordinates, by the standard local tensor transformation law.\\

\begin{proposition}\label{PROP1} Take $f \in \cD(\bM)$ and let us refer to a given Minkowski coordinate system concerning the component indices $\mu,\nu =0,1,2,3$. \\There exist unique operators  $:\spa \hat{\phi}^2 \spa:[f] : \gS_0 \to \gF_s(\cH_m)$, 
 $:\spa \hat{T}_{\mu\nu} \spa:[f] : \gS_0 \to \gF_s(\cH_m)$ respectively called 
 (smeared normally ordered) {\bf $\phi^2$ operator} and
(smeared normally ordered) {\bf stress-energy tensor operator} such that 
\begin{align}
 \langle \Psi' |:\spa \hat{\phi}^2 \spa:[f] \Psi \rangle &= \int_{\bR^4} f(x)   \langle \Psi' |:\spa \hat{\phi}^2 \spa :(x) \Psi \rangle d^4x\quad 
\forall \Psi',\Psi \in \gS_0\:, \label{PP}\\
 \langle \Psi' |:\spa \hat{T}_{\mu\nu} \spa:[f] \Psi \rangle &= \int_{\bR^4} f(x)   \langle \Psi' |:\spa \hat{T}_{\mu\nu} \spa :(x) \Psi \rangle d^4x\quad 
\forall \Psi',\Psi \in \gS_0\:. \label{TT}
\end{align}
Therefore $\cD(\bM)  \ni f \mapsto \langle \Psi' |:\spa \hat{\phi}^2\sp :[f] \Psi \rangle$ and $\cD(\bM)  \ni f \mapsto \langle \Psi' |:\spa \hat{T}_{\mu\nu}\sp :[f] \Psi \rangle$ belong to $\cD'(\bM)$.\\
The following further facts are true.
\begin{itemize}
\item [(a)] $\gS_0$ is invariant under $:\spa \hat{\phi}^2 \spa:[f]$ and $:\spa \hat{T}_{\mu\nu} \spa:[f]$.
\item [(b)]  The said operators admit adjoints and 
$$:\spa \hat{\phi}^2 \spa:[\overline{f}] \subset :\spa\hat{\phi}^2 \spa:[f]^\dagger \:, \quad :\spa \hat{T}_{\mu\nu} \spa:[\overline{f}] \subset :\spa \hat{T}_{\mu\nu} \spa:[f]^\dagger$$
so that, in particular, $:\spa \hat{\phi}^2 \spa:[f]$ and $:\spa \hat{T}_{\mu\nu} \spa:[f]$ are symmetric if $f\in \cD_\bR(\bM)$.

\item[(c)] 
$:\spa \hat{T}_{\mu\nu} \spa:[f] =\: \:  :\spa \hat{T}_{\nu\mu} \spa:[f] \quad \forall f \in \cD(\bM)\:.$
\item[(d)] The representation (\ref{rep}) of $IO(1,3)_+$ acts covariantly on the stress-energy tensor operator:
 if $g\equiv (\Lambda, a)\in IO(1,3)_+$ and defining $(g_*f)(x) := f(g^{-1}x)$ for $x\in \bM$,
\beq
\label{covT}
U_g :\spa \hat{T}_{\mu\nu} \sp:[f] U_g^\dagger &={(\Lambda^{-1})_\mu}^\alpha {(\Lambda^{-1})_\nu}^\beta :\spa \hat{T}_{\alpha\beta} \spa:[g_* f] \quad\mbox{if  $f\in  \cD(\bM)$.}
\eeq
Analogously,
\beq
\label{covT0}
U_g :\spa \hat{\phi}^2 \spa:[f] U_g^\dagger &= \: :\spa \hat{\phi}^2 \spa:[g_* f] \quad\mbox{if  $f\in  \cD(\bM)$.}
\eeq
\item[(e)] The stress-energy tensor operator is conserved in the distributional sense:
\beq
:\spa \hat{T}_{\mu\nu} \spa:[\partial^\mu f] =0\:, \quad \mbox{for $ f \in \cD(\bM)$}. 
\eeq
\item[(f)] $:\spa\hat{T}_{\mu\nu} \spa:[f] u^\mu v^\nu$ and $:\spa \hat{\phi}^2\spa:[f]$ are essentially selfadjoint if
$u,v$ are arbitrary vectors, 
 $f\in \cD(\bM)$ is real.
\end{itemize}
\end{proposition}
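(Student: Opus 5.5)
The plan is to build the operators directly from the decomposition (\ref{DEC}) (resp. (\ref{DECPhi})). We smear each of the four normally ordered terms with $f$ and integrate in $x$ first, using Fubini. Each term then becomes an integral against the Fourier transform of $f$. For instance, the $a^\dagger a^\dagger$ term becomes
$$\int \frac{(2\pi)^2\hat f(p+k)}{2(2\pi)^3} t_{\mu\nu}(p,-k)\langle\Psi'|a^\dagger_pa^\dagger_k\Psi\rangle d^2\mu_m(p,k).$$
Since $f\in\cD(\bM)$, $\hat f$ is Schwartz on $\bR^4$. Restricted to the mass shells, the kernels $K(p,k)=\hat f(\pm p\pm k)\,t_{\mu\nu}(p,\pm k)$ are therefore smooth with polynomially bounded derivatives. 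In the $a^\dagger a^\dagger$ and $aa$ sectors, where $p+k$ lies in the forward cone with $|p+k|\gtrsim |\vec p|+|\vec k|$, they are in fact Schwartz in $(\vec p,\vec k)$.

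On $\cH^{(n)}\cap\gS_0$ we define the operator explicitly. The $aa$ term is $\sqrt{n(n-1)}\int K\,\Psi(p,k,\cdot)$, which lands in $\cS(\sV^{n-2}_{m,+})$. The $a^\dagger a$ terms are $n$ times a symmetrized integral operator $\int K(p_1,k)\Psi(k,p_2,\dots)d\mu_m(k)$, which is Schwartz because $\hat f(p-k)$ decays in $p-k$ while $\Psi$ decays in $k$. The $a^\dagger a^\dagger$ term is the symmetrization of $K(p_1,p_2)\Psi(p_3,\dots)$, which is Schwartz thanks to the decay noted above. Checking that these vectors reproduce (\ref{TT}) is a direct computation with Definition \ref{DEFF}. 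Uniqueness follows from the density of $\gS_0$. Together these give existence, uniqueness and (a).

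The distributional property follows from the smoothness and boundedness of $x\mapsto\langle\Psi'|:\spa\hat T_{\mu\nu}\spa:(x)\Psi\rangle$, exactly as in Proposition \ref{PROP41}.

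For (b), we compare matrix elements. The quadratic form at $x$ satisfies $\overline{\langle\Psi'|:\spa\hat T\spa:(x)\Psi\rangle}=\langle\Psi|:\spa\hat T\spa:(x)\Psi'\rangle$, because the $aa$ and $a^\dagger a^\dagger$ terms are exchanged by (\ref{aggF}) and $t$ is real and symmetric. Hence $\langle\Psi'|T[\overline f]\Psi\rangle=\langle T[f]\Psi'|\Psi\rangle$ on $\gS_0$, which gives the inclusion in the adjoint. Property (c) is immediate from the symmetry of $t_{\mu\nu}$.

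For (d), we use Proposition \ref{PROPAA}(e) at the level of the kernels. The formal law is $U_ga_pU_g^\dagger=e^{i a\cdot \Lambda p}a_{\Lambda p}$, which is verified on $\gS_0$ from (\ref{rep1}) and (\ref{prop}). Combined with the invariance of $d\mu_m$, the identity $t_{\mu\nu}(\Lambda p,\Lambda k)=(\Lambda^{-1})_\mu{}^\alpha(\Lambda^{-1})_\nu{}^\beta t_{\alpha\beta}(p,k)$ (with the obvious index placement), and a change of variables $x\mapsto g x$ in the smearing integral, this yields (\ref{covT}) and (\ref{covT0}).

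For (e), we integrate by parts in $x$. This turns $\partial^\mu f$ into a factor $\pm i(p\mp k)^\mu$ multiplying $t_{\mu\nu}$. The on-shell identities
$$(k-p)^\mu t_{\mu\nu}(p,k)=\tfrac12\left[k_\nu(p\cdot k-m^2\ldots)\right]$$
and its analogues all vanish by direct expansion using $p\cdot p=k\cdot k=-m^2$: for example, $(p-k)^\mu[k_\mu p_\nu+p_\mu k_\nu-g_{\mu\nu}(p\cdot k+m^2)]=0$. The same holds with $k\to -k$.

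Part (f) is the main obstacle. The plan is to use Nelson's analytic vector theorem on the symmetric operator restricted to a suitable invariant domain of analytic vectors. We will take the subspace spanned by finite-particle vectors whose components are Schwartz, which is invariant under $T:=:\spa\hat T_{\mu\nu}\spa:[f]u^\mu v^\nu$. The required estimate is of the form
$$\|T\Psi_n\|\le C\,(n+2)\|\Psi_n\|_{(s)}$$
in a suitable weighted norm $\|\cdot\|_{(s)}$ that $T$ preserves up to constants. This gives $\|T^k\Psi_n\|\le C^k (n+2)\cdots(n+2k)\cdot(\text{const})$, i.e. growth like $C^k k!\,2^k$, so the vectors are analytic with a nonzero radius of convergence; compare (\ref{stime}). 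Controlling the $a^\dagger a$ part requires a Schur-type bound on the kernel $\hat f(p-k)t(p,k)$. Its polynomial growth in momenta has to be absorbed by weights $(1+H^{u_0})^s$, which is the delicate point.

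If this estimate proves too cumbersome, we fall back on the known result of Sanders \cite{Ko13} on essential selfadjointness of smeared normally ordered quadratic Wick polynomials, or on commutator theorems with $N+H$ as the comparison operator. For the latter, one checks $\|T\Psi\|\le c\|(N+H+1)\Psi\|$ and a bound on $|\langle T\Psi|(N+H)\Psi\rangle-\langle (N+H)\Psi|T\Psi\rangle|$.
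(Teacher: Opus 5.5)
Your construction of the operators and your arguments for uniqueness and (a)--(e) follow the paper's proof essentially step by step. Two small remarks. For the $a^\dagger a$ piece, your one-line justification becomes a proof once you use $|E(p)-E(k)|\le|\vec p-\vec k|$, the boundedness of all derivatives of $E$, and Peetre's inequality; the paper instead integrates by parts in Lemma~\ref{bastard}. The first display in your (e) is garbled, but the identity you write next is the correct one.

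The genuine gap is in (f) for the stress--energy tensor. Your main route, Nelson's analytic-vector theorem on $\gS_0$, does not work, because the estimate it rests on cannot hold.

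On $\cH^{(n)}$, the $a^\dagger a$ part of $T:=u^\mu v^\nu:\spa\hat T_{\mu\nu}\spa:[f]$ is the second quantization of a one-particle operator. Its kernel in $L^2(\bR^3,d^3p)$ is proportional to $\hat f(E(p)-E(k),\vec p-\vec k)\,u^\mu v^\nu t_{\mu\nu}(p,k)/\sqrt{E(p)E(k)}$. For $u=v=\partial_0$ its diagonal is a constant multiple of $E(p)$, consistently with Proposition~\ref{TEO54}.

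This operator is therefore of first order in the energy, and no weighted norm is preserved. $T$ maps the $(s+1)$-weighted space into the $s$-weighted one, so each iteration costs one more power of the energy. The bound $\|T^k\Psi_n\|\le C^k(n+2)\cdots(n+2k)$ never closes.

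The loss is not a technicality. Since $f$ has compact support, $\hat f$ cannot decay exponentially (otherwise $f$ would be real analytic). The accumulated energy weights should then make $\|T^k\Psi\|$ grow faster than $C^k k!$. So there is no reason to expect vectors of $\gS_0$ (or of $\gD_0$) to be analytic, or even quasi-analytic, for $T$.

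This is exactly where $T$ differs from $:\spa\hat\phi^2\spa:[f]$. There the one-particle kernel $\hat f(p-k)/\sqrt{E(p)E(k)}$ is bounded by the Schur test, and estimates of the type (\ref{stime}) do give analyticity. That is the Langerholc--Schroer argument the paper quotes.

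What proves (f) for $T$ is the alternative you list last. It is the mechanism behind the result the paper actually invokes, Theorem~5.2 of \cite{Ko13}, which is a commutator argument in the spirit of Reed--Simon Theorem~X.37. Fix $u_0\in\sT_+$ and set $M:=N+H^{u_0}+I$. You need, on $\gS_0$ (which is a core for $M$):
the bound $\|T\Psi\|\le c\|M\Psi\|$, and
the commutator bound $|\langle T\Psi|M\Psi\rangle-\langle M\Psi|T\Psi\rangle|\le d\,\|M^{1/2}\Psi\|^2$.
The first-order growth that defeats the analytic-vector approach is exactly what $H^{u_0}$ dominates. Meanwhile $[T,N]$ only involves the $aa$ and $a^\dagger a^\dagger$ pieces, whose kernels are Schwartz.

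If you instead cite \cite{Ko13}, note that essential selfadjointness is established there on $\gG_0$, not on $\gS_0$. The paper transfers it using $\gG_0\subset\gS_0$: the closure of $T$ on $\gS_0$ is a symmetric extension of the selfadjoint closure of $T$ restricted to $\gG_0$. Selfadjoint operators are maximally symmetric, so the two closures coincide.
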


\begin{proof} See Appendix \ref{AAA}. \end{proof}

\begin{remark} \label{REMREF} {\em
The self-adjointness of symmetric operators like the averaged stress tensor is a difficult 
issue  in general globally hyperbolic spacetimes. However, essential self-adjointness of $:\spa\hat{\phi}^2\spa:[g]$ was established by Sanders in \cite{Ko2} in general globally hyperbolic spacetimes for any Hadamard reference state and in the  domain $\gG_0$  of this operator
when $g = \sum_{i=1}^Nf^2_i$ for some finite $N$, where each $f_i \in \cD(\bM)$ is real.
 In Minkowski spacetime, and more generally in static spacetimes referring to the unique ground state,  
the situation is much better also for the normally ordered stress-energy tensor $:\spa\hat{T}_{\mu\nu} \spa:[f]$, because one can exploit the self-adjointness of the Hamiltonian $H$ together with commutator arguments as in Sec.X.5 of \cite{RS2}, particularly Thm.X.37. An application of this argument  to establishing essential selfadjointness of the stress--energy tensor of a massive free scalar field  appears\footnote{The author is grateful to an anonymous referee for remarking this fact and for pointing out  \cite{Ko13}.} in Thm. 5.2 of Sanders' work
\cite{Ko13}.} \hfill $\blacksquare$
\end{remark}

\subsection{Locality/commutativity properties of $\hat{\phi}[f]$, $:\spa\hat{\phi}^2\spa : [f]$, and $:\spa \hat{T}_{\mu\nu}\spa:[f]$}\label{sezsplit}
Once we have defined $\hat{\phi}[f]$, $ :\spa \hat{\phi}^2 \spa:[f]$, $ :\spa \hat{T}_{\mu\nu} \spa:[f]$ and the associated quadratic forms, we move on to show another technically fruitful way to compute them in terms of a smearing procedure with compactly supported {\em distributions} instead of compactly supported smooth functions.
First of all, we define the {\bf normally ordered product of two field operators} smeared with $f,g \in \cD(\bM)$ as the operator with domain $\gF_0$
\beq :\spa \hat{\phi}[f]\hat{\phi}[g]\spa: \: := \hat{\phi}[f]\hat{\phi}[g] - \langle \Omega | \hat{\phi}[f]\hat{\phi}[g] \Omega\rangle I\:.\label{NORP}\eeq
This type of definition, when dealing with algebraic states of Gaussian type, can easily be extended to products of many fields by taking advantage of the so-called {\em Wick rule} (see e.g. \cite{KM}). We stick to the elementary case above, since it is sufficient for our purposes.
Accordingly, if $\Psi \in \gF_0$, the maps
$$\cD(\bM)\times \cD(\bM) \ni (f,g)\mapsto \langle \Psi |  \hat{\phi}[f]\hat{\phi}[g] \Psi \rangle\:, \quad 
\cD(\bM)\times \cD(\bM) \ni (f,g)\mapsto \langle \Psi |  :\spa\hat{\phi}[f]\hat{\phi}[g]\spa: \Psi \rangle$$ are respectively the {\bf 2-point function} and 
the {\bf normally ordered 2-point function} of the state represented by $\Psi$.\\

\begin{lemma}\label{lemmad} If $f,g \in \cD(\bM)$,
$:\spa \hat{\phi}[f]\hat{\phi}[g]\spa:$ can be obtained from $\hat{\phi}[f]\hat{\phi}[g]$ by expanding the field operators according to (\ref{PHI}) and moving  
$a$ to the right of  $a^\dagger$:
\beq:\spa \hat{\phi}[f]\hat{\phi}[g]\spa: = \frac{1}{2} \left(a(\kappa_m\overline{f})a(\kappa_m\overline{g}) + a^\dagger(\kappa_m g)a(\kappa_m\overline{f})
+ a^\dagger(\kappa_m f)a(\kappa_m\overline{g})  + a^\dagger(\kappa_m f)a^\dagger(\kappa_m g)\right)\:,\label{mat}\eeq
so that, in particular, $:\spa \hat{\phi}[f]\hat{\phi}[g]\spa: = :\spa \hat{\phi}[g]\hat{\phi}[f]\spa:$.
\end{lemma}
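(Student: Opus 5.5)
The argument is a direct algebraic expansion on the invariant domain $\gF_0$, using only the definition (\ref{PHI}) and the commutation rules of Proposition \ref{PROPAA}. Writing $\psi_f:=\kappa_m f$, $\psi_{\overline f}:=\kappa_m\overline{f}$ (and similarly for $g$), I will expand
$$\hat{\phi}[f]\hat{\phi}[g]=\frac{1}{2}\left(a(\psi_{\overline f})+a^\dagger(\psi_f)\right)\left(a(\psi_{\overline g})+a^\dagger(\psi_g)\right)$$
on $\gF_0$. This is legitimate because $\gF_0$ is invariant under all creation and annihilation operators, so every product is defined there. The expansion produces four terms. Three of them are already normally ordered: $a\,a$, $a^\dagger a^\dagger$, and $a^\dagger(\psi_f)a(\psi_{\overline g})$. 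The fourth, $a(\psi_{\overline f})a^\dagger(\psi_g)$, I will rewrite by Proposition \ref{PROPAA}(c) as
$$a^\dagger(\psi_g)a(\psi_{\overline f})+\langle \psi_{\overline f}|\psi_g\rangle I .$$

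The next step identifies the scalar term with the vacuum expectation value. Since $a(\psi)\Omega=0$ and, by Proposition \ref{PROPAA}(b), $\langle\Omega|a^\dagger(\psi)\,\cdot\,\rangle=\langle a(\psi)\Omega|\,\cdot\,\rangle=0$, each of the normally ordered terms has zero vacuum expectation value. Hence
$$\langle\Omega|\hat{\phi}[f]\hat{\phi}[g]\Omega\rangle=\frac12\langle\psi_{\overline f}|\psi_g\rangle ,$$
which is exactly the scalar term above. Subtracting it as in (\ref{NORP}) leaves precisely the four operators on the right-hand side of (\ref{mat}).

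For the symmetry $:\spa \hat{\phi}[f]\hat{\phi}[g]\spa:\,=\,:\spa \hat{\phi}[g]\hat{\phi}[f]\spa:$, I will exchange $f$ and $g$ in (\ref{mat}) and compare term by term. The two mixed terms $a^\dagger(\kappa_m g)a(\kappa_m\overline f)$ and $a^\dagger(\kappa_m f)a(\kappa_m\overline g)$ simply swap with each other. The $aa$ and $a^\dagger a^\dagger$ terms are invariant because of the relations $[a,a]=0$ and $[a^\dagger,a^\dagger]=0$ in Proposition \ref{PROPAA}(c).

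I do not expect a real obstacle. The only point needing care is domains: all identities are asserted as identities of operators on $\gF_0$, which is a common invariant domain. The inner product convention (antilinear in the first slot) must also be tracked, so that the constant is $\langle\kappa_m\overline f|\kappa_m g\rangle$ and matches the vacuum two-point function.
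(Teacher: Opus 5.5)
Your proposal is correct and follows essentially the paper's route. Both arguments expand via (\ref{PHI}), isolate the single non-normally-ordered term $a(\kappa_m\overline{f})a^\dagger(\kappa_m g)$, and use the fact that its commutator is a multiple of the identity equal to its vacuum expectation value. The paper checks this through matrix elements between $n$- and $l$-particle vectors and the definitions of $a,a^\dagger$, whereas you invoke the CCR of Proposition \ref{PROPAA}(c) directly on the common invariant domain $\gF_0$; this is an equivalent and slightly cleaner way of saying the same thing.
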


\begin{proof} See Appendix \ref{AAA}.
\end{proof}
The notions introduced allow one to compute $ \langle \Psi | :\spa \hat{\phi}^2\spa:[f] \Psi'\rangle$ and $ \langle \Psi | :\spa \hat{T}_{\mu\nu}\spa:[f] \Psi'\rangle$ by means of a procedure known as {\em point-splitting} which in particular relies upon  Schwartz' kernel theorem.\\

\begin{proposition}\label{PROP2X}
If $\Psi,\Psi' \in \gS_0$, the map $\cD(\bM)\times \cD(\bM) \ni (f,g) \mapsto \langle \Psi| :\spa \hat{\phi}[f]\hat{\phi}[g]\spa: \Psi'\rangle$ is the restriction to $\cD(\bM)\times \cD(\bM)$ of a unique distribution in $\cD'(\bM\times \bM)$ which is a smooth bounded function, denoted by $\bM\times \bM \ni (x,y) \mapsto \langle \Psi | :\spa \hat{\phi}(x)\hat{\phi}(y)\spa: \Psi'\rangle$, symmetric under interchange of its arguments. Furthermore,
\beq\label{ONE}
\int f(x)\delta(x,y)  \langle \Psi | :\spa \hat{\phi}(x)\hat{\phi}(y)\spa: \Psi'\rangle d^4xd^4y =  \langle \Psi | :\spa \hat{\phi}^2\spa:[f] \Psi'\rangle\quad \mbox{if $f\in \cD(\bM)$}.
\eeq
Referring to Minkowskian coordinates,
\beq\label{TWO}
\int f(x)\delta(x,y) D_{\mu\nu}(x,y)  \langle \Psi | :\spa \hat{\phi}(x)\hat{\phi}(y)\spa: \Psi'\rangle d^4xd^4y =  \langle \Psi | :\spa \hat{T}_{\mu\nu}\spa:[f] \Psi'\rangle\quad \mbox{if $f\in \cD(\bM)$,}
\eeq
where we introduced the (formally selfadjoint) second-order differential operator
\beq \label{Dmunu} D_{\mu\nu}(x,y) := \frac{1}{2}\left[\partial_{x^\mu} \partial_{y^\nu} + \partial_{x^\nu} \partial_{y^\mu}
 -g_{\mu\nu} \left( g^{\alpha\beta}\partial_{x^\alpha} \partial_{y^\beta} + m^2\right)\right].\eeq
\end{proposition}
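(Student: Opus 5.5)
The plan is to compute the smooth kernel explicitly and then check the two identities by comparing integrands in momentum space. First I would apply Lemma \ref{lemmad}, i.e. formula (\ref{mat}), to write $\langle \Psi|:\spa\hat{\phi}[f]\hat{\phi}[g]\spa:\Psi'\rangle$ as a sum of four terms. Each term is a normally ordered quadratic form from Definition \ref{DEFF}, smeared against $\kappa_m f$ and $\kappa_m g$ (or their conjugates). Next I insert $\kappa_m f(p)=\sqrt{2\pi}\hat f(p)=(2\pi)^{-3/2}\int e^{-ip\cdot x}f(x)d^4x$ and the analogous expression for $g$. Using Fubini--Tonelli, which is legitimate because $\langle\Psi|a^\#_p a^\#_k\Psi'\rangle$ is Schwartz in $(p,k)$ for $\Psi,\Psi'\in\gS_0$, the result takes the form $\int f(x)g(y)K(x,y)\,d^4x\,d^4y$. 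The kernel is
$$K(x,y)=\frac{1}{2(2\pi)^3}\int_{\sV^2_{m,+}}\Big(e^{i(p\cdot x+k\cdot y)}\langle\Psi|a_pa_k\Psi'\rangle+e^{-i(p\cdot x+k\cdot y)}\langle\Psi|a^\dagger_pa^\dagger_k\Psi'\rangle+e^{i(k\cdot y-p\cdot x)}\langle\Psi|a^\dagger_pa_k\Psi'\rangle+e^{i(k\cdot x-p\cdot y)}\langle\Psi|a^\dagger_pa_k\Psi'\rangle\Big)d^2\mu_m(p,k).$$
The precise phases will be read off from (\ref{PHI}), with care about which slot carries the complex conjugation.

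Smoothness and boundedness of $K$ follow from the Schwartz decay of the matrix elements, since all derivatives in $x,y$ only bring down polynomial factors in $p,k$; this is the same mechanism invoked for (\ref{PHIx}) via (\ref{fund}). Symmetry under $x\leftrightarrow y$ follows from $[a_p,a_k]=0$ and $[a^\dagger_p,a^\dagger_k]=0$ as quadratic forms, which is clear from the symmetry of $\Psi_n$ in (\ref{prop}). Under the swap, the two mixed terms are exchanged with each other. Uniqueness of the distribution holds because $\cD(\bM)\otimes\cD(\bM)$ is dense in $\cD(\bM\times\bM)$; equivalently, one can invoke the Schwartz kernel theorem to see that the separately continuous bilinear form extends uniquely.

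For (\ref{ONE}), I set $y=x$ in $K$ and integrate against $f$. Putting $x=y$ in the four exponentials reproduces exactly the four phases $e^{\pm i(p+k)\cdot x}$, $e^{i(k-p)\cdot x}$, $e^{i(p-k)\cdot x}$ of (\ref{DECPhi}); the last one appears after relabeling $p\leftrightarrow k$. Comparing with (\ref{PP}) then gives the identity. For (\ref{TWO}), I apply $D_{\mu\nu}(x,y)$ under the integral sign, which is justified by the decay. Each $\partial_{x^\mu}$ brings down $\pm ip_\mu$ and each $\partial_{y^\nu}$ brings down $\pm ik_\nu$. In the $aa$ term the two factors $i$ give $-p_\mu k_\nu$, and $-g^{\alpha\beta}\partial_{x^\alpha}\partial_{y^\beta}\to +p\cdot k$. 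The result is $\frac12[-p_\mu k_\nu-p_\nu k_\mu-g_{\mu\nu}(-p\cdot k+m^2)]$, and I must check that this equals $t_{\mu\nu}(p,-k)$ as in (\ref{DEC}). The $a^\dagger a^\dagger$ term gives the same. The mixed terms have opposite signs and yield $t_{\mu\nu}(p,k)$. Setting $x=y$ and using (\ref{TT}) concludes.

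The main obstacle is bookkeeping rather than analysis: the signs and conjugations in $\kappa_m\overline f$ versus $\kappa_m f$ must produce exactly the phases of (\ref{DECPhi})--(\ref{DEC}), with $t_{\mu\nu}(p,-k)$ in the creation–creation and annihilation–annihilation terms and $t_{\mu\nu}(p,k)$ in the mixed ones. I would verify this first on the one-mode case, using the sign convention for $p\cdot x$ in the metric $(-,+,+,+)$. A secondary point is justifying the restriction to the diagonal $\delta(x,y)$. This is harmless here because $K$ and $D_{\mu\nu}K$ are smooth, so the pairing with $f(x)\delta(x,y)$ is simply the integral over the diagonal.
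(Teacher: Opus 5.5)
Your proposal is correct and follows essentially the paper's proof. You obtain the explicit smooth, bounded, symmetric kernel from (\ref{mat}) and (\ref{PPRO}) via Fubini, identify it as the unique distributional kernel, and then evaluate on the diagonal, applying $D_{\mu\nu}$ for (\ref{TWO}), to match (\ref{DECPhi}) and (\ref{DEC}); your phases and the resulting $t_{\mu\nu}(p,\pm k)$ factors check out. The only cosmetic difference is in how existence and uniqueness are handled: the paper first invokes the Schwartz kernel theorem and then identifies the kernel, whereas you get existence directly from the explicit kernel and uniqueness from the density of $\cD(\bM)\otimes\cD(\bM)$ in $\cD(\bM\times\bM)$.
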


\begin{proof} See Appendix \ref{AAA}.
\end{proof}
The results just obtained imply a first crucial fact about {\em relativistic  locality} in the spirit of the AHK approach. Here the smearing procedure reveals its physical importance.\\

\begin{proposition}\label{PROP4} If ${\cal O} \subset \bM$ is open, $\cT({\cal O})$ denotes the unital $*$-algebra of operators
on $\gF_s(\cH_m)$
with common invariant domain $\gS_0$ generated by (a) $\hat{\phi}$, (b) $:\spa \hat{\phi}^2 \spa:$, (c) $ :\spa \hat{T}_{\mu\nu} \spa:$ smeared with test functions supported in ${\cal O}$. If the open sets ${\cal O}_1\subset \bM$ and ${\cal O}_2\subset \bM$ are causally separated, then
$$[A_1,A_2]=0 \quad \mbox{for $A_1\in \cT({\cal O}_1)$ and $A_2\in \cT({\cal O}_2)$. }$$
\end{proposition}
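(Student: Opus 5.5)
Since $\cT({\cal O}_i)$ is the unital $*$-algebra generated by the listed smeared operators, all acting on the common invariant domain $\gS_0$, and commutators of products expand by the Leibniz rule, it suffices to show that the \emph{generators} commute on $\gS_0$. So I will check $[X_1,X_2]\Psi=0$ for $\Psi\in\gS_0$, where each $X_i$ is one of $\hat{\phi}[f_i]$, $:\spa\hat{\phi}^2\spa:[f_i]$, $:\spa\hat{T}_{\mu\nu}\spa:[f_i]$ with $supp(f_i)\subset{\cal O}_i$. Because $\gS_0$ is invariant, it is enough to show $\langle\Psi'|[X_1,X_2]\Psi\rangle=0$ for all $\Psi,\Psi'\in\gS_0$. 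This uses Proposition \ref{PROP1}(a),(b) to move operators across the inner product.

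The core case is two fields, $[\hat{\phi}[f],\hat{\phi}[g]]=-iE(f,g)I$. By Proposition \ref{PROP35}(c1), this vanishes for causally separated supports, and the commutator is proportional to $I$. For the quadratic objects I would use the point-splitting of Proposition \ref{PROP2X}. By (\ref{NORP}) and Lemma \ref{lemmad}, $:\spa\hat{\phi}[h]\hat{\phi}[k]\spa:$ differs from $\hat{\phi}[h]\hat{\phi}[k]$ only by a multiple of $I$. Hence, by the Leibniz rule and CCR, for $h,k$ supported in ${\cal O}_1$,
$$[:\spa\hat{\phi}[h]\hat{\phi}[k]\spa:,\hat{\phi}[g]]=-iE(k,g)\hat{\phi}[h]-iE(h,g)\hat{\phi}[k].$$
This vanishes whenever $supp(g)\subset{\cal O}_2$. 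Similarly, $[:\spa\hat{\phi}[h]\hat{\phi}[k]\spa:,:\spa\hat{\phi}[h']\hat{\phi}[k']\spa:]$ is a sum of terms each containing a factor $E(\cdot,\cdot)$ pairing a function from ${\cal O}_1$ with one from ${\cal O}_2$, so it vanishes as well.

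The main obstacle is passing from these products of smeared fields to $:\spa\hat{\phi}^2\spa:[f]$ and $:\spa\hat{T}_{\mu\nu}\spa:[f]$, which are obtained by the singular smearing $f(x)\delta(x,y)D_{\mu\nu}(x,y)$ in (\ref{ONE}) and (\ref{TWO}). I would do this in the following steps.

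\begin{enumerate}
\item Approximate the compactly supported distribution $F(x,y)=D^t_{\mu\nu}[f(x)\delta(x,y)]$ by finite sums $\sum_j h_j(x)k_j(y)$. I would first mollify $\delta$ to a smooth kernel, then approximate that kernel by tensor products in $\cD({\cal O}_1)\otimes\cD({\cal O}_1)$. The supports stay in ${\cal O}_1$ because $supp F\subset \{(x,x): x\in supp f\}$, which is compact in ${\cal O}_1\times{\cal O}_1$, and a small enough mollification keeps it there.
\item Use Proposition \ref{PROP2X}: for $\Psi,\Psi'\in\gS_0$, the normally ordered two-point kernel is a smooth bounded function. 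Therefore $\langle\Psi'|:\spa\hat{T}_{\mu\nu}\spa:[f]\Psi\rangle$ is the limit of $\sum_j\langle\Psi'|:\spa\hat{\phi}[h_j]\hat{\phi}[k_j]\spa:\Psi\rangle$ as the approximations converge in $\cE'$ (or $\cD'$ with supports in a fixed compact set).
\item Form the weak commutator identity $\langle X_2^\dagger\Psi'|X_1\Psi\rangle-\langle X_1^\dagger\Psi'|X_2\Psi\rangle=0$. Apply the approximation to $X_1$, with $X_2^\dagger\Psi',X_2\Psi\in\gS_0$ (using invariance and Proposition \ref{PROP1}(b)); this reduces the identity to the vanishing already proved for the approximants.
\item Repeat the same limiting argument in the second slot.
\end{enumerate}

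A possibly cleaner alternative is to work directly with the quadratic-form expressions (\ref{DEC}) and the commutation relations of Proposition \ref{PROPaa}. The commutator kernel then becomes a combination of $\langle\Psi'|a^\#a^\#\Psi\rangle$ smeared against $f_1(x)f_2(y)$ times derivatives of the Pauli–Jordan function $E(x,y)$. This vanishes because $E(x,y)=0$ for spacelike $x-y$ and the supports are causally separated. I would use whichever route makes the justification of the limits easiest; the essential input in both is that the commutator of normally ordered quadratic expressions is linear in the fields, times $E$.
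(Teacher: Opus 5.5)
Your proposal is correct and follows essentially the paper's route. You obtain commutativity of normally ordered products of smeared fields from the CCR and (\ref{NORP}), then use the smooth kernel of Proposition \ref{PROP2X} to pass to the diagonal smearing $f(x)\delta(x,y)D_{\mu\nu}(x,y)$, one slot at a time. Where you mollify and approximate by tensor products, the paper notes that the smooth difference kernel vanishes against all $h_1'(x)h_2'(y)$ supported in ${\cal O}\times{\cal O}$, hence vanishes pointwise there and therefore against any compactly supported distribution supported in that set.
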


\begin{proof} See Appendix \ref{AAA}.
\end{proof}

\subsection{Energy inequalities and Quantum Energy Inequalities}
Again taking advantage of Proposition \ref{PROP2X}, we move on to consider energy inequalities. We start with the observation that the classical stress-energy tensor (\ref{defT}) enjoys two important properties.
As is well known, the {\bf four-momentum density} in the Minkowskian reference frame $u \in \sT_+$ is defined as $P_u^\mu(x) := T^{\mu\nu} (x) u_\nu  =T^{\nu\mu} (x) u_\nu$.
Some computations based on the explicit expression (\ref{defT}) imply that
\beq\label{genEIQ2}
T^{\mu\nu}(x) u_\mu u'_\nu \geq 0 \quad   \mbox{if $u, u'\in \sT_+$.}
\eeq
We observe that, in curved spacetime, $u$ must be taken to be a {\em Killing vector} in order to define a conserved four-momentum density $P_u$, whereas $u'$ may be chosen as any future-directed timelike vector. In general, their roles cannot be interchanged in curved spacetime, although (\ref{genEIQ2}) remains valid in either case.
The following elementary fact is true.\\
 \begin{proposition}\label{PE}
 A symmetric tensor $T$ in $\sV^*\otimes \sV^*$ satisfies $T_{\mu\nu} u^\mu u'^\nu \geq 0$ for every pair $u,u' \in \sT_+$ if and only if $J_u \in \overline{\sV_+}$ for every $u\in \sT_+$, where $J^\nu_u := -u^\mu {T_{\mu}}^\nu$. \\
\end{proposition}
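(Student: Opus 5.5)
The plan is to rewrite the condition in terms of the vector $J_u$ and then use the elementary characterization of $\overline{\sV_+}$ already recalled in Section \ref{secdefgen}. By definition, $J_u\cdot u' = g_{\nu\alpha}J^\nu_u u'^\alpha = -u^\mu T_{\mu\alpha} u'^\alpha$. Hence the hypothesis $T_{\mu\nu}u^\mu u'^\nu\geq 0$ for all $u'\in\sT_+$ says exactly that $J_u\cdot u'\leq 0$ for every $u'\in\sT_+$. By homogeneity, and because $\sV_+$ is the positive cone over $\sT_+$, this is the same as $J_u\cdot v\leq 0$ for every $v\in \sV_+$. The whole statement therefore reduces to the following lemma: a vector $w\in\sV$ lies in $\overline{\sV_+}$ if and only if $w\cdot v\leq 0$ for all $v\in\sV_+$. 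Here I read $\overline{\sV_+}$ as the closed cone, which includes $0$; this is necessary because $T=0$ gives $J_u=0$.

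For the direction ($\Leftarrow$), assume $J_u\in\overline{\sV_+}$. If $J_u=0$ the claim is trivial. Otherwise $J_u$ is causal and future-directed. The remark in Section \ref{secdefgen} ($u$ causal and $v\in\sV_+$ give $u\in\overline{\sV_+}$ iff $g(u,v)<0$) then yields $J_u\cdot u'<0$ for $u'\in\sT_+$, and so $T_{\mu\nu}u^\mu u'^\nu> 0$.

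For the direction ($\Rightarrow$), fix $u$, set $w:=J_u$, and assume $w\cdot v\leq 0$ for all $v\in\sV_+$. Choose Minkowskian coordinates with $e_0=u'$ for some fixed $u'\in\sT_+$ and write $w=(w^0,\vec w)$. Test with $v=(1,\lambda \vec n)$, where $|\vec n|=1$ and $0\le\lambda<1$, which is a vector of $\sV_+$. This gives $-w^0+\lambda\,\vec w\cdot\vec n\leq 0$. Taking $\vec n=\vec w/|\vec w|$ (when $\vec w\neq 0$) and letting $\lambda\to1^-$ yields $w^0\geq |\vec w|$. Thus either $w=0$, or $w^0>0$ and $w\cdot w=-(w^0)^2+|\vec w|^2\le 0$. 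In the second case $w$ is causal with positive time component relative to $e_0\in\sV_+$, so it lies in $\overline{\sV_+}$. (This last point follows from the cited remark, or directly from $w\cdot e_0=-w^0<0$.)

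The only point requiring care is this converse direction. One must handle $w=0$ separately, and one must use the limiting argument $\lambda\to 1^-$, because testing only with timelike $v$ gives the bound $w^0\ge\lambda|\vec w|$ rather than $w^0\ge|\vec w|$ directly. Symmetry of $T$ is used only to make $J_u$ unambiguous, i.e.\ independent of which index is contracted with $u$.
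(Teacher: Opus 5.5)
Your proof is correct and complete. The paper gives no proof of this proposition; it only introduces it as an ``elementary fact''. Your argument supplies exactly what the paper leaves implicit: the identity $J_u\cdot u'=-T_{\mu\nu}u^\mu u'^\nu$, followed by the self-duality of the closed cone $\overline{\sV_+}$ under the Minkowski pairing. You prove that self-duality using the paper's remark on causal vectors together with the limit $\lambda\to 1^-$, and the separate treatment of $J_u=0$ is indeed needed because the statement uses the closed cone.

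One feature of your forward direction deserves emphasis. It genuinely uses $u'$ ranging over $\sT_+$ independently of $u$. This is why the diagonal condition $T_{\mu\nu}u^\mu u^\nu\geq 0$ alone would not suffice, in agreement with the paper's footnote counterexample $T\equiv \mathrm{diag}(0,1,0,0)$.
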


\noindent As a consequence, the four-momentum density $P_u(x)$ is causal and future-directed wherever it does not vanish, for every choice of reference frame $u \in \sT_+$ and also, by linearity and taking an obvious limit, for $u\in \overline{\sV_+}$.
%
%
 Inequality (\ref{genEIQ2}) also makes explicit the requirement of {\em positive energy density} of $P_u$ in the reference frame $u'$, which generalizes\footnote{This latter condition, even if valid for every $u\in \sT_+$, does not imply that $J_u$ is causal and future-directed: $T\equiv diag(0,1,0,0)$ in Minkowskian coordinates is a counterexample.} $T^{\mu\nu}(x) u_\mu u_\nu\geq 0$.

It is known that, in general, energy positivity requirements fail when we pass to the quantum regime and, even in curved spacetime and considering a covariant notion of normally ordered stress-energy operator, only lower bounds for the expectation value of the energy are valid. This is true for Hadamard and adiabatic algebraic states (see \cite{F12} for an excellent exhaustive review). 
However, for the Klein-Gordon (real massive) quantum field the inequalities above are still valid in terms of expectation values when one explicitly refers to {\em $n$-particle states} $\Psi \in \cH^{(n)}  \cap \gS_0= \cS(\sV^n_{m,+})$, as already noted and used in \cite{Terno,M23} -- only for $n=1$ -- in more elementary versions of the result below.\\

\begin{proposition}  \label{TEOB1} Consider a real scalar Klein-Gordon field on $\bM$ with mass $m>0$ and the associated 
normally ordered stress-energy tensor operator $:\spa \hat{T}^{\mu\nu}\spa:[f]$ of Proposition \ref{PROP1}, whose components are referred to a given Minkowski coordinate system. Consider a closed subspace $\cK := \oplus_{n\in J} \cH^{(n)}$
for an index set $J \subset \{0,1,2 , \ldots\}$  such that
 $|n-n'| \neq 2$ if $n,n' \in J$ -- in particular $\cK := \cH^{(n)}$ for some $n\in \bN$.
If $f\in \cD_\bR(\bM)$ satisfies $f\geq 0$ and 
$\Psi \in  \cK \cap \gS_0$,
%
%
 then
\beq\label{bff1}
u_\mu u'_\nu \langle \Psi | :\spa \hat{T}^{\mu\nu} \spa: [f] \Psi \rangle \geq 0, \quad \mbox{for every $u, u' \in \overline{\sV_+}$.}
\eeq
and thus 
$u_\mu  \langle \Psi | :\spa \hat{T}^{\mu\nu} \spa: [f] \Psi \rangle$ is causal and future-directed, or vanishes, if $u\in \overline{\sV_+}$.\\
The properties above are also valid if one replaces $ :\spa \hat{T}^{\mu\nu} \spa: [f] $ by $:\spa \hat{T}^{\mu\nu} \spa: (x)$ with $x\in \bM$.
\end{proposition}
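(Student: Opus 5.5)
The plan is to compute the expectation value directly from the decomposition (\ref{DEC}) and to show that only the diagonal $a^\dagger a$ terms survive. The $a_pa_k$ and $a^\dagger_pa^\dagger_k$ terms give contributions of the form $\langle \Psi_{n}|a^\dagger a^\dagger \Psi_{n'}\rangle$ with $n=n'+2$. The hypothesis $|n-n'|\neq 2$ for $n,n'\in J$ and $\Psi\in\cK$ therefore kill them. Contracting with $u_\mu u'_\nu$, and noting that the two $a^\dagger a$ terms coincide after relabelling $p\leftrightarrow k$ by symmetry of $t_{\mu\nu}$, we get
\begin{equation*}
u^\mu u'^\nu\langle\Psi|:\spa\hat T_{\mu\nu}\spa:(x)\Psi\rangle=\frac{1}{(2\pi)^3}\int_{\sV^2_{m,+}} e^{i(k-p)\cdot x}\, t_{\mu\nu}(p,k)u^\mu u'^\nu \langle a_p\Psi|a_k\Psi\rangle\, d^2\mu_m(p,k).
\end{equation*}
By Definition \ref{DEFF}, $\langle a_p\Psi|a_k\Psi\rangle=\sum_n n\int \overline{\Psi_n(p,q)}\Psi_n(k,q)\,d\mu(q)$, with $q$ denoting the remaining $n-1$ momenta.

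Next I will factorize $t_{\mu\nu}(p,k)u^\mu u'^\nu$ as a sum of products $\overline{\alpha_j(p)}\alpha_j(k)$ plus a remainder with the same structure. Writing $t_{\mu\nu}(p,k)=\frac12[k_\mu p_\nu+p_\mu k_\nu-g_{\mu\nu}(p\cdot k+m^2)]$, I will use the classical identity that for a plane-wave superposition $\phi=\sum c_p e^{ip\cdot x}$ the classical tensor (\ref{defT}) evaluated on $\mathrm{Re}$-type combinations is nonnegative. Concretely, define for fixed $x,q,n$ the complex field components $F(y):=\int e^{ik\cdot y}\Psi_n(k,q)d\mu(k)$. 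Then the integrand equals, up to positive constants, $\overline{\partial_\mu F}\partial_\nu F$ symmetrized minus $\frac12 g_{\mu\nu}(\overline{\partial_\alpha F}\partial^\alpha F+m^2|F|^2)$, evaluated at $y=x$. This is $T_{\mu\nu}$ of the complex field $F$, i.e. $T[\mathrm{Re}F]+T[\mathrm{Im}F]$ computed pointwise as a quadratic expression in the gradient. The pointwise classical inequality (\ref{genEIQ2}), which only uses that $\partial\phi$ is a real covector and $m^2\phi^2\ge0$, then gives nonnegativity of $u^\mu u'^\nu T_{\mu\nu}[F](x)$ for $u,u'\in\overline{\sV_+}$ (by continuity from $\sT_+$). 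Integrating over $q$ with the positive weight $n$ yields the pointwise version for $:\spa\hat T\spa:(x)$. Multiplying by $f(x)\ge0$ and integrating, using (\ref{TT}), gives (\ref{bff1}).

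The causal/future-directed claim then follows from Proposition \ref{PE}, applied to the real symmetric tensor $\langle\Psi|:\spa\hat T_{\mu\nu}\spa:[f]\Psi\rangle$, extended to $u\in\overline{\sV_+}$ by limits.

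The main obstacle is bookkeeping rather than concept. I must check that the sign conventions in $t_{\mu\nu}(p,k)$, with $p,k$ on-shell and $e^{i(k-p)\cdot x}$, reproduce exactly $\mathrm{Re}(\overline{\partial_\mu F}\partial_\nu F)-\frac12g_{\mu\nu}(\cdots)$ with derivative factors $ik_\mu$ and $\overline{ip_\nu}$. The $-g_{\mu\nu}m^2$ term must carry the right sign, which uses $\overline{(ip)}\cdot(ik)=p\cdot k$. I must also justify exchanging integrals, which holds because $\Psi_n\in\cS$.
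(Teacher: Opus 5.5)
Your proposal is correct and follows essentially the paper's argument: you kill the $a a$ and $a^\dagger a^\dagger$ contributions using $|n-n'|\neq 2$, and you rewrite the surviving $a^\dagger a$ terms, sector by sector, as the stress tensor of the complex field built from $\Psi_n(\cdot,Q)$ (the paper's $\psi(x,Q)$); you then integrate against $f\geq 0$ and the positive weight in $Q$. The only difference is how positivity is obtained: you split $T[F]=T[\mathrm{Re}\,F]+T[\mathrm{Im}\,F]$ and cite the classical pointwise inequality (\ref{genEIQ2}), getting the pointwise statement first and the smeared one by integration, whereas the paper effectively reproves that inequality by checking $T_{00}\geq |T_{11}|$ in every frame and choosing coordinates in which $u=c\,\partial_{0}+s\,\partial_{1}$, $u'=c\,\partial_{0}-s\,\partial_{1}$.
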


\begin{proof} 
According to the structure of the stress-energy operator as presented in the proof of Proposition \ref{PROP1}, $\langle \Psi | :\spa \hat{T}^{\mu\nu} \spa: [f] \Psi' \rangle =0$ if $\Psi \in \cH^{(n)}$, $\Psi'\in \cH^{(n')}$ 
with $n\neq n'$
and 
$|n-n'| \neq 2$. Thus, under our hypotheses, $u_\mu u'_\nu \langle \Psi | :\spa \hat{T}^{\mu\nu} \spa: [f] \Psi'\rangle =\sum_n u_\mu u'_\nu\langle \Psi_n | :\spa \hat{T}^{\mu\nu} \spa: [f] \Psi_n\rangle $.
It is therefore sufficient to prove the thesis for $\cK := \cH^{(n)}$. The case $n=0$ is obvious.
 If $\Psi \in \cH^{(n)} \cap \gS_0$, $n=1,2,\ldots$, define, where $Q:= (q_1,\ldots, q_{n-1}) \in \sV_{m,+}^{n-1}$,
\beq \psi(x,Q) :=\frac{1}{\sqrt{2}(2\pi)^{3/2}} \int_{\sV_{m,+}} \Psi(p,Q) e^{-ip\cdot x} d\mu_m(p)\:.\eeq
By direct inspection, taking advantage of (\ref{TT}) and (\ref{DEC}), where only the last two addends in the expansion matter, one sees that, if $\Psi, \Psi'\in \cH^{(n)} \cap \gS_0$ and $f\in \cD_\bR(\bM)$, then
$$\langle \Psi| :\spa \hat{T}_{\mu\nu}\spa:[f] \Psi' \rangle = \frac{n}{2}\int_{\bM\times \sV_{m,+}^{n-1}} \sp\sp \sp f(x)\left[\partial_\mu \overline{\psi} \partial_\nu \psi'
+ \partial_\nu \overline{\psi} \partial_\mu \psi' - g_{\mu\nu}\left(\partial_\alpha \overline{\psi} \partial^\alpha \psi'
+ m^2 \overline{\psi} \psi'\right)\right] d^4x dQ\:,$$
where $dQ	:= d^{n-1}\mu_m(Q)$.
Now assume $\Psi'=\Psi$. Using $f\geq 0$,
\begin{align} \nonumber
\langle \Psi| :\spa \hat{T}_{00} \spa: [f]\Psi \rangle &=  \frac{n}{2}\int f(x)\left(\sum_{\mu=0}^3\partial_\mu \overline{\psi} \partial_\mu \psi
+ m^2 \overline{\psi} \psi\right)d^4xdQ \geq 0\:, \\ 
\langle \Psi | :\spa \hat{T}_{00} \spa: [f] \Psi \rangle  - \langle \Psi | :\spa \hat{T}_{11} \spa: [f] \Psi\rangle  &=n \int f(x)\left(\sum_{\mu=2}^3\partial_\mu \overline{\psi} \partial_\mu \psi
+ m^2 \overline{\psi} \psi\right)d^4xdQ
\geq 0 \nonumber\:,  \\
\langle \Psi | :\spa \hat{T}_{00} \spa: [f] \Psi \rangle  + \langle \Psi | :\spa \hat{T}_{11} \spa: [f] \Psi\rangle  &=n \int f(x)\sum_{\mu=0}^1\partial_\mu \overline{\psi} \partial_\mu \psi
d^4x dQ
\geq 0\nonumber \:.
\end{align}
so that 
$ \langle \Psi | :\spa \hat{T}_{00} \spa: [f] \Psi \rangle \geq |\langle \Psi | :\spa \hat{T}_{11} \spa: [f] \Psi \rangle|$. This bound is obviously valid in every Minkowskian coordinate system. 
If $u,u' \in \sT_+$, we can choose a Minkowskian coordinate system $x'^0,x'^1,x'^2,x'^3$ such that $u= c\partial'_{0} + s\partial'_{1}$ and $u'= c\partial'_{0} -  s\partial'_{1}$, where $s:= \sinh \chi$, $c:= \cosh \chi$ for some $\chi\in \bR$.
Since $\sinh^2\chi \leq \cosh^2 \chi$ and the two bounds above hold, we have
$u_\mu u'_\nu \langle \Psi | :\spa \hat{T}^{\mu\nu} \spa: [f] \Psi \rangle
=c^2\langle \Psi | :\spa \hat{T}'_{00} \spa: [f] \Psi \rangle- s^2 \langle \Psi | :\spa \hat{T}'_{11} \spa: [f] \Psi \rangle\geq 0$ (when $f\geq 0$).
This result concludes the proof, since the case $u,u'\in \overline{\sV_+}$ is obtained by linearity and continuity; the penultimate statement immediately follows from Proposition \ref{PE}, and the last one is a direct consequence of (\ref{TT}) together with the smoothness of $\bM\ni x \mapsto \langle \Psi| :\spa \hat{T}^{\mu\nu} \spa: (x)\Psi \rangle$.
\end{proof}

Let us pass to the case $\Psi \in \gS_0$, where arbitrary superpositions of components with different particle numbers are allowed.
Now the proof above fails because in the expansion (\ref{DEC}) the first two addends do not vanish in general. Actually, as is in particular suggested in \cite{F12}, this is a general fact. The general result, valid for positive symmetric operators, directly follows from the already mentioned {\em Reeh-Schlieder property} (d) of Proposition \ref{PROP35} and the fact that symmetric positive operators admit positive selfadjoint extensions. 
We prove this rigorously. \\

\begin{proposition}\label{XYZ} Consider the $*$-algebra $\cT({\cal O})$ of operators with common invariant dense domain $\gS_0$, 
smeared with test functions supported in ${\cal O}$, where ${\cal O}\subset \bM$ is a bounded open set as in Proposition \ref{PROP4}. If $A \in \cT({\cal O})$ is symmetric, $A\geq 0$, and $\langle \Omega | A\Omega\rangle =0$, then $A=0$.\\
Consequently, for $u,u'\in \sV$ and $f\in \cD_\bR(\bM)$ with $f \not \equiv 0$ and $f\geq 0$, the inequalities
$u_\mu u'_\nu :\spa \hat{T}^{\mu\nu}\spa:[f]\geq 0$ and $:\spa \hat{\phi}^2\spa:[f]\geq 0$ cannot hold unless the corresponding quadratic-form operator vanishes identically. \end{proposition}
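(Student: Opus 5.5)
The argument combines the Cauchy--Schwarz inequality for positive sesquilinear forms with the Reeh--Schlieder property. Since $A$ is symmetric and positive on the common invariant domain $\gS_0$, the form $(\Psi,\Phi)\mapsto\langle\Psi|A\Phi\rangle$ is a positive semidefinite Hermitian form on $\gS_0$. Cauchy--Schwarz then gives
$$|\langle\Phi|A\Omega\rangle|^2\leq\langle\Phi|A\Phi\rangle\,\langle\Omega|A\Omega\rangle=0\quad\mbox{for all }\Phi\in\gS_0.$$
Because $\Omega\in\gS_0$, $A\Omega\in\gS_0$ and $\gS_0$ is dense, this yields $A\Omega=0$. (Alternatively, one can take a positive selfadjoint extension, e.g.\ the Friedrichs one $A_F$, and argue that $\|A_F^{1/2}\Omega\|^2=\langle\Omega|A\Omega\rangle=0$. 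The direct Cauchy--Schwarz route avoids that detour.)

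Next I propagate this from $\Omega$ using locality. Choose a nonempty bounded open set ${\cal O}_1$ causally separated from ${\cal O}$; this is possible because ${\cal O}$ is bounded, so one can take a small ball far away in a spacelike direction. By Proposition \ref{PROP4}, every $B\in\cT({\cal O}_1)$ commutes with $A$ on $\gS_0$. Hence $AB\Omega=BA\Omega=0$ for all such $B$. The unital $*$-algebra ${\cal A}({\cal O}_1)\subset\cT({\cal O}_1)$ generated by the fields satisfies $\overline{{\cal A}({\cal O}_1)\Omega}=\gF_s(\cH_m)$ by RS1 in Proposition \ref{PROP35}. Thus $A$ vanishes on a dense set $D:={\cal A}({\cal O}_1)\Omega\subset\gS_0$.

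The main obstacle is upgrading ``$A=0$ on a dense subspace'' to ``$A=0$ on $\gS_0$'', since $A$ is unbounded. Symmetry handles this. For $\Psi\in\gS_0$ and $\Phi\in D$,
$$\langle\Phi|A\Psi\rangle=\langle A\Phi|\Psi\rangle=0.$$
Since $D$ is dense, $A\Psi=0$. This uses only symmetry on $\gS_0$ and that $D\subset\gS_0$, which holds because $\gS_0$ is invariant under the field operators. The one point to check carefully is that Proposition \ref{PROP4} gives commutation as operators on the common domain $\gS_0$, so that $AB\Omega=BA\Omega$ makes sense.

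For the consequence, take $A=u_\mu u'_\nu:\spa\hat{T}^{\mu\nu}\spa:[f]$ or $A=:\spa\hat{\phi}^2\spa:[f]$, with ${\cal O}$ a bounded open neighbourhood of $supp(f)$. These are symmetric by Proposition \ref{PROP1}(b) and lie in $\cT({\cal O})$. Normal ordering gives $\langle\Omega|A\Omega\rangle=0$: in (\ref{DEC}) and (\ref{DECPhi}) every term contains an $a_p$ acting on $\Omega$ or an $a^\dagger$ whose adjoint annihilates $\Omega$. So if $A\geq0$, the first part forces $A=0$.
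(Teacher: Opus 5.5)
Your proof is correct and follows the same route as the paper's: first $A\Omega=0$, then $AB\Omega=BA\Omega=0$ for $B$ localized in a region causally separated from ${\cal O}$ (Proposition \ref{PROP4}), then Reeh--Schlieder density (Proposition \ref{PROP35}(d)). The only difference is technical. The paper works with a positive selfadjoint (Friedrichs) extension $A_F$: it gets $A\Omega=0$ from $\|A_F^{1/2}\Omega\|^2=\langle\Omega|A\Omega\rangle=0$, and passes from the dense set $\{B\Omega\}$ to all vectors using closedness of $A_F$. You instead use the Cauchy--Schwarz inequality for the positive form on $\gS_0$ and the symmetry of $A$, which is slightly more elementary because no extension is needed.
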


\begin{proof} Since the densely defined operator $A$ is symmetric and $A\geq 0$, it admits its positive Friedrichs self-adjoint extension $A_F\geq 0$.
The rest of the proof is, however, also valid if $A_F$ denotes any {\em positive} self-adjoint extension of $A$. In the cases $u_\mu u'_\nu :\spa \hat{T}^{\mu\nu}\spa:[f]$ and $:\spa \hat{\phi}^2\spa:[f]$, the Friedrichs self-adjoint extensions coincide with the respective closures, since these operators are essentially self-adjoint.
 Since $\Omega \in D(A) \subset D(A_F)\subset D(\sqrt{A_F})$, spectral calculus (see e.g. \cite{Moretti1}) yields
  $||\sqrt{A_F}\Omega||^2 =\langle \Omega|A_F \Omega\rangle = \langle \Omega|A \Omega\rangle =0$. Hence $\sqrt{A_F}\Omega=0$ and therefore $A_F \Omega=A_F^{1/2}A_F^{1/2}\Omega =0$. At this point, consider an open non-empty set ${\cal O}_1\subset \bM$ which is causally separated from ${\cal O}$. According to Proposition \ref{PROP4}, we have $A_F B \Omega=A B \Omega = BA\Omega =0$ for every element $B$ of the sub-$*$-algebra of operators generated by $\hat{\phi}[h]$ with smearing functions satisfying $\operatorname{supp}(h) \subset {\cal O}_1$.  
 If $\Psi\in D(A_F)$, there is a sequence of elements $B_n\Omega \to \Psi$, since the set of vectors $B\Omega$ as above is dense in $\cF_s(\cH_m)$ due to the aforementioned Reeh-Schlieder property ((d) of Proposition \ref{PROP35}). In summary, $B_n\Omega \to \Psi$ and the sequence of images  $A_F B_n\Omega= 0$ trivially converges to $0$.
 Since $A_F=A_F^\dagger$ is closed, we conclude that $A_F \Psi=\lim_n A_F B_n\Omega= 0$ for every $\Psi \in D(A_F)$. In particular, $A =0$,
  since it is a restriction of $A_F$.
%
%
%
%
For the last assertion, take a bounded open set ${\cal O}\supset \operatorname{supp}(f)$ and observe that the displayed quadratic-form operators have vanishing vacuum expectation value:
$\langle \Omega| u_\mu u'_\nu :\spa \hat{T}^{\mu\nu}\spa:[f]\Omega \rangle  = 0$ and 
$\langle \Omega| :\spa \hat{\phi}^2\spa:[f]\Omega \rangle = 0$. Hence, if either of them were positive, the first part of the proposition would force it to vanish identically.
\end{proof}
%

We shall confine our investigation to Minkowski spacetime in the representation of the Poincar\'e-invariant quasifree state, and we shall study the quantum version of condition (\ref{bff1}). We shall prove a theorem of crucial relevance for the construction of our localization POVMs; this result will later be refined in Theorem \ref{TEOFD}, where the form of the quadratic form $b^{\mu\nu}_f$ will be made more explicit through a suitable choice of the smearing function $f$.
The results presented below rely on important known achievements \cite{FS08} (see also \cite{Ko} for a recent extension). Although we restrict ourselves to flat spacetime and to a distinguished reference vacuum state, the results stated in Proposition \ref{TEOB} below should extend to Hadamard quasifree states in curved spacetime.
We shall use a special case of the main result obtained in \cite{FS08}, where the metric-signature convention is the opposite of ours.  
We start with the observation that a {\em normalized} vector $\Psi \in \gS_0$ defines an algebraic state of {\em Hadamard type} \cite{FS08}. Indeed, the Minkowski vacuum $\Omega$ is Hadamard, as is well known. Moreover, every {\em normalized} $\Psi \in \gS_0$ is such that the difference $\Lambda_\Psi-\Lambda_{\Omega}$ of the associated two-point functions defines a smooth integral kernel, by Proposition \ref{PROP2X}. Hence $\Psi$ is Hadamard as well, by definition. 

Take $\Psi \in \gS_0$ with $||\Psi||=1$, and let
$Q$ be an at most first-order differential operator with smooth real coefficients, so that
\beq (Q(f))(x) =  (A(f))(x) + B(x)f(x) \quad \mbox{for $f\in \cD(\bM)$ and $x \in \bM$}\label{QQ}\eeq
where $A$ is a smooth real vector field and $B$ is a smooth real scalar field.
Following the procedure outlined in the proof of Theorem 3.1 of \cite{FS08} -- which is more generally valid in suitably shaped domains of globally hyperbolic spacetimes and for Hadamard states -- and taking into account the comments in Section 3 of \cite{F2}, the result of the quoted theorem takes the following form\footnote{I am grateful to C. Fewster for clarifying this point to me.}. 

A choice of a Minkowskian coordinate system
$$\kappa : \bM \ni q \mapsto (x^0(q),x^1(q), x^2(q),x^3(q))\in  \bR^4$$
defines a bilinear map $\cD(\bM) \times \cD(\bM)\to \bC$, real-valued on $\cD_\bR(\bM)\times \cD_\bR(\bM)$, as follows. Consider the unnormalized Fourier transform, {\em computed in this coordinate system} after identifying $\bM$ with $\bR^4$:
$$\hat{g}(p,p') :=  \int_{\bR^4\times \bR^4} e^{-ip\cdot x}  e^{-ip'\cdot x'}  g(\kappa^{-1}(x), \kappa^{-1}(x')) d^4x d^4x'\:, \quad g \in \cS(\bM \times \bM)\:.$$
Here the dot denotes the standard scalar product in $\bR^4$. At this point, extending the Fourier transform above to distributions in the standard way, define
\beq\label{FSresmod}
b^\kappa_Q(h,h') :=\frac{2}{(2\pi)^4} \int_{\bR_+ \times \bR^3} \widehat{[(h\otimes h') (Q \otimes Q) \Lambda_{\Omega}]_\kappa} (-p,p) d^4p\:, \quad h,h' \in \cD(\bM)\:,
\eeq
where $p \equiv (p_0,p_1,p_2,p_3)$. The right-hand side is well defined for the Minkowski vacuum state $\Omega$, which is Hadamard.
With these definitions, the following inequality is valid for every $f\in \cD_\bR(\bM)$:
\beq\label{BOUND}
\int_\bM f(x)^2 Q\otimes Q (\Lambda_\Psi(x,y)- \Lambda_{\Omega}(x,y))|_{x=y} d^4x \geq -b^\kappa_Q(f,f)\:.
\eeq
The crucial fact in (\ref{BOUND}) is that the lower bound found in this way does not depend on the normalized vector $\Psi \in \gS_0$.  
The estimate is not optimal; moreover, the definition of $b^\kappa_Q$ involves an explicit choice of reference frame, so different choices of $\kappa$ may give different bilinear maps. Each of them, however, satisfies (\ref{BOUND}), whose left-hand side is {\em independent} of the specific choice of $\kappa$.  

A discussion of how to make canonical choices of coordinates in a much more general context is given in \cite{FS08}, to which we refer for further analysis.\\

An immediate consequence of the inequality just obtained is the following. If a finite number $N$ of differential operators $Q_a$ as in (\ref{QQ}) are given, together with constants $d_a\geq 0$, then
\beq\label{BOUNDGGG}
\int_\bM f(x)^2\left(\sum_{a=1}^N d_a Q_a\otimes Q_a\right) (\Lambda_\Psi(x,y)- \Lambda_{\Omega}(x,y))|_{x=y}  d^4x \geq -b^\kappa_{\{d_a,Q_a\}_{a=1}^N}(f,f)\:,
\eeq
where
$$b^\kappa_{\{d_a,Q_a\}_{a=1}^N}(f,f):=\sum_{a=1}^N d_a b^\kappa_{Q_a}(f,f)\:.$$
This follows from the linearity of the integral and of the Fourier transform.\\
 
\begin{remark} {\em It is interesting to observe that the smearing function $f^2$ in (\ref{BOUND}) cannot, in general, be replaced by a non-negative $f\in \cD_\bR(\bM)$ by inserting $\sqrt{f}$ in place of $f$ on the right-hand side of (\ref{BOUND}). Indeed, for $0\leq f\in \cD(\bM)$ one generally has $\sqrt{f}\not \in \cD(\bM)$\footnote{This may fail even if $f$ vanishes with all derivatives at its zeros, as shown by a classical counterexample \cite{NO}.}.
Proposition \ref{TEOB} below is therefore different from Proposition \ref{TEOB1}, precisely because it adopts a stronger hypothesis on the positive smearing functions.
On the other hand, the main result of \cite{FS08} recalled above can easily be extended to the case in which $f^2$ is replaced by a finite sum $\sum_i f_i^2$, with
$f_i \in \cD_\bR(\bM)$. This weaker requirement on the smearing functions was used in \cite{Ko2} to prove essential-selfadjointness results for second-order normally ordered Wick polynomials in curved spacetime with respect to Hadamard states. However, as discussed there, there are functions $f\geq 0$, $f\in \cD(\bM)$, which are not of the form $\sum_i f_i^2$.} \hfill $\blacksquare$\\
\end{remark}

\noindent We now come to the announced proposition. \\

\begin{proposition}\label{TEOB} Consider a real scalar Klein-Gordon field on $\bM$ with mass $m>0$ and the associated 
normally ordered stress-energy tensor operator $:\spa \hat{T}^{\mu\nu}\spa:[f]$ of Proposition \ref{PROP1}, whose components are referred to a given Minkowski coordinate system. 
%
%
If, in this Minkowskian coordinate system, we define
$$b_f^{\mu\nu} := \frac{2}{(2\pi)^4} \int_{\bR_+ \times \bR^3} \widehat{[(f\otimes f) D^{\mu\nu} \Lambda_{\Omega}]_\kappa} (-p,p) d^4p\:, \quad f \in \cD_\bR(\bM)\:,$$
then the following facts hold.
    \begin{itemize}
\item[(a)] If $u, u' \in \overline{\sV_+}$ and $f\in \cD_\bR(\bM)$, then
\beq\label{bff}
u_\mu u'_\nu \langle \Psi | :\spa \hat{T}^{\mu\nu} \spa: [f^2] \Psi \rangle \geq -  u_\mu u'_\nu  b_f^{\mu\nu} ||\Psi||^2\:,\quad \mbox{for every $\Psi \in \gS_0$,}
\eeq
where necessarily $u_\mu u'_\nu  b_f^{\mu\nu} >0$ if $u_\mu u'_\nu  :\spa \hat{T}^{\mu\nu} \spa: [f^2] \neq 0$.
    \item[(b)] If $u \in \sT_+$, $f\in \cD_\bR(\bM)$, and $\Psi \in \gS_0$, the vector 
    $$
-  u_\mu  \left( \langle \Psi |  :\spa \hat{T}^{\mu \nu} \spa: [f^2] \Psi \rangle + b_f^{\mu\nu} ||\Psi||^2 \right) 
$$
is causal and future-directed, or vanishes.
\end{itemize}
These results remain valid if $f^2$ is replaced by $\sum_{i=1}^N f_i^2$ with $f_i\in \cD_\bR(\bM)$, with the obvious redefinition of $ b_f^{\mu\nu}$.
\end{proposition}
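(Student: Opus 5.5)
The plan is to reduce (a) to the Fewster--Smith bound (\ref{BOUNDGGG}) by writing the contracted stress-energy density $u_\mu u'_\nu :\spa \hat{T}^{\mu\nu}\spa:$ as a non-negative combination $\sum_a d_a Q_a\otimes Q_a$ of first-order operators acting on the point-split normally ordered two-point function. By Proposition \ref{PROP2X}, for normalized $\Psi\in\gS_0$ the function $\langle \Psi|:\spa\hat{\phi}(x)\hat{\phi}(y)\spa:\Psi\rangle$ coincides with $\Lambda_\Psi(x,y)-\Lambda_\Omega(x,y)$ (by the definition (\ref{NORP}) of normal ordering). Hence, by (\ref{TWO}), $u_\mu u'_\nu\langle\Psi|:\spa\hat{T}^{\mu\nu}\spa:[f^2]\Psi\rangle=\int f^2\, u_\mu u'_\nu D^{\mu\nu}(\Lambda_\Psi-\Lambda_\Omega)|_{x=y}d^4x$.

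The first and main algebraic step is to decompose $u_\mu u'_\nu D^{\mu\nu}(x,y)$. As in the proof of Proposition \ref{TEOB1}, I would first treat $u,u'\in\sT_+$ and choose Minkowskian coordinates with $u=c\partial_0+s\partial_1$ and $u'=c\partial_0-s\partial_1$. Then
$$u_\mu u'_\nu D^{\mu\nu} = \tfrac{c^2-s^2}{2}\Big(\sum_{\mu=0}^3\partial_{x^\mu}\partial_{y^\mu}+m^2\Big)+ s^2\Big(\sum_{\mu=2}^3 \partial_{x^\mu}\partial_{y^\mu}+m^2\Big),$$
which I would check by the same combination $c^2T_{00}-s^2T_{11}$ used there. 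In general $u_\mu u'_\nu D^{\mu\nu}$ is then a sum $\sum_a d_a Q_a\otimes Q_a$ with $d_a\ge 0$, where each $Q_a$ is either $\partial_\mu$ or $m$ times the identity; these are first-order operators of the form (\ref{QQ}) with constant real coefficients. Applying (\ref{BOUNDGGG}) in the chosen frame gives the lower bound $-b^\kappa_{\{d_a,Q_a\}}(f,f)$. By linearity of the Fourier transform, this equals $-u_\mu u'_\nu b^{\mu\nu}_f$, provided $b_f^{\mu\nu}$ is computed in that frame.

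The main obstacle is frame dependence. The statement fixes one coordinate system $\kappa$, while the decomposition above uses a frame adapted to $u,u'$. To handle this, I would try to avoid changing frames: in the fixed frame, write the symmetric bilinear form $(x,y)\mapsto u_\mu u'_\nu D^{\mu\nu}$ on first derivatives plus mass term. By Proposition \ref{PE}, since the classical tensor satisfies (\ref{genEIQ2}), the coefficient matrix $M_{\alpha\beta}$ of $\partial_{x^\alpha}\partial_{y^\beta}$ (symmetrized) is positive semidefinite, with mass coefficient $\frac{m^2}{2}(-u\cdot u')\ge0$. Diagonalizing $M=\sum_a d_a e_a e_a^T$ with real eigenvectors gives $Q_a=e_a^\alpha\partial_\alpha$, still first order with real constant coefficients, now in the fixed $\kappa$. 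Hence (\ref{BOUNDGGG}) and linearity yield exactly $-u_\mu u'_\nu b_f^{\mu\nu}$ in the given coordinates. For non-normalized $\Psi$, scaling gives the factor $\|\Psi\|^2$. The case $u,u'\in\overline{\sV_+}$ follows by continuity.

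For the strict positivity claim, I would apply the inequality to $\Psi=\Omega$, where the left side vanishes, so $u_\mu u'_\nu b_f^{\mu\nu}\ge0$. If it were zero, then $u_\mu u'_\nu:\spa\hat{T}^{\mu\nu}\spa:[f^2]\ge0$ with vanishing vacuum expectation, and Proposition \ref{XYZ} would force it to vanish, contradicting the hypothesis. For (b), I would set $S^{\mu\nu}:=\langle\Psi|:\spa\hat{T}^{\mu\nu}\spa:[f^2]\Psi\rangle+b_f^{\mu\nu}\|\Psi\|^2$. This is symmetric, since $D^{\mu\nu}$ is symmetric. By (a), $S_{\mu\nu}u^\mu u'^\nu\ge0$ for all $u,u'\in\sT_+$, so Proposition \ref{PE} gives that $-u_\mu S^{\mu\nu}$ is causal and future-directed, or vanishes. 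The extension to $\sum_i f_i^2$ follows by summing the bounds for each $f_i$.
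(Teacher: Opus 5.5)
Your proof is correct and follows the paper's argument. You decompose $u_\mu u'_\nu D^{\mu\nu}$ as $\sum_a d_aQ_a\otimes Q_a$ with $d_a\ge 0$, apply (\ref{BOUNDGGG}), and use linearity of the Fourier transform to get $u_\mu u'_\nu b_f^{\mu\nu}$; Proposition \ref{XYZ} then gives strict positivity and Proposition \ref{PE} gives (b). Your adapted-frame decomposition is literally the paper's, with $d_0=d_1=\tfrac12$ and $d_2=d_3=d_4=\tfrac12(c^2+s^2)$.

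The only divergence is your detour through an eigen-decomposition in the fixed chart, and the obstacle behind it does not exist. In (\ref{BOUND}), the first-order operator $Q$ and the chart $\kappa$ used for the Fourier transform are independent: the bound holds for every $\kappa$, whatever $Q$ is. Moreover, $\sum_a d_aQ_a\otimes Q_a=u_\mu u'_\nu D^{\mu\nu}$ is an identity of differential operators on $\bM\times\bM$ that makes no reference to coordinates. So the paper simply keeps $Q_\mu=\partial_{x'^\mu}$ from the frame adapted to $u,u'$ and Fourier-transforms $(f\otimes f)\,u_\mu u'_\nu D^{\mu\nu}\Lambda_\Omega$ in the given $\kappa$.

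Your alternative is still valid. One correction: positive semidefiniteness of the symbol $M^{\alpha\beta}$ follows from (\ref{genEIQ2}) itself, evaluated on configurations with $\phi=0$ and $\partial\phi=\xi$ at a point, not from Proposition \ref{PE}. The fixed-chart route has the small merit of treating all $u,u'\in\overline{\sV_+}$ uniformly without adapted coordinates. In the adapted-frame route, note that passing from $\sT_+$ to $\overline{\sV_+}$ needs homogeneity as well as continuity. The paper's explicit decomposition, for its part, yields concrete $Q_a$ and $d_a$, which it reuses in the proof of Lemma \ref{LEMMALEMME} when applying Fewster's bound.
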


\begin{proof} Item (b) is a direct consequence of (a) and Proposition \ref{PE}. We therefore prove (a). From now on we assume $||\Psi||=1$, since validity of (a) in this case immediately implies the general case.
We start by proving that, if $u,u' \in \sT_+$, the differential operator
\beq D(u,u'):= u_\mu u'_\nu D^{\mu\nu}: \cD(\bM \times \bM) \to \cD(\bM \times \bM)\label{dudu}\eeq
(where we use an {\em abstract index notation}, so that no coordinate system has been chosen) can be decomposed as 
\beq D(u,u') = \sum_{a=0}^4  d^{u,u'}_a Q^{u,u'}_a\otimes Q^{u,u'}_a\label{expDp}\eeq
for suitable at most first-order differential operators $Q^{u,u'}_a$ as in (\ref{QQ}) and suitable coefficients $d^{u,u'}_a > 0$.
%
Suppose that $u,u' \in \sT_+$ are given. 
Noticing that $u+u'$ and $u-u'$ are orthogonal,
we can choose $x'^0$ parallel to $u+u'$ and $x'^1$ parallel to $u-u'$, obtaining
%
 $$u= c_{u,u'}\partial_{x'^0} + s_{u,u'}\partial_{x'^1}, \quad u'= c_{u,u'}\partial_{x'^0} -  s_{u,u'}\partial_{x'^1}$$
where $s_{u,u'}:= \sinh \chi_{u,u'}$ and $c_{u,u'}:= \cosh \chi_{u,u'}$ for some $\chi_{u,u'}\in \bR$. We finally define the differential operators $Q^{u,u'}_\mu :=  \partial_{x'^\mu}$ and $Q^{u,u'}_4 := mI$.
With the definitions above,
by direct inspection,
$$ D(u,u')= c_{u,u'}^2D'_{00} -s_{u,u'}^2 D'_{11} = \frac{c_{u,u'}^2-s_{u,u'}^2}{2}(Q^{u,u'}_0\otimes Q^{u,u'}_0 + Q^{u,u'}_1\otimes Q^{u,u'}_1)$$ $$ + \frac{c_{u,u'}^2+s_{u,u'}^2}{2}
(Q^{u,u'}_2\otimes Q^{u,u'}_2 + Q^{u,u'}_3\otimes Q^{u,u'}_3 + Q^{u,u'}_4 \otimes Q^{u,u'}_4)\:.$$
We have proved, in particular, that (\ref{expDp}) holds with $$d^{u,u'}_0= d^{u,u'}_1=\frac{1}{2}\quad \mbox{and}\quad d^{u,u'}_2= d^{u,u'}_3=d^{u,u'}_4 =\frac{1}{2}( \cosh^2 \chi_{u,u'} + \sinh^2 \chi_{u,u'})>0\:.$$
We are now in a position to apply (\ref{BOUNDGGG}) for a chosen coordinate system $\kappa : \bM \ni q \mapsto (x^0(q),x^1(q),x^2(q),x^3(q))\in \bR^4$. Since (\ref{dudu}) and (\ref{expDp}) are valid, the left-hand side of (\ref{BOUNDGGG}) is just 
$$u_\mu u'_\nu \langle \Psi | :\spa \hat{T}^{\mu\nu} \spa: [f^2] \Psi \rangle\:.$$
The right-hand side is 
$$-b^\kappa_{\{d^{u,u'}_a,Q^{u,u'}_a\}_{a=0}^4}(f,f) =
-\frac{2}{(2\pi)^4} \int_{\bR_+ \times \bR^3} \widehat{\left[(f\otimes f) \sum_{a=0}^4  d^{u,u'}_a Q^{u,u'}_a\otimes Q^{u,u'}_a \Lambda_{\Omega}\right]_\kappa} (-p,p) d^4p$$
$$=-\frac{2}{(2\pi)^4} \int_{\bR_+ \times \bR^3} \widehat{[(f\otimes f) u_\mu u'_\nu D^{\mu\nu} \Lambda_{\Omega}]_\kappa} (-p,p) d^4p$$ $$=- u_\mu u'_\nu \frac{2}{(2\pi)^4} \int_{\bR_+ \times \bR^3} \widehat{[(f\otimes f) D^{\mu\nu} \Lambda_{\Omega}]_\kappa} (-p,p) d^4p\:.$$
Inserting these results in (\ref{BOUNDGGG}) yields (\ref{bff}) for $||\Psi||=1$ and $u,u'\in \sT_+$.
By homogeneity in $u$ and $u'$, the same conclusion holds for arbitrary timelike future-directed $u,u'\in\sV_+$. The case $u,u'\in \overline{\sV_+}$ follows by continuity, the cases in which one of the two vectors vanishes being trivial.
%

If $u_\mu u'_\nu  :\spa \hat{T}^{\mu\nu} \spa: [f^2]\neq 0$, then necessarily $u_\mu u'_\nu b_f^{\mu\nu}>0$. Indeed, if $u_\mu u'_\nu b_f^{\mu\nu}\leq 0$, (\ref{bff}) would imply that the operator $u_\mu u'_\nu  :\spa \hat{T}^{\mu\nu} \spa: [f^2]$ is positive on $\gS_0$; since its vacuum expectation value vanishes, this would contradict Proposition \ref{XYZ} unless the operator itself were zero.
The final statement follows by applying the same argument term by term.
\end{proof}


\subsection{Integration of $v^\mu :\spa \hat{T}_{\mu\nu} \spa: [f_x]$ over rest spaces $\Sigma$ and interplay with $H_v$}
Our intention is to define the four-momentum operator associated with the whole rest space $\Sigma$, obtained by integrating over $\Sigma$ the stress-energy tensor operator
smeared with $f\in \cD(\bM)$. This notion will be generalized to subregions $\Delta \subset \Sigma$ in Section \ref{Nen}. 
We start by defining a relevant density to be integrated.
Take a smearing function $f \in \cD(\bM)$ and define the unitary strongly continuous representation of the translations $\sV$ of Minkowski spacetime, $\sV \ni v \mapsto U_{(I, v)} \in \gB(\gF_s(\cH_m))$, the latter being the unitary representation (\ref{rep}) of $IO(1,3)_+$ on the Fock space and, in everything that follows, we have identified the points of $\bM$ with the vectors of $\sV$ as usual $\sV \ni x \mapsto o+x \in \bM$ through the choice of an origin $o\in \bM$. What follows does not depend on this choice. 
Following some ideas in \cite{BF}, we define the operator-valued function
\beq \sV \ni x \mapsto U_{(I, x)}:\spa \hat{T}_{\mu\nu} \spa: [f] U^{-1}_{(I, x)} =:\:  :\spa \hat{T}_{\mu\nu} \spa: [f_x] \quad \mbox{with} \quad f_x(y) := f(y-x)\quad \mbox{for $x,y\in \sV$,}\label{Tfx}\eeq
according to (d) of Proposition \ref{PROP1}.
This $x$-dependent operator $ :\spa \hat{T}_{\mu\nu} \spa: [f_x]$ enjoys a number of crucial properties.\\

\begin{proposition}\label{PROPTTR} Let $f\in \cD(\bM)$ and let $ :\spa \hat{T}_{\mu\nu} \spa: [f_x]$ be defined as in (\ref{Tfx}). The following facts hold
if $\Psi,\Psi'\in \gS_0$.
\begin{itemize}
\item[(a)] The identity holds
\beq
\langle \Psi| :\spa \hat{T}_{\mu\nu} \spa: [f_x]\Psi'\rangle  =
 \int_\bM f(y-x) \langle \Psi|  :\spa \hat{T}_{\mu\nu} \spa: (y) \Psi' \rangle d^4y\:.\nonumber
\eeq
\item[(b)] $\sV\ni x \mapsto \langle\Psi | :\spa \hat{T}_{\mu\nu} \spa: [f_x] \Psi' \rangle$ is smooth, bounded, and satisfies the following bounds
referred to a given Minkowski frame where $x=(x^0,\vec{x})$.

(1) If $\alpha$ is a multi-index for the components of $x$, for every $n\in \bN$ there is a polynomial $P_{|\alpha|+n}$ in the variable $x^0$ such that
\beq\label{BBBO}
\left|\partial^\alpha_x\langle \Psi|  :\spa \hat{T}_{\mu\nu} \spa: [f_x]| \Psi' \rangle \right| \leq \frac{|P_{|\alpha|+n}(x^0)|}{(1+ |\vec{x}|)^n}\:, \quad x \in \bR^4\:.
\eeq
In particular, $\bR^3\ni\vec{x} \mapsto \langle \Psi|  :\spa \hat{T}_{\mu\nu} \spa: [f_{(x^0, \vec{x})}]| \Psi' \rangle$ is a Schwartz function for every $x^0\in \bR$.

(2) There are finite constants $\epsilon >0$, $C>0$ such that 
$$|\langle \Psi|  :\spa \hat{T}_{\mu\nu} \spa: [f_x]| \Psi' \rangle | \leq \frac{C}{(1+ |\vec{x}|)^{3+\epsilon}}\quad \mbox{if $|\vec{x}|> |x^0|$.}$$

\item[(c)] The conservation equation holds (the derivatives being referred to coordinates of $x$)
\beq 
\partial_{\mu} \langle \Psi|  :\spa \hat{T}^{\mu\nu} \spa: [f_x] \Psi' \rangle =0\:.
\eeq
\item[(d)] If $v\in \sV$, $f\in \cD(\bM)$, and $\Sigma$ and $\Sigma'$ are rest spaces of two Minkowskian reference frames $u_\Sigma, u_{\Sigma'} \in \sT_+$, then
\beq \label{INTS}
\int_\Sigma v^\mu  \langle \Psi|  :\spa \hat{T}_{\mu\nu} \spa: [f_x] \Psi' \rangle u^\nu_\Sigma d\Sigma(x)
= \int_{\Sigma'} v^\mu  \langle \Psi|  :\spa \hat{T}_{\mu\nu} \spa: [f_x] \Psi' \rangle u^\nu_{\Sigma'} d\Sigma'(x)\:.
\eeq
Above, $d\Sigma,d\Sigma'$ are the $\bR^3$ Lebesgue measures in spatial Minkowski coordinates adapted to the $3$-planes $\Sigma$ and $\Sigma'$, respectively.
\item[(e)] The bound (\ref{bff}) holds uniformly in $x\in \bM$:
\beq\label{bff22}
u_\mu u'_\nu \langle \Psi | :\spa \hat{T}^{\mu\nu} \spa: [f_x^2] \Psi \rangle \geq -  u_\mu u'_\nu  b_f^{\mu\nu} ||\Psi||^2\:,
\eeq
for $u, u' \in \overline{\sV_+}$, and $\Psi \in \gS_0$, where $b_f^{\mu\nu}$ is defined in Proposition \ref{TEOB}.
\end{itemize}
\end{proposition}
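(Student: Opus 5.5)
The plan is to work throughout with the explicit momentum-space representation (\ref{DEC}) of the quadratic form $\langle \Psi|:\spa \hat{T}_{\mu\nu}\spa:(y)\Psi'\rangle$, which for $\Psi,\Psi'\in\gS_0$ is a finite sum of integrals over $\sV_{m,+}^2$. Each integral has a Schwartz amplitude $\langle \Psi'|a^\#_p a^\#_k\Psi\rangle$, the polynomial tensor $t_{\mu\nu}$, and an oscillating factor $e^{\pm i(p\pm k)\cdot y}$.

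\textbf{Items (a) and (c).} Item (a) follows from (\ref{TT}) applied with $f_x$ in place of $f$, together with the definition (\ref{Tfx}) via (d) of Proposition \ref{PROP1}. Alternatively, one can check directly that $U_{(I,x)}$ multiplies $a_p$ by the phase $e^{\pm ip\cdot x}$ on $\gS_0$. Item (c) follows by differentiating under the integral in (a), which is allowed because $y\mapsto\langle\Psi|:\spa\hat T\spa:(y)\Psi'\rangle$ is smooth and bounded. Writing $\partial_{x^\mu}f(y-x)=-\partial_{y^\mu}f(y-x)$ and applying the distributional conservation law (e) of Proposition \ref{PROP1} to the test function $f_x$ gives $\partial_\mu\langle\cdot\rangle=0$. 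One can also see it directly from $p^\mu t_{\mu\nu}(p,\pm k)$ combined with the mass-shell identities.

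\textbf{Item (b), the main obstacle.} After inserting (\ref{DEC}) into (a) and integrating in $y$, each term becomes
\[\int_{\sV_{m,+}^2} \hat f(\pm p\pm k)\,e^{i(\pm p\pm k)\cdot x}\,F(p,k)\,d^2\mu_m,\]
with $F$ Schwartz. For the $a^\dagger a$ terms, $x$-derivatives only produce polynomial factors in $p,k$, so smoothness and boundedness are immediate. For spatial decay in $\vec x$ at fixed $x^0$, I would write $e^{i(E(p)-E(k))x^0}e^{-i(\vec p-\vec k)\cdot\vec x}$ and integrate by parts in $\vec p$ (or $\vec k$) using $(1+|\vec x|^2)^{-1}(1-\Delta_{\vec p})$ acting on the phase. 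Each derivative that lands on $e^{iE(p)x^0}$ produces a factor $x^0\nabla E$ with $|\nabla E|\le1$, while derivatives on the Schwartz data and on $\hat f$ are harmless. After $n$ steps this gives the bound $|P_{n+|\alpha|}(x^0)|(1+|\vec x|)^{-n}$, i.e.\ (\ref{BBBO}). The $aa$ and $a^\dagger a^\dagger$ terms are handled identically.

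For (2) a uniform constant is needed in the region $|\vec x|>|x^0|$, where the polynomial in $x^0$ is not acceptable. Here I would use a non-stationary phase argument. The phase gradient in $\vec p$ is $x^0\vec p/E(p)-\vec x$, whose modulus is at least $|\vec x|-|x^0||\vec p|/E\ge |\vec x|(1-|\vec p|/E)$. This is bounded below by $c|\vec x|/\langle p\rangle^2$ because $m>0$. Integrating by parts with the vector field $L=\nabla\phi\cdot\nabla/(i|\nabla\phi|^2)$ four times then yields $C(1+|\vec x|)^{-4}$ uniformly, at the cost of polynomial weights in $\vec p$ that the Schwartz amplitudes absorb. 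This gives the decay with $\epsilon=1$. Getting this uniformity is the delicate point: the lower bound on the phase gradient must degrade only polynomially in $|\vec p|$, and the higher derivatives of $\phi$ must stay bounded.

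\textbf{Items (d) and (e).} For (d), the current $J_\nu(x)=v^\mu\langle\Psi|:\spa\hat T_{\mu\nu}\spa:[f_x]\Psi'\rangle$ is smooth and conserved by (c). By (b)(2) it decays faster than $|\vec x|^{-3}$ in the spacelike region. I would apply Gauss' theorem in the region between $\Sigma$ and $\Sigma'$, truncated by a large cylinder or sphere of radius $R$. Far from the origin the intersection of $\Sigma$ and $\Sigma'$ with that boundary lies where $|\vec x|>|x^0|$, up to a fixed shift. The lateral boundary has area $O(R^2)$ and flux $O(R^2 R^{-3-\epsilon})\to0$. The integrals themselves converge absolutely by (b)(2), and the two surfaces carry the same flux. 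Finally, (e) follows from (\ref{bff}) applied to the test function $f_x$. The bound there is $b_{f_x}^{\mu\nu}$, and I would check that $b_{f_x}^{\mu\nu}=b_f^{\mu\nu}$. This holds because translating $f$ multiplies the Fourier transform of $(f\otimes f)D\Lambda_\Omega$ by $e^{-i(p+p')\cdot x}$, where $\Lambda_\Omega$ is translation invariant; at $(p,p')=(-p,p)$ this factor equals $1$. Equivalently, $\Psi\mapsto U_{(I,x)}^\dagger\Psi$ preserves $\gS_0$ and the norm.
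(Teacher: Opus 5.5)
Your proposal is correct. For (a), (b)(1), (c) and (e) it is essentially the paper's argument: integration by parts in the explicit representation (\ref{FINT}), conservation from (e) of Proposition \ref{PROP1}, and translation invariance through $U_{(I,x)}$. Your Fourier check $b^{\mu\nu}_{f_x}=b^{\mu\nu}_f$ is an equivalent form of that last step.

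The real difference is in (b)(2), and hence in (d). The paper proves (b)(2) only for vectors whose momentum-space components are compactly supported, i.e.\ in $\gD_0$, borrowing the argument of Lemma 23 of \cite{CDRM}. On a compact momentum support one has $|\vec p|/E(p)\le v_{\max}<1$, so the phase gradient is bounded below by $(1-v_{\max})|\vec x|$ with no loss. The paper then proves (d) on $\gD_0$ and reaches $\gS_0$ by approximating in the Schwartz topology and using continuity of the spatial integrals.

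You instead treat Schwartz amplitudes directly, via $1-|\vec p|/E(p)=m^2/(E(E+|\vec p|))\ge m^2/(2E(p)^2)$. You pay polynomial weights in $\langle p\rangle$, which the Schwartz decay absorbs. This gives (b)(2) on all of $\gS_0$, as the statement actually claims, with any power of decay, and makes the density step in (d) unnecessary. The paper's route has a cleaner phase bound, but it needs the extra approximation argument and leaves (b)(2) itself established only on $\gD_0$.

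Two small corrections to your write-up:
\begin{itemize}
\item For $|\beta|\ge 2$ the derivatives $\partial^\beta\phi=x^0\,\partial^\beta E$ are not bounded. They are $O(|\vec x|)$ in the region $|\vec x|>|x^0|$, which is exactly what is needed: each application of the transposed vector field then gains a factor $|\vec x|^{-1}$, up to weights polynomial in $\langle p\rangle$.
\item When $\Sigma$ and $\Sigma'$ are not parallel, the lateral boundary between them has $3$-volume $O(R^3)$, not $O(R^2)$. In adapted coordinates with $\Sigma'=\{x^0=\gamma x^1\}$, $0<\gamma<1$, its height grows like $\gamma R$. So it is precisely the exponent $3+\epsilon$ that makes the lateral flux vanish, and your estimate still provides it.
\end{itemize}
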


\begin{proof} See Appendix \ref{AAA}.
\end{proof}

We move on to the interplay between the generator $H^v$ of spacetime translations along $v \in \sV$, defined in Lemma \ref{lemmaH}, and integrals of $ :\spa \hat{T}_{\mu\nu} \spa: [f_x] $ over rest spaces of Minkowskian reference frames.\\

\begin{proposition}\label{TEO54}
 If $v\in \sV$, $f\in \cD(\bM)$,
 $\Psi,\Psi' \in \gS_0$, independently of the choice of the rest space $\Sigma$ of a Minkowskian reference frame $u_\Sigma$,
\beq
 \int_{\Sigma} \langle \Psi|  :\spa \hat{T}_{\mu\nu} \spa: [f_x] \Psi' \rangle v^\mu u^\nu_\Sigma \:\: d\Sigma(x)   = \left(\int_\bM fd^4x\right) \langle \Psi|H^v\Psi' \rangle  \label{INTTT2} \:,
\eeq
so that, in particular, the left-hand side is positive if $\Psi=\Psi'$, $f\geq 0$ and $v\in \sV_+$.
%
%
\end{proposition}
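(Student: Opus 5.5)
By Proposition \ref{PROPTTR}(d), the left-hand side of (\ref{INTTT2}) does not depend on the rest space $\Sigma$. We may therefore fix a Minkowskian coordinate system adapted to $u_\Sigma$, so that $u_\Sigma=\partial_{x^0}$, and take $\Sigma=\{x^0=t\}$. We then compute
\[
\int_{\bR^3} v^\mu\langle \Psi| :\spa \hat{T}_{\mu 0} \spa: [f_{(t,\vec{x})}] \Psi' \rangle\, d^3x .
\]
Here $u^\nu_\Sigma$ has only the component $\nu=0$ equal to $1$.

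The first step uses Proposition \ref{PROPTTR}(a) to write the integrand as $\int_\bM f(y-x)\langle \Psi| :\spa \hat{T}_{\mu 0} \spa: (y)\Psi'\rangle d^4y$. Changing variables $y=x+z$ and applying Fubini–Tonelli (justified by the rapid spatial decay in (\ref{BBBO}) and the compact support of $f$), the integral becomes
\[
\int_\bM f(z)\left(\int_{\bR^3} v^\mu\langle\Psi|:\spa \hat{T}_{\mu0}\spa:(t+z^0,\vec{z}+\vec{x})\Psi'\rangle d^3x\right)d^4z .
\]
If we show that the inner integral equals $\langle\Psi|H^v\Psi'\rangle$ for every time $s=t+z^0$, the factor $\int f\,d^4x$ comes out and (\ref{INTTT2}) follows.

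So everything reduces to the unsmeared identity $\int_{\bR^3} v^\mu \langle\Psi|:\spa\hat T_{\mu0}\spa:(s,\vec y)\Psi'\rangle d^3y=\langle \Psi|H^v\Psi'\rangle$ for $\Psi,\Psi'\in\gS_0$. We check it with the mode expansion (\ref{DEC}).

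\emph{Number-preserving terms.} In the last two addends, integrating $e^{\pm i(k-p)\cdot y}$ over $\vec y$ gives $(2\pi)^3\delta^3(\vec p-\vec k)$, i.e.\ $(2\pi)^3E(p)\delta(p,k)$ on the mass shell. This forces $k=p$. On the shell, $t_{\mu0}(p,p)=p_\mu p_0-\tfrac12 g_{\mu0}(p\cdot p+m^2)=p_\mu p_0$, so $v^\mu t_{\mu0}(p,p)=-(v\cdot p)E(p)$. The two addends are equal, and the factors $2(2\pi)^3$ and $E(p)$ cancel against the measure. The result is $-\int v\cdot p\,\langle\Psi|a^\dagger_pa_p\Psi'\rangle d\mu_m(p)$, which by (\ref{prop}) and Lemma \ref{lemmaH}(c) equals $\langle\Psi|H^v\Psi'\rangle$ sector by sector.

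\emph{Pair terms.} In the $aa$ and $a^\dagger a^\dagger$ addends, the spatial integral forces $\vec k=-\vec p$, hence $E(k)=E(p)$. With $\tilde p:=(E,-\vec p)$ we get $t_{\mu0}(p,-\tilde p)=-\tfrac12[\tilde p_\mu p_0+p_\mu\tilde p_0-g_{\mu0}(m^2-p\cdot\tilde p)]$. For $\mu=0$ this is $-\tfrac12[2E^2-(m^2+E^2+\vec p^2)]=0$. For $\mu=j$ it is $-\tfrac12[-p_jp_0+p_jp_0]\cdot(\ldots)$, which also vanishes after symmetrization. So these terms drop out, as expected from the time-independence of $H^v$.

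The main obstacle is making this delta-function manipulation rigorous with quadratic forms. I would handle it by applying Fubini to the Schwartz functions $\langle\Psi|a^\#_pa^\#_k\Psi'\rangle$ and using the Fourier inversion $\int d^3y\int e^{i(\vec k-\vec p)\cdot\vec y}F(\vec p,\vec k)\,d^3p\,d^3k=(2\pi)^3\int F(\vec p,\vec p)\,d^3p$ for $F\in\cS(\bR^6)$. Doing so avoids any interchange-of-limits issue, and the only input needed is Proposition \ref{PROPTTR}(b).

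Positivity for $\Psi=\Psi'$, $f\ge0$, $v\in\sV_+$ then follows. Indeed, $-v\cdot p>0$ on the mass shell, so $H^v\ge0$ on $\gS_0$ by Lemma \ref{lemmaH}(c), and $\int f\,d^4x\ge0$.
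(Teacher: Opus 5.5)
Your proof is correct. Its core computation is the same as the paper's: the spatial Fourier integral produces $\delta^3(\vec p\mp\vec k)$, $t_{\mu0}(p,-\tilde p)$ vanishes, and $t_{\mu\nu}(p,p)=p_\mu p_\nu$. The difference is where the smearing is disposed of.

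The paper integrates the already-smeared momentum-space expression (\ref{FINT}) over $\vec x$. There the smearing survives as the factors $\hat f(\pm(p-k))$, $\hat f(\pm(p+k))$, and $\int f\,d^4x$ emerges as $(2\pi)^2\hat f(0)$ once the delta sets $\vec k=\vec p$. You instead use Fubini in position space to reduce everything to the unsmeared identity $\int_{\bR^3}v^\mu\langle\Psi|:\spa\hat T_{\mu0}\spa:(s,\vec y)\Psi'\rangle d^3y=\langle\Psi|H^v\Psi'\rangle$, valid for every time $s$. Independence from $f$ is then immediate. This intermediate statement is slightly stronger than what the paper proves. It is the quadratic-form version of the heuristic $H^v=\int_\Sigma :\spa\hat T_{\mu\nu}\spa:(x)\,v^\mu u^\nu_\Sigma\, d\Sigma(x)$ mentioned after the proof. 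It also explains the $f$-independence that the paper calls surprising and describes as not a direct application of Fubini: Fubini is needed together with the pointwise-in-time unsmeared identity.

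The price of your route is an integrability estimate for the \emph{unsmeared} form, and you cite the wrong one. Your Fubini step needs $\int_{\bR^3}|\langle\Psi|:\spa\hat T_{\mu0}\spa:(s,\vec y)\Psi'\rangle|\,d^3y$ to be bounded uniformly for $s$ in the compact set $\{t+z^0 : z\in supp(f)\}$. Estimate (\ref{BBBO}) does not give this, since it concerns the smeared quantity $\langle\Psi|:\spa\hat T_{\mu\nu}\spa:[f_x]\Psi'\rangle$. The needed bound is true and is proved the same way, so you should state it rather than cite (\ref{BBBO}):
\begin{itemize}
\item each addend of (\ref{DEC}) at $x=(s,\vec y)$ is a Fourier transform of a Schwartz function of $(\vec p,\vec k)$;
\item its Schwartz seminorms grow only polynomially in $s$, because derivatives of $E$ of order at least one are bounded, so the phases $e^{\pm i(E(p)\pm E(k))s}$ are harmless;
\item the transform is evaluated on a $3$-plane in $\vec y$, so it decays rapidly in $|\vec y|$.
\end{itemize}

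There are two small slips.
\begin{itemize}
\item The $g_{\mu0}$ term in your displayed formula has the wrong sign. It should read $t_{\mu0}(p,-\tilde p)=-\tfrac12[\tilde p_\mu p_0+p_\mu\tilde p_0+g_{\mu0}(m^2-p\cdot\tilde p)]$; your $\mu=0$ evaluation in fact uses this correct sign.
\item For $\mu=j$ the component vanishes identically, because $\tilde p_j=-p_j$, $\tilde p_0=p_0$ and $g_{j0}=0$. No symmetrization is involved.
\end{itemize}
Finally, the initial appeal to Proposition \ref{PROPTTR}(d) is harmless but unnecessary, since you carry out the computation for an arbitrary $\Sigma$ in adapted coordinates anyway.
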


\begin{proof} 
 Fix a Minkowskian coordinate system with $\partial_0= u_\Sigma$ so that $\Sigma$ coincides with the plane $x^0=0$. We integrate in $\vec{x}$ over the whole space $\bR^3$ the four terms in (\ref{FINT}). The standard regularization of the spatial Fourier integrals gives the two identities
\[
\int_{\bR^3}e^{i(\vec{k}-\vec{p})\cdot\vec{x}}d^3x=(2\pi)^3\delta^{(3)}(\vec{k}-\vec{p}),\qquad
\int_{\bR^3}e^{i(\vec{p}+\vec{k})\cdot\vec{x}}d^3x=(2\pi)^3\delta^{(3)}(\vec{p}+\vec{k}) .
\]
If $\tilde{p}:=(E(p),-\vec{p})$, this yields
\begin{align*}
& \int_{\Sigma} v^\mu \langle \Psi |  :\spa \hat{T}_{\mu\nu} \spa: [f_x]   \Psi'\rangle u^\nu_\Sigma d\Sigma(x) \\
&=2\pi^2 \int_{\bR^3}\hat{f}(-p-\tilde{p})
\frac{ \langle \Psi | a_{p}a_{\tilde{p}}\Psi'\rangle}{E(p)^2} v^\mu t_{\mu0}(p, -\tilde{p})d^3p \\
&\quad +2\pi^2\int_{\bR^3} \hat{f}(p+\tilde{p})\frac{  \langle \Psi | a^\dagger_{p}a^\dagger_{\tilde{p}}\Psi'\rangle}{E(p)^2} v^\mu t_{\mu0}(p,-\tilde{p})d^3p \\
&\quad +2\pi^2 \int_{\bR^3} \hat{f}(0)\frac{ \langle \Psi | a^\dagger_{p}a_{p}\Psi' \rangle}{E(p)^2}v^\mu t_{\mu0}(p,p)d^3p \\
&\quad +2\pi^2 \int_{\bR^3}\hat{f}(0)\frac{ \langle \Psi |a^\dagger_{p}a_{p}\Psi'\rangle}{E(p)^2}v^\mu t_{\mu0}(p,p)d^3p\: .
\end{align*}
At this point, direct inspection proves that $t_{\mu0}(p,-\tilde{p})=0$, whereas $t_{\mu\nu}(p,p)=p_\mu p_\nu$ and hence
\[
\frac{v^\mu t_{\mu0}(p,p)}{E(p)}= -v\cdot p .
\]
Therefore the first two integrals vanish and the two remaining ones give
$$ \int_{\Sigma} v^\mu \langle \Psi |  :\spa \hat{T}_{\mu\nu} \spa: [f_x]   \Psi'\rangle u^\nu_\Sigma d\Sigma(x)=
-(2\pi)^2\hat{f}(0) \int_{\sV_{m,+}}\langle \Psi |a^\dagger_{p}a_{p}\Psi'\rangle v\cdot p\: d\mu_m(p)\:.$$
Since $(2\pi)^2\hat{f}(0) = \int f(x) d^4x$ and taking (\ref{QuadHv}) into account, the right-hand side of the identity found is $(\int fd^4x) \langle \Psi| H^v \Psi'\rangle$.
\end{proof}

\noindent A surprising fact is that
-- assuming $\int_\bM f d^4x =1$ --
the right-hand side of (\ref{INTTT2}) is independent of the smearing function $f\in \cD_\bR(\bM)$, which however appears in the left-hand side. 
We stress that this is not an evident result, since it proves in particular that, in the left-hand side, two integrals can be interchanged, but one is over the whole spacetime, namely $\bR^4$, and the other over a flat Cauchy surface, i.e. $\bR^3$: it is not a direct application of Fubini's theorem.
It is easy to see, at a heuristic level, that this is due to elementary properties of the smearing procedure and the conservation property (\ref{INTS}) when $\Sigma$ and $\Sigma'$ are parallel.
We expect that (\ref{INTTT2}) (or its formulation in terms of quadratic forms) is also valid in curved spacetime in the presence of a Killing vector field $v$. According to (\ref {INTTT2}) it could be convenient to introduce the notation, formally motivated by the replacement $f(y-x) \to \delta(y-x)$,
$$H^v = \int_{\Sigma}  :\spa \hat{T}_{\mu\nu} \spa: (x)  v^\mu u^\nu_\Sigma \:\: d\Sigma(x)$$
which is used in theoretical physics textbooks.

\section{Relativistic spatial localization observable from the stress-energy operator} \label{SEC3agg}
We are now in a position to construct a relativistic spatial localization observable for $n$-particle states by integrating a normalized notion of $:\spa T_{\mu\nu}\spa:[f^2_x]$ over Borel sets $\Delta$ of rest spaces $\Sigma$ of Minkowskian reference frames $u_\Sigma$. 
The construction we are going to present proves that the relativistic spatial localization observables introduced in \cite{Terno,M23}
are actually rigorously constructed out of QFT notions: they are restrictions of a more general structure, as conjectured in the conclusions of \cite{CDRM}.

\subsection{$n$-particle relativistic spatial localization  from $:\spa \hat{T}_{\mu\nu}\spa:[f^2]$ satisfying CC}

Due to (\ref{INTTT2}), if $f\in \cD_\bR(\bM)$, $\int_\bM f d^4x =1$, $f\geq 0$, and $u\in \sT_+$, extending a definition given in \cite{M23}, an expected expression for the desired effects should be 
\beq \sA_f^u(\Delta) =    \frac{1}{\sqrt{H^u}} \int_{\Delta}  :\spa \hat{T}_{\mu\nu} \spa: [f_x]  u^\mu u^\nu_\Sigma \:\: d\Sigma(x)  \frac{1}{\sqrt{H^u}}\:, \quad \Delta \in \cB(\Sigma)\quad \mbox{(tentatively!)}\label{propos}\eeq
where $\cB(\Sigma)$ denotes the $\sigma$-algebra of Borel sets in the rest space $\Sigma$ of the Minkowskian reference frame $u_\Sigma$. This type of expression was used in \cite{M23}, and already in \cite{Terno} for a first version of this localization notion with $u=u_\Sigma$. However, the notion of stress-energy tensor operator was introduced in those works only in a heuristic manner, without analyzing the crucial smearing procedure, which makes it a mathematically sound object in rigorous QFT and also allows it to comply with the localization notion of the AHK approach. As a matter of fact, the rigorous version of the constructed POVM was actually defined in \cite{M23} in terms of a suitable improvement of Newton-Wigner's PVM and only in the one-particle space. This type of relation with the Newton-Wigner observable will arise later in a more general form, but we shall not use it to {\em define} our notion of localization in our QFT context, contrary to \cite{M23}.

Evident physical issues with (\ref{propos}) arise immediately in our QFT context. First of all, $\sA_f^u(\Delta)$ defined as above is not necessarily positive, 
However, some lower bounds hold, as shown in Proposition \ref{TEOB} when using $f^2$ with $f\in \cD_\bR(\bM)$. Furthermore, due to Proposition  \ref{TEOB1}, if we consider the compressions to $n$-particle spaces $\cH^{(n)}$, namely the operators given by
$ P_n:\spa \hat{T}_{\mu\nu} \spa: [f^2_x] P_n$ and $P_nA_f^u(\Delta) P_n$, they are positive, and the latter is also normalized to $I$ for $\Delta=\Sigma$ because of (\ref{INTTT2})
($P_n \gF_s(\cH_m) \to \gF_s(\cH_m)$ denotes the orthogonal projector onto $\cH^{(n)}$).
We are particularly interested in the specific case of {\em one particle}: $n=1$.
From the causality properties proved in Proposition \ref{TEOB1}
and (c) in Proposition \ref{PROPTTR}, exploiting a procedure already used in \cite{M23,DM24,CDRM},
we expect that the causality condition (CC) in Definition \ref{REMM0444} is satisfied when (a compression of) the operators $\sA_f^u(\Delta)$ is used to define families of POVMs on all rest spaces of $\bM$.
Finally, notice that in (\ref{propos})
$\frac{1}{\sqrt{H^u}}\Omega$ is not defined. To fix this problem we consider the square root of the resolvent $H^{-1}_{u,\epsilon}$, where henceforth we define the {\em $\epsilon$}-regularized Hamiltonian
\beq H^u_{\epsilon}:=(H^u+ \epsilon I)\quad \mbox{for $\epsilon>0$ (arbitrarily small)}\:,\eeq  
and later we shall consider the limit as $\epsilon\to 0^+$. We notice {\em en passant} that the $\alpha$-th power of the written operator, defined via functional calculus, satisfies
\beq (H^u_{\epsilon})^\alpha (\gS_0) = \gS_0 \quad \mbox{for $\alpha \in \bR$ and $\epsilon >0$.} \label{FONDH}\eeq
In fact, in every reducing space $\cH^{(n)}$ the operator $H^u_{\epsilon}$ restricts to a multiplicative operator with a strictly positive, smooth polynomially bounded function. The spectrally defined $\alpha$-power of $H^u_{\epsilon}$ therefore coincides with the corresponding power of this multiplicative operator (e.g. by (f) of Proposition 3.3 in \cite{Moretti2}) in each $\cH^{(n)}$, giving rise to (\ref{FONDH}).
We start from a technical preliminary result, where we explicitly use $f^2$ to smear the stress-energy tensor. Some of the results established below should remain valid even when using $f\geq 0$ in place of $f^2$. However, the proof of the proposition below is easier to carry out if one assumes that $\sqrt{f}$ is smooth, and this is not guaranteed by the requirement that $f\geq 0$ be smooth, as observed below the proof of Proposition \ref{TEOB}.\\  

%
%

\begin{proposition}\label{LLAST} Consider $u\in \sT_+$, $f \in \cD_\bR(\bM)$, $\epsilon>0$ and, 
for a rest space $\Sigma$ of a Minkowskian reference frame $u_\Sigma \in \sT_+$,
define the algebra of sets
$$\cB_0(\Sigma) := \{\Delta \in \cB(\Sigma) \:|\: \: \mbox{either}\:\: |\Delta|< +\infty \:\: \mbox{or}\:\: |\Sigma \setminus \Delta| <+\infty\}.$$
For every $\Delta \in \cB_0(\Sigma)$ there is a bounded operator $ \sA_{f,\epsilon}^u(\Delta)
: \gF_s(\cH_m) \to \gF_s(\cH_m)$
which is uniquely defined by requiring that
\beq \label{sA} \left\langle\Psi \left| \sA_{f,\epsilon}^u(\Delta)  \right.\Psi' \right\rangle = 
 \int_{\Delta}\sp  \left\langle  \frac{1}{\sqrt{H^u_{\epsilon}}}  \Psi \left|   :\spa \hat{T}_{\mu\nu} \spa: [f^2_x] u^\mu u^\nu_\Sigma  \frac{1}{\sqrt{H^u_{\epsilon}}} \Psi' \right. \right \rangle  d\Sigma(x) \:\: \mbox{if $\Psi,\Psi' \in \gS_0$.} \label{FOND1}
\eeq
(Where the right-hand side more generally exists for a generic $\Delta \in \cB(\Sigma)$.)
The following facts hold.
\begin{itemize}
    \item[(a)] $\sA_{f,\epsilon}^u(\Delta)^\dagger  = \sA_{f,\epsilon}^u(\Delta)$.

   \item[(a1)] There is  $C_{f} >0$ such that
 $||P_n \sA_{f,\epsilon}^u(\Delta) P_n|| \leq C_{f}$ for all $n\in \bN$, $\epsilon >0$, $\Delta\in \cB_0(\Sigma)$, where $P_n$ is the orthogonal projector onto $\cH^{(n)}$.
      \item[(b)] If $\Delta = \Sigma$,
\beq \sA^u_{f,\epsilon}(\Sigma)
= \left(\int_\bM f^2 d^4x\right)
\frac{H^u}{{H^{u} + \epsilon I}}\label{NORMa}\:.
\eeq
\item[(c)] If $g\in IO(1,3)_+$ then the covariance relation holds
\beq 
U_g \sA_{f,\epsilon}^u(\Delta) U_g^{-1} =
\sA_{g_*f,\epsilon}^{gu}(g\Delta) 
\eeq
where $U$ is the unitary $IO(1,3)_+$-representation (\ref{rep}).
\item[(d)] The map $\cB_0(\Sigma)\ni \Delta \mapsto \sA_{f,\epsilon}^u(\Delta) $ is weakly $\sigma$-additive.

\item[(e)] The operators $ \sA_{f,\epsilon}^u(\Delta)$ are bounded from below (but not necessarily positive).
\end{itemize}
All the statements remain valid if one replaces $f^2$ by $f= \sum_{i=1}^N f_i^2$ with $f\in \cD_\bR(\bM)$.
\end{proposition}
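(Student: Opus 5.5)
The plan is to build $\sA_{f,\epsilon}^u(\Delta)$ as a bounded quadratic form on $\gS_0\times\gS_0$ and then extend it by Riesz representation. Put $\Phi:=(H^u_\epsilon)^{-1/2}\Psi$ and $\Phi':=(H^u_\epsilon)^{-1/2}\Psi'$; by (\ref{FONDH}) both lie in $\gS_0$. The integrand in (\ref{FOND1}) is then $x\mapsto\langle\Phi|:\spa\hat{T}_{\mu\nu}\spa:[f^2_x]\Phi'\rangle u^\mu u^\nu_\Sigma$. By (b) of Proposition \ref{PROPTTR}, in coordinates adapted to $u_\Sigma$ this is a Schwartz function of $\vec{x}$ on $\Sigma$, so the integral over any Borel $\Delta$ exists and the sesquilinear form $s_\Delta(\Psi,\Psi')$ is well defined.

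The key step is a bound $|s_\Delta(\Psi,\Psi')|\le C\|\Psi\|\|\Psi'\|$ uniform in $\Delta$.
\begin{itemize}
\item First decompose into particle-number sectors. By the structure (\ref{DEC}), only the blocks $(n,n)$ and $(n,n\pm2)$ are nonzero, and $H^u_\epsilon$ is reduced by each $\cH^{(n)}$.
\item For the diagonal blocks, use Proposition \ref{TEOB1}: the integrand is nonnegative for $\Psi=\Psi'\in\cH^{(n)}$, since $u,u_\Sigma\in\sT_+$ and $f^2\ge0$. Hence $0\le s_\Delta(\Psi,\Psi)\le s_\Sigma(\Psi,\Psi)$.
\item By Proposition \ref{TEO54}, $s_\Sigma(\Psi,\Psi)=(\int f^2)\langle\Psi|H^u(H^u+\epsilon)^{-1}\Psi\rangle\le(\int f^2)\|\Psi\|^2$. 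Cauchy--Schwarz for the positive form then gives (a1) with $C_f=\int f^2$, and it also gives (b) on the diagonal blocks.
\item For the off-diagonal $aa$ and $a^\dagger a^\dagger$ blocks I expect the main obstacle. I will write them explicitly in momentum space as in the proof of Proposition \ref{TEO54}. Integrating over $\Delta$ rather than $\Sigma$ produces a kernel $\widehat{\chi_\Delta}(\vec p+\vec k)\,\widehat{f^2}(\cdots)\,t_{\mu\nu}(p,-k)\,/\sqrt{E_\epsilon(p)E_\epsilon(k)}$ (with the other factors of the two-particle structure), where $E_\epsilon$ denotes the energy shifted by $\epsilon$.
\item The plan for these blocks is to bound them on each $(n,n+2)$ block by a Hilbert--Schmidt or Schur estimate. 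The Schwartz decay of $\widehat{f^2}$ in the time-like variable $E(p)+E(k)$ controls the growth of $t_{\mu\nu}$, and the factor $\sqrt n$ coming from the creation/annihilation operators is controlled by $\epsilon^{-1}$ and the energy denominators. This is only needed for a finite, $n$-dependent bound, because the uniform-in-$n$ statement (a1) concerns only $P_n\,\cdot\,P_n$.
\item Boundedness of the whole operator is cleanest for $|\Delta|<\infty$. There I would use $|\widehat{\chi_\Delta}|\le|\Delta|$ together with $L^2$ control via Plancherel. For complements of finite-measure sets, write $s_{\Sigma\setminus\Delta}=s_\Sigma-s_\Delta$ and use that the off-diagonal part of $s_\Sigma$ vanishes, since $t_{\mu0}(p,-\tilde p)=0$.
\item If the dependence on $n$ of these off-diagonal bounds is not summable, I would still get boundedness via the lower bound of Proposition \ref{TEOB}(a). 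That lower bound is uniform by (\ref{bff22}), and it combines with the upper bound to yield a two-sided estimate. This route also gives (e): $s_\Delta(\Psi,\Psi)\ge -b\,|\Delta|\,\|(H^u_\epsilon)^{-1/2}\Psi\|^2\ge -b|\Delta|\epsilon^{-1}\|\Psi\|^2$ for finite $|\Delta|$, and analogously for complements.
\end{itemize}

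With the bound in hand, the remaining items follow.
\begin{itemize}
\item Riesz representation gives a unique bounded operator $\sA^u_{f,\epsilon}(\Delta)$, since $\gS_0$ is dense.
\item For (a), $\sA^u_{f,\epsilon}(\Delta)$ is selfadjoint because $f^2$ is real and Proposition \ref{PROP1}(b) makes the form hermitian.
\item For (b), the explicit computation of Proposition \ref{TEO54} with $v=u$, applied to $\Phi,\Phi'$, gives $(\int f^2)\langle\Psi|H^u(H^u_\epsilon)^{-1}\Psi'\rangle$.
\item For (c), conjugate with $U_g$ using (\ref{covT}). One needs $U_gH^uU_g^{-1}=H^{gu}$, so $U_g$ commutes appropriately with functional calculus, and the change of variables $x\mapsto gx$ maps $\Sigma$ to $g\Sigma$ with $u_{g\Sigma}=\Lambda u_\Sigma$.
\item For (d), weak $\sigma$-additivity on $\gS_0$ is dominated convergence for the integrable integrand. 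It extends to all vectors by the uniform bound and an $\varepsilon/3$ argument.
\item The final remark about $\sum_i f_i^2$ holds by linearity, since Propositions \ref{TEOB} and \ref{TEOB1} hold in that case too.
\end{itemize}
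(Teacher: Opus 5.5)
Your diagonal-block argument is correct (see the last paragraph). The gap is in the off-diagonal blocks, which you yourself flagged as the main obstacle.

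\textbf{The gap.} Your assertion that only a finite, $n$-dependent bound is needed there is false. $\sA^u_{f,\epsilon}(\Delta)$ must be bounded on all of $\gF_s(\cH_m)$. The restriction of the form to $\cH^{(n)}\times\cH^{(n+2)}$ is exactly the corresponding block, and these blocks $\cH^{(n+2)}\to\cH^{(n)}$ have mutually orthogonal ranges. Hence the off-diagonal part is bounded if and only if the block norms are bounded uniformly in $n$ (Lemma \ref{LEMMAS}). Summability is not needed, but (a1) does not imply this either.

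Your fallback does not supply the missing bound. Integrating (\ref{bff22}) over $\Delta$ gives only $s_\Delta(\Psi,\Psi)\ge -b|\Delta|\epsilon^{-1}\|\Psi\|^2$, with $b:=u_\mu u_{\Sigma\nu}b^{\mu\nu}_f$. Your only upper bound, $s_\Delta\le s_\Sigma$, holds on the diagonal blocks, where the integrand is nonnegative. Write $s_\Delta(\Psi,\Psi)=D(\Psi)+O(\Psi)$, with $D$ the diagonal part and $O(\Psi):=2\,\mathrm{Re}\sum_n s_\Delta(\Psi_n,\Psi_{n+2})$. A lower bound on $D+O$ together with $0\le D(\Psi)\le(\int f^2)\|\Psi\|^2$ says nothing about how large $O$ can be from above.

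\textbf{Two ways to close it.}

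(i) Do what the paper does and make your kernel estimate uniform in $n$. The denominators are the total energies of the $n$- and $(n+2)$-particle configurations, not single-particle energies as in your kernel. On $\cH^{(n+2)}\to\cH^{(n)}$ the operator is $\frac{1}{4\pi}\sqrt{(n+1)(n+2)}$ times three factors composed: multiplication by $E^n_{\epsilon,u}(Q)^{-1/2}$, contraction against the fixed kernel $H_\Delta\in L^2(\bR^6)$, and multiplication by $\bigl(E^{n+2}_{\epsilon,u}(p,k,Q)E(p)E(k)\bigr)^{-1/2}$. Since $E^n_{\epsilon,u}\ge nm+\epsilon$, its norm is at most $\frac{\|H_\Delta\|}{4\pi m^2}\sqrt{\frac{(n+1)(n+2)}{(n+\epsilon/m)(n+2+\epsilon/m)}}$, which is bounded in $n$. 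It is the linear growth of the total energy that absorbs the combinatorial factor; $\epsilon>0$ only rescues the vacuum block $n=0$.

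(ii) Alternatively, repair your fallback with the grading unitary $Z:=\sum_n i^nP_n$. It preserves $\gS_0$, commutes with $H^u$, and leaves $D$ unchanged. It flips the sign of $O$, because the $(n,n+2)$ terms pick up $\overline{i^n}\,i^{n+2}=-1$. Applying the QEI bound to $\Psi$ and to $Z\Psi$ gives $D\pm O\ge -b|\Delta|\epsilon^{-1}\|\Psi\|^2$. Hence $|O(\Psi)|\le(\int f^2+b|\Delta|/\epsilon)\|\Psi\|^2$, and so $|s_\Delta(\Psi,\Psi)|\le(2\int f^2+b|\Delta|/\epsilon)\|\Psi\|^2$. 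This is a genuine two-sided bound for a Hermitian form, and it avoids the off-diagonal kernel computation entirely.

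\textbf{Diagonal blocks and a remark on (d).} The paper uses $(2\pi)^2\widehat{f^2}(p-k)=\int\overline{\hat f(u-p)}\hat f(u-k)\,d^4u$ to write the $a^\dagger a$ part as $n\langle F_\Phi|(P(\Delta)\otimes I\otimes I)F_{\Phi'}\rangle$. It then bounds the multipliers coming from $u^\mu t_{\mu0}$ and the energy denominators, getting a constant $C\int f^2$ with $C$ depending on $u,u_\Sigma$. You instead combine positivity from Proposition \ref{TEOB1}, normalization from Proposition \ref{TEO54}, and Cauchy--Schwarz. Your route is shorter and gives the sharper, frame-independent constant $C_f=\int_\bM f^2d^4x$. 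The paper's computation in exchange yields the explicit representations (\ref{NEW}) and (\ref{OST}), which it reuses later for the Newton--Wigner and large-mass results.

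A smaller point concerns your $\varepsilon/3$ extension of (d) to arbitrary vectors. It needs $\sup_N\|\sA^u_{f,\epsilon}(\bigcup_{j\le N}\Delta_j)\|<\infty$. The diagonal part is uniformly bounded, but the off-diagonal part is only controlled by $\|H_\Delta\|=O(|\Delta|^{1/2})$. So the extension works when $\bigcup_j\Delta_j$ has finite measure, but not for co-finite unions in general. The paper itself only verifies (d) at the level of the forms on $\gS_0$.
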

\begin{proof}
See Appendix \ref{AAA}.
\end{proof}

We are now in a position to state and prove one of the main results of this work.
We construct a positive-energy  relativistic spatial localization observable (Definition \ref{REMM0}) for states with a definite number of particles in the Fock space and which is causal in the sense that it complies with CC as in Definition \ref{REMM0444}. This family of POVMs uniquely arises from the stress-tensor operator smeared with a test function $f^2$ and the choice of a preferred temporal direction $u$. We cannot directly use an identity such as (\ref{propos}) to define an effect because of the states with ``negative probability'' that would arise from (e) in Proposition \ref{LLAST}. Even if we shall return later to that issue, what we do now is remove these annoying states by passing to the compressions $P_nA_f^u(\Delta) P_n$, as already suggested.
 
As before, we use the following notation
\beq {\cal R} := \cup \{\cB(\Sigma)\:|\: \Sigma \subset \bM \: \mbox{spacelike  $3$-plane}\}\eeq where $\cB(\Sigma)$ is the Borel $\sigma$-algebra on $\Sigma$ and $u_\Sigma\in \sT_+$ is the future-oriented unit normal vector to $\Sigma$.\\

\begin{theorem}\label{TEOR}
 Consider a real scalar Klein-Gordon field on $\bM$ with mass $m>0$ and the associated 
normally ordered stress-energy tensor operator $:\spa \hat{T}_{\mu\nu}\spa:[f]$ defined in the Fock space $\gF_s(\cH_m)$ as in Proposition \ref{PROP1}.  
Take $u\in \sT_+$ with associated selfadjoint Hamiltonian $H^u$ (\ref{Hv}), and $f \in \cD_\bR (\bM)$ such that $\int_{\bM} f^2 d^4x =1$.
Then there is a unique map
  \beq\label{A92}
  {\cal R} \ni \Delta \mapsto {A}^u_{f}(\Delta)  \in \gB(\gF_s(\cH_m))
  \eeq
such that
 \begin{itemize}

\item[(a)] $A^u_{f}(\Delta)(\cH^{(n)})\subset \cH^{(n)} $ for $\Delta \in {\cal R}$ and $n\in \bN$;
      
      \item[(b)] ${A}^u_{f}(\Delta)(\cH^{(0)})=0$ for every $\Delta \in {\cal R}$;
  
%
\item[(c)] if $\rho \in \sS(\cH^{(n)})$ for a given $n=0,1,2,\ldots$ and $\Delta \in \cB_0(\Sigma)$, for a spacelike $3$-plane $\Sigma$,
 \beq \label{LIM}
 tr(\rho {A}_{f}^u(\Delta)) = \lim_{\epsilon \to 0^+} 
tr(\rho \sA^u_{f,\epsilon}(\Delta))\:;
 \eeq
  
\item[(d)] for every spacelike $3$-plane $\Sigma$, $\cB(\Sigma) \ni \Delta \mapsto A^u_{f}(\Delta)\spa\rest_{\cH^{(0)\perp}}$ is a (normalized) POVM on $\cH^{(0)\perp}$.
      \end{itemize}
The following further facts are true.
   \begin{itemize}

\item[(e)] 
If $\Delta \in {\cal R}$, $P_n: \gF_s(\cH_m) \to \gF_s(\cH_m)$ is the orthogonal projector onto $\cH^{(n)}$, and $\Psi, \Psi' \in \gS_0$, then
\beq  \langle \Psi| {A}^u_f(\Delta)\Psi'\rangle =\sum_{n=1}^{+\infty}\int_{\Delta} \left\langle   \frac{1}{\sqrt{H^u}}    P_n \Psi \left| :\spa \hat{T}_{\mu\nu} \spa: [f^2_x] \frac{1}{\sqrt{H^u}} \right.      P_n\Psi' \right\rangle u^\mu u^\nu_\Sigma  d\Sigma(x)\:,\label{igt2}\eeq
where $\frac{1}{\sqrt{H^u}}$
is defined by spectral calculus on $\cH^{(0)\perp}$.
  
%
%
%
\end{itemize}

\noindent Furthermore, referring to Definition \ref{REMM0}, if $U$ is the unitary $IO(1,3)_+$-representation (\ref{rep}) and $A_f$ denotes the family of maps $\{A_f^u\}_{u\in \sT_+}$,
\begin{itemize}
\item[(f)] $(\cH^{(0)\perp}, {\cal R}, A_f\sp\rest_{\cH^{(0)\perp}}, U\sp\rest_{\cH^{(0)\perp}})$ is a positive-energy  relativistic spatial localization observable
which is causal according to  Definition \ref{REMM0444}.

\item[(g)] $(\cH^{(n)}, {\cal R}, A_{f}\spa\rest_{\cH^{(n)}}, U\spa\rest_{\cH^{(n)}})$ with $n>0$ 
 is a positive-energy  relativistic spatial localization observable
which is causal according to Definition \ref{REMM0444}.
      \end{itemize}
All the statements remain valid if one replaces $f^2$ by $f= \sum_{i=1}^N f_i^2$ with $f\in \cD_\bR(\bM)$.
\end{theorem}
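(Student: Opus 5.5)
The plan is to build $A^u_f(\Delta)$ sector by sector, as a direct sum of compressions, and then read off the listed properties from Proposition \ref{LLAST} and Proposition \ref{TEOB1}.

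\emph{Definition on each sector.} Fix $n\geq 1$ and a rest space $\Sigma$. On $\cH^{(n)}\cap\gS_0$ consider the sesquilinear form
$$(\Psi,\Psi')\mapsto \int_\Delta \left\langle (H^u)^{-1/2}\Psi\left|\, :\spa \hat{T}_{\mu\nu}\spa:[f^2_x]\, u^\mu u^\nu_\Sigma\, (H^u)^{-1/2}\Psi'\right.\right\rangle d\Sigma(x)\:.$$
On $\cH^{(n)}$, $n\geq1$, the operator $H^u$ is multiplication by $-\sum_k u\cdot p_k\geq nm>0$. Hence $(H^u)^{-1/2}$ is bounded there and preserves $\cS(\sV^n_{m,+})$, by the same argument as for (\ref{FONDH}).

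This form is the $\epsilon=0$ version of (\ref{FOND1}), compressed by $P_n$. By (a1) of Proposition \ref{LLAST}, the forms $P_n\sA^u_{f,\epsilon}(\Delta)P_n$ are uniformly bounded in $\epsilon$. Moreover $(H^u_\epsilon)^{-1/2}\Psi\to (H^u)^{-1/2}\Psi$ in $\cS(\sV^n_{m,+})$, dominated by $(nm)^{-1/2}$ times Schwartz seminorms. Together with the Schwartz decay in $\vec x$ from Proposition \ref{PROPTTR}(b), dominated convergence shows that the form is the pointwise limit of $\langle\Psi|\sA^u_{f,\epsilon}(\Delta)\Psi'\rangle$.

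It is therefore bounded by $C_f$ and defines a bounded operator $A^u_f(\Delta)_n$ on $\cH^{(n)}$. We set
$$A^u_f(\Delta):=0\oplus\bigoplus_{n\geq1}A^u_f(\Delta)_n\:,$$
which is bounded by $C_f$. This gives (a), (b), (e), and (c) for $\Delta\in\cB_0(\Sigma)$: first for pure states with Schwartz vectors, then for general $\rho\in\sS(\cH^{(n)})$ by density and the uniform bound. Uniqueness follows from (a), (b), (c) and polarization, since a bounded operator on $\cH^{(n)}$ is determined by its diagonal matrix elements. For $\Delta\notin\cB_0(\Sigma)$, the value is fixed by the $\sigma$-additivity required in (d), via an exhaustion by bounded sets.

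\emph{POVM properties.} Positivity on $\cH^{(n)}$ comes from Proposition \ref{TEOB1} applied at each $x$, since $f^2_x\geq0$ and $(H^u)^{-1/2}\Psi\in\cH^{(n)}\cap\gS_0$; the integrand is then pointwise nonnegative. Normalization follows from (\ref{NORMa}): $H^u(H^u+\epsilon)^{-1}\to I$ on $\cH^{(n)}$ for $n\geq1$. Weak $\sigma$-additivity passes from Proposition \ref{LLAST}(d), or directly from countable additivity of the integral, the integrand being integrable over $\Sigma$. Positivity and additivity then give $0\leq A\leq I$ and strong $\sigma$-additivity.

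\emph{Covariance and causality.} Covariance (e) of Definition \ref{REMM0}, with tensor index $s=u$, follows from Proposition \ref{LLAST}(c) in the limit. Here one uses that $U_g$ commutes with $P_n$, and that $\int(g_*f)^2=\int f^2$. Positive energy is inherited from $H^v\geq0$ for $v\in\sT_+$.

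The main obstacle is \textbf{CC}. I plan to follow \cite{M23,DM24}. For $\Psi\in\cH^{(n)}\cap\gS_0$, the vector $J^\nu(x):=-u_\mu\langle\tilde\Psi|:\spa\hat T^{\mu\nu}\spa:[f^2_x]\tilde\Psi\rangle$, with $\tilde\Psi=(H^u)^{-1/2}\Psi$, has three properties:
\begin{itemize}
\item[(i)] it is conserved, by Proposition \ref{PROPTTR}(c);
\item[(ii)] it is causal and future-directed, by Proposition \ref{TEOB1};
\item[(iii)] it decays rapidly at spatial infinity on the relevant slabs, by Proposition \ref{PROPTTR}(b)(1)--(2).
\end{itemize}
I would apply the divergence theorem to the region bounded by $\Delta$ (say compact first), by $\Delta_\Sigma$, and by the null boundary of $J^+(\Delta)\cup J^-(\Delta)$, truncated at large radius. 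The flux through the null boundary has a definite sign because $J$ is causal and future-directed. This yields $\int_\Delta J\cdot u_{\Sigma_1}\leq\int_{\Delta_\Sigma}J\cdot u_\Sigma$, and the truncation error vanishes by the decay estimates.

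General Borel $\Delta$ are handled by inner regularity and monotone limits of the measures. The bound extends from Schwartz vectors to all of $\cH^{(n)}$ by density and boundedness. Taking direct sums gives (f) and (g). The case $\sum_i f_i^2$ is identical.
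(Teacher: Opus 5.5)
Apart from one step, your proposal is the paper's proof. The paper defines $A^u_f(\Delta)$ directly as the $\epsilon=0$ block-diagonal operator $\sA_{0,3}(\Delta)$, which the proof of Proposition \ref{LLAST} already constructs for every Borel $\Delta$. You instead obtain the same sector operators as limits of the forms $P_n\sA^u_{f,\epsilon}(\Delta)P_n$, bounded via (a1), and use an exhaustion for $\Delta\notin\cB_0(\Sigma)$, with uniqueness by exhaustion rather than Carath\'eodory. The two routes are equivalent. In fact boundedness on each sector is even cheaper: for $\Psi\in\cH^{(n)}\cap\gS_0$ the integrand is pointwise nonnegative by Proposition \ref{TEOB1}, and its integral over $\Sigma$ equals $\|\Psi\|^2$ by (\ref{INTTT2}). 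Cauchy--Schwarz for positive forms then bounds the form by $1$ for every Borel $\Delta$. Your treatment of (c), positivity, normalization, $\sigma$-additivity, and CC via the conserved, future-directed causal current of \cite{M23} also matches the paper. One small sign slip: in your flux inequality the nonnegative integrand is $-J\cdot u_\Sigma$, not $J\cdot u_\Sigma$.

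The step that fails is covariance, and it fails equally in the paper's proof ("covariance easily follows from (c) of Proposition \ref{LLAST}"). Proposition \ref{LLAST}(c) transports the smearing function together with $u$. Passing to the limit therefore gives $U_gA^u_f(\Delta)U_g^{-1}=A^{\Lambda u}_{\Lambda_*f}(g\Delta)$ for $g=(\Lambda,a)$, where $(\Lambda_*f)(y):=f(\Lambda^{-1}y)$. Translations leave $f$ untouched because of the $f_x$ construction, so the $g_*f$ in Proposition \ref{LLAST}(c) should really read $\Lambda_*f$. Definition \ref{REMM0}(e) with index $s=u$ demands $A^{\Lambda u}_f(g\Delta)$ instead. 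Your remark that $\int(\Lambda_*f)^2=1$ only shows that the right-hand side is again an admissible POVM; it does not identify it with $A^{\Lambda u}_f(g\Delta)$.

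The two genuinely differ. By (\ref{NEW}) with $n=1$, the density of $\Delta\mapsto\langle\psi|A^u_f(\Delta)\psi\rangle$ has momentum kernel $\widehat{f^2}(p-k)\,w(p,k)$ with $w(p,p)=1$. Hence $A^{u}_f=A^{u}_{f'}$ on a rest space forces $\widehat{f^2}=\widehat{f'^2}$ on a nonempty open set of spacelike vectors, and so everywhere, since these transforms are entire. Covariance with fixed $f$ would then make $f^2$ a compactly supported $O(1,3)_+$-invariant function, i.e.\ $f=0$.

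To obtain (f) and (g) you have two options. Either enlarge the index to $s=(u,f)$ with action $(u,f)\mapsto(\Lambda u,\Lambda_*f)$. Or tie $f$ to $u$ covariantly: set $f^{(u)}:=\Lambda_{u*}f_0$, where $\Lambda_uu_0=u$ and $f_0$ is invariant under the stabilizer $O(3)$ of a reference $u_0$, e.g.\ $h'(x^0)H(|\vec x|)$ in $u_0$-adapted coordinates. This $f^{(u)}$ is independent of the choice of $\Lambda_u$ and satisfies $\Lambda_*f^{(u)}=f^{(\Lambda u)}$, so $\{A^u_{f^{(u)}}\}_{u\in\sT_+}$ is covariant.
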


    \begin{proof} Referring to the proof of Proposition \ref{LLAST} (decomposition (\ref{expand2}) in particular), define 
\beq {A}^u_f(\Delta)  := \sA_{0,3}(\Delta) (=: \sA_{\epsilon,3}(\Delta)|_{\epsilon=0}) \label{DEFAGG}\eeq 
With this definition, (a) and (b) are true, in particular because $P_n \sA_{0,3}(\Delta) = \sA_{0,3}(\Delta) P_n$ and $P_0 \sA_{0,3}(\Delta) = \sA_{0,3}(\Delta) P_0 =0$.
We stress that ${A}^u_f(\Delta)\in \gB(\gF_s(\cH_m))$ and that this operator is defined for every $\Delta \in \cB(\Sigma)$ of every spacelike $3$-plane, as observed in the proof of Proposition \ref{LLAST} (see the discussion below (\ref{expand}) and (\ref{TONE})). 
For future convenience we note that, if $\Psi \in \cH^{(n)}\cap \gS_0$ with $n>0$, 
we have
\beq \langle \Psi| {A}^u_f(\Delta)\Psi \rangle = \langle \Psi| \sA_{0,3}(\Delta) \Psi \rangle = \int_{\Delta}\sp  \left
\langle  \frac{1}{\sqrt{H^u}}  \Psi \left|    :\spa \hat{T}_{\mu\nu} \spa: [f^2_x] \frac{1}{\sqrt{H^u}}  \right. \Psi \right\rangle u^\mu u^\nu_\Sigma  d\Sigma(x)\geq 0 \label{igt}\eeq
where $\frac{1}{\sqrt{H^u}}$
is defined by spectral calculus on $\cH^{(0)\perp}$ and as the zero operator on $\cH^{(0)}$, and where we have taken advantage of Proposition \ref{TEOB1}.
At this point, using $P_n \sA_{0,3}(\Delta) = \sA_{0,3}(\Delta) P_n =P_n \sA_{0,3}(\Delta)P_n$, linearity and polarization immediately yield (e).
\\  Let us pass to the proof of (c). Since both sides vanish separately if $n=0$, we consider the case $n>0$. First we focus on 
the more elementary case of $\rho' = \sum_{i=1}^N p_i |\Psi_i \rangle \langle \Psi_i|$, for $\Psi_i \in \cH^{(n)}$ with a given $n$.
For this type of state it is sufficient to prove that, if $\Delta \in \cB(\Sigma)$ (actually it would be sufficient to consider only the case $\Delta \in \cB_0(\Sigma)$) and $\Psi,\Psi'\in \cH^{(n)}$, with $n>0$,
 $$ \langle \Psi| \sA_{\epsilon,3}(\Delta)\Psi' \rangle = \int_{\Delta}\sp\left\langle  \frac{1}{\sqrt{H^u +  \epsilon I}}  \Psi \left|    :\spa \hat{T}_{\mu\nu} \spa: [f^2_x] \frac{1}{\sqrt{H^u + \epsilon I}}   \right. \Psi' \right\rangle u^\mu u^\nu_\Sigma d\Sigma(x)$$
 \beq \to \int_{\Delta}\sp\left\langle  \frac{1}{\sqrt{H^u}}  \Psi \left|    :\spa \hat{T}_{\mu\nu} \spa: [f^2_x] \frac{1}{\sqrt{H^u}}   \right. \Psi' \right\rangle u^\mu u^\nu_\Sigma d\Sigma(x) = \langle \Psi| \sA_{0,3}(\Delta)\Psi' \rangle\quad\mbox{ if $\epsilon\to 0^+$. }
\label{WANT}\eeq
Let us prove this. Indeed, choosing a Minkowskian coordinate system with $\Sigma$ defined by $x^0=0$ and looking at the expression of $\langle (H^u+  \epsilon I)^{-1/2}\Psi|  :\spa \hat{T}_{\mu\nu} \spa: [f^2_x]| (H^u+  \epsilon I)^{-1/2} \Psi'\rangle$, arising from
the expansion (\ref{FINT}), where only the last two integrals contribute for $\Psi', \Psi\in \gS_0\cap \cH^{(n)}$ with $n>0$, one easily sees that, for every $\vec{x}\in \Sigma$,
$$\langle (H^u+  \epsilon I)^{-1/2}\Psi|  :\spa \hat{T}_{\mu\nu} \spa: [f^2_x]| (H^u+  \epsilon I)^{-1/2} \Psi'\rangle \to 
\langle (H^u)^{-1/2}\Psi|  :\spa \hat{T}_{\mu\nu} \spa: [f^2_x]| (H^u)^{-1/2} \Psi'\rangle$$
as $\epsilon\to 0$. This is a direct consequence of the dominated convergence theorem, when one expands the scalar products as integrals on $\sV_{m,+}^{n}$ using the fact that 
the function $\Psi$ is of Schwartz type (when viewed as a function of the spatial momenta $(\vec{p}_1,\ldots,\vec{p}_n)$), and that $(E^n_{\epsilon, u})^{-1/2}$ is a bounded function of $(\vec{p}_1,\ldots,\vec{p}_n)$. On the other hand, integration by parts proves that,
for every $N=0,1,\ldots$, there are polynomials $Q_N(z_1,\ldots, z_n)$ of degree $N$ in the variables $z_k\in \bR$, such that
$$|\langle (H^u+  \epsilon I)^{-1/2}\Psi|  :\spa \hat{T}_{\mu\nu} \spa: [f^2_x]| (H^u+  \epsilon I)^{-1/2} \Psi'\rangle| $$ $$\leq \frac{
\int_{\bR^{3n}} |E^n_{\epsilon,u}(P)|^{-1/2} |\Phi(P)| |Q_N(\nabla_{\vec{p}_1})  \Phi'(P) E^n_{\epsilon,u}(P)^{-1/2}| d^{3n}p}{(1+ |\vec{x}|)^N}$$
where $P= (\vec{p}_1,\ldots, \vec{p}_n)$ and the Schwartz functions $\Phi,\Phi'\in \cS(\bR^{3n})$ are obtained from $\Psi$ according to (\ref{UM}).
At this point, it is not difficult to see that, due to the special form of the maps $E^n_{\epsilon,u}(p_1,\ldots,p_n)$ and the fact that the function $\Phi$ is Schwartz, there are constants $C_N$ such that 
$$
\int_{\bR^{3n}} E^n_{\epsilon,u}(P)^{-1/2} |\Phi(P)| |Q_N(\partial_{p_1})  \Phi'(P) E^n_{\epsilon,u}(P)^{-1/2}| d^{3n}p \leq C_N < +\infty \quad \mbox{uniformly in $\epsilon \in [0, a)$.} $$
We are now in a position to apply once more the dominated convergence theorem to the left-hand side of (\ref{WANT}) with respect to the $d\Sigma$ integration, proving that (\ref{WANT}) holds. This completes the proof of (c) for $n>0$ in the elementary case considered. 
Let us conclude the proof of (c). The space of density matrices of type $\rho' = \sum_{i=1}^N p_i|\Psi_i \rangle \langle \Psi_i|$, for $\Psi_i \in \cH^{(n)}$ with a given common $n$, is dense in $\sS(\cH^{(n)})$ in the norm $||\cdot||_1$ since $\gS_0 \cap \cH^{(n)}$
is dense in $\cH^{(n)}$ (we leave to the reader the elementary proof based on the fact that the spectral decomposition of a density matrix is a series of operators 
$p_j |\Psi_j\rangle \langle \Psi_j|$, $p_j\geq 0$, which converges in the norm $||\cdot||_1$ and that these operators can be approximated by the previously considered ones in the same topology). As a consequence, if $\rho \in \sS(\cH^{(n)})$, and $\rho'$ is as above, $|tr(\rho A^u_{f}(\Delta)) - tr(\rho \sA^u_{f,\epsilon}(\Delta))|$
$$= |tr(\rho A^u_{f}(\Delta)) - tr(\rho P_n\sA^u_{f,\epsilon}(\Delta)P_n)| = | tr(\rho A^u_{f}(\Delta)) - tr(\rho \sA_{\epsilon, 3}(\Delta))| $$
$$\leq| tr((\rho -\rho') \sA_{\epsilon, 3}(\Delta))| + |tr(\rho' A^u_{f}(\Delta)) - tr(\rho'\sA_{\epsilon, 3}(\Delta))| +| tr((\rho -\rho') A^u_{f}(\Delta)  )|
$$
$$\leq ||\rho-\rho'||_1 \: ||\sA_{\epsilon, 3}(\Delta)||+  |tr(\rho' A^u_{f}(\Delta)) - tr(\rho'\sA_{\epsilon, 3}(\Delta))| +  ||\rho-\rho'||_1\: ||\sA^u_{f}(\Delta)||\:.$$
For every $\eta>0$ we can take $\rho'$ as above such that $ ||\rho-\rho'||_1 ||\: ||\sA^u_{f}(\Delta)||< \eta/3$. On the other hand, since $\sA_{\epsilon, 3}(\Delta)=
P_n\sA^u_{f,\epsilon}(\Delta)P_n$ and (a1) of Proposition \ref{LLAST} holds, we can redefine $\rho'$ so that also $ ||\rho-\rho'||_1 \: ||\sA_{\epsilon, 3}(\Delta)||< \eta/3$ holds. The previous part of the proof for elementary states of type $\rho'$ proves that, for that special $\rho'$,  $|tr(\rho' A^u_{f}(\Delta)) - tr(\rho'\sA_{\epsilon, 3}(\Delta)| <\eta/3$ if $\eta>0$ is sufficiently small.
We have proved that $|tr(\rho A^u_{f}(\Delta)) - tr(\rho \sA^u_{f,\epsilon}(\Delta))|\to 0$ as $\epsilon\to 0^+$, concluding the proof of (c).
It remains to prove (d).
It is clear that the operators $A^u_f(\Delta)$ are bounded and positive, and that $A^u_f(\Sigma)=I$ from (\ref{igt}) and (\ref{INTTT2}).
We only have to show that $\cB(\Sigma)\ni \Delta \mapsto A^u_f(\Delta)\rest_{\cH^{(0)\perp}}$ is weakly $\sigma$-additive.  From (\ref{igt}) and the fact that 
$ \sA_{0,3}(\Delta)P_n = P_n  \sA_{0,3}(\Delta)$
always taking $\Psi \in \cH^{(0)\perp}\cap \gS_0$, we get
$$\langle \Psi | A^u_f(\Delta) \Psi \rangle =\sum_{n\in \bN}
 \int_{\Delta}\sp  \left\langle   \frac{1}{\sqrt{H^u}} \Psi_n \left|    :\spa \hat{T}_{\mu\nu} \spa: [f^2_x] u^\mu u^\nu_\Sigma\right. \frac{1}{\sqrt{H^u}} \Psi_n \right\rangle   d\Sigma(x)\:.$$
Here, for that given $\Psi$, the right-hand side is evidently $\sigma$-additive (notice that only a finite number of components $\Psi_n$ occur and we can use the dominated convergence theorem for (\ref{BBBO})).  Polarization proves that weak $\sigma$-additivity is valid on $ \cH^{(0)\perp}\cap \gS_0$. To conclude, we extend the proof to a generic $\Psi \in \cH^{(0)\perp}$.
Let $\Delta_1, \Delta_2,\ldots$ be a sequence of mutually disjoint sets of $\cB(\Sigma)$ and define 
$S_N := \sum_{j=1}^N A^u_f(\Delta)$. The operators $S_N :\cH^{(0)\perp} \to \cH^{(0)\perp}$ form an increasing sequence of positive operators, so there exists 
a positive operator $S := \sum_{j=1}^{+\infty} A^u_f(\Delta_j) \in \gB(\cH^{(0)\perp})$, where the sum is understood in the strong sense (see, e.g., \cite{Moretti1}). Therefore, in particular, for $\Psi \in \cH^{(0)\perp}$,
$\langle \Psi| S \Psi\rangle  = \sum_{j=1}^{+\infty} \langle \Psi|A^u_f(\Delta_j)\Psi \rangle$.
In the special case $\Psi \in \cH^{(0)\perp}\cap \gS_0$, $\sigma$-additivity requires
$\langle \Psi| S \Psi\rangle  = \sum_{j=1}^{+\infty} \langle \Psi|A^u_f(\Delta_j)\Psi \rangle = \langle \Psi|A^u_f(\cup_{j=1}^{+\infty}\Delta_j)\Psi \rangle$.
Since $\cH^{(0)\perp}\cap \gS_0$ is dense in $\cH^{(0)\perp}$ and $S, A^u_f(\cup_{j=1}^{+\infty}\Delta_j) \in \gB(\cH^{(0)\perp})$, the identity holds for every $\Psi\in \cH^{(0)\perp}$. Eventually, the 
standard argument based on polarization implies $S= A^u_f(\cup_{j=1}^{+\infty}\Delta_j)$, so that 
$\sum_{j=1}^{+\infty} \langle \Psi|A^u_f(\Delta_j)\Psi'\rangle = \langle \Psi|A^u_f(\cup_{j=1}^{+\infty}\Delta_j)\Psi'\rangle$ if $\Psi, \Psi'\in \cH^{(0)\perp}$, concluding the proof of
weak $\sigma$-additivity.

We now turn to uniqueness. If $A(\Delta)\in \gB(\gF_s(\cH_m))$ is another family of operators satisfying (a)-(d), from (c) we conclude that $A(\Delta)= A^u_f(\Delta)$ for every $\Delta \in \cB_0(\Sigma)$.  On the other hand, (d) requires that $\cB(\Sigma) \ni \Delta \mapsto \langle \Psi |A(\Delta) \Psi \rangle \in [0,1]$ be a positive Borel measure for every $\Psi \in \cH^{(0)\perp}$. Since $\cB_0(\Sigma)$ generates $\cB(\Sigma)$, the uniqueness part of Carath\'eodory's extension theorem implies that the said finite measure 
coincides with $\cB(\Sigma) \ni \Delta \mapsto \langle \Psi |A^u_f(\Delta) \Psi \rangle \in [0,1]$. Therefore $A(\Delta)\sp\rest_{\cH^{(0)\perp}} =A^u_f(\Delta)\sp\rest_{\cH^{(0)\perp}}$ for every $\Delta \in \cB(\Sigma)$. Since $\cH^{(0)}$ is invariant for both operators and on it the operators coincide ((a) and (b)), we conclude that $A(\Delta) = A^u_f(\Delta)$ if $\Delta \in \cB(\Sigma)$.

We conclude the proof by establishing (f) and (g). The positive-energy requirment is trivially valid due to the positivity of the Hamiltonian $H^v$ for $v\in \sT_+$.  Covariance of the constructed families of POVMs easily follows from (c) of Proposition \ref{LLAST} and the fact that 
$U$ leaves the spaces $\cH^{(0)}$ invariant. The only non-trivial property is CC. 
Since $\cH^{(0)\perp}\cap \gS_0$ is dense in $\cH^{(0)\perp}$ and $A_f^u(\Delta)$ is continuous, it is sufficient to prove that the family of non-negative numbers 
$\langle \Psi | A_f^u(\Delta)\Psi \rangle$ satisfies CC for $\Psi \in \cH^{(0)\perp}\cap \gS_0$ whose $N$ non-vanishing components are denoted by $\Psi_n$. 
We know that it holds $\langle \Psi| A_f^u \Psi \rangle = \sum_{n=1}^N \int_{\Delta} \langle \frac{1}{\sqrt{H^u}} \Psi_n | u^\mu :\spa \hat{T}_{\mu\nu}\sp:[f^2_x] \frac{1}{\sqrt{H^u}}\Psi_n \rangle  u^\nu_\Sigma d\Sigma(x)$.
The proof relies entirely on the properties of the smooth current which appears in the integrand above
$\bM \ni x \mapsto J_\Psi^\nu(x) := \sum_{n=1}^N \langle \frac{1}{\sqrt{H^u}} \Psi_n | u^\mu :\spa {\hat{T}_\mu}\:^\nu\sp:[f^2_x] \frac{1}{\sqrt{H^u}}\Psi_n \rangle$.  It is causal and future-directed wherever it does not vanish,
due to Proposition \ref{TEOB1},
and conserved in view of (c) of Proposition \ref{PROPTTR}.  The proof of the validity of CC is then the same as for Theorems 35 and 39 in \cite{M23}.
\end{proof}
\subsection{Center  of energy, Newton-Wigner position observable, Heisenberg inequality}\label{secNW}
We suggest a rather direct physical interpretation of the localization observables we have constructed. Results in \cite{M23,DM24} prove that, concerning single particles and localization observables constructed out of the stress-energy tensor (at a formal level in those references), a nice interplay emerges between the {\em first moment} of the POVM on $\Sigma$ and the {\em Newton-Wigner selfadjoint position operator} for the Minkowskian observer $u_\Sigma$ (see below).  Irrespective of the choice of the time direction $u$, and for the centered smearing functions specified below, the three components of the first moment, viewed as symmetric operators, {\em are} the three selfadjoint operators representing the components of the Newton-Wigner position observable on $\Sigma$ (restricted to $\sS(\cH_m)$). This fact is of physical relevance, in particular because the Newton-Wigner observable reduces to the standard notion of position in non-relativistic quantum mechanics when the energy content of the quantum state is negligible with respect to the particle mass. This type of result is quite general, since it holds \cite{DM24} for a wide class of (positive-energy) relativistic position observables studied  by Castrigiano \cite{C23} for a massive boson.

We recall to the reader that, if $\Sigma$ is a rest space of a Minkowskian reference frame $u_\Sigma\in \sT_+$ and $x^0,x^1,x^2,x^3$ are Minkowskian coordinates adapted to $u_\Sigma$ with $\Sigma$ corresponding to $x^0=0$, the three components of the {\bf Newton-Wigner position observable} on $\Sigma$ for our massive scalar boson \cite{M23} are the unique selfadjoint extensions 
of the three symmetric operators in $\cH_m = L^2(\sV_{m,+}, d\mu_m)$, 
\beq
(N_\Sigma^a\psi)(p) = i\sqrt{E_{u_\Sigma}(p)} \frac{\partial}{\partial p^a} \frac{\psi (p)}{\sqrt{E_{u_\Sigma}(p)}} \quad \mbox{for $\psi \in \cS(\sV_{m,+})$  and $a=1,2,3$,} \eeq
where, as usual, $E_u(p) := -p\cdot u$ for $u\in \sT_+$.

As $N_\Sigma^a$ is selfadjoint, it has a PVM and, since the PVMs of the three operators commute, one can define a joint PVM $\{Q(\Delta)\}_{\Delta \in \cB(\Sigma)}$. It has the structure, for $\psi \in L^2(\sV_{m,+}, d\mu_m)$,
\beq (Q_\Sigma(\Delta)\psi)(p) := \int_{\Delta} \sp d\Sigma(x) \sp \int_{\sV_{m,+}} \spa \sp\sp \sp  d\mu_m(k)  \frac{e^{-i(p-k)\cdot x}}{(2\pi)^3}\sqrt{E_{u_\Sigma}(p)}\sqrt{E_{u_\Sigma}(k)}\psi(k)\:. \eeq
A technical summary of the properties of the NW localization observable for a scalar particle and the problems it raises with causality is given in \cite{M23}. We only stress that, in spite of its appealing properties, the NW localization observable cannot be considered a physically sound notion of spatial localization due to its conflict with elementary causality requirements like CC. However, as said above, it could still coincide with the first moment of several more meaningful {\em unsharp} notions of localization  which satisfy CC (Proposition 61 in \cite{DM24}):
$$\int_{\Sigma} x^a \langle \psi | A^u(dx) \psi \rangle = \langle \psi | N^a_\Sigma \psi\rangle\quad \mbox{for  $\psi \in \cS(\sV_{m,+})$ and $a=1,2,3$.}$$
where $A^u$ is a $u$-parametrized family of POVMs $A^u|_{\cB(\Sigma)}$ for every $\Sigma \in {\cal R}$ according to Definition \ref{REMM0} of relativistic spatial localization observable.

If $\Sigma$ is the rest space, at some time, of a Minkowskian reference frame $u_\Sigma \in \sT_+$, and we use a Minkowskian coordinate system adapted to $u_\Sigma$ such that $\Sigma$ is
described by $x^0=0$, the three operators describing the components $X^a_\Sigma$ of the {\bf center of $u$-energy} on $\Sigma$ of our quantum field (the associated particles) are, for $\Psi \in \gS_0$ with $\Psi = \oplus_{n=0}^{+\infty}\Psi_n$,
\begin{align}
X_{u,\Sigma}^a \Psi  = 0 &\oplus N_{\Sigma}^a \Psi_1(p) \oplus
 \sum_{i=1}^2\frac{1}{2} \left\{ N_{\Sigma, i}^a, \frac{E_{u}(p_i)}{E^2_{u}(p_1,p_2)} \right\}  \Psi_2(p_1,p_2)\nonumber\\
&\oplus \cdots \oplus  \sum_{i=1}^n\frac{1}{2} \left\{ N_{\Sigma, i}^a, \frac{E_{u}(p_i)}{E^n_{u}(p_1, \ldots, p_n)} \right\}  \Psi_n(p_1,\ldots, p_n) \oplus \cdots \label{CE}\end{align}
where $\{A,B\}:= AB+BA$ is the standard {\em anticommutator} and, as before,
$$E_u^n(p_1,\ldots, p_n) := -\sum_{j=1}^n u \cdot p_j\:.$$
In the more implicit form, where $Z_i = I\otimes \cdots \otimes  I\otimes Z \otimes I \otimes \cdots \otimes I$ and $Z$ occupies the $i$-th slot,
\beq
X_{u,\Sigma}^a  = \bigoplus_{n=1}^{+\infty}  \sum_{i=1}^n\frac{1}{2} \left\{ N_{\Sigma, i}^a, \frac{H^{u}_i}{H^u} \right\} \label{CEr}
\eeq
(\ref{CE}) and (\ref{CEr}) are nothing but a {\em quantum version} of a classical expression of this sort\footnote{Notice that the above expression becomes the expression of the coordinates of the standard center of mass in the limit of large mass, when $E^n_u \to n u^0 m$.}
$$X^a= \frac{\sum_{n=0}^{+\infty} E^n_u x_n^a}{\sum_{n=0}^{+\infty} E^n_u }$$
which, however, takes the non-commutativity of the involved operators into account, as well as the structure of the Fock space.

 $X_{u,\Sigma}^a$ is evidently symmetric when defined on $\gS_0$
and presumably it is also
essentially selfadjoint. We only comment that its restriction to the one-particle space $\cH^{(1)} =\cH_m$ is essentially selfadjoint because it coincides there with the restriction of the Newton-Wigner operator, which is essentially selfadjoint on $\cS(\sV_{m,+})$ (see e.g. \cite{M23}). 

We have the following result.\ 

\begin{proposition}
Referring to the notions introduced in Theorem \ref{TEOR}, let $\Sigma$ be any $3$-dimensional rest space, and let $x^0,x^1,x^2,x^3$ be Minkowskian coordinates adapted to $u_\Sigma$, with $\Sigma$ described by $x^0=0$. Assume that $f\in \cD_\bR(\bM)$ satisfies
\beq\label{FM0}
\int_\bM f^2 d^4x =1,\qquad \int_\bM x^\alpha f(x)^2 d^4x=0\quad \alpha=0,1,2,3.
\eeq
Then
\beq 
\int_{\Sigma} x^a \langle \Psi | A^u_f(dx) \Psi \rangle  = \langle \Psi | X^a_{u,\Sigma} \Psi\rangle\quad \mbox{for  $\Psi \in \gS_0$ and $a=1,2,3$.}
\eeq
In particular, for one-particle states
 \beq 
\int_{\Sigma} x^a \langle \psi | A^u_f(dx)\psi \rangle= \langle \psi | N^a_\Sigma \psi\rangle\quad \mbox{for  $\psi \in \cS(\sV_{m,+})$ and $a=1,2,3$,}
\eeq
irrespective of $u\in \sT_+$. A sufficient concrete way to guarantee \eqref{FM0} is to take, in the same adapted coordinates,
\beq\label{FMSUFF}
f(x^0,\vec x)^2=h_0(x^0)^2h(\vec x)^2,
\eeq
with the two factors separately normalized, $h_0$ even and $h(\vec x)=H(|\vec x|)$; more generally, the proof only uses the four vanishing first moments in \eqref{FM0}.
\end{proposition}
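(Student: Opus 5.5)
Since $A^u_f(\Delta)$ commutes with every $P_n$ and vanishes on $\cH^{(0)}$ (Theorem \ref{TEOR}(a),(b)), and $X^a_{u,\Sigma}$ is block diagonal as well, I would reduce to $\Psi=\Psi_n\in\cH^{(n)}\cap\gS_0$ with $n\ge1$; polarization is not needed, because only diagonal expectation values occur. By (\ref{igt}), in adapted coordinates with $\Sigma=\{x^0=0\}$, I would write
\[
\int_\Sigma x^a\langle\Psi|A^u_f(dx)\Psi\rangle=\int_{\bR^3} x^a\,\Big\langle \tfrac{1}{\sqrt{H^u}}\Psi\Big|:\spa\hat T_{\mu 0}\spa:[f^2_{(0,\vec x)}]\,u^\mu\,\tfrac{1}{\sqrt{H^u}}\Psi\Big\rangle d^3x .
\]
This integral converges absolutely by the Schwartz decay in (\ref{BBBO}). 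Set $\Phi:=(H^u)^{-1/2}\Psi$, which is still Schwartz on $\sV^n_{m,+}$ because $E^n_u\ge nm>0$. I would then use the expansion (\ref{FINT}), keeping only the two $a^\dagger a$ terms since $\Phi$ has a definite particle number. Each matrix element becomes
\[
\int \hat{(f^2)}(k-p)\,e^{i(k-p)\cdot x}\,\langle\Phi|a^\dagger_p a_k\Phi\rangle\, t_{\mu0}(p,k)u^\mu\, d\mu_m(p)d\mu_m(k)
\]
(up to constants), symmetrized in $p\leftrightarrow k$. Here $\langle\Phi|a^\dagger_pa_k\Phi\rangle=n\int\overline{\Phi}(p,Q)\Phi(k,Q)\,dQ$.

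The key step is the $\vec x$-integration. I would use $x^a e^{i(\vec k-\vec p)\cdot\vec x}=-i\partial_{k^a}e^{i(\vec k-\vec p)\cdot\vec x}$, integrate by parts in $\vec k$ (boundary terms vanish by the Schwartz property), and then integrate over $\vec x$ to obtain $(2\pi)^3\delta^3(\vec k-\vec p)$. The derivative $\partial_{k^a}$ then falls on three factors. First, it can hit $\hat{(f^2)}(k-p)$, with $k^0=E(k)$ depending on $\vec k$. At coincidence $\vec k=\vec p$ we have $k-p=0$, and $\partial\hat{(f^2)}(0)$ is proportional to the first moments $\int x^\alpha f^2=0$. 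This term therefore vanishes by (\ref{FM0}), and this is exactly where both the temporal and the spatial centering are used. Second, it can hit $\Phi(k,Q)/E(k)$ together with the measure factors. Third, it can hit $t_{\mu0}(p,k)u^\mu$, whose value at $k=p$ is $E(p)\,E_u(p)$ (using $t_{\mu\nu}(p,p)=p_\mu p_\nu$). With $\hat{(f^2)}(0)(2\pi)^2=1$, the zeroth-order structure reproduces the normalization $\langle\Psi|\Psi\rangle$, as in Proposition \ref{TEO54}.

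Collecting terms, I expect the two symmetrized contributions to combine into
\[
n\,\mathrm{Re}\int \overline{\Phi}\;E_u(p)\,\tfrac{1}{\sqrt{E(p)}}\,i\partial_{p^a}\Big(\tfrac{\Phi}{\sqrt{E(p)}}\Big)\ldots ,
\]
that is, $\tfrac12\langle\Phi|\{N^a_{\Sigma,1},H^u_1\}\Phi\rangle$ summed over slots by the symmetry of $\Psi$. The derivative of the $t_{\mu0}$ factor is antisymmetric under $p\leftrightarrow k$ in its imaginary part and should supply exactly the correction that turns $N^a H^u_1$ into the anticommutator. Finally, substituting $\Phi=(H^u)^{-1/2}\Psi$ and using that $H^u$ is multiplication by $E^n_u$ gives $\tfrac12\langle\Psi|\{N^a_{\Sigma,i},H^u_i/H^u\}\Psi\rangle$. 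To justify this I would note that $\partial_{p^a}$ acting on $(E^n_u)^{-1/2}$ produces only real multiplicative terms, which cancel inside the real part. This yields (\ref{CE}). For $n=1$ we have $H^u_1/H^u=I$, so the expression reduces to $N^a_\Sigma$ independently of $u$. For the sufficient condition (\ref{FMSUFF}), the first moments vanish by the evenness of $h_0^2$ and the radial symmetry of $h^2$.

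The main obstacle is the bookkeeping in the derivative step. One must verify that the terms coming from $\partial_k t_{\mu0}$ and from $\partial_k E(k)^{-1}$ in the measure combine exactly into the Newton–Wigner symmetrization $\sqrt{E}\,\partial\,E^{-1/2}$ and the anticommutator, rather than leaving a spurious real multiplicative term. I would first carry out the computation at $n=1$ to check it against \cite{M23,DM24}, and then observe that the spectator variables $Q$ only enter through $E^n_u$.
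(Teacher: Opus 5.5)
Your proposal is correct and follows essentially the paper's route. Both arguments reduce to a fixed particle number, use $x^a e^{i(\vec k-\vec p)\cdot\vec x}=-i\partial_{k^a}e^{i(\vec k-\vec p)\cdot\vec x}$, integrate by parts in $\vec k$ before the $\vec x$-integration, kill the derivative of the $\widehat{f^2}$ factor with the four vanishing first moments (its argument is actually $p-k$, not $k-p$, which is immaterial once \eqref{FM0} holds), and assemble the rest into $\tfrac12\{N^a_{\Sigma,i},H^u_i/H^u\}$.

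The differences are minor. The paper keeps $(E^n_u)^{-1/2}$ in the kernel via $J_n$ rather than absorbing it into $\Phi$. It settles the bookkeeping you flag by noting that the real multiplier $u^\mu t_{\mu0}(p,k)/\sqrt{E^n_u(p,Q)E^n_u(k,Q)E(p)E(k)}$ is symmetric under $p\leftrightarrow k$, so its $k^a$-derivative at $k=p$ is $\tfrac12\partial_{p^a}\bigl(E_u(p)/E^n_u(p,Q)\bigr)$. Your own real-part observation also disposes of the $\partial_k t_{\mu0}$ and $\partial_k E(k)^{-1}$ contributions outright, since each is $i$ times a real multiple of $|\Phi|^2$ and the left-hand side is real.
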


\begin{proof} Take $\Psi \in \cH^{(0)\perp}\cap \gS_0$ so that $\Psi_n=0$ for $n>N_\Psi$, which makes meaningful the infinite sum below. Set $g:=f^2$ and
\[
c_g^\alpha:=\int_\bM x^\alpha g(x)d^4x,\qquad \alpha=0,1,2,3.
\]
By hypothesis, $c_g^\alpha=0$ for every $\alpha$. According to the Hilbert space isomorphism $J_n$ defined in (\ref{UM}), and using $E^n_{\epsilon=0,u}= E^n_u$, we have from (\ref{NEW}) and referring to (\ref{igt2}),
\begin{align*}
\int_{\Sigma} x^a \langle \Psi | A^u_f(dx) \Psi \rangle 
=&-i\sum_{n=1}^{+\infty}\frac{n}{2\pi} \int_\Sigma  \int_{\bR^{3(n-1)}} \int_{\bR^6}\left( \frac{\partial}{\partial k^a} e^{i (\vec{k}-\vec{p})\cdot \vec{x}} \right) \widehat{g}(p-k)\\
&\quad \times \frac{\overline{\Phi(\vec{p}, \vec{Q})} \Phi(\vec{k},\vec{Q})}{\sqrt{{E^n_{u}(p,Q)} E^n_{u}(k,Q)}} u^\mu t_{\mu 0}(p,k)\frac{d^3pd^3k d^{3(n-1)}q}{\sqrt{E(p)E(k)}} d^3 x .
\end{align*}
For fixed $n$ and $Q$, put
\[
K_n(p,k,Q):=\frac{\overline{\Phi(\vec{p}, \vec{Q})} \Phi(\vec{k},\vec{Q})}{\sqrt{{E^n_{u}(p,Q)} E^n_{u}(k,Q)}}\,u^\mu t_{\mu 0}(p,k)\frac{1}{\sqrt{E(p)E(k)}} .
\]
After integrating over $\vec x$ and integrating by parts in $\vec k$, the previous expression becomes
\begin{align*}
\int_{\Sigma} x^a \langle \Psi | A^u_f(dx) \Psi \rangle
=&\sum_{n=1}^{+\infty} i\frac{n}{2\pi}(2\pi)^3\int_{\bR^{3n}}\left. \frac{\partial}{\partial k^a}\left(\widehat{g}(p-k)K_n(p,k,Q)\right)\right|_{k=p}d^3p d^{3(n-1)}q .
\end{align*}
The derivative acts on both factors. The contribution produced by differentiating $\widehat g(p-k)$ is
\beq\label{EXTRATERM53}
\sum_{n=1}^{+\infty} i\frac{n}{2\pi}(2\pi)^3\int_{\bR^{3n}}\left. \frac{\partial}{\partial k^a}\widehat{g}(p-k)\right|_{k=p} |\Phi(\vec p,\vec Q)|^2\frac{E_u(p)}{E_u^n(p,Q)}d^3p d^{3(n-1)}q .
\eeq
With our Fourier convention,
\[
\left. \frac{\partial}{\partial k^a}\widehat{g}(p-k)\right|_{k=p}
=\frac{i}{(2\pi)^2}\left(c_g^a-\frac{p^a}{E(p)}c_g^0\right)=0,
\]
where the last equality follows from \eqref{FM0}. Hence \eqref{EXTRATERM53} vanishes. We are left with the derivative of $K_n$. Since $\widehat g(0)=(2\pi)^{-2}$ by \eqref{FM0}, and since a direct computation gives
\[
K_n(p,p,Q)=|\Phi(\vec p,\vec Q)|^2\frac{E_u(p)}{E_u^n(p,Q)},\qquad
\left.\frac{\partial}{\partial k^a}\frac{u^\mu t_{\mu0}(p,k)}{\sqrt{E_u^n(p,Q)E_u^n(k,Q)}\sqrt{E(p)E(k)}}\right|_{k=p}
=\frac{1}{2}\frac{\partial}{\partial p^a}\frac{E_u(p)}{E_u^n(p,Q)},
\]
we obtain
\begin{align*}
\int_{\Sigma} x^a \langle \Psi | A^u_f(dx) \Psi \rangle
=&\sum_{n=1}^{+\infty}n \int_{\bR^{3n}}\overline{\Phi(\vec{p}, \vec{Q})}\left( i\frac{\partial}{\partial p^a} \Phi(\vec{p},\vec{Q})\right)
\frac{E_u(p)}{E_u^n(p,Q)}d^3p d^{3(n-1)}q\\
&+\sum_{n=1}^{+\infty}n \int_{\bR^{3n}}\overline{\Phi(\vec{p}, \vec{Q})}  \Phi(\vec{p},\vec{Q})\frac{i}{2} 
\frac{\partial}{\partial p^a} \frac{E_u(p)}{E_u^n(p,Q)} d^3p d^{3(n-1)}q.
\end{align*}
The above expression can be rearranged as
\[
\sum_{n=1}^{+\infty}n \int_{\bR^{3n}}\overline{\Phi(\vec{p}, \vec{Q})} \frac{1}{2}\left\{ i\frac{\partial}{\partial p^a}, \frac{E_u(p)}{E_u^n(p,Q)}\right\}  \Phi(\vec{p},\vec{Q})d^3p d^{3(n-1)}q.
\]
Since $ \Phi(\vec{p},\vec{Q})$ is completely symmetric in its $n$ arguments, the sum above can be rewritten, for every $\Psi \in \gS_0\cap \cH^{(0)\perp}$, as
\beq
 \int_{\Sigma} x^a \langle \Psi | A^u_f(dx) \Psi \rangle = \sum_{n=1}^{+\infty} \int_{\bR^{3n}}\overline{\Phi(\vec{p}_1,\ldots, \vec{p}_n )}\sum_{j=1}^n \frac{1}{2}\left\{ i\frac{\partial}{\partial p_j^a}, \frac{E_u(p_j)}{E_u^n(p_1,\ldots, p_n)}\right\} \Phi(\vec{p}_1,\ldots, \vec{p}_n ) d^{3n}p.\nonumber
\eeq
This is the thesis, since the action of $N^a_\Sigma$ on the representation induced by the isomorphism $J_n$ on the Schwartz functions $\Phi$ representing the state vectors $\Psi$ is just $i\frac{\partial}{\partial p^a}$, whereas the multiplicative operators -- like $\frac{E_u(p)}{E_u^n(p,Q)}$ -- which are only functions of the momenta $\vec{p}_i$, are invariant under the unitary map $J_n$.

\end{proof}

The proof also makes explicit what happens if the centering condition in (\ref{FM0}) is not imposed. With $g:=f^2$ and
\[
c_g^\alpha:=\int_\bM x^\alpha g(x)d^4x\,,\qquad \alpha=0,1,2,3,
\]
the contribution displayed in (\ref{EXTRATERM53}) does not vanish and gives the additional term
\begin{align}
&-\sum_{n=1}^{+\infty}\int_{\bR^{3n}}|\Phi_n(\vec p_1,\ldots,\vec p_n)|^2
\sum_{j=1}^n\frac{E_u(p_j)}{E_u^n(p_1,\ldots,p_n)}
\left(c_g^a-\frac{p_j^a}{E(p_j)}c_g^0\right)d^{3n}p .\label{FMCORR}
\end{align}
Thus, in the absence of (\ref{FM0}), the first moment of the POVM is not exactly the center of $u$-energy, but is shifted by the $f$-dependent quantity (\ref{FMCORR}). In particular, in the one-particle sector, for $\psi\in \cS(\sV_{m,+})$ and $\Phi=J_1\psi$, one obtains
\beq\label{FMCORR1}
\int_\Sigma x^a \langle \psi|A_f^u(dx)\psi\rangle
=\langle \psi|N_\Sigma^a\psi\rangle
-c_g^a\|\psi\|^2+c_g^0\int_{\bR^3}|\Phi(\vec p)|^2\frac{p^a}{E(p)}d^3p .
\eeq
For normalized one-particle states, the correction is therefore the spatial displacement $-c_g^a$ plus the time-displacement contribution $c_g^0\langle p^a/E(p)\rangle_\psi$. The hypotheses (\ref{FM0}) are precisely those which remove this dependence on the spacetime center of the smearing function.

Despite its mathematical interest, it is difficult to accept the proposed interpretation of the general $A^u_f$ from a physical perspective for $n\neq 1$. That is because the center of energy of a quantum field {\em does not seem to have the propensity to localize in space}! Or, at least, it is really difficult to imagine measurement experiments in which the field is eventually localized at a point in space. The preceding proposition should therefore be understood as identifying the first moment of the localization POVM with the center of $u$-energy only after the natural choice of the spacetime origin has been tied to the center of the smearing profile $f^2$.

By contrast, {\em states of single particles do seem to have this propensity}. If we restrict the above result to states $\Psi\in \cH^{(1)}=\cH_m$, and if $f^2$ satisfies (\ref{FM0}), we once again find the result of \cite{M23,DM24}: the first moment of the POVM on $\Sigma$ is the restriction to $\sS(\cH_m)$ of the Newton-Wigner selfadjoint operator. The novelty with respect to the quoted result is that this conclusion is independent of the detailed shape of the smearing function $f$ used to construct the POVM $A^u_f$, provided $f^2$ is normalized and centered in the sense of (\ref{FM0}). If $f^2$ is not centered, the same computation gives instead (\ref{FMCORR1}), so that the Newton-Wigner first moment is recovered only after subtracting the explicit displacement determined by the first spacetime moments of $f^2$. In the $n=1$ case, the proposed physical interpretation has some chance of being meaningful, since single particles do have the propensity to localize in space under localization experiments. The relevance of this result is that the considered single-particle localization observable is now obtained from standard local and quasi-local observables of QFT in a rigorous way.
\beq 
 \langle \psi | {A}_{f}^u(\Delta) \psi \rangle = \lim_{\epsilon \to 0^+} \int_{\Delta}\sp
 \left\langle   \frac{1}{\sqrt{H^u_{\epsilon}}}  \psi \left|     :\spa \hat{T}_{\mu\nu} \spa: [f^2_x]\frac{1}{\sqrt{H^u_{\epsilon}}}  \right. \psi \right\rangle  u^\mu u^\nu_\Sigma  d\Sigma(x)\nonumber 
 \eeq
defines a POVM on every $\Sigma$ when $\psi \in \cH^{(1)}= \cH_m$, and the family of these POVMs is a positive-energy  relativistic spatial localization observable for a single particle which satisfies CC.  The centering assumptions affect the identification of the first moment with $N^a_\Sigma$, but not the definition of the POVM nor the validity of CC.

We finally observe that {\em Heisenberg's inequality} has to be modified according to the new notion of localization (Proposition 61 in \cite{DM24}). The improved expression reads, for $a=1,2,3$,
$$\Delta_\psi x^a \Delta_\psi P_a \geq \frac{\hbar}{2} \sqrt{1 + 4 (\Delta_\psi P_a)^2\langle \psi | \sK_a \psi \rangle}\:.$$
Above, $\Delta_\psi x^a$ is the standard deviation of the distribution of the coordinate $x^a$ in the state represented by the normalized vector $\psi \in \cS(\sV_{m,+})$,
$\Delta_\psi P_a$ is the analogous quantity for the $a$-th component of the momentum, and $\sK_a$ is a selfadjoint positive operator which is a spectral function of $P_a$, in principle depending on $u, f, \Sigma$.
This statement applies to the centered situation in which the first moment agrees with the Newton-Wigner operator; if the centering condition is not imposed, the mean position is shifted according to (\ref{FMCORR1}), while the unsharp nature of the probability distribution is still governed by the same POVM.

See \cite{M23,DM24} for technical discussions on this subject. We only stress that, in the large-mass limit, the above inequality becomes the standard Heisenberg inequality.

\subsection{Large-mass/non-relativistic limit: von Neumann unsharp position measurement}
Restricting ourselves to the one-particle case $\psi \in \cH_m$, we consider, roughly speaking, the large-mass  limit. More precisely, we study wave packets for which the relevant values of the momentum $\vec{p}$ satisfy $|\vec{p}| <\sp< m$ in natural units. In fact, this limit admits a double interpretation. Restoring the speed of light, the condition becomes $|\vec{p}| <\sp< mc$, which may be realized either because $c$ is large, corresponding to the genuinely non-relativistic limit, or because the mass is large while the theory remains relativistic. For most of the issues discussed here, however, there is no need to distinguish between these two possibilities.

We consider a one-particle state $\psi \in \cD(\sV_{m,+})$ such that the values of $\vec{p}$ for which $(E(p),\vec{p}) \in supp(\psi)$ are negligible with respect to $m$ and the value of $p^0=E(p)$ is very close to $m$. Within this approximation we replace $E(p)= \sqrt{\vec{p}^2 +m^2}$ with $m$ and $E_u(p)$ with $mu^0$, etc., at each occurrence. The probability of finding the particle in a bounded measurable set $\Delta \subset \Sigma$, where $\Sigma$ coincides with $x^0=0$, is
$$\langle \psi| A_f^u(\Delta) \psi\rangle = \left\langle   \frac{1}{\sqrt{H^u}} \psi \left|  \int_{\Delta}\sp    :\spa \hat{T}_{\mu\nu} \spa: [f^2_x] u^\mu u^\nu_\Sigma \frac{1}{\sqrt{H^u}}  d\Sigma(x) \right. \psi \right\rangle\:.$$
Expanding the right-hand side as in (\ref{OST}) and replacing $E(p)$ and $E_u(p)$ by $m$ and $u^0m$, respectively, and systematically disregarding terms of order $\vec{p}\cdot \vec{k}$, $\vec{k} p_0$, $\vec{p} k_0$ in $t_{\mu\nu}(p,k)$ in comparison with terms of type $m^2$ or $p_0k_0$ which, in turn, are replaced by $m^2$,
a lengthy but elementary computation yields
\beq \langle \psi| A_f^u(\Delta) \psi\rangle \simeq \int_{\Delta} d\Sigma(x)  \int_\Sigma \: g_f(\vec{y}-\vec{x}) |\phi(\vec{y})|^2  d\Sigma(y) = \int_\Sigma   (\chi_\Delta * g_f)(\vec{y}) |\phi(\vec{y})|^2 d\Sigma(y) \label{vNM} \eeq where\footnote{The normalization factor, which includes the mass, in the classical limit is automatically embodied in the definition of the wavefunction in momentum representation. It appears here as a relic of the normalization with respect to $\mu_m(p)$ on the mass shell and stems from the fact that our discussion obviously does not apply to the massless case.}
$$\phi(\vec{x}) :=\frac{1}{\sqrt{m}} \int_{\bR^3} \frac{e^{i\vec{p}\cdot \vec{x}}}{(2\pi)^{3/2}} \psi(\vec{p}) d^3p\:, \quad g_f(\vec{x}) := \int_\bR f(x^0,\vec{x})^2 dx^0\:.$$
Thus $g_f \in \cD_\bR(\Sigma)$, $g_f\geq 0$ and, if $\int_\bM f^2d^4x=1$, then $\int_\Sigma g_f d\Sigma=1$. If the centering assumptions (\ref{FM0}) hold, then in particular $g_f$ has vanishing spatial first moment. In that case the unsharp measurement kernel is centered on the point to be detected. In the non-relativistic approximation the rest space $\Sigma$ can be associated equivalently with $u$ or $u'$.
Expression (\ref{vNM}) is just a {\em von Neumann model of an unsharp (indirect) position measurement} (see e.g. Section 2.3.1 of \cite{B07}) where $g_f$ is related to the wavefunction of the {\em probe} particle used to measure the position of a particle of the quantum field of mass $m$ (and this may suggest an indication toward a more concrete interpretation of the {\em smearing procedure} in QFT). If $g_f$ is not centered, the same formula describes an unsharp measurement whose apparatus response is displaced by the first spatial moment of $g_f$, in agreement with the relativistic correction (\ref{FMCORR1}). \\

\begin{remark}\label{REMPOSm}
{\em We consider, in a given reference frame adapted to $\Sigma$ -- equivalently associated with $u$ or $u'$ in the proper non-relativistic case $c\to +\infty$, as discussed above -- smearing functions of the form $f(x) =  h(\vec{x}) h'(x^0)$, where $h,h'$ are real, smooth, compactly supported functions, which we can always assume to be normalized,
 $\int_\bR h'^2dx^0 =1$, and $\int_\Sigma h(\vec{x})^2 d\Sigma =1$. In this case $g_f =h^2$ by construction. If, in addition, $h'$ is even and $h$ is radial (or, more generally, $h^2$ has vanishing spatial first moment), then $f^2$ satisfies (\ref{FM0}). The radiality of $h$ is only a simple sufficient condition; the actual requirement is the vanishing of the first moments of $f^2$. We can consider three regimes.
\begin{itemize}
\item[(1)]
As expected from results in \cite{M23}, the considered localization probability in the large-mass limit tends to become the standard one predicted by non-relativistic quantum mechanics in the limit where the centered function $g_f(\vec{x})$ tends to $\delta(\vec{x})$.  
\item[(2)] If, conversely, the centered function $g_f$ tends to become a constant function on larger and larger regions (while preserving the value of its integral), we obtain a more and more imprecise notion of localization in the large-mass limit. That is because the convolution $\chi_\Delta * g_f$ with an almost constant function makes indistinguishable the characteristic functions of a pair of distinct sets $\Delta$ and $\Delta'$. 
\item[(3)] Finally, the large-mass/non-relativistic approximation does not depend on the detailed choice of the temporal function $h'$, since only $\int h'^2dx^0$ enters (\ref{vNM}). However, if one also wants the exact first-moment identity of Proposition 5.3, $h'$ must have vanishing first moment, for instance it may be chosen even about $x^0=0$. With that restriction, $h'$ can still be taken to have arbitrarily large support (preserving the constant value of its integral) or to tend to a delta function in time centered at $x^0=0$. The former regime is a convenient setup for locally minimizing the negative-energy gap, as we shall discuss in Section \ref{Nen}. \hfill $\blacksquare$
\end{itemize}}
\end{remark}

\section{Commutativity of conditional localization POVMs of causally separated laboratories}\label{SECFagg}
The families of effects constructed in Theorem \ref{TEOR}, though arising from (quasi)local observables of QFT, do not satisfy the commutativity requirement for causally separated detection regions $\Delta$ and $\Delta'$. This is not a {\em genuine} physical problem, as discussed in \cite{Mor126}. The propensity of a particle to localize at a single position does not permit, for instance, invoking the no-signaling principle, which would in turn imply commutativity of the effects. What we intend to investigate here is whether commutativity can be restored by passing to a notion of {\em conditional probability} for {\em finite-size laboratories}, as discussed in Section 5.5 of \cite{Mor126}.

From the purely mathematical viewpoint, non-commutativity of effects $A^u_f(\Delta), A^{u'}_f(\Delta')$ for $\Delta,\Delta'$ causally separated is a consequence of a pair of features. 

\begin{itemize}
\item[(a)] First of all, non-commutativity arises from the orthogonal projectors $P_n$ -- and we are interested in the special case $n=1$ -- which appear in (\ref{igt2}). They are not local observables in the sense of AHK.
\end{itemize}

\noindent This issue can easily be circumvented by moving the projectors from the operator to the states. In other words, {\em we} choose to deal only with one-particle states\footnote{The issue is not completely solved from a philosophical viewpoint, since, in order to know whether a field state contains only one particle (or a definite number $n$ of particles), one should collect the whole information about the state from an entire rest space, whereas instruments work locally. However, we shall not address this {\em second-order} issue here.} and, in order to compute the detection probability of a single particle, we directly use the {\em operators} $\sA_{f,\epsilon}^u(\Delta)$ defined in (\ref{sA}), eventually taking the limit as $\epsilon\to 0^+$,
instead of the {\em effects} $A_{f}^u(\Delta)$ as in (\ref{igt2}).
Obviously, {\em when dealing with states in $\sS(\cH^{(n)})$ with $n>0$}, this is equivalent to the other way around, as stated in (\ref{LIM}), since the projectors are already embodied in the states $\rho = P_n\rho P_n$. However, it is worth stressing that the operators $\sA_{f,\epsilon}^u(\Delta)$, though in $\gB(\gF_s(\cH_m))$, are not {\em effects}, contrary to 
$A_{f}^u(\Delta)$, since $\sA_{f,\epsilon}^u(\Delta) \not \geq 0$ in general  {((d) Proposition \ref{COS} below)}, even if they are bounded from below  ((e) Proposition \ref{LLAST}).

\begin{itemize}
\item[(b)] The second source of non-commutativity is the appearance of the non-local operators $\frac{1}{\sqrt{H^u_{\epsilon}}}$ in the above expression: they do not commute
with       
$ :\spa \hat{T}_{\mu\nu} \spa: [f^2_x]$. 
\end{itemize}

\noindent In principle, as anticipated, this issue could be addressed by referring to {\em conditional POVMs} localized in {\em laboratories}, as discussed in the introduction. A laboratory $L(\Delta_0) := (\Delta^\pperp_0)^\pperp\subset \bM$ is defined by assigning a suitable bounded region (typically a non-empty open set with compact closure) $\Delta_0$ in a rest frame $\Sigma$, and one is interested in the (conditional) probability of detecting a particle in subsets $\Delta \subset \Delta_0$. 
%
We expect that, though the localization effects of a single laboratory do not commute, the localization effects associated with a pair of (sharply) causally separated laboratories
based on $\Delta_0$ and $\Delta'_0$
do, if these effects refer to conditional probabilities. That is because these operators are supposed to be constructed in terms of proper local operators, localized in neighborhoods of the laboratories. Causal separation depends not only on the regions $\Delta_0, \Delta_0'$ but also on the support of the smearing function $f^2$ used in the definition of the stress-energy tensor operator.
 
According to \cite{Mor126}, the idea is therefore to define a ``conditioned POVM'' in the laboratory based on $\Delta_0$, whose effects are labeled by regions $\Delta \subset \Delta_0$, 
and have a form of this type
\beq\label{OPop}
V\frac{1}{\sqrt{\sA^u_{f, \epsilon}(\Delta_0)}} \sA^u_{f, \epsilon}(\Delta) \frac{1}{\sqrt{\sA^u_{f, \epsilon}(\Delta_0)}}V^\dagger\:.
\eeq
(More precisely, these POVMs have the interpretation of conditional POVMs for states whose probability of finding the system in $\Delta_0$ is close to $1$ according to (\ref{G}) and the discussion below it.)
As discussed in \cite{Mor126}, there is the possibility that 
effects of this sort, referred to different regions $\Delta_0, \Delta'_0$, included in causally separated neighborhoods, commute.
 
The evident problem is that the operators $\sA^u_{f, \epsilon}(\Delta)$ and $\sA^u_{f, \epsilon}(\Delta_0)$ are {\em not} positive! So the above construction seems pointless because it would imply some sort of ``negative probabilities'' (or even worse, a complex notion of probability, since square roots come into play).
We want to address this issue in the next sections by using energies directly instead of probabilities.

We stress that, if an operator of the form \eqref{OPop}, or of a similar kind, is required to be positive and to belong to a local algebra in the AHK sense, then it should be viewed as the restriction to the one-particle space of an operator defined on the full Hilbert space, as in the discussion of (a). It then follows from the Reeh--Schlieder theorem (Proposition \ref{XYZ}) that positivity on the full Hilbert space is incompatible with the requirement that the vacuum expectation value of the operator vanish.
Accordingly, detectors formalized in this manner necessarily exhibit the well-known phenomenon of \emph{dark counts}.
This is a familiar issue in local quantum physics, and it has recently been revisited in a quantitative framework in \cite{FC26}. In the scattering-theory literature, by contrast, a different standpoint is usually adopted: detector operators are assumed to annihilate the vacuum,
$
B\Omega = 0
$,
and are therefore non-local. Instead, one works with \emph{quasi-local} operators, as is done in Haag--Ruelle scattering theory; see in particular \cite{Haag,Araki}.
In the present paper, we instead consider a genuinely local detector constructed from the local energy density of a quantum field. In this setting, the occurrence of dark counts is unavoidable.
\subsection{Bounds on negative energy  in finite laboratories}\label{Nen}
Instead of considering probabilities, let us focus directly on local energies.
We can try to define detection probabilities starting from a local notion of Hamiltonian, and compare the energy content in $\Delta$ of a one-particle state with the energy content in $\Delta_0\supset \Delta$ of that state.
The operators associated with local energy are obviously affected by the problem of negative energy, which in turn would give rise to ``negative probabilities''. In some sense these are the same ``negative probabilities'' measured by the operators $\sA^u_{f, \epsilon}(\Delta_0)$. From now on, if $\Sigma$ is a rest space, we define the subfamily  of  bounded Borel sets \beq \cB_b(\Sigma) := \{\Delta \in \cB(\Sigma) \:|\: \mbox{$\Delta$ is bounded}\}\:.\eeq

\begin{proposition} \label{COS} Take $u\in \sT_+$, $f \in \cD_\bR(\bM)$ and suppose that $\Delta \in \cB_b(\Sigma)$ for a rest space $\Sigma$.
There is an operator $\sH_f^u(\Delta) : \gS_0 \to \gF_s(\cH_m)$, called the {\bf local Hamiltonian} associated with $\Delta$,
that is uniquely defined by 
\beq\label{HD}
\langle \Psi| \sH_f^{u}(\Delta)\Psi' \rangle =  \int_{\Delta} \langle \Psi| :\spa \hat{T}_{\mu\nu} \spa: [f^2_x] \Psi'\rangle u^\mu   u^\nu_\Sigma  d\Sigma(x) \quad \mbox{for every $\Psi,\Psi'\in \gS_0$.}
\eeq
In particular it holds
\beq\label{PART}
\sH_f^{u}(\Delta) =  :\spa \hat{T}_{\mu\nu} \spa: [f_\Delta] u^\mu   u^\nu_\Sigma \quad \mbox{where} \quad f_\Delta(y) := \int_\Delta f^2(y-x) d\Sigma(x)\:.
\eeq
The following further facts are valid.
\begin{itemize}
\item[(a)] $\sH_f^{u}(\Delta)$ is symmetric, essentially selfadjoint, leaves $\gS_0$ invariant, and satisfies the covariance relation
$$U_g \sH_f^{u}(\Delta)U_g^{-1}= \sH_{g_*f}^{gu}(g\Delta)\quad \mbox{for every $g\in IO(1,3)_+$.}$$
\item[(b)] {If $\sH_f^{u}(\Delta)\neq 0$, then $\sH_f^{u}(\Delta)$} is not positive, but it is bounded from below.
\item[(c)] If $\Delta \subset \Sigma$ is an open set with compact closure and the globally hyperbolic region $L(\Delta):= (\Delta^{\pperp})^{\pperp}$ is the associated laboratory, then 
\beq\label{HD2gen}
\langle \Psi| \sH_f^{u}(\Delta)\Psi' \rangle =  \int_{S} \langle \Psi| :\spa \hat{T}_{\mu\nu} \spa: [f^2_x] \Psi'\rangle u^\mu   u^\nu_S(x)  dS(x) \quad \mbox{for every $\Psi,\Psi'\in \gS_0$.}
\eeq
Above $S\subset L(\Delta)$ is any spacelike smooth Cauchy surface of $ L(\Delta)$ and $u_S(x)$ its future-oriented unit normal vector at $x\in S$, $dS(x)$ being the metric-induced measure on $S$.
\item[(d)] Taking (\ref{FONDH}) into account, if $\epsilon>0$, and $\Psi,\Psi' \in \gS_0$
\beq  \langle \Psi| \sA^u_{f,\epsilon}(\Delta)\Psi'\rangle  = \left\langle \Psi \left| {\frac{1}{\sqrt{H^u_{\epsilon}}} \sH_f^{u}(\Delta) \frac{1}{\sqrt{H^u_{\epsilon}}}}\right. \Psi' \right\rangle  \:.\label{IDf}\eeq
{In particular, if  $\sA^u_{f,\epsilon}(\Delta)\neq 0$, then it  is not positive due to item (b) above.}
\end{itemize}
Everything asserted remains valid if one replaces $f^2$ by $f= \sum_{i=1}^N f_i^2$ with $f\in \cD_\bR(\bM)$.
\end{proposition}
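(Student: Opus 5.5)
The plan is to reduce everything to the smeared stress--energy operator $:\spa \hat{T}_{\mu\nu}\spa:[f_\Delta]$ with the single test function $f_\Delta(y)=\int_\Delta f^2(y-x)\,d\Sigma(x)$, and then import the properties already established in Propositions \ref{PROP1}, \ref{XYZ}, \ref{TEOB} and \ref{PROPTTR}.

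\textbf{Step 1: construction and (\ref{PART}).} Since $\Delta$ is bounded and $f^2\in\cD_\bR(\bM)$, the function $f_\Delta$ is smooth and compactly supported, with support contained in $\operatorname{supp}(f^2)+\overline{\Delta}$; it is real and $f_\Delta\geq 0$. By (a) of Proposition \ref{PROPTTR}, for $\Psi,\Psi'\in\gS_0$,
\[
\int_\Delta \langle\Psi|:\spa\hat T_{\mu\nu}\spa:[f^2_x]\Psi'\rangle\, d\Sigma(x)=\int_\Delta\int_\bM f^2(y-x)\langle\Psi|:\spa\hat T_{\mu\nu}\spa:(y)\Psi'\rangle\, d^4y\, d\Sigma(x).
\]
The integrand is bounded and the domain has finite measure, so Fubini applies and the right-hand side equals $\langle\Psi|:\spa\hat T_{\mu\nu}\spa:[f_\Delta]\Psi'\rangle$ by (\ref{TT}). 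This defines $\sH^u_f(\Delta)$ as in (\ref{PART}), and uniqueness follows from the density of $\gS_0$.

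\textbf{Step 2: item (a).} Symmetry, invariance of $\gS_0$ and essential selfadjointness follow from (a), (b), (f) of Proposition \ref{PROP1}, since $f_\Delta$ is real. For covariance, apply (d) of Proposition \ref{PROP1} and observe that $g_*f_\Delta=(g_*f)_{g\Delta}$ via the change of variables $x\mapsto gx$, which maps $d\Sigma$ to $d(g\Sigma)$. The factor $\Lambda^{-1}$ is then absorbed by contracting with $u$ and $u_\Sigma$, which become $gu$ and $u_{g\Sigma}$.

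\textbf{Step 3: item (b).} Lower boundedness comes from (e) of Proposition \ref{PROPTTR}. Integrate (\ref{bff22}) over the bounded set $\Delta$ with $u'=u_\Sigma$, using the Step 1 identity, to obtain
\[
\langle\Psi|\sH^u_f(\Delta)\Psi\rangle\geq -|\Delta|\,u_\mu u_{\Sigma\nu} b_f^{\mu\nu}\|\Psi\|^2 .
\]
Non-positivity follows from Proposition \ref{XYZ}. The vacuum expectation value vanishes, because normal ordering kills every term in (\ref{DEC}) for $\Psi=\Psi'=\Omega$. Moreover $\sH^u_f(\Delta)\in\cT({\cal O})$ for a bounded open ${\cal O}\supset\operatorname{supp} f_\Delta$. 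Hence positivity would force $\sH^u_f(\Delta)=0$.

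\textbf{Step 4: item (c), the main obstacle.} The independence of the Cauchy surface $S$ of $L(\Delta)$ is where the real work lies. I would use the smooth conserved current $J^\nu(x)=u^\mu\langle\Psi|:\spa\hat T_{\mu}{}^{\nu}\spa:[f^2_x]\Psi'\rangle$, which is conserved by (c) of Proposition \ref{PROPTTR}. Apply the divergence theorem on the compact region of $L(\Delta)$ bounded by $\Delta$ and $S$. Both surfaces are Cauchy surfaces of the same domain of dependence, so they share the same boundary edge $\partial\Delta$. The region between them is therefore a compact set whose boundary consists only of $\Delta\cup S$, and no lateral flux appears. The delicate points are the following.
\begin{itemize}
\item[(i)] Showing that $S$ and $\Delta$ have the same closure edge, so that the region is compact with piecewise-smooth boundary.
\item[(ii)] Handling the regularity at the edge. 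I would first treat surfaces coinciding with $\Delta$ near $\partial\Delta$, then pass to a limit by dominated convergence, since $J$ is bounded.
\end{itemize}
In case these difficulties prove unmanageable, a fallback is to note that $J$ is a classical conserved vector field on $\bM$, so Gauss's theorem applies directly on the compact piece.

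\textbf{Step 5: item (d).} By (\ref{FONDH}), $(H^u_\epsilon)^{-1/2}\Psi\in\gS_0$. Then (\ref{FOND1}) combined with (\ref{HD}) gives (\ref{IDf}) immediately. For non-positivity, $(H^u_\epsilon)^{-1/2}$ is a bijection of $\gS_0$. Choose $\Phi$ with $\langle\Phi|\sH^u_f(\Delta)\Phi\rangle<0$, which exists by (b), and set $\Psi=(H^u_\epsilon)^{1/2}\Phi$. The case of $\sum_i f_i^2$ is handled identically, using the final statements of Propositions \ref{TEOB} and \ref{LLAST}.
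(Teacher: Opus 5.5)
Your proof follows the paper's own argument. You define $\sH^u_f(\Delta)$ through $f_\Delta$ via Fubini and (\ref{TT}), get item (a) from Proposition \ref{PROP1}, non-positivity from Proposition \ref{XYZ}, item (c) from the divergence theorem for the conserved current, and item (d) from the invariance of $\gS_0$ under $(H^u_\epsilon)^{-1/2}$. On two points you are more complete than the paper, which leaves them implicit: the lower bound in (b), obtained by integrating (e) of Proposition \ref{PROPTTR} over the bounded set $\Delta$, and the explicit negative vector $\Psi=(H^u_\epsilon)^{1/2}\Phi$ in (d).

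One correction is needed in Step 2. The change of variables $x\mapsto gx$ gives $g_*f_\Delta=(\Lambda_*f)_{g\Delta}$, where $\Lambda_*f:=(\Lambda,0)_*f$, and not $(g_*f)_{g\Delta}$. The translation part of $g=(\Lambda,a)$ is already carried by $g\Delta$, so with the literal definition $(g_*f)(x)=f(g^{-1}x)$ it gets counted twice. For instance, for $g=(I,a)$ one has $U_g\sH^u_f(\Delta)U_g^{-1}=\sH^u_f(\Delta+a)$, whereas $\sH^u_{g_*f}(\Delta+a)=\sH^u_f(\Delta+2a)$. The covariance relation therefore holds only with $f$ transformed by the Lorentz part alone. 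This imprecision is already in the statement and in the paper's one-line justification, so it is not a flaw in your strategy, but the identity you invoke should be stated in this corrected form.
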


\begin{proof}  See Appendix \ref{AAA}.
\end{proof}
Failure of positivity of $\sH_f^u(\Delta)$ arises from (\ref{PART}) together with  Proposition \ref{XYZ}.
Nevertheless, taking advantage of some crucial results in \cite{F12}, we are about to prove that, referring to a Minkowskian coordinate system adapted to $u$, 
we can make the ``negative-energy gap'' as small as desired, in every reference frame $u'$, generally different from $u$, by choosing the support of the temporal part $h'=h'(x^0)$ of the smearing function $f(x)= h(\vec{x}) h'(x^0)$ suitably large (the coordinates being adapted to $u$). It is worth stressing that this can be done while leaving unchanged the spatial part $h(\vec{x})$ of the smearing function. This function is responsible for the precision of the spatial position measurement according to Remark \ref{REMPOSm} in the non-relativistic  limit. 
In practice, we integrate in space a corrected version of the energy density by adding a term to the normally ordered stress-energy tensor operator depending on a small parameter $\eta>0$. Since we deal with spatial regions of finite extent, the amount of added energy remains finite and can be made arbitrarily small by suitably enlarging the support of the temporal smearing function $h'$.

Henceforth $\sH_{f,\eta}^{u}(\Delta)$
is the unique operator such that, for every $\Psi,\Psi'\in \gS_0$,
 \beq\label{HD2}
\langle \Psi| \sH_{f,\eta}^{u}(\Delta)\Psi' \rangle =  \int_{\Delta}\int_{\bM} \left\langle \Psi\left| \left(:\spa \hat{T}_{\mu\nu} \spa: (y) - \eta g_{\mu\nu}(y) I\right) \right.\Psi'\right\rangle f(y-x)^2 u^\mu   u^\nu_\Sigma d^4y  d\Sigma(x) \:.\quad \eeq
for every $u\in \sV_+$, every bounded measurable $\Delta\subset \Sigma$, and every rest space $\Sigma$ adapted to the reference frame $u_\Sigma$,
$f\in \cD_\bR(\bM)$, and $\eta\geq 0$. It immediately follows -- recalling that $u\cdot u_\Sigma <0$! -- that we can equivalently define
\beq  {\sH}^{u}_{f,\eta}(\Delta)  := \sH_f^{u}(\Delta) + \eta|u\cdot u_\Sigma|  |\Delta| I \:. \label{DEFhhh}\eeq
In particular $ {\sH}^{u}_{f,\eta}(\Delta) $ is symmetric and essentially selfadjoint since $ \eta|u\cdot u'_\Sigma|  |\Delta|\in \bR$ and $ \sH_f^{u}(\Delta) $ is symmetric and essentially selfadjoint.

Notice that the improved stress-energy tensor operator induced by the formal quadratic-form density $:\spa \hat{T}_{\mu\nu} \spa: (y)_\eta  := :\spa \hat{T}_{\mu\nu} \spa: (y) - \eta g_{\mu\nu} I$ is automatically conserved in the usual distributional sense $ :\spa \hat{T}_{\mu\nu} \spa: [\partial^\mu f]_\eta= 0$, since the covariant derivative is metric-compatible (the metric is even constant in Minkowskian coordinates!).
Finally, (a), (b), and (c) of Prop. \ref{COS} remain valid for $\sH_{f,\eta}^{u}(\Delta)$, with the latter obtained by replacing the stress-energy tensor by the improved one, and the covariance relation in (a) reads
 $$U_g \sH_{f,\eta}^{u}(\Delta)U_g^{-1}= \sH_{g_*f, \eta}^{gu}(g\Delta)\quad \mbox{for every $g\in IO(1,3)_+$.}$$
Furthermore, as anticipated, (b) can be substantially improved as follows. \\

\begin{theorem}\label{TEOFD}  Take $u \in \sT_+$, a Minkowskian coordinate system $x^0,x^1,x^2,x^3$ adapted to $u$, and a non-vanishing function $h \in \cD_\bR(\bR^3)$ with  $\int h(\vec{x})^2 d^3x=1$. Finally choose an arbitrarily small $\eta>0$.\\
There exists $h'\in \cD_\bR(\bR)$ with $\int h'(x^0)^2 dx^0=1$ such that, defining $f:= h'h$, we have
\beq\label{ASDED}
  \left\langle \Psi\left| \left(:\spa \hat{T}_{\mu\nu} \spa: [f^2_x] - \eta g_{\mu\nu}I\right) \right.\Psi\right\rangle u^\mu   u'^\nu  \geq c^u_{\eta,f}  |u\cdot u'| ||\Psi||^2 \quad \mbox{if $u'\in \sT_+$, $x\in \bM$, $\Psi \in \gS_0$}
  \eeq
for some finite constant $c^u_{\eta,f}>0$ independent of $u'$. As a consequence
\begin{itemize}
\item[(a)] if $\Psi \in \gS_0$, the smooth conserved current $J_\nu(x) :=-\left\langle \Psi\left| \left(:\spa \hat{T}_{\mu\nu} \spa: [f^2_x] - \eta g_{\mu\nu} I\right) \right.\Psi\right\rangle  u^\mu$ is causal and future-directed wherever it does not vanish;
    \item[(b)] the symmetric and essentially selfadjoint operators ${\sH}^{u}_{f,\eta}(\Delta)$ are positive and monotone:
  \beq \label{NT2}
 {\sH}^{u}_{f,\eta}(\Delta)  \geq  {\sH}^{u}_{f,\eta}(\Delta') \geq 0 \quad \mbox{if $\Delta \supset \Delta'$ are in  $\cB_b(\Sigma)$, for every  rest space $\Sigma$;}
\eeq
    \item[(c)] the operators ${\sH}^{u}_{f,\eta}(\Delta)$ are strictly positive:
  \beq\label{NT}
  {\sH}^{u}_{f,\eta}(\Delta) \geq c^u_{\eta,f}  |u\cdot u_\Sigma|  |\Delta| I  \geq 0  \quad \mbox{if $\Delta\in \cB_b(\Sigma)$, for every  rest space $\Sigma$.}
  \eeq
\end{itemize}
\end{theorem}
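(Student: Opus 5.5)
The plan is to read \eqref{ASDED} off the quantum energy inequality of Proposition \ref{TEOB}, in its translated form (e) of Proposition \ref{PROPTTR}, and then to make the $f$-dependent lower bound $b^{\mu\nu}_f$ small by stretching the temporal profile. I fix $h$ and a normalized $h'_1\in\cD_\bR(\bR)$ (even, if the centering \eqref{FM0} is also wanted), and set
\[
h'_\lambda(x^0):=\lambda^{-1/2}h'_1(x^0/\lambda), \qquad f_\lambda:=h'_\lambda h,
\]
so that $\int h_\lambda'^2dx^0=1$ for every $\lambda>0$. For $u'\in\sT_+$, $\Psi\in\gS_0$ and $x\in\bM$, \eqref{bff22} together with $-\eta g_{\mu\nu}u^\mu u'^\nu=\eta|u\cdot u'|$ gives
\[
\left\langle\Psi\left|\left(:\spa\hat T_{\mu\nu}\spa:[f^2_{\lambda,x}]-\eta g_{\mu\nu}I\right)\right.\Psi\right\rangle u^\mu u'^\nu\;\geq\;\left(\eta|u\cdot u'|-u_\mu u'_\nu b^{\mu\nu}_{f_\lambda}\right)\|\Psi\|^2 .
\]
It therefore suffices to find $\lambda$ with $|u_\mu u'_\nu b^{\mu\nu}_{f_\lambda}|\leq \frac{\eta}{2}|u\cdot u'|$ for all $u'\in\sT_+$. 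Then \eqref{ASDED} holds with $c^u_{\eta,f}=\eta/2$.

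Uniformity in $u'$ is elementary. In the coordinates adapted to $u$ (which is also the $\kappa$ I would use to define $b_f$), write $u'=(\gamma,\gamma\vec v)$ with $|\vec v|<1$, so that $|u\cdot u'|=\gamma$. Then
\[
|u_\mu u'_\nu b^{\mu\nu}_{f_\lambda}|\leq\gamma\sum_{\nu=0}^3|b^{0\nu}_{f_\lambda}| ,
\]
and everything reduces to showing $b^{0\nu}_{f_\lambda}\to0$ as $\lambda\to+\infty$ for $\nu=0,1,2,3$.

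This limit is the main obstacle. I would compute $b^{0\nu}_{f_\lambda}$ explicitly from the definition in Proposition \ref{TEOB}. The vacuum two-point function $\Lambda_\Omega$ has Fourier support on the mass shell, in the positive-frequency part $k^0=E(k)\geq m$, and $D^{0\nu}$ only contributes polynomial factors in the momenta. Because $f_\lambda\otimes f_\lambda$ factorizes in time and space, the expression should take the schematic form
\[
b^{0\nu}_{f_\lambda}=\int_{p_0>0}dp_0\int d^3p\,d\mu_m(k)\;|\widehat{h'_\lambda}(p_0+E(k))|^2\,|\hat h(\vec p-\vec k)|^2\,P^{\nu}(p,k),
\]
with $P^\nu$ a polynomial. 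Here $\widehat{h'_\lambda}(\omega)=\lambda^{1/2}\widehat{h'_1}(\lambda\omega)$, and its argument satisfies $p_0+E(k)\geq m>0$. The rapid decay of $\widehat{h'_1}$ then gives a bound of order $O(\lambda^{-N})$ for every $N$, the spatial factor $\hat h$ ensuring convergence of the momentum integrals uniformly in $\lambda$. If this direct route turns out to be messy, I would instead quote the spatially averaged massive QEI estimates in \cite{F12,FS08}, where the bound is known to vanish as the temporal sampling scale grows. Note that the mass gap $m>0$ is essential at this step.

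The consequences then follow quickly. For (a), \eqref{ASDED} with $c^u_{\eta,f}>0$ gives $J_\nu u'^\nu\leq0$ for all $u'\in\sT_+$, so $J$ is causal and future-directed wherever it is non-zero by the argument of Proposition \ref{PE}; conservation comes from (c) of Proposition \ref{PROPTTR}, since the $\eta g$ term is constant. For (b) and (c), I take $u'=u_\Sigma$ in \eqref{ASDED} and integrate over $\Delta$ using \eqref{HD2}. The integrand is pointwise bounded below by $c^u_{\eta,f}|u\cdot u_\Sigma|\,\|\Psi\|^2$, which yields \eqref{NT} on $\gS_0$. Monotonicity \eqref{NT2} follows because ${\sH}^u_{f,\eta}(\Delta)-{\sH}^u_{f,\eta}(\Delta')={\sH}^u_{f,\eta}(\Delta\setminus\Delta')$, which has a non-negative integrand.
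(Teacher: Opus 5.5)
Your proof is correct, but it reaches \eqref{ASDED} by a different route from the paper.

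\textbf{The paper's route.} The paper does not use $b^{\mu\nu}_f$ in this proof. It imports from \cite{F12} a worldline inequality at fixed $\vec x$ (Lemma \ref{LEMMALEMME}). Its right-hand side is already the closed-form quantity $-|u\cdot u'|\frac{1}{16\pi^3}\int_m^{+\infty}|\widehat{h'}(s)|^2s^4Q_3(s)\,ds$, with the $\vec k\cdot\vec u'$ term vanishing by angular symmetry. The paper integrates this against $h(\vec x)^2$, uses translation invariance, and then stretches $h'$ exactly as you do; your $\lambda\to+\infty$ is its $\delta\to0^+$.

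\textbf{Your route.} You stay inside the paper's own spacetime bound \eqref{bff22}. You get uniformity in $u'$ from the cruder estimate $|u_\mu u'_\nu b^{\mu\nu}|\le\gamma\sum_\nu|b^{0\nu}|$, and you estimate the Fewster--Smith integral directly. Carried out, $D^{0\nu}$ produces only polynomial factors in the on-shell $k$. Hence the $\vec p$-integral of $|\hat h(\vec p\pm\vec k)|^2$ is a $\vec k$-independent constant fixed by $\int h^2=1$, and $b^{\mu\nu}_{h'h}$ reduces essentially to the same worldline integral. The two routes therefore agree in substance.

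\textbf{What each buys.} Yours is more self-contained, since no additional QEI is imported, and it gives explicit $O(\lambda^{-N})$ decay. The paper's gives a closed-form gap, and makes manifest that this gap is independent of the spatial profile $h$ and exactly proportional to $|u\cdot u'|$.

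\textbf{A caution on your direct route.} The decay hinges on the temporal argument of $\widehat{h'_\lambda}$ being $p_0+E(k)\ge m$, which is what \cite{FS08} yields. If you instead evaluate the transform displayed before \eqref{FSresmod} literally (kernel $e^{-ip\cdot x}$ with the Euclidean pairing, evaluated at $(-p,p)$ with $p_0>0$), the positive-frequency vacuum two-point function produces $|\hat f(p_0-E(k),\cdot)|^2$. The $p_0$- and $\vec k$-integrals then diverge. So fix the sign convention against \cite{FS08} before computing.

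Your derivations of (a)--(c) are precisely the steps the paper leaves as immediate.
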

\begin{proof} 
The proof of (\ref{ASDED}) relies on the following lemma, which in turn follows from a direct application of general results presented in \cite{F12}.\\
\begin{lemma}\label{LEMMALEMME}
With the main hypotheses, if $h'\in \cD_\bR(\bR)$ and $\Psi \in \gS_0$ with $||\Psi||=1$,
\beq\int_\bR \langle \Psi | :\spa \hat{T}_{\mu\nu}\spa:(x^0,\vec{x}) \Psi \rangle h'(x^0)^2 u^\mu u'^\nu dx^0 \geq -\sqrt{1+ \vec{u}'^2} \frac{1}{16 \pi^3} \int_{m}^{+\infty} 
|\hat{h'}(s)|^2 s^4 Q_3(s) ds
\eeq
where $u^\nu = \delta_0^{\nu}$ and $Q_3 : [1,+\infty) \to [0,1)$ is the smooth, strictly increasing, bounded function 
$$Q_3(x) := \left(1 - \frac{1}{x^2}\right)^{1/2} \left(1 - \frac{1}{2x^2}\right) -\frac{1}{2x^4}\ln\left(x+ \sqrt{x^2-1}\right)$$
which satisfies $Q_3(1)=0$ and $Q_3(x) \to 1$ as $x\to +\infty$.
\end{lemma}
\begin{proof} See Appendix \ref{AAA}.
\end{proof}

\noindent  In the following $f:= h\cdot h'$. 
We therefore have
$$\langle \Psi | :\spa \hat{T}_{\mu\nu}\spa: [f^2]\Psi \rangle u^\mu u'^\nu = \int_{\bR^4} \langle \Psi | :\spa \hat{T}_{\mu\nu}\spa:(x^0,\vec{x}) \Psi \rangle f(x^0,\vec{x})^2 u^\mu u'^\nu dx^0d^3x$$
$$= \int_{\bR^4} \langle \Psi | :\spa \hat{T}_{\mu\nu}\spa:(x^0,\vec{x}) \Psi \rangle h'(x^0)^2h(\vec{x})^2 u^\mu u'^\nu dx^0d^3x \geq -|u\cdot u'| \frac{1}{16 \pi^3} \int_{m}^{+\infty} 
|\hat{h'}(s)|^2 s^4 Q_3(s) ds\:.
$$
As a consequence, since the inequality above is valid for every normalized $\Psi \in \gS_0$ and this space is invariant under the unitary action of the Poincar\'e group, we have in particular that  
$$\langle \Psi | :\spa \hat{T}_{\mu\nu}\spa: [f_x^2]\Psi \rangle u^\mu u'^\nu =  \langle V^{-1}_x\Psi | :\spa \hat{T}_{\mu\nu}\spa: [f^2] V^{-1}_x\Psi \rangle u^\mu u'^\nu 
 \geq -|u\cdot u'| \frac{1}{16 \pi^3} \int_{m}^{+\infty} 
|\hat{h'}(s)|^2 s^4 Q_3(s) ds$$
where $V_x= U_{(I,x)}$ is the unitary representation of spacetime translations (referring to some choice of the origin).
This inequality permits us to prove (\ref{ASDED}). It is sufficient to show that, for every given $\eta>0$, there exists a corresponding  $h'$ such that
\beq \eta > \frac{1}{16 \pi^3} \int_{m}^{+\infty} 
|\hat{h'}(s)|^2 s^4 Q_3(s) ds \label{LIMn}\:.\eeq
If this is true, the required constant $c^u_{\eta,f}$ can be defined as $c^u_{\eta,f} :=( \eta-\eta')$, where $\eta'<\eta$ is positive and sufficiently close to $\eta$ so that (\ref{LIMn}) is still valid with $\eta'$ in place of $\eta$. 
To prove (\ref{LIMn}),
consider a family of smooth compactly supported real functions $h'_\delta(x^0)=\sqrt{\delta} \chi(\delta x^0)$ where $\delta>0$ and $\chi \in \cD_\bR(\bR)$ is such that $\int \chi^2 dx^0 =1$. $\widehat{h'_\delta}$ is Schwartz and $\widehat{h'_\delta}(s)=\frac{1}{\sqrt{\delta}} \widehat{\chi}(s/\delta)$. 
The dominated convergence theorem proves that $\int_{m}^{+\infty} 
|\widehat{h'_\delta}(s)|^2 s^4 Q_3(s) ds=$
$$\delta^{-1}\int_{m}^{+\infty} 
|\widehat{\chi}(s/\delta)|^2 s^4 Q_3(s) ds = \int_{\bR}\chi_{[m/\delta, +\infty)}(y)  \delta^4
|\widehat{\chi}(y)|^2  y^4 Q_3(\delta y) dy\to 0 \quad \mbox{as $\delta \to 0^+$.}$$
Hence (\ref{LIMn}) is valid for a given $\eta>0$ provided one uses $h'=h'_\delta$ with a sufficiently small $\delta>0$.
At this point (a), (b), and (c) follow immediately.
\end{proof}

\subsection{Conditional localization POVMs in laboratories} A suitable candidate for a conditional localization POVM in a laboratory $L(\Delta_0)$, for $\Delta_0 \subset \cB_b(\Sigma)$ with $|\Delta_0|>0$, is expected to be
\beq
B_{\Delta_0}(\Delta) := \frac{1}{
\sqrt{\sH^u_{f,\eta}(\Delta_0)}
} 
\sH^u_{f,\eta}(\Delta)
 \frac{1}{\sqrt{\sH^u_{f,\eta}(\Delta_0)}}\quad \mbox{for ${\cal R} \ni\Delta \subset \Delta_0$} \quad\quad\quad\mbox{(attempt)}
\eeq
where we have chosen $f$ so that, for the given $\eta>0$, (\ref{NT}) is true with $\Delta=\Delta_0$.\\
%
To interpret the square root we need the functional calculus. Since the argument of the square root is an essentially selfadjoint operator, the relevant selfadjoint operator is simply its closure $\overline{{\sH}}^u_{f,\eta}(\Delta_0):= \overline{{\sH}^u_{f,\eta}(\Delta)}$.  Positivity is inherited by the closure. \\

\begin{theorem} \label{TUBB} Take $u\in \sT_+$, an arbitrarily small $\eta>0$, and $f\in \cD_\bR(\bM)$ such that (\ref{NT}) is true. The following facts are valid for every rest space $\Sigma$.
\begin{itemize}
\item[(a)]
If $\Delta_0 \in \cB_b(\Sigma)$ with $|\Delta_0|>0$, then there exists a unique family of operators 
$B_{\Delta_0}(\Delta)^u_{f,\eta} \in \gB(\gF_s(\cH_m))$ for $\Delta\in \cB(\Delta_0)$ such that
\beq\label{BECCA}
 B_{\Delta_0}(\Delta)^u_{f,\eta}  \Psi =\frac{1}{\sqrt{\overline{{\sH}}^u_{f,\eta}(\Delta_0)}} {\sH}^u_{f,\eta}(\Delta)   \frac{1}{\sqrt{\overline{{\sH}}^u_{f,\eta}(\Delta_0)}}\Psi \quad \mbox{for}\quad \Psi \in \gS_{00}
\eeq
where $\gS_{00} := {\sqrt{\overline{{\sH}}^u_{f,\eta}(\Delta_0)}}(\gS_0)$ is dense and $ \frac{1}{\sqrt{\overline{{\sH}}^u_{f,\eta}(\Delta_0)}}\in \gB(\gF_s(\cH_m))$.
\item[(b)] The map $\cB(\Delta_0) \ni \Delta \mapsto B_{\Delta_0}(\Delta)^u_{f,\eta}$ is a normalized POVM absolutely continuous with respect to the Lebesgue measure on $\Sigma$.
\end{itemize}
\end{theorem}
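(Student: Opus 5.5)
Write $K:=\overline{\sH}^u_{f,\eta}(\Delta_0)$, which is selfadjoint because $\sH^u_{f,\eta}(\Delta_0)$ is essentially selfadjoint. By (\ref{NT}) and the closure of the form, $K\geq c\,I$ with $c:=c^u_{\eta,f}|u\cdot u_\Sigma||\Delta_0|>0$. Functional calculus then gives $K^{-1/2}\in\gB(\gF_s(\cH_m))$ with $\|K^{-1/2}\|\leq c^{-1/2}$, and $K^{1/2}$ is a bijection of $D(K^{1/2})$ onto $\gF_s(\cH_m)$.

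\emph{Density and well-posedness.} To see that $\gS_{00}=K^{1/2}(\gS_0)$ is dense, note that $\gS_0$ is a core for $K$. It is therefore a core for $K^{1/2}$ as well, since the graph norm of $K$ dominates that of $K^{1/2}$ when $K\geq c$. Given $\Phi\perp K^{1/2}\gS_0$, set $\chi:=K^{-1/2}\Phi\in D(K^{1/2})$. Then $\langle K^{1/2}\chi|K^{1/2}\Psi\rangle=0$ for all $\Psi\in\gS_0$, and by the core property this holds for all $\Psi\in D(K^{1/2})$. Taking $\Psi=\chi$ gives $\chi=0$, hence $\Phi=0$. For $\Psi\in\gS_{00}$ we have $K^{-1/2}\Psi\in\gS_0$, so the right-hand side of (\ref{BECCA}) is defined because $\sH^u_{f,\eta}(\Delta)$ leaves $\gS_0$ invariant.

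\emph{Boundedness.} By the monotonicity (\ref{NT2}), for $\Phi\in\gS_0$,
\[
0\leq\langle\Phi|\sH^u_{f,\eta}(\Delta)\Phi\rangle\leq\langle\Phi|\sH^u_{f,\eta}(\Delta_0)\Phi\rangle=\|K^{1/2}\Phi\|^2 .
\]
With $\Phi=K^{-1/2}\Psi$ and $\Psi\in\gS_{00}$, this reads $0\leq\langle\Psi|B\Psi\rangle\leq\|\Psi\|^2$. So the operator $B$ in (\ref{BECCA}) is symmetric and positive on $\gS_{00}$, with quadratic form bounded by $1$. The Cauchy--Schwarz inequality for the positive form, $|\langle\Psi'|B\Psi\rangle|^2\leq\langle\Psi'|B\Psi'\rangle\langle\Psi|B\Psi\rangle$, then gives $\|B\Psi\|\leq\|\Psi\|$. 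Hence $B$ extends uniquely by continuity to an operator $B_{\Delta_0}(\Delta)^u_{f,\eta}\in\gB(\gF_s(\cH_m))$ with $0\leq B\leq I$, and uniqueness follows from density. Normalization comes from the case $\Delta=\Delta_0$: for $\Phi\in\gS_0$ the right-hand side of (\ref{BECCA}) is $K^{-1/2}KK^{1/2}\Phi\ldots$, more precisely $K^{-1/2}\sH(\Delta_0)\Phi=K^{1/2}\Phi=\Psi$, so $B_{\Delta_0}(\Delta_0)=I$. This proves (a).

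\emph{POVM property, part (b).} For $\Psi\in\gS_{00}$ with $\Phi=K^{-1/2}\Psi\in\gS_0$,
\[
\langle\Psi|B(\Delta)\Psi\rangle=\int_\Delta\langle\Phi|(:\spa\hat T_{\mu\nu}\spa:[f_x^2]-\eta g_{\mu\nu}I)\Phi\rangle u^\mu u^\nu_\Sigma\,d\Sigma(x).
\]
The integrand is a nonnegative function of $x$ by (\ref{ASDED}) and is integrable on $\Delta_0$ (indeed smooth and bounded, by Proposition~\ref{PROPTTR}). Therefore $\Delta\mapsto\langle\Psi|B(\Delta)\Psi\rangle$ is a finite measure, $\sigma$-additive by dominated convergence and absolutely continuous with respect to $d\Sigma$. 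To pass from the dense set $\gS_{00}$ to all vectors, I would repeat the argument used for (d) of Theorem~\ref{TEOR}. For disjoint $\Delta_j$, the partial sums $\sum_{j\leq N}B(\Delta_j)$ increase and are bounded by $I$, so they converge strongly to some bounded operator $S$. On the dense set $\gS_{00}$, $S$ agrees with $B(\cup_j\Delta_j)$ in the quadratic form, hence everywhere by boundedness and polarization.

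Absolute continuity also extends to all vectors. If $|\Delta|=0$, then $B(\Delta)$ vanishes on the dense set $\gS_{00}$ and is bounded, so $B(\Delta)=0$, and $\langle\Psi|B(\Delta)\Psi\rangle=0$ for every $\Psi$.

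The main delicate step is the core argument that makes $\gS_{00}$ dense and identifies $K^{-1/2}\sH(\Delta_0)K^{-1/2}$ with $I$ on $\gS_{00}$. Everything else is routine form estimates based on (\ref{NT2}) and (\ref{ASDED}).
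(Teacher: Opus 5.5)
Your proof is correct. Part (a) follows the same line as the paper, which splits it into two lemmas:
\begin{itemize}
\item Lemma \ref{F2}: density of $K^{1/2}(\gS_0)$, because $\gS_0$ is a form core for the Friedrichs extension $K$, and boundedness of $K^{-1/2}$;
\item Lemma \ref{LEMMAZd}: the bound $0\leq\langle z|K^{-1/2}\sH^u_{f,\eta}(\Delta)K^{-1/2}z\rangle\leq\|z\|^2$ on $\gS_{00}$, followed by continuous extension.
\end{itemize}
Your core argument for $K^{1/2}$ is the same idea stated directly. It tacitly uses that $D(K)$ is a core for $K^{1/2}$, and the graph-norm domination already holds for $K\geq 0$. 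You also correctly use the operator inequality only on $\gS_0$. The paper's Lemma \ref{LEMMAZd}(a) instead uses a Friedrichs form-completion argument to extend $\overline{A^0}\geq\overline{A}$ to the whole form domain, which is more than (a) needs.

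The genuine difference is in part (b). The paper gets $\sigma$-additivity and absolute continuity from the Vitali--Hahn--Saks theorem, as in Proposition 5.5 of \cite{Mor126}: it passes to the limit in the scalar measures $\langle\Psi_n|B(\cdot)\Psi_n\rangle$ with $\Psi_n\in\gS_{00}$. You work at the operator level instead. You use monotone strong convergence of the partial sums, bounded by $B(\Delta_0)=I$, together with density and polarization, which is exactly the paper's own argument for (d) of Theorem \ref{TEOR}. You then get absolute continuity from $B(\Delta)=0$ whenever $|\Delta|=0$.

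Your route is more elementary and self-contained. The paper's abstract lemmas buy a reusable statement for any pair $A^0\geq A\geq 0$ of essentially selfadjoint operators on a common domain. Through Lemma \ref{F2}(a), this includes the case without strict positivity, where $K^{1/2}(\gS_0)$ is only dense in $Ker(K)^\perp$.
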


\begin{remark} {\em Before proving the theorem, we observe that if $\Delta_0$ is sufficiently regular that $L(\Delta_0)$
is a globally hyperbolic spacetime in its own right with $\Delta_0$ as a smooth spacelike Cauchy surface, then 
the POVM $B_{\Delta_0}(\Delta)^u_{f,\eta}$ can be extended to a weaker version of relativistic spatial localization observable on the whole spacetime $L(\Delta_0)$.
This can be done with the same technology developed in \cite{DM24}. That mathematical technology, starting from (a) of Theorem \ref{TEOFD},
should also prove that the constructed relativistic spatial localization observable satisfies
a natural generalization of the causal condition CC denoted by GCC therein. If $\Delta_0$ is only Borel, a similar generalization should be possible in the corresponding element of the {\em causal logic} according to \cite{CDRM}, proving therein the validity of an even more general causal condition for achronal sets.
These issues will be investigated elsewhere. \hfill $\blacksquare$}
\end{remark}

\begin{proof}
(a) We need some preliminary lemmas. If $A$ is a symmetric essentially selfadjoint operator as above, its unique selfadjoint extension is its closure $\overline{A}$. The proof of these lemmata uses known properties of the {\em Friedrichs selfadjoint extension} of $A$ and its quadratic form. Notice that the Friedrichs extension exists because $A\geq 0$. By uniqueness, it coincides with $\overline{A}$.\\

\begin{lemma}\label{F2}
If $A :D(A) \to \cH$ is a symmetric essentially selfadjoint operator with $A\geq 0$, then 
\begin{itemize}
\item[(a)] $\overline{\sqrt{\overline{A}}(D(A))} = Ker(\sqrt{\overline{A}})^\perp= Ker(\overline{A})^\perp$;
\item[(b)] if $A \geq cI$ with $c>0$, then $\overline{A} \geq cI$.
In this case: $\overline{\sqrt{\overline{A}}(D(A))}= Ran(\sqrt{\overline{A}}) = \cH$ and $(\overline{A})^{-\alpha}\in \gB(\cH)$ for $\alpha>0$.
\end{itemize}
\end{lemma}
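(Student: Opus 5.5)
The plan is to work entirely with the spectral theorem for the positive selfadjoint operator $\overline{A}$, which is the unique selfadjoint extension of $A$ by essential selfadjointness. Positivity first passes to the closure: for $\psi\in D(\overline{A})$ there is a sequence $\psi_n\in D(A)$ with $\psi_n\to\psi$ and $A\psi_n\to\overline{A}\psi$. Hence $\langle\psi|\overline{A}\psi\rangle=\lim_n\langle\psi_n|A\psi_n\rangle\geq 0$, and $\overline{A}$ has spectral measure $P$ supported in $[0,+\infty)$. The same approximation argument applied to $\langle\psi_n|(A-cI)\psi_n\rangle\geq 0$ gives $\overline{A}\geq cI$ on $D(\overline{A})$ under the hypothesis of (b). This is the first claim of (b), and it forces $\mathrm{supp}\,P\subset[c,+\infty)$.

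For (a), I would first note that $Ker(\sqrt{\overline{A}})=Ker(\overline{A})=P(\{0\})\cH$. Both kernels equal the spectral subspace of $\{0\}$, because $\lambda\mapsto\sqrt{\lambda}$ and $\lambda\mapsto\lambda$ vanish on $[0,+\infty)$ only at $0$. Since $\sqrt{\overline{A}}$ is selfadjoint, $\overline{Ran(\sqrt{\overline{A}})}=Ker(\sqrt{\overline{A}})^\perp$. It therefore remains to show that $\sqrt{\overline{A}}(D(A))$ is dense in $Ran(\sqrt{\overline{A}})$; the inclusion of $\sqrt{\overline{A}}(D(A))$ in that space is trivial.

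This density is the main step. The key fact is that $D(A)$ is a core for $\overline{A}$ and hence a form core for $\sqrt{\overline{A}}$. Concretely, for $\phi\in D(\overline{A})$ pick $\psi_n\in D(A)$ with $\psi_n\to\phi$ and $A\psi_n\to\overline{A}\phi$. Then
$$\|\sqrt{\overline{A}}(\psi_n-\phi)\|^2=\langle\psi_n-\phi|\overline{A}(\psi_n-\phi)\rangle\to 0,$$
so $\sqrt{\overline{A}}(D(A))$ is dense in $\sqrt{\overline{A}}(D(\overline{A}))$. Next, $D(\overline{A})$ is a core for $\sqrt{\overline{A}}$: for $\phi\in D(\sqrt{\overline{A}})$, the truncations $\phi_N=P([0,N])\phi$ lie in $D(\overline{A})$, and $\sqrt{\overline{A}}\phi_N\to\sqrt{\overline{A}}\phi$ by dominated convergence in the spectral integral. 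Combining the two approximations gives density of $\sqrt{\overline{A}}(D(A))$ in $Ran(\sqrt{\overline{A}})$, which proves (a).

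For the remaining parts of (b), the functional calculus with the bounded Borel function $\lambda\mapsto\lambda^{-\alpha}$ on $[c,+\infty)$, which is bounded by $c^{-\alpha}$, gives $(\overline{A})^{-\alpha}\in\gB(\cH)$. Its composition with $\overline{A}^{\alpha}$ is the identity, which shows that $\overline{A}^{\alpha}$ is surjective; with $\alpha=1/2$ this yields $Ran(\sqrt{\overline{A}})=\cH$. Moreover $Ker(\overline{A})=\{0\}$ since $\overline{A}\geq cI$, so (a) gives $\overline{\sqrt{\overline{A}}(D(A))}=\cH=Ran(\sqrt{\overline{A}})$.
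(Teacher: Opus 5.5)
Your proof is correct and follows essentially the same route as the paper. In (a) you reduce to $\overline{Ran(\sqrt{\overline{A}})}=Ker(\sqrt{\overline{A}})^\perp$ and use that $D(A)$ is a form core for $\sqrt{\overline{A}}$; in (b) you pass $A\geq cI$ to the closure and apply spectral calculus. The only differences are these: the paper obtains the form-core property by identifying $\overline{A}$ with the Friedrichs extension, while you verify it directly (operator core plus spectral truncation), and you get surjectivity of $\sqrt{\overline{A}}$ from the bounded function $\lambda\mapsto\lambda^{-1/2}$ instead of from $0\in\rho(\overline{A}^{\alpha})$ together with closedness of the range.
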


\begin{proof} See Appendix \ref{AAA}.
\end{proof}

\begin{lemma}\label{LEMMAZd}
Consider a pair of symmetric essentially selfadjoint operators $A^0 \geq A\geq 0$ with $D(A)=D(A^0)$.
The following facts hold.
\begin{itemize}
\item[(a)] It holds
$$\left\langle \sqrt{\overline{A^0}} x \left| \sqrt{\overline{A^0}} \right. x \right\rangle \geq \langle\sqrt{\overline{A}} x | \sqrt{\overline{A}}  x \rangle  \quad \mbox{ for $x \in D(\sqrt{\overline{A_0}})$}$$
so that,\\
 $$\langle x |\overline{A^0}  x \rangle \geq \langle x |\overline{A} x \rangle \geq 0 \quad \mbox{for $x \in D(\sqrt{\overline{A^0}})$}$$
and these inequalitiees are in particular valid for $x\in D(A^0) \subset D(\sqrt{\overline{A^0}})$;
\item[(b)] if $Ker (\sqrt{\overline{A^0}}) (= Ker (\overline{A^0}))= \{0\}$, then
 $$||z||^2 \geq  \left\langle \frac{1}{\sqrt{\overline{A^0}}} z \left| A  \frac{1}{\sqrt{\overline{A^0}}}z\right.\right\rangle= \left\langle \frac{1}{\sqrt{\overline{A^0}}} z \left| \overline{A}  \frac{1}{\sqrt{\overline{A^0}}}z\right.\right\rangle \geq 0\quad \mbox{for $z \in \sqrt{\overline{A^0}}(D(A^0))$;} $$ 
\item[(c)] if $A^0 \geq cI$ with $c>0$ then $Ker(\overline{A^0}) = \{0\}$, $\sqrt{\overline{A^0}}(D(A^0))$ is dense, and $1/\sqrt{\overline{A^0}} \in \gB(\cH)$,
so that the quadratic form above uniquely defines an effect in $\cH$ by continuous extension. Therefore this operator is the unique continuous extension of 
$$\frac{1}{\sqrt{\overline{A^0}}} A  \frac{1}{\sqrt{\overline{A^0}}} : \sqrt{\overline{A^0}}(D(A^0)) \to \cH\:.$$
\end{itemize}
\end{lemma}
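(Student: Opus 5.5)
We prove (a), (b), (c) of Lemma \ref{LEMMAZd} in order, working with the closed quadratic forms of the selfadjoint operators $\overline{A^0}$ and $\overline{A}$. These exist as positive selfadjoint operators by essential selfadjointness and positivity; positivity passes to closures by Lemma \ref{F2}.

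For (a), start on $D(A)=D(A^0)$, where $\|\sqrt{\overline{A^0}}x\|^2=\langle x|A^0x\rangle\geq\langle x|Ax\rangle=\|\sqrt{\overline{A}}x\|^2$ by spectral calculus. Then extend to $D(\sqrt{\overline{A^0}})$. Since $D(A^0)$ is a core for $\overline{A^0}$, it is also a core for $\sqrt{\overline{A^0}}$, because the form domain of a positive selfadjoint operator is the closure of any operator core in the form norm. So for $x\in D(\sqrt{\overline{A^0}})$ choose $x_n\in D(A^0)$ with $x_n\to x$ and $\sqrt{\overline{A^0}}x_n\to\sqrt{\overline{A^0}}x$. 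The inequality on $D(A)$ then shows that $\sqrt{\overline{A}}x_n$ is Cauchy. Since $\sqrt{\overline{A}}$ is closed, $x\in D(\sqrt{\overline{A}})$ and $\sqrt{\overline{A}}x_n\to\sqrt{\overline{A}}x$, so the inequality holds in the limit. The statement written with $\langle x|\overline{A^0}x\rangle$ should be read as the form value $\|\sqrt{\overline{A^0}}x\|^2$; for $x\in D(A^0)$ it is literally the expectation value.

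For (b), take $z=\sqrt{\overline{A^0}}x$ with $x\in D(A^0)$. Because the kernel of $\sqrt{\overline{A^0}}$ is trivial, $(\overline{A^0})^{-1/2}z=x$ is well defined as the inverse on the range. Hence $\langle x|Ax\rangle=\langle x|\overline{A}x\rangle\leq\|\sqrt{\overline{A^0}}x\|^2=\|z\|^2$ by (a), and the left side is $\geq0$ since $A\geq0$.

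For (c), $A^0\geq cI$ passes to the closure by Lemma \ref{F2}(b). This gives a trivial kernel, a bounded inverse square root, and $\overline{\sqrt{\overline{A^0}}(D(A^0))}=\mathcal H$ by Lemma \ref{F2}(b) applied to $A^0$. The sesquilinear form $(z,z')\mapsto\langle(\overline{A^0})^{-1/2}z|A(\overline{A^0})^{-1/2}z'\rangle$ on this dense subspace is hermitian and, by (b) and polarization, bounded by $\|z\|\|z'\|$. It therefore extends uniquely to a bounded form, whose representing operator $B$ satisfies $0\leq B\leq I$, i.e.\ $B$ is an effect. The operator $(\overline{A^0})^{-1/2}A(\overline{A^0})^{-1/2}$ is well defined on $\sqrt{\overline{A^0}}(D(A^0))$ and has exactly these matrix elements there, so $B$ is its unique continuous extension by density.

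The main obstacle is the core argument in (a): passing from $D(A^0)$ to the full form domain and showing that the domain of $\sqrt{\overline{A}}$ contains that of $\sqrt{\overline{A^0}}$. This is handled by the standard fact that an operator core of a positive selfadjoint operator is a form core, combined with closedness of $\sqrt{\overline{A}}$.
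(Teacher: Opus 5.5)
Your proof is correct and follows essentially the same route as the paper: both approximate $x\in D(\sqrt{\overline{A^0}})$ in the form norm by vectors of $D(A^0)=D(A)$, transfer the inequality valid on $D(A)$, and pass to the limit, with (b) and (c) then read off from (a) and Lemma \ref{F2}. The only difference is that the paper works with the shifted forms of $A^0+cI$ and $A+cI$, identifies their completions with the Friedrichs (equivalently, closure) form domains, and removes $c$ at the end via the spectral measures; you instead use directly that an operator core is a form core together with the closedness of $\sqrt{\overline{A}}$, which is slightly more economical.
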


\begin{proof} See Appendix \ref{AAA}.
\end{proof}
\noindent Returning to the proof of (a), the claim follows immediately from these two lemmas by using $A^0 = \sH^u_{f,\eta}(\Delta_0)$
and $A =  \sH^u_{f,\eta}(\Delta)$, both defined on the common dense domain $\gS_0$, since $\sH^u_{f,\eta}(\Delta_0) \geq \sH^u_{f,\eta}(\Delta)\geq 0$ and 
$\sH^u_{f,\eta}(\Delta_0) \geq c I$ with $c>0$ in view of Theorem \ref{TEOFD}.\\
Concerning (b), we observe that the operators $B_{\Delta_0}(\Delta)^u_{f,\eta}$ are effects, as follows from (b) of Lemma \ref{LEMMAZd}. Furthermore, obviously, 
$B_{\Delta_0}(\Delta_0)^u_{f,\eta}=I$ by construction, since the identity is the unique continuous extension of this quadratic form. The properties of $\sigma$-additivity and absolute continuity are proved by taking advantage of the Vitali-Hahn-Saks theorem, as in the proof of (a) of Proposition 5.5 in \cite{Mor126}.
\end{proof}

\begin{corollary}\label{CORC}
Under the hypotheses of Theorem \ref{TUBB}, $\frac{1}{\sqrt{\overline{{\sH}}^u_{f,\eta}(\Delta_0)}} {\sH}^u_{f,\eta}(\Delta)   \frac{1}{\sqrt{\overline{{\sH}}^u_{f,\eta}(\Delta_0)}}$ is essentially selfadjoint on its natural domain $\gS_{00} = {\sqrt{\overline{{\sH}}^u_{f,\eta}(\Delta_0)}}(\gS_0)$, which therefore is a core:
\beq\label{BECCA2}
 B_{\Delta_0}(\Delta)^u_{f,\eta} =\overline{\frac{1}{\sqrt{\overline{{\sH}}^u_{f,\eta}(\Delta_0)}} {\sH}^u_{f,\eta}(\Delta)   \frac{1}{\sqrt{\overline{{\sH}}^u_{f,\eta}(\Delta_0)}}}\:.
\eeq
The same identity is valid if ${\sH}^u_{f,\eta}(\Delta) $ is replaced above  by any other symmetric  extension of ${\sH}^u_{f,\eta}(\Delta)$.
\end{corollary}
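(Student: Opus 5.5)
Write $C:=\frac{1}{\sqrt{\overline{{\sH}}^u_{f,\eta}(\Delta_0)}}\in\gB(\gF_s(\cH_m))$ and let $S$ be any symmetric extension of ${\sH}^u_{f,\eta}(\Delta)$, for instance ${\sH}^u_{f,\eta}(\Delta)$ itself. Consider the operator $K:=CSC$ defined on $\gS_{00}=C^{-1}(\gS_0)$; this is well defined because $C\gS_{00}=\gS_0\subset D(S)$. The plan is to show that $K$ is a densely defined, bounded, symmetric operator whose continuous extension is $B_{\Delta_0}(\Delta)^u_{f,\eta}$. For such an operator, essential selfadjointness and the identity $\overline{K}=B_{\Delta_0}(\Delta)^u_{f,\eta}$ follow at once.

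The first step is density of $\gS_{00}$. This is already given by (a) of Theorem \ref{TUBB}, through (c) of Lemma \ref{LEMMAZd} and (b) of Lemma \ref{F2}, with $A^0={\sH}^u_{f,\eta}(\Delta_0)\ge cI$ by (\ref{NT}).

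The second step is symmetry. For $\Psi,\Psi'\in\gS_{00}$, selfadjointness of the bounded operator $C$ and symmetry of $S$ on $C\gS_{00}=\gS_0$ give
$$\langle\Psi|CSC\Psi'\rangle=\langle C\Psi|SC\Psi'\rangle=\langle SC\Psi|C\Psi'\rangle=\langle CSC\Psi|\Psi'\rangle.$$

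The third step is boundedness together with the identification. Since $S$ extends ${\sH}^u_{f,\eta}(\Delta)$ and $C\Psi\in\gS_0$, we have $K\Psi={\sH}$-version of the same vector, namely $K\Psi=C{\sH}^u_{f,\eta}(\Delta)C\Psi$. By (\ref{BECCA}) this equals $B_{\Delta_0}(\Delta)^u_{f,\eta}\Psi$. Hence $K\subset B_{\Delta_0}(\Delta)^u_{f,\eta}$, which is a bounded selfadjoint operator, indeed an effect.

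It then follows that $\overline{K}$ is bounded and defined on $\overline{\gS_{00}}=\gF_s(\cH_m)$, and by continuity $\overline{K}=B_{\Delta_0}(\Delta)^u_{f,\eta}$. Being everywhere defined, bounded and symmetric, $\overline{K}$ is selfadjoint, so $K$ is essentially selfadjoint and $\gS_{00}$ is a core. This proves (\ref{BECCA2}), and the same argument, applied with a general $S$, gives the final statement.

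Nothing here is genuinely obstructive. The main point to check is that the operator product $CSC$ is taken on its natural domain. Note that $D(CSC)=\{\Psi \mid C\Psi\in D(S)\}$ may be strictly larger than $\gS_{00}$ when $S$ is a proper extension. If "natural domain" is read as this larger set, then $CSC$ is still symmetric by the computation above, applied now on $C^{-1}(D(S))$. It lies between $K$ and $\overline{K}$, so its closure is again $B_{\Delta_0}(\Delta)^u_{f,\eta}$.
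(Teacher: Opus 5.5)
Your proposal is correct and takes essentially the same route as the paper. Both arguments rest on the same point: the composition on $\gS_{00}$ is a densely defined, bounded, symmetric operator whose closure is the everywhere-defined selfadjoint operator $B_{\Delta_0}(\Delta)^u_{f,\eta}$. A general symmetric extension on its larger natural domain is then handled by the maximal symmetry of selfadjoint operators. The one step you leave implicit, namely $CSC\subset\overline{K}$, is exactly that maximality argument: $K\subset CSC$ with $CSC$ symmetric gives $CSC\subset (CSC)^\dagger\subset K^\dagger=\overline{K}$.
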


\begin{proof} Observe that the dense subspace $\gS_{00} := {\sqrt{\overline{{\sH}}^u_{f,\eta}(\Delta_0)}}(\gS_0)$ is just the domain of $\frac{1}{\sqrt{\overline{{\sH}}^u_{f,\eta}(\Delta_0)}} {\sH}^u_{f,\eta}(\Delta)   \frac{1}{\sqrt{\overline{{\sH}}^u_{f,\eta}(\Delta_0)}}$, since $\frac{1}{\sqrt{\overline{{\sH}}^u_{f,\eta}(\Delta_0)}}$ is bijective. The unique everywhere-defined continuous extension of the symmetric operator $\frac{1}{\sqrt{\overline{{\sH}}^u_{f,\eta}(\Delta_0)}} {\sH}^u_{f,\eta}(\Delta)   \frac{1}{\sqrt{\overline{{\sH}}^u_{f,\eta}(\Delta_0)}}$ is its closure, which is selfadjoint. Hence this symmetric operator is essentially selfadjoint. In particular, its domain is a core. If we replace ${\sH}^u_{f,\eta}(\Delta)$ by some symmetric extension $E$ of it, we obtain a symmetric extension $\frac{1}{\sqrt{\overline{{\sH}}^u_{f,\eta}(\Delta_0)}}E  \frac{1}{\sqrt{\overline{{\sH}}^u_{f,\eta}(\Delta_0)}}\supset \frac{1}{\sqrt{\overline{{\sH}}^u_{f,\eta}(\Delta_0)}} {\sH}^u_{f,\eta}(\Delta)   \frac{1}{\sqrt{\overline{{\sH}}^u_{f,\eta}(\Delta_0)}}$. Since the right-hand side is essentially selfadjoint, its closure is selfadjoint and thus maximally symmetric. As a consequence, the closure of the left-hand side, which is symmetric as well, must coincide with the closure of the right-hand side. This completes the proof. 
\end{proof}


To conclude this investigation we examine the interplay between  the POVM $B_{\Delta_0}(\Delta)^u_{f,\eta}$ and the global POVM $A_f^u(\Delta)$ on the given $\Sigma\supset \Delta_0$.
We prove that, in fact, the map $\cB(\Delta_0)\ni \Delta \mapsto B_{\Delta_0}(\Delta)^u_{f,\eta}$ can be interpreted as a conditional POVM constructed out of a non-normalized POVM which is an approximation of $A^u_f$ with the desired precision, in the sense clarified below, and also using suitable unitary operators in accordance with  (\ref{BD222}).

First of all, we construct positive bounded operators out of $\sA^u_{f,\epsilon}(\Delta)$ (which arbitrarily approximate the effects $A_f^u(\Delta)$ as stated in (\ref{LIM}) but are not positive), exploiting the improved local Hamiltonians $\sH^u_{f,\eta}$. In other words we consider the unique everywhere-defined bounded extension of the operator 
$$\frac{1}{\sqrt{H^u_\epsilon}} \sH^u_{f,\eta}(\Delta)\frac{1}{\sqrt{H^u_\epsilon}} :\gS_0\to \gS_0\:.$$
which turns out to have the explicit form \beq\label{LaLa}
\sA^u_{f, \epsilon,\eta}(\Delta) :=   \sA^u_{f,\epsilon}(\Delta) + \eta |u\cdot u_\Sigma| |\Delta| \frac{1}{H^u_\epsilon} \in \gB(\gF_s(\cH_m))
\eeq
for $\epsilon>0$ and the remaining parameters fixed as before.
By construction, for every given arbitrarily small $\eta>0$ we can tune the (temporal part of the) function $f$ in such a way that $\sA^u_{f, \epsilon,\eta}(\Delta) \geq 0$ for all $\Delta \in \cB_b(\Sigma)$ and every $\epsilon>0$. As asserted, these positive operators, when traced against finite-particle states, approximate the effects $A^u_{f}(\Delta)$ with the desired precision in a given laboratory based on $\Delta_0 \in \cB_b(\Sigma)$. \\

\begin{proposition} \label{PF1} Take $u \in \sT_+$, $\Delta_0 \in \cB_b(\Sigma)$, an arbitrarily small $\eta>0$, and $f\in \cD_\bR(\bM)$ such that (\ref{NT}) is true.  The $\rho, \Delta$-uniform bound holds
\beq |tr(\rho  A^u_{f}(\Delta))  - \lim_{\epsilon\to 0^+} tr(\rho \sA^u_{f, \epsilon,\eta}(\Delta)) |  \leq 
 \eta \frac{|u\cdot u_\Sigma | |\Delta_0|}{m}\label{STIMa}\eeq
for every $\rho \in \sS(\cH^{(n)})$ with $n>0$ and every $\Delta \in  \cB(\Delta_0)$.
 \end{proposition}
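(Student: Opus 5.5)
By (\ref{LaLa}), for every $\epsilon>0$,
$$tr(\rho \sA^u_{f,\epsilon,\eta}(\Delta)) = tr(\rho \sA^u_{f,\epsilon}(\Delta)) + \eta|u\cdot u_\Sigma||\Delta|\, tr\Big(\rho \frac{1}{H^u_\epsilon}\Big).$$
The plan is to take $\epsilon\to 0^+$ term by term and bound the second term. For $\rho\in\sS(\cH^{(n)})$ with $n>0$, item (c) of Theorem \ref{TEOR}, i.e. (\ref{LIM}), says the first term tends to $tr(\rho A^u_f(\Delta))$. Here $\Delta\subset\Delta_0$ is bounded, so $\Delta\in\cB_0(\Sigma)$ and (\ref{LIM}) applies. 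The difference in (\ref{STIMa}) is therefore $\eta|u\cdot u_\Sigma||\Delta|\lim_{\epsilon\to0^+}tr(\rho (H^u_\epsilon)^{-1})$, once we show this limit exists and is at most $1/m$. Since $|\Delta|\le|\Delta_0|$, the claimed bound, uniform in $\rho$ and $\Delta$, then follows.

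The key spectral estimate is that $H^u\restriction_{\cH^{(n)}}\ge nm\,I\ge mI$ for $n\ge1$. By Lemma \ref{lemmaH}(c), $H^u$ acts on $\cH^{(n)}$ as multiplication by $E^n_u(p_1,\dots,p_n)=-\sum_k u\cdot p_k$. Each $p_k\in\sV_{m,+}$ and $u\in\sT_+$, so the reverse Cauchy--Schwarz inequality gives $-u\cdot p_k\ge \sqrt{-p_k\cdot p_k}=m$. Hence $(H^u+\epsilon I)^{-1}\restriction_{\cH^{(n)}}$ is multiplication by $(E^n_u+\epsilon)^{-1}\le 1/(nm)$. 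These operators are positive, bounded by $1/m$ uniformly in $\epsilon\ge 0$, and increase as $\epsilon\downarrow0$, converging in norm to $(H^u)^{-1}\restriction_{\cH^{(n)}}$. Indeed $\sup|(E+\epsilon)^{-1}-E^{-1}|\le\epsilon/(nm)^2$. Because $\rho=P_n\rho P_n$, we get $0\le tr(\rho(H^u_\epsilon)^{-1})\to tr(\rho (H^u)^{-1})\le \|(H^u)^{-1}\restriction_{\cH^{(n)}}\|\,tr\rho\le 1/(nm)\le 1/m$.

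Putting these together, both limits exist and
$$\Big|tr(\rho A^u_f(\Delta)) - \lim_{\epsilon\to0^+}tr(\rho\sA^u_{f,\epsilon,\eta}(\Delta))\Big| = \eta|u\cdot u_\Sigma||\Delta|\,tr(\rho(H^u)^{-1})\le \eta\frac{|u\cdot u_\Sigma||\Delta_0|}{m}.$$
This is in fact the sharper bound $\eta|u\cdot u_\Sigma||\Delta_0|/(nm)$. The main point needing care is the identity (\ref{LaLa}) itself. It holds as a bounded-operator identity because (\ref{DEFhhh}) gives $\sH^u_{f,\eta}(\Delta)=\sH^u_f(\Delta)+\eta|u\cdot u_\Sigma||\Delta|I$ on $\gS_0$. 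Proposition \ref{COS}(d) identifies the matrix elements of $(H^u_\epsilon)^{-1/2}\sH^u_f(\Delta)(H^u_\epsilon)^{-1/2}$ on $\gS_0$ with those of $\sA^u_{f,\epsilon}(\Delta)$, and $(H^u_\epsilon)^{-1/2}I(H^u_\epsilon)^{-1/2}=(H^u_\epsilon)^{-1}$. Density of $\gS_0$ and boundedness of both sides then give (\ref{LaLa}).

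The hypothesis (\ref{NT}) on $f$ is not used for the estimate. It only guarantees positivity of the approximating operators.
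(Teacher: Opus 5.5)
Your proof is correct and takes the same route as the paper. You use (\ref{LaLa}) to isolate the term $\eta|u\cdot u_\Sigma||\Delta|\,tr(\rho (H^u_\epsilon)^{-1})$, bound it by $\eta|u\cdot u_\Sigma||\Delta|/m$, and conclude with (\ref{LIM}) and $|\Delta|\leq|\Delta_0|$. You also make explicit what the paper leaves tacit: $H^u\geq nm\,I$ on $\cH^{(n)}$, and $tr(\rho(H^u_\epsilon)^{-1})$ converges, so the limit in (\ref{STIMa}) exists; this gives the sharper constant $1/(nm)$.
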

 
 \begin{proof}
It holds
$| tr(\rho {\sA}^u_{f, \epsilon}(\Delta)) -  tr(\rho {\sA}^u_{f, \epsilon, \eta}(\Delta)) |  \leq 
 \eta |\Delta| |u\cdot u_\Sigma| tr(\rho \frac{1}{H^u_{\epsilon}}) \leq \eta |u\cdot u_\Sigma|  \frac{|\Delta|}{m}$ directly from (\ref{LaLa}).
Taking the limit as $\epsilon\to 0^+$, the bound remains valid and (\ref{LIM}) produces the thesis since $|\Delta| \leq |\Delta_0|$. \end{proof}

At this point, we can prove the following result, where the structure of a conditional POVM as in (\ref{oneG}) emerges.\\

\begin{theorem} Take $u \in \sT_+$, an arbitrarily small $\eta>0$ and $f\in \cD_\bR(\bM)$ such that (\ref{NT}) is true. Define the POVM $\cB(\Delta_0)\ni \Delta \mapsto  B_{\Delta_0}(\Delta)^u_{f,\eta}$ for a given $\Delta_0\in \cB_b(\Sigma), \Sigma$ rest space, as in (a) of Theorem \ref{TUBB}.  For every $\epsilon>0$, there is a unitary $V^u_{f,\epsilon,\eta,\Delta_0}$ 
such that 
\beq\label{QP}
B_{\Delta_0}(\Delta)^u_{f,\eta}  = V^u_{f,\epsilon,\eta,\Delta_0} \frac{1}{\sqrt{\sA^u_{f, \epsilon,\eta}(\Delta_0)}}  \sA^u_{f, \epsilon,\eta}(\Delta)  \frac{1}{\sqrt{\sA^u_{f, \epsilon,\eta}(\Delta_0)}} V^{u\dagger}_{f,\epsilon,\eta,\Delta_0}\:.
\eeq
\end{theorem}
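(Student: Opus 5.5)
The plan is to write $\sA^u_{f,\epsilon,\eta}(\Delta)$ as a congruence of the local Hamiltonian, $\sA^u_{f,\epsilon,\eta}(\Delta)=K^\dagger\,\overline{\sH}^u_{f,\eta}(\Delta)\,K$ with $K:=(H^u_\epsilon)^{-1/2}\in\gB(\gF_s(\cH_m))$, and then extract the unitary from the polar decomposition of $S:=\sqrt{\overline{\sH}^u_{f,\eta}(\Delta_0)}\,K$.

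First, from (\ref{IDf}) and (\ref{LaLa}) together with (\ref{FONDH}), on $\gS_0$ one has $\sA^u_{f,\epsilon,\eta}(\Delta)=K\sH^u_{f,\eta}(\Delta)K$, since $K(\gS_0)=\gS_0$. In particular, for $\Delta=\Delta_0$ and $\Psi\in\gS_0$,
\[
\langle\Psi|\sA^u_{f,\epsilon,\eta}(\Delta_0)\Psi\rangle=\|\sqrt{\overline{\sH}^u_{f,\eta}(\Delta_0)}K\Psi\|^2 .
\]
Next, $S$ is closable and densely defined on $\gS_0$. The identity above shows that $S$ is bounded there, with $\|S\Psi\|^2\le\|\sA^u_{f,\epsilon,\eta}(\Delta_0)\|\,\|\Psi\|^2$, so its closure $\overline S$ is bounded and $\overline S^\dagger\overline S=\sA^u_{f,\epsilon,\eta}(\Delta_0)$.

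Strict positivity comes next. By (\ref{NT}) and Lemma \ref{F2}, $\sqrt{\overline{\sH}^u_{f,\eta}(\Delta_0)}\ge\sqrt{c}\,I$ with $c=c^u_{\eta,f}|u\cdot u_\Sigma||\Delta_0|>0$. Also $K$ is injective with dense range. Hence $\overline S$ is injective with dense range, and $\sA^u_{f,\epsilon,\eta}(\Delta_0)\ge c K^2$ is injective; it is not, however, bounded below, since $K^2$ is not.

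This lack of a lower bound is the main obstacle: $1/\sqrt{\sA^u_{f,\epsilon,\eta}(\Delta_0)}$ is unbounded, so (\ref{QP}) must be read on a suitable dense domain and then closed, exactly in the spirit of Corollary \ref{CORC}.

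With that reading, take the polar decomposition $\overline S=V|\overline S|$, where $|\overline S|=\sqrt{\sA^u_{f,\epsilon,\eta}(\Delta_0)}$. Because $\overline S$ is injective with dense range, $V$ is unitary; call it $V^u_{f,\epsilon,\eta,\Delta_0}$. Then on the dense subspace $\gS_{00}=\sqrt{\overline{\sH}^u_{f,\eta}(\Delta_0)}(\gS_0)$ the computation is
\[
V|\overline S|^{-1}\sA^u_{f,\epsilon,\eta}(\Delta)|\overline S|^{-1}V^\dagger
=(\overline S^{\dagger})^{-1}\,K\sH^u_{f,\eta}(\Delta)K\,\overline S^{-1} .
\]
Here $V|\overline S|^{-1}=(\overline S^\dagger)^{-1}$ and $|\overline S|^{-1}V^\dagger=\overline S^{-1}$. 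Moreover $\overline S^{-1}=K^{-1}\big(\overline{\sH}^u_{f,\eta}(\Delta_0)\big)^{-1/2}$ on $\gS_{00}$, which maps $\gS_{00}$ into $\gS_0$. Substituting, the $K$'s cancel and one obtains
\[
\frac{1}{\sqrt{\overline{\sH}^u_{f,\eta}(\Delta_0)}}\,\sH^u_{f,\eta}(\Delta)\,\frac{1}{\sqrt{\overline{\sH}^u_{f,\eta}(\Delta_0)}}
\]
on $\gS_{00}$. By (\ref{BECCA}) this operator coincides with $B_{\Delta_0}(\Delta)^u_{f,\eta}$ there.

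The delicate points to check are the following.
\begin{itemize}
\item[(i)] $\gS_{00}$ lies in the domain of $V|\overline S|^{-1}\sA(\Delta)|\overline S|^{-1}V^\dagger$. I would verify this via $V^\dagger\gS_{00}=|\overline S|\,\overline S^{-1}\gS_{00}=|\overline S|\gS_0$.
\item[(ii)] The right-hand side of (\ref{QP}), defined on its natural domain, has a unique bounded closure.
\end{itemize}
Once (i) holds, the right-hand side agrees on the dense set $\gS_{00}$ with the bounded operator $B_{\Delta_0}(\Delta)^u_{f,\eta}$, and it is symmetric. The argument of Corollary \ref{CORC} then shows that its closure equals $B_{\Delta_0}(\Delta)^u_{f,\eta}$, which is how (\ref{QP}) is to be understood.
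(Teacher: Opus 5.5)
Your proposal is correct, and its skeleton is the paper's: polar decomposition of the bounded extension of $\sqrt{\overline{\sH}^u_{f,\eta}(\Delta_0)}\,(H^u_\epsilon)^{-1/2}$ (the paper's $K$, your $\overline S$), then cancellation on $\gS_{00}$. The real difference is how invertibility is handled, and there your version is the one that holds up.

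The paper asserts $\sA^u_{f,\epsilon,\eta}(\Delta)\geq \epsilon^{-1}c^u_{\eta,f}|u\cdot u_\Sigma||\Delta|\,I$. From this it infers that its $K$ is bijective with bounded inverse, and it concludes by comparing two bounded everywhere-defined operators on $\gS_{00}$. That bound goes the wrong way. From (\ref{LaLa}) and (\ref{NT}) one only gets $\sA^u_{f,\epsilon,\eta}(\Delta_0)\geq c^u_{\eta,f}|u\cdot u_\Sigma||\Delta_0|\,(H^u_\epsilon)^{-1}$, and $(H^u_\epsilon)^{-1}\leq \epsilon^{-1}I$ has $0$ in its spectrum.

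In fact $\inf\sigma(\sA^u_{f,\epsilon,\eta}(\Delta_0))=0$. Take a one-particle Schwartz state of large $u$-energy and translate it spatially far from $\Delta_0$. Then $\langle\Psi|(H^u_\epsilon)^{-1}\Psi\rangle$ is small and, by the decay (\ref{BBBO}), so is $\langle\Psi|\sA^u_{f,\epsilon}(\Delta_0)\Psi\rangle$. The reason you give, ``since $K^2$ is not'', does not by itself imply this, because $A\geq cK^2$ does not prevent $A$ from being bounded below. Your argument never uses the claim, though.

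Consequently $1/\sqrt{\sA^u_{f,\epsilon,\eta}(\Delta_0)}$ is unbounded, $\overline S$ has dense but non-closed range, and (\ref{QP}) can only hold in your sense. $V$ is unitary because $\overline S$ is injective with dense range, and the right-hand side on its natural domain has closure $B_{\Delta_0}(\Delta)^u_{f,\eta}$. The paper's route would give an identity between bounded operators with no domain bookkeeping, but it rests on a false bound; yours proves the statement that is actually true.

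Three details to tighten.
\begin{enumerate}
\item Justify dense range of $\overline S$ by $\operatorname{Ran}\overline S\supset\overline S(\gS_0)=\gS_{00}$, which is dense by Lemma \ref{F2}. Density of $\operatorname{Ran}(H^u_\epsilon)^{-1/2}$ does not suffice, since the unbounded $\sqrt{\overline{\sH}^u_{f,\eta}(\Delta_0)}$ need not map a dense set onto a dense set.
\item In (i), the identity $V^\dagger\gS_{00}=|\overline S|\gS_0$ settles only the first domain condition. The second follows from $\overline S^\dagger\supset(H^u_\epsilon)^{-1/2}\sqrt{\overline{\sH}^u_{f,\eta}(\Delta_0)}$. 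For $\Phi=\sqrt{\overline{\sH}^u_{f,\eta}(\Delta_0)}\,\Psi_0$ with $\Psi_0\in\gS_0$ this gives $\sA^u_{f,\epsilon,\eta}(\Delta)\overline S^{-1}\Phi=(H^u_\epsilon)^{-1/2}\sH^u_{f,\eta}(\Delta)\Psi_0=\overline S^\dagger\bigl(\overline{\sH}^u_{f,\eta}(\Delta_0)^{-1/2}\sH^u_{f,\eta}(\Delta)\Psi_0\bigr)$.
\item (ii) needs no appeal to Corollary \ref{CORC}. The right-hand side is densely defined and symmetric, hence closable. Its closure extends the everywhere-defined $B_{\Delta_0}(\Delta)^u_{f,\eta}$, so it equals it.
\end{enumerate}
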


\begin{proof}  First of all, we observe that (\ref{LaLa}) and ${\sH}^{u}_{f,\eta}(\Delta) \geq c^u_{\eta,f}  |u\cdot u_\Sigma| |\Delta| I  \geq 0 $ imply 
$
\sA^u_{f, \epsilon,\eta}(\Delta) \geq \frac{c^u_{\eta,f} |u\cdot u_\Sigma |}{\epsilon}   |\Delta| I \geq 0
$ if $\epsilon>0$ and, in particular, 
 $\sA^u_{f, \epsilon,\eta}(\Delta)^{-1} \in \gB(\gF_s(\cH_m))
$.
Furthermore, since $\overline{{\sH}}^u_{f,\eta}(\Delta_0)$ is selfadjoint and $1/\sqrt{H^u_\epsilon} \in \gB(\gF_s(\cH_m))$ we have
$$\sqrt{\overline{{\sH}}^u_{f,\eta}(\Delta_0)}\frac{1}{\sqrt{H^u_\epsilon}} = \left(\frac{1}{\sqrt{H^u_\epsilon}}  \sqrt{\overline{{\sH}}^u_{f,\eta}(\Delta_0)} \right)^\dagger\:\:\mbox{so that}\:\:
\left( \sqrt{\overline{{\sH}}^u_{f,\eta}(\Delta_0)} \frac{1}{\sqrt{H^u_\epsilon}} \right)^\dagger=
\overline{\frac{1}{\sqrt{H^u_\epsilon}} \sqrt{\overline{{\sH}}^u_{f,\eta}(\Delta_0)} }\:,$$
 where we also used the fact that both sides in the first identity 
are (closed) densely defined operators.
As a consequence, if $\Psi \in \sqrt{H^u_\epsilon}(\gS_0)$ so that $(\sqrt{H^u_\epsilon})^{-1}\Psi \in \gS_0 =  D({\sH}^u_{f,\eta}(\Delta_0))\subset D(\overline{{\sH}}^u_{f,\eta}(\Delta_0)) \subset D(\sqrt{\overline{{\sH}}^u_{f,\eta}(\Delta_0)})$, we find
$$\left\langle \Psi \left| \left( \sqrt{\overline{{\sH}}^u_{f,\eta}(\Delta_0)} \frac{1}{\sqrt{H^u_\epsilon}} \right)^\dagger \left( \sqrt{\overline{{\sH}}^u_{f,\eta}(\Delta_0)} \frac{1}{\sqrt{H^u_\epsilon}} \right)\Psi\right. \right\rangle
=\langle \Psi | \sA^u_{f, \epsilon,\eta}(\Delta_0)\Psi\rangle \leq C||\Psi||^2\:.$$
This implies that $\sqrt{\overline{{\sH}}^u_{f,\eta}(\Delta_0)} \frac{1}{\sqrt{H^u_\epsilon}}$, defined on its natural domain
$D:= \sqrt{H^u_\epsilon}(\gS_0)$,
extends uniquely by continuity to an operator $K\in  \gB(\gF_s(\cH_m))$.
The range of this operator is dense since $\overline{{\sH}}^u_{f,\eta}(\Delta_0)$ is strictly bounded below and (a) of Lemma \ref{F2} holds. Moreover, since $K^\dagger K = \sA^u_{f, \epsilon,\eta}(\Delta_0)$ and the latter operator is strictly bounded below, and $K$ is continuous, its range is closed. In conclusion $K\in \gB(\gF_s(\cH_m))$ is also bijective and thus its inverse is bounded. The polar decomposition theorem and $K^\dagger K = \sA^u_{f, \epsilon,\eta}(\Delta_0)$ eventually imply
that $K= V^u_{f,\epsilon,\eta,\Delta_0}  \sqrt{\sA^u_{f, \epsilon,\eta}(\Delta_0)}$ for a partial isometry $V^u_{f,\epsilon,\eta,\Delta_0}$, which we shall indicate by $V$ to simplify the notation. Actually $V$ is unitary just  because $K$ is bijective.
Finally, since the involved operators are bijective,
$(K\spa\rest_D)^{-1}= \sqrt{H^u_\epsilon} \frac{1}{\sqrt{\overline{{\sH}}^u_{f,\eta}(\Delta_0)}}$ where the natural domain of this composition is just $K(D) = \gS_{00}$ defined in (\ref{BECCA}).
To conclude,
$$\overline{{\sH}}^u_{f,\eta}(\Delta) \frac{1}{\sqrt{\overline{{\sH}}^u_{f,\eta}(\Delta_0)}} =
\overline{{\sH}}^u_{f,\eta}(\Delta) \frac{1}{\sqrt{H^u_\epsilon}}\sqrt{H^u_\epsilon} \frac{1}{\sqrt{\overline{{\sH}}^u_{f,\eta}(\Delta_0)}} 
=\overline{{\sH}}^u_{f,\eta}(\Delta) \frac{1}{\sqrt{H^u_\epsilon}} (K\spa\rest_D)^{-1}$$
$$= \overline{{\sH}}^u_{f,\eta}(\Delta) \frac{1}{\sqrt{H^u_\epsilon}} \frac{1}{\sqrt{\sA^u_{f,\epsilon,\eta}(\Delta_0)}}V^\dagger 
=\sqrt{H^u_\epsilon} \sA^u_{f,\epsilon,\eta}(\Delta)\frac{1}{\sqrt{\sA^u_{f,\epsilon,\eta}(\Delta_0)}}V^\dagger \:.$$
Using this identity on the right-hand side of the composition (\ref{BECCA2}) and adjusting the left-hand side similarly, we find that the two operators appearing on the two sides of (\ref{QP}) have the same matrix elements on the dense subspace $\gS_{00}$. Since both operators are bounded and everywhere defined, the thesis follows.
\end{proof}

\subsection{Local Weyl and von Neumann algebras and Haag duality} \label{secHD} 
We shall consider the Weyl unitaries $W(f) := e^{i\overline{\hat{\phi}(f)}}$ for $f\in \cD_\bR(\bM)$ introduced in Proposition \ref{PROP35}, and we refer to the elementary {\em theory of von Neumann algebras} \cite{Takesaki,L}.  
 
 As usual, if $\cH$ is a Hilbert space, $\gG':=\{ B \in \gB(\cH)\:|\: [A,B]=0,\:\:\forall A\in \gG\}$ henceforth denotes the {\bf commutant} of a set $\gG \subset \gB(\cH)$.\\ 
 
\begin{definition} 
{\em  Let  ${\cal O}\subset \bM$ be an open set. 
\begin{itemize} 
\item[(a)] The {\bf local Weyl algebra} $\cW({\cal O})\subset \gB(\gF_s(\cH_m))$ associated with ${\cal O}$  is the unital $C^*$-algebra generated by the Weyl operators $W(f)$    with $f\in \cD_\bR(\bM)$ and $supp(f)\subset {\cal O}$. 
(In other words, it is  the operator-norm closure in $\gB(\gF_s(\cH_m))$ of linear combinations of products of the aforementioned unitiary operators $W(f)$.) 
\item[(b)] $\gW({\cal O}):= \cW({\cal O})''$ is the {\bf local von Neumann algebra} associated with ${\cal O}$. \hfill $\blacksquare$\\ 
\end{itemize}} 
\end{definition} 
 
\noindent By definition, the {\bf isotony property} holds for both families of algebras: 
\beq 
\cW({\cal O}) \subset \cW({\cal O}_1)\quad \mbox{and}\quad 
\gW({\cal O}) \subset \gW({\cal O}_1)\quad \mbox{if ${\cal O}\subset {\cal O}_1$} \:. \label{ISOT} 
\eeq 
The {\em Weyl relations} (\ref{Weyl}) imply that the linear subspace 
$$span\{W(f) \:|\: f\in \cD_\bR(\bM),\: supp(f)\subset {\cal O}\}\subset \cW({\cal O})$$ actually is a unital $*$-algebra that is dense in $\cW({\cal O})$ in the operator norm. Hence it is also dense in the strong operator topology. At this point, von Neumann's {\em double commutant theorem} \cite{Takesaki,L} implies that 
$$\gW({\cal O}) =  
\overline{span\{W(f) \:|\: f\in \cD_\bR(\bM),\: supp(f)\subset {\cal O}\}}^s$$ 
$$ = \overline{span\{W(f) \:|\: f\in \cD_\bR(\bM),\: supp(f)\subset {\cal O}\}}^w $$  
\beq = (span\{W(f) \:|\: f\in \cD_\bR(\bM),\: supp(f)\subset {\cal O}\})''\label{WO}\:.\eeq  
where the closures $\overline{\cdot}^s$ and $\overline{\cdot}^w$ refer to the strong and weak operator topologies, respectively.

To conclude this short summary, we present a version of {\em Haag duality} in the more modern formulation discussed in \cite{Camassa}.  This is one of the celebrated mathematical relations in AQFT \cite{Haag,Araki}. For the free scalar field, it was first established by Araki in \cite{hd}.\\ 
 
\begin{definition} 
{\em If $p,q \in \bM$ are such that $p-q\neq 0$ is future directed, the {\bf open double cone} generated by them is the set  
${\cal O} := Int(J^-(p) \cap J^+(q))$. \\ 
Referring to the notion of causal complement (\ref{CComp}), we define $A^c := Int(A^\pperp)$ if $A\subset \bM$.  \hfill $\blacksquare$}\\ 
\end{definition} 
 
\noindent Observe that 
 every bounded set in $\bM$ is contained in a sufficiently large open double cone. 
 In particular, a laboratory $L(\Delta_0)$, based on a bounded non-empty open set $\Delta_0\subset \Sigma$, is always contained in a sufficiently large open double cone ${\cal O}$ generated by $p,q \in \bM$ with $p-q$ normal to $\Sigma$ and ${\cal O}\cap \Sigma \supset \Delta_0$. Open double cones themselves are laboratories based on $\Delta_0 :={\cal O}\cap \Sigma$ as above. Open double cones are in particular causally complete, ${\cal O} = ({\cal O}^\pperp)^\pperp$. Finally, the family of open double cones is a topological basis for $\bM$. We leave the elementary proof of these geometric facts to the reader. \\ 
 
\begin{proposition} 
The {\bf Haag duality} relation holds for the von Neumann algebras generated by the Weyl algebra of a real massive Klein-Gordon field in Minkowski spacetime: if ${\cal O}\subset \bM$ is an open double cone, then 
\beq 
\gW({\cal O})' = \gW({\cal O}^c) \:.\label{Hdu} 
\eeq 
\end{proposition}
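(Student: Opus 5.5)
The plan is to reproduce Araki's argument \cite{hd}. It splits Haag duality into a purely one-particle (``real subspace'') statement and a second-quantization step.

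\emph{Step 1: reformulate locality at the one-particle level.} For an open set ${\cal O}$, define the real closed subspace
$$K({\cal O}) := \overline{\kappa_m(\cD_\bR({\cal O}))}\subset \cH_m .$$
Recall that $W(f) = e^{i\overline{\hat\phi[f]}}$ depends on $f$ only through $\kappa_m f$, since $\hat\phi[f]=\frac{1}{\sqrt 2}(a(\kappa_m f)+a^\dagger(\kappa_m f))$ for real $f$. Strong continuity of $\psi\mapsto e^{i\overline{(a(\psi)+a^\dagger(\psi))/\sqrt 2}}$ then gives
$$\gW({\cal O}) = \{ W(\psi) \:|\: \psi\in K({\cal O})\}'' =: R(K({\cal O})).$$
Introduce the symplectic complement $K' := \{\psi\in\cH_m \:|\: Im\langle \psi|\chi\rangle=0\ \forall \chi\in K\}$. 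By the Weyl relations (\ref{Weyl}), $Im\langle\kappa_m f|\kappa_m g\rangle$ is proportional to $E(f,g)$.

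\emph{Step 2: second-quantization duality.} I would invoke the abstract theorem of Araki (see also Eckmann--Osterwalder, Leyland--Roberts--Testard): for any closed real subspace $K\subset\cH_m$ of the Fock space one has $R(K)' = R(K')$. The inclusion $R(K')\subset R(K)'$ is immediate from the Weyl relations. The reverse inclusion is where Tomita--Takesaki enters. For $K$ standard (cyclic and separating, which here follows from Reeh--Schlieder (d) of Proposition \ref{PROP35}), the modular objects of $(R(K),\Omega)$ are the second quantizations of the one-particle modular objects of $K$. Then $R(K)' = J R(K) J = R(jK) = R(K')$, using $jK=K'$ at the one-particle level. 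With this, (\ref{Hdu}) is reduced to the identity
$$K({\cal O})' = K({\cal O}^c).$$

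\emph{Step 3: one-particle duality via Cauchy data.} Let ${\cal O}$ be the double cone with base ball $B\subset\Sigma$, where $\Sigma=\{x^0=0\}$ in coordinates adapted to $p-q$. Using the causal propagator $E$ and the well-posedness of the Klein--Gordon Cauchy problem, I identify $\cH_m$ (as a real symplectic space) with time-zero data
$$(\varphi,\pi)\in H^{1/2}(\bR^3)\oplus H^{-1/2}(\bR^3),$$
with complex structure determined by $\omega=(m^2-\Delta)^{1/2}$ and symplectic form $\int(\varphi\pi'-\pi\varphi')\,d^3x$. Finite propagation speed, via the supports of $Af$ and $Rf$, shows that
$$K({\cal O}) = \overline{\cD_\bR(B)}^{H^{1/2}}\oplus \overline{\cD_\bR(B)}^{H^{-1/2}},$$
and similarly $K({\cal O}^c)$ equals the same expression with $B$ replaced by $\Sigma\setminus\overline B$. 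The local form of the symplectic pairing then identifies the complement of $K({\cal O})$ with the data supported in $\Sigma\setminus B$ in the distributional sense.

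\emph{Main obstacle.} I expect the hardest step to be the remaining approximation statement. One must show that every $H^{\pm1/2}$ distribution supported in the closed set $\Sigma\setminus B$ is an $H^{\pm1/2}$-limit of test functions supported in the open set $\Sigma\setminus\overline B$; equivalently, that $\partial B$ carries no $H^{\pm 1/2}$ mass. I would first try dilations: for $\lambda\downarrow 1$, the pushforward of such a distribution under $x\mapsto \lambda x$ is supported away from $\overline B$ and converges in $H^{\pm1/2}$. Standard mollification would then finish the argument. This step uses the roundness (star-shapedness) of $B$, which is why the statement is restricted to double cones. Conjugating by the Poincar\'e representation $U$ transfers the result from double cones of this normal form to arbitrary double cones.
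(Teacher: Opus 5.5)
Your proposal is correct. The paper gives no argument of its own: it cites Theorem 4.8 of \cite{Camassa} with $M=\bM$, a result stated for a more general class of spacetimes $M$, and attributes the statement originally to Araki \cite{hd}.

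What you have written is the classical Araki route made explicit in Minkowski space. First, the second-quantization duality $R(K)'=R(K')$ for closed real subspaces reduces (\ref{Hdu}) exactly to the one-particle identity $K({\cal O})'=K({\cal O}^c)$. You may invoke it as Araki's lattice theorem, or, for standard $K$, via the identification of the modular conjugation with $\Gamma(j)$ together with $jK=K'$. Second, the time-zero description in $H^{1/2}\oplus H^{-1/2}$ isolates the single geometric input: $H^{\pm1/2}$-distributions supported in $\Sigma\setminus B$ are limits of test functions supported in $\Sigma\setminus\overline B$. Your dilation argument settles this for a ball.

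The citation buys generality and brevity. Your version shows precisely where the shape of the double cone enters, at the price of treating the abstract second-quantization duality as a black box.

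Two details should be tightened.

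\emph{Which inclusion in Step 3 you actually need.} Finite propagation speed only gives $K({\cal O})\subset\overline{\cD_\bR(B)}\oplus\overline{\cD_\bR(B)}$, and the analogue for ${\cal O}^c$. That is the locality direction. The duality direction needs the converse: every pair of data in $\cD_\bR(B)\oplus\cD_\bR(B)$, respectively in $\cD_\bR(\Sigma\setminus\overline B)\oplus\cD_\bR(\Sigma\setminus\overline B)$, must be the Cauchy data of some $Ef$ with $supp(f)\subset{\cal O}$, respectively ${\cal O}^c$. This comes from the time-cutoff construction:
\begin{itemize}
\item let $\phi$ solve the Klein--Gordon equation with data supported in a compact set $C$;
\item let $\chi(x^0)$ equal $1$ for $x^0<-\delta$ and $0$ for $x^0>\delta$;
\item then $f:=(\Box-m^2)(\chi\phi)$ is supported in $(J^+(C)\cup J^-(C))\cap\{|x^0|\le\delta\}$;
\item by uniqueness of the Cauchy problem, $Af=\chi\phi$ and $Rf=(\chi-1)\phi$, hence $Ef=\phi$.
\end{itemize}
For $\delta$ small this support lies in ${\cal O}$, respectively in ${\cal O}^c=\{|\vec x|>r+|x^0|\}$, where $r$ is the radius of $B$.

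\emph{Compact support of the approximants.} Since $\Sigma\setminus\overline B$ is unbounded, dilating and mollifying does not yet give elements of $\cD_\bR(\Sigma\setminus\overline B)$. You also need a spatial cutoff $\chi(\vec x/R)$ with $R\to+\infty$. Multiplication by it is uniformly bounded on $H^{\pm1/2}$ and converges strongly to the identity, so this step is harmless.
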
 
\begin{proof}  
See Theorem 4.8 in \cite{Camassa} for $M=\bM$. 
\end{proof} 
\subsection{Commutativity of conditional POVMs as local AHK operators}  
 
We conclude this work, making particular use of Haag duality, by proving that the conditional POVMs with effects $B_{\Delta_0}(\Delta)^u_{f,\eta}$ satisfy the commutativity property expected in the AHK approach.

We remind the reader that a closed densely-defined operator $A$ is said to be {\bf affiliated} with a von Neumann algebra $\gR\subset \gB(\cH)$ \cite{Takesaki,L} if $UA\subset AU$ for every unitary\footnote{That is equivalent to the apparently stronger  requirement $UAU^\dagger =A$ for every $U\in \gR'$.} $U\in \gR'$. 
It turns out, by the very definition of von Neumann algebra and the fact that the unitaries of a von Neumann algebra generate the algebra, that if $A\in \gB(\cH)$, affiliation with $\gR$ is equivalent to $A\in \gR$. It is not difficult to prove that, if $A: D(A) \to \cH$ is selfadjoint, then $A$ is affiliated with $\gR$ if and only if the projectors $P^{(A)}(E)$ of the PVM of $A$ are elements of $\gR$. 
 
We now proceed to prove that the closure of the stress-energy tensor operator is affiliated with every local von Neumann algebra associated with its smearing function. This requires a couple of lemmata.\\

\begin{lemma}\label{LEMNs} If $f,f_1\in \cD(\bM)$, with $f$ real, satisfy $supp(f)\subset {\cal O}$, $supp(f_1)\subset {\cal O}_1$ with the open sets ${\cal O}$ and ${\cal O}_1$ causally separated, then  
\begin{itemize} 
\item[(a)] $W(f) \Psi \in D\left(\overline{:\spa \hat{T}_{\mu\nu}\spa:[f_1]}\right)$ for $\Psi \in \gS_0$; 
    \item[(b)] $W(f) \overline{:\spa \hat{T}_{\mu\nu}\spa:[f_1]} \subset  \overline{:\spa \hat{T}_{\mu\nu}\spa:[f_1]} W(f)$. 
    \end{itemize} 
    The same results are valid if one replaces everywhere $:\spa \hat{T}_{\mu\nu}\spa:[f_1]$ by $:\spa \hat{T}_{\mu\nu}\spa:[f_1]+ cI$ for any $c\in \bC$. 
\end{lemma}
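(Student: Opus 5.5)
The plan is to transport the problem into the dense invariant subspace $\gS_0$, where everything is explicit, and then pass to closures using the analytic-vector structure. The starting point is (c1) of Proposition \ref{PROP35} together with Proposition \ref{PROP4}. On $\gS_0$ the operator $\hat{\phi}[f]$ commutes with $T_1 := :\spa \hat{T}_{\mu\nu}\spa:[f_1]$, since both belong to $\cT$ of causally separated regions and leave $\gS_0$ invariant (Proposition \ref{PROP1}(a), Proposition \ref{PROP35}(a)). The aim is to exponentiate this commutation. For $\Psi\in\gS_0$ I would write
\[
W(f)\Psi=\sum_{k\ge0}\frac{i^k}{k!}\hat{\phi}[f]^k\Psi ,
\]
which converges in norm because $\gS_0$ consists of analytic vectors for $\hat{\phi}[f]$ and the series converges for all parameters thanks to the bound (\ref{stime}).

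Next I would apply $T_1$ termwise. The commutation gives $T_1\hat{\phi}[f]^k\Psi=\hat{\phi}[f]^kT_1\Psi$, and $T_1\Psi\in\gS_0$. The series
\[
\sum_k \frac{i^k}{k!}\hat{\phi}[f]^kT_1\Psi
\]
therefore also converges, to $W(f)T_1\Psi$. The partial sums $S_N\Psi$ lie in $\gS_0\subset D(T_1)$, with $S_N\Psi\to W(f)\Psi$ and $T_1S_N\Psi\to W(f)T_1\Psi$. Closedness of $\overline{T_1}$ then yields $W(f)\Psi\in D(\overline{T_1})$ and $\overline{T_1}W(f)\Psi=W(f)T_1\Psi$, which is (a) together with the identity on $\gS_0$.

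The point I expect to need care is that $\Psi$ is analytic for $\hat{\phi}[f]$ on all of $\bR$, not just near $0$. I would handle this by noting that $\Psi$ has finitely many particle components and applying the estimate (\ref{stime}) with $a^\#$ built from $\kappa_m f$. This gives $\|\hat{\phi}[f]^k\Psi\|\le C^k\sqrt{(n+k)!}$, so the exponential series is entire.

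For (b) the task is to pass from $\gS_0$ to $D(\overline{T_1})$. Take $\Phi\in D(\overline{T_1})$ and choose $\Psi_j\in\gS_0$ with $\Psi_j\to\Phi$ and $T_1\Psi_j\to\overline{T_1}\Phi$; this is possible because $\gS_0$ is a core. Then $W(f)\Psi_j\to W(f)\Phi$ and $\overline{T_1}W(f)\Psi_j=W(f)T_1\Psi_j\to W(f)\overline{T_1}\Phi$, since $W(f)$ is bounded. Closedness again gives $W(f)\Phi\in D(\overline{T_1})$ and $\overline{T_1}W(f)\Phi=W(f)\overline{T_1}\Phi$, which is exactly the inclusion $W(f)\overline{T_1}\subset\overline{T_1}W(f)$.

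The variant with $T_1+cI$ is immediate: $\overline{T_1+cI}=\overline{T_1}+cI$ on the same domain, and $cI$ commutes with $W(f)$.
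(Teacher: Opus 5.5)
Your proposal is correct and follows essentially the same route as the paper. Both arguments expand $W(f)\Psi$ as the exponential series on the analytic vectors in $\gS_0$, commute $:\spa \hat{T}_{\mu\nu}\spa:[f_1]$ through each partial sum using Proposition \ref{PROP4} and the invariance of $\gS_0$, and obtain (a) by closedness. Both then extend to $D\left(\overline{:\spa \hat{T}_{\mu\nu}\spa:[f_1]}\right)$ by approximating with a sequence from $\gS_0$ and using that $W(f)$ is bounded. You also supply an explicit bound via (\ref{stime}) showing that the series is entire, so that it converges at parameter $1$; the paper does not derive this and simply invokes Theorem X.41 of Reed--Simon.
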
 
 
\begin{proof} 
From (a) of Proposition \ref{PROP35}, the vectors in $\gF_0$, and thus also those in the subspace $\gS_0$, are analytic vectors for the selfadjoint operators $\overline{\hat{\phi}(f)}$ when $f$ is real (see Theorem X.41 in \cite{RS2} for a detailed proof); in particular,  
$W(tf) \Psi = \sum_{n=0}^{+\infty} \frac{i^n t^n\hat{\phi}[f]^n}{n!} \Psi$ for every $t\in \bR$.  Now observe that  
$$:\spa \hat{T}_{\mu\nu}\spa:[f_1]  \sum_{n=0}^{N} \frac{i^n \hat{\phi}[f]^n}{n!} \Psi = \sum_{n=0}^{N} \frac{i^n \hat{\phi}[f]^n}{n!} :\spa \hat{T}_{\mu\nu}\spa:[f_1] \Psi$$ 
where we used Proposition \ref{PROP4}. Since $:\spa \hat{T}_{\mu\nu}\spa:[f_1] \Psi\in \gS_0$ because $:\spa \hat{T}_{\mu\nu}\spa:[f_1] (\gS_0) \subset \gS_0$ ((a) of Proposition \ref{PROP1}), the limits of both sides exist as $N\to +\infty$, and this proves the first assertion because $:\spa \hat{T}_{\mu\nu}\spa:[f_1]$ -- which is defined on $\gS_0$ -- is closable. In particular, we also find that 
$\overline{:\spa \hat{T}_{\mu\nu}\spa:[f_1]} W(f)\Psi= W(f) \overline{:\spa \hat{T}_{\mu\nu}\spa:[f_1]}\Psi$ if $\Psi \in \gS_0$. To prove the second assertion, take $\Phi \in D\left(\overline{:\spa \hat{T}_{\mu\nu}\spa:[f_1]}\right)$. There is a sequence $\gS_0 \ni \Psi_n \to \Phi$ such that $:\spa \hat{T}_{\mu\nu}\spa:[f_1] \Psi_n \to \overline{:\spa \hat{T}_{\mu\nu}\spa:[f_1] }\Phi$ by definition of closure, so that  
$W(f) \overline{:\spa \hat{T}_{\mu\nu}\spa:[f_1]} \Psi_n \to W(f) \overline{:\spa \hat{T}_{\mu\nu}\spa:[f_1] }\Phi$. On the other hand,  
$W(f) \overline{:\spa \hat{T}_{\mu\nu}\spa:[f_1]} \Psi_n=  \overline{:\spa \hat{T}_{\mu\nu}\spa:[f_1]} W(f) \Psi_n$, and the limit of both sides exists by construction and thus must coincide  
with $\overline{:\spa \hat{T}_{\mu\nu}\spa:[f_1]} W(f) \Phi$, since the operator under consideration is closed. In summary, if $\Phi \in D\left(\overline{:\spa \hat{T}_{\mu\nu}\spa:[f_1]}\right)$, then  
$\overline{:\spa \hat{T}_{\mu\nu}\spa:[f_1]} W(f)\Phi = W(f) \overline{:\spa \hat{T}_{\mu\nu}\spa:[f_1]} \Phi$. This is (b). The last statement is obvious if one observes that $\overline{A+cI}= \overline{A}+cI$ and $D(A+cI)=D(A)$, for a closable operator $A$ and any constant $c\in \bC$. 
\end{proof}

\begin{lemma} \label{WER} Consider an open double cone ${\cal O}$. 
If $f\in \cD(\bM)$ and $supp(f)\subset {\cal O}$, then $\overline{:\spa \hat{T}_{\mu\nu}\spa:[f] +cI}$ is affiliated with $\gW({\cal O})$ for any $c\in \bC$. 
\end{lemma}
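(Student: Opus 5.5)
The plan is to verify the definition of affiliation through Haag duality. Set $A := \overline{:\spa \hat{T}_{\mu\nu}\spa:[f] +cI}$. Since $\gW({\cal O})' = \gW({\cal O}^c)$ by (\ref{Hdu}), it suffices to prove $UA \subset AU$ for every unitary $U \in \gW({\cal O}^c)$. Note that $A$ need not be selfadjoint when $c\notin\bR$ or $f$ is complex, so we cannot pass through spectral projectors; we work directly with the commutation relation $UA\subset AU$.

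\emph{Step 1: Weyl generators.} Lemma \ref{LEMNs} gives $W(g)A \subset AW(g)$ for every real $g$ with $supp(g)$ contained in an open set causally separated from ${\cal O}$. Here we must check that ${\cal O}^c = Int({\cal O}^\pperp)$ can be handled. Since $supp(f)$ is a compact subset of the open double cone ${\cal O}$, if $supp(g)\subset {\cal O}^c$ then $supp(g)$ is compact and lies in ${\cal O}^\pperp$. Hence $supp(g)$ and $supp(f)$ are causally separated compact sets, and there are open neighbourhoods of each that remain causally separated. This is exactly the situation covered by Lemma \ref{LEMNs}.

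\emph{Step 2: the generated $*$-algebra.} The set $\gC := \{B\in\gB(\cH) \:|\: BA\subset AB,\ B^\dagger A\subset AB^\dagger\}$ is closed under sums, products and adjoints, and contains $I$. Applying Step 1 also to $W(g)^\dagger=W(-g)$ shows that the span of these Weyl operators lies in $\gC$.

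\emph{Step 3: closure.} We show that $\gC$ is closed under strong operator limits. Suppose $B_n\to B$ strongly with $B_nA\subset AB_n$, and take $\Phi\in D(A)$. Then $B_n\Phi\to B\Phi$ and $AB_n\Phi = B_nA\Phi\to BA\Phi$. Since $A$ is closed, $B\Phi\in D(A)$ and $AB\Phi = BA\Phi$. Closure under adjoints is handled the same way: the span of Weyl operators is a $*$-algebra, so by Kaplansky density (or directly by (\ref{WO}) together with the $*$-invariance of the strong$^*$ closure) every element of $\gW({\cal O}^c)$ is a strong$^*$ limit of elements of that span. Because of (\ref{WO}) one must ensure that the span entering here is the one generated by Weyl operators with supports in ${\cal O}^c$. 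Consequently $\gW({\cal O}^c)\subset\gC$, and in particular every unitary $U$ in it satisfies $UA\subset AU$. By Haag duality, $A$ is therefore affiliated with $\gW({\cal O})$.

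The main obstacle is the geometric point in Step 1: we need open causally separated neighbourhoods of the two supports, which rests on compactness of $supp(f)\subset{\cal O}$. The remaining limiting argument in Step 3 is routine once the closedness of $A$ is used.
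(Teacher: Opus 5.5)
Your proof is correct and follows the paper's route. Haag duality reduces affiliation to commutation with the unitaries of $\gW({\cal O}^c)$, Lemma~\ref{LEMNs} handles the Weyl generators, and closedness of $A$ lets the relation pass to strong limits of the span given by (\ref{WO}); your extra bookkeeping with adjoints and strong$^*$ limits is harmless but unnecessary, since $\{B : BA\subset AB\}$ is already an algebra closed under strong limits. The geometric point you flag as the main obstacle is immediate: ${\cal O}^c\subset{\cal O}^\pperp$ means ${\cal O}$ and ${\cal O}^c$ are themselves causally separated open sets, so Lemma~\ref{LEMNs} applies directly without any compactness argument.
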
 
 
\begin{proof} Let us start with the case $c=0$. 
A densely defined closed operator $A$ is affiliated with a von Neumann algebra $\gR$ if $UA \subset AU$ for every unitary $U\in \gR'$.   
In our case we can exploit Haag duality, $\gW({\cal O})' = \gW({\cal O}^c)$. 
A unitary $U \in \gW({\cal O})' = \gW({\cal O}^c)$ is the strong limit of finite linear combinations of Weyl operators converging to $U$, 
with smearing functions supported in ${\cal O}^c$, due to (\ref{WO}). 
From Lemma \ref{LEMNs}, for $\Psi \in D\left( \overline{:\spa \hat{T}_{\mu\nu}\spa:[f]}\right)$, it holds that 
$W_n\overline{:\spa \hat{T}_{\mu\nu}\spa:[f]}\Psi = \overline{:\spa \hat{T}_{\mu\nu}\spa:[f]}W_n \Psi$. 
Since $\overline{:\spa \hat{T}_{\mu\nu}\spa:[f]}$ is closed and $W_n\to U$ strongly, taking the limit as $n\to+\infty$ we find the desired relation 
$U\overline{:\spa \hat{T}_{\mu\nu}\spa:[f]}\Psi = \overline{:\spa \hat{T}_{\mu\nu}\spa:[f]}U \Psi$. The case $c\neq 0$ is obvious once one observes that $\overline{A+cI}= \overline{A}+cI$ for a closable operator $A$ and any constant $c\in \bC$. 
\end{proof} 
 
It is evident that this affiliation property with local von Neumann Weyl algebras is also valid for $:\spa \hat{\phi}^2\spa:[f]$ and all elements of the local algebra ${\cal T}({\cal O})$ defined in Proposition \ref{PROP4}. In principle, apart from unexpected technicalities, the same argument should apply to all {\em Wick polynomials} defined through the {\em Wick rule}, whether normally ordered or constructed via the locally covariant Hadamard procedure (for the massive scalar field in Minkowski spacetime), including derivatives as well.   
 
 The final result on the commutativity of conditional POVMs is stated in (c) below.

We start with a physically natural definition, which reflects the fact that the identity ${\sH}^u_{f,\eta}(\Delta)= :\spa \hat{T}_{\mu\nu} \spa:[f_\Delta] u^\mu u^\nu_\Sigma  + \eta |u\cdot u_\Sigma| |\Delta| I $ holds, as follows from its definition and from (\ref{PART}),  
where, by construction, $supp(f_\Delta) \subset \Delta+ supp(f^2)$. 
Thus, the standard localization notion for local observables can also be used for ${\sH}^u_{f,\eta}(\Delta)$.\\ 
 
\begin{definition} {\em The operator $\sH^u_{f,\eta}(\Delta)$ defined in (\ref{DEFhhh})  is {\bf localized} in the open double cone ${\cal O}$ if 
$\Delta+ supp(f^2) \subset {\cal O}$.} \hfill $\blacksquare$\\ 
\end{definition} 
 
\begin{theorem} Take $u\in \sT_+$, an arbitrarily small $\eta>0$, and $f\in \cD_\bR(\bM)$ such that (\ref{NT}) is true.  The following facts hold for a rest space $\Sigma$ and an open double cone ${\cal O}\subset \bM$. 
\begin{itemize} 
\item[(a)] If ${\sH}^u_{f,\eta}(\Delta)$ is localized in ${\cal O}$, then its closure $\overline{{\sH}}^u_{f,\eta}(\Delta)$ is affiliated with $\gW({\cal O})$. In particular, the spectral projectors of $\overline{{\sH}}^u_{f,\eta}(\Delta)$ belong to $\gW({\cal O})$. 
\item[(b)]  If   ${\sH}^u_{f,\eta}(\Delta_0)$ is localized in ${\cal O}$, then  the effects of the POVM $\cB(\Delta_0) \ni \Delta \mapsto B_{\Delta_0}(\Delta)^u_{f,\eta}$ defined in (\ref{BECCA2}) satisfy  
 $B_{\Delta_0}(\Delta)^u_{f,\eta} \in \gW({\cal O})$. 
\item[(c)] Let $\tilde{\cal O}$ be another open double cone, take $\tilde{\Delta}_0 \in \cB_b(\tilde{\Sigma})$,  suppose that  
${\sH}^{\tilde{u}}_{\tilde{f},\tilde{\eta}}(\tilde{\Delta}_0)$ is localized in $\tilde{\cal O}$, 
 and consider the analogous POVM $\cB(\tilde{\Delta}_0) \ni \tilde{\Delta} \mapsto {B}_{\tilde{\Delta}_0}(\tilde{\Delta})^{\tilde{u}}_{\tilde{f},\tilde{\eta}}$ for given $\tilde{u},\tilde{f}, \tilde{\eta}$  such that (\ref{NT}) is true. 
If ${\cal O}$ and $\tilde{\cal O}$ are causally separated, then 
\beq 
[ {B}_{\tilde{\Delta}_0}(\tilde{\Delta})^{\tilde{u}}_{\tilde{f},\tilde{\eta}},   B_{\Delta_0}(\Delta)^u_{f,\eta}] =0\:, \quad \mbox{for every $\Delta \in \cB(\Delta_0)$ and $\tilde{\Delta} \in \cB(\tilde{\Delta}_0)$}. 
\eeq 
More generally, this commutativity relation holds after replacing $ {B}_{\tilde{\Delta}_0}(\tilde{\Delta})^{\tilde{u}}_{\tilde{f},\tilde{\eta}}$ with any operator $B\in \gM(\tilde{O})$.
\end{itemize} 
\end{theorem}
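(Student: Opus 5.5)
For (a), I will use the identity $\sH^u_{f,\eta}(\Delta)= :\spa \hat{T}_{\mu\nu}\spa:[f_\Delta]u^\mu u^\nu_\Sigma + \eta|u\cdot u_\Sigma||\Delta| I$ from (\ref{PART}) and (\ref{DEFhhh}). Here $supp(f_\Delta)\subset \Delta+supp(f^2)\subset{\cal O}$, so Lemma \ref{WER} applies. Strictly, Lemma \ref{WER} treats a single component $\overline{:\spa\hat{T}_{\mu\nu}\spa:[f]+cI}$, but its proof (via Lemma \ref{LEMNs}) uses only three facts: $\gS_0$ is invariant, the operator commutes on $\gS_0$ with $\hat{\phi}[h]$ for $supp(h)\subset{\cal O}^c$ by Proposition \ref{PROP4}, and closability. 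These hold verbatim for the contraction $:\spa\hat{T}_{\mu\nu}\spa:[f_\Delta]u^\mu u^\nu_\Sigma + cI$, so I will rerun the argument for it. Combining Haag duality $\gW({\cal O})'=\gW({\cal O}^c)$ with the strong approximation (\ref{WO}) of unitaries in $\gW({\cal O}^c)$ by spans of Weyl operators gives $U\overline{\sH}^u_{f,\eta}(\Delta)\subset \overline{\sH}^u_{f,\eta}(\Delta)U$ for all unitaries $U\in\gW({\cal O})'$. This is affiliation. The statement about spectral projectors then follows from the standard characterization recalled before Lemma \ref{LEMNs}.

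For (b), I first note that $\Delta\subset\Delta_0$ gives $\Delta+supp(f^2)\subset\Delta_0+supp(f^2)\subset{\cal O}$, so both $\overline{\sH}^u_{f,\eta}(\Delta_0)$ and $\overline{\sH}^u_{f,\eta}(\Delta)$ are affiliated with $\gW({\cal O})$ by (a). Functional calculus of an affiliated selfadjoint operator stays affiliated, so the bounded operator $R:=(\overline{\sH}^u_{f,\eta}(\Delta_0))^{-1/2}$, which is bounded by Theorem \ref{TEOFD}(c) and Lemma \ref{F2}, belongs to $\gW({\cal O})$. It then suffices to show $UB_{\Delta_0}(\Delta)U^\dagger=B_{\Delta_0}(\Delta)$ for every unitary $U\in\gW({\cal O})'$. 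This is the delicate step, because $\gS_{00}=\sqrt{\overline{\sH}(\Delta_0)}(\gS_0)$ need not be $U$-invariant. To avoid domain issues I will work with quadratic forms. Write $A:=\overline{\sH}^u_{f,\eta}(\Delta)$ and take $\Psi,\Phi\in\gS_{00}$. Then
\[
\langle \Phi|B\Psi\rangle=\langle R\Phi|A R\Psi\rangle .
\]
Next I compare this with $\langle U\Phi|BU\Psi\rangle$. Since $UR=RU$, we have $RU\Psi=UR\Psi$. Since $UA\subset AU$, the vector $UR\Psi$ lies in $D(A)$ and $AUR\Psi=UAR\Psi$. Hence
\[
\langle RU\Phi|ARU\Psi\rangle=\langle UR\Phi|UAR\Psi\rangle=\langle R\Phi|AR\Psi\rangle .
\]
The quadratic form of $B$ on $R^{-1}(D(A))\supset\gS_{00}$ is given by the same expression; this uses Corollary \ref{CORC}, applied with the symmetric extension $A$. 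Therefore $\langle U\Phi|BU\Psi\rangle=\langle\Phi|B\Psi\rangle$ on the dense set $\gS_{00}$, and by boundedness $U^\dagger BU=B$. So $B\in\gW({\cal O})''=\gW({\cal O})$.

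For (c), causal separation of ${\cal O}$ and $\tilde{\cal O}$, together with the fact that $\tilde{\cal O}$ is open, gives $\tilde{\cal O}\subset{\cal O}^c$. Isotony (\ref{ISOT}) and Haag duality then yield $\gW(\tilde{\cal O})\subset\gW({\cal O}^c)=\gW({\cal O})'$. By (b), $B_{\Delta_0}(\Delta)^u_{f,\eta}\in\gW({\cal O})$ and $\tilde B\in\gW(\tilde{\cal O})\subset\gW({\cal O})'$, so the two commute. The same argument gives the final claim for any $B\in\gW(\tilde{\cal O})$. I expect the main obstacle to be the domain bookkeeping in (b). If the quadratic-form argument fails there, a fallback is to show that $B$ commutes with the spectral projectors of both local Hamiltonians by approximating $B$ in the strong topology through bounded spectral truncations $R\,A\,E_A([0,N])\,R$.
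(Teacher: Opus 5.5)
Your proposal is correct and follows essentially the same route as the paper. Part (a) uses the decomposition $\sH^u_{f,\eta}(\Delta)= :\hat{T}_{\mu\nu}:[f_\Delta]u^\mu u^\nu_\Sigma + \eta|u\cdot u_\Sigma||\Delta| I$ together with Lemma~\ref{WER} and Haag duality. Part (b) uses $(\overline{\sH}^u_{f,\eta}(\Delta_0))^{-1/2}\in\gW({\cal O})$, the relation $U\overline{\sH}^u_{f,\eta}(\Delta)\subset \overline{\sH}^u_{f,\eta}(\Delta)U$, and Corollary~\ref{CORC} to evaluate $B$ on the non-invariant set $U\gS_{00}$; you phrase this with matrix elements, while the paper uses the operator identity on $\gS_{00}$. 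Part (c) uses $\tilde{\cal O}\subset{\cal O}^c$ and Haag duality.

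Your remark that Lemma~\ref{WER} must be rerun for the contraction $u^\mu u^\nu_\Sigma :\hat{T}_{\mu\nu}:[f_\Delta]$, rather than applied to a single component, legitimately tightens a step the paper passes over quickly.
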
 
 
\begin{proof} (a) First observe that  
the densely defined symmetric positive operator ${\sH}^u_{f,\eta}$ satisfies 
${\sH}^u_{f,\eta}= :\spa \hat{T}_{\mu\nu} \spa:[f_\Delta] u^\mu u^\nu_\Sigma  + \eta |u\cdot u_\Sigma| |\Delta| I $ as follows from its definition and from (\ref{PART}). By construction, $supp(f_\Delta) \subset \Delta+ supp(f^2)$. At this point Lemma \ref{WER} proves that $\overline{{\sH}}^u_{f,\eta}(\Delta)$ is affiliated with $\gW({\cal O})$ as well and thus, in particular, the spectral projectors of $\overline{{\sH}}^u_{f,\eta}(\Delta)$ belong to $\gW({\cal O})$.\\ 
(b) Since ${\cal O} \supset \Delta_0+ supp(f^2)\supset \Delta+ supp(f^2)$ for $\Delta \subset \Delta_0$, (a) proves that the spectral projectors of all $\overline{{\sH}}^u_{f,\eta}(\Delta)$ and $\overline{{\sH}}^u_{f,\eta}(\Delta_0)$ belong to $\gW({\cal O})$. Referring to Corollary \ref{CORC}, this fact easily implies, via spectral calculus (e.g. Proposition 3.78 in \cite{Moretti2}), that if $U \in \gW({\cal O})'$ and $\Psi \in \gS_{00}$, 
$$U\frac{1}{\sqrt{\overline{{\sH}}^u_{f,\eta}(\Delta_0)}} \overline{{\sH}}^u_{f,\eta}(\Delta)   \frac{1}{\sqrt{\overline{{\sH}}^u_{f,\eta}(\Delta_0)}}\Psi = 
\frac{1}{\sqrt{\overline{{\sH}}^u_{f,\eta}(\Delta_0)}} \overline{{\sH}}^u_{f,\eta}(\Delta)   \frac{1}{\sqrt{\overline{{\sH}}^u_{f,\eta}(\Delta_0)}} U\Psi\:.$$ 
Since the dense subspace $\gS_{00}$ is a core for the everywhere-defined bounded operator $B_{\Delta_0}(\Delta)^u_{f,\eta} =\overline{\frac{1}{\sqrt{\overline{{\sH}}^u_{f,\eta}(\Delta_0)}} \overline{{\sH}}^u_{f,\eta}(\Delta)   \frac{1}{\sqrt{\overline{{\sH}}^u_{f,\eta}(\Delta_0)}}}$, we obtain that $UB_{\Delta_0}(\Delta)^u_{f,\eta}= B_{\Delta_0}(\Delta)^u_{f,\eta}U$ if $U \in \gW({\cal O})'$, so that 
$B_{\Delta_0}(\Delta)^u_{f,\eta}\in  \gW({\cal O})'' =  \gW({\cal O})$. \\ 
(c) If ${\cal O}$ and $\tilde{\cal O}$ are causally separated, then $\tilde{\cal O} \subset {\cal O}^c$ so that $\gW(\tilde{\cal O}) \subset \gW({\cal O}^c) =  \gW({\cal O})'$ by (\ref{Hdu}). The thesis follows from (b). 
\end{proof}

\begin{remark}
 \begin{itemize}
{\em \item[(1)] 
Essential selfadjointness played a crucial role in establishing that the spectral measure of the stress--energy tensor belongs to a corresponding local von Neumann algebra. For other elements $A\in {\cal T}({\cal O})$ (the local algebra defined in Proposition \ref{PROP4}) essential selfadjointness is much more difficult to establish. In general we only know that these elements are symmetric on the dense domain $\gS_0$. 
With an argument similar to the one used for the stress--energy tensor one may prove that $\overline{A}$ is affiliated with  $\gM({\cal O})$, but it is difficult to go further.
Nevertheless, if we also know that $A\geq 0$  (or that $A$ is only bounded below), we can conclude that it admits its Friedrichs selfadjoint extension $A_F$ (observe that $A$ and $\overline{A}$ admit the same Friedrichs extension $A_F$). 
At this point, a crucial result of Kadison applies.\\ 
\begin{proposition}[Kadison] \label{KaD} 
Let $\gR$ be a von Neumann algebra on the complex Hilbert space $\cH$ and $A:D(A) \to \cH$ a  closed,  symmetric, densely-defined operator on $\cH$ which is positive: $\langle x|A x\rangle \geq 0$ if $x\in D(A)$. If $A$ is affiliated with $\gR$, then its Friedrichs selfadjoint extension $A_F$ is affiliated with $\gR$. 
The PVM of $A_F$  therefore belongs to $\gR$. 
\end{proposition}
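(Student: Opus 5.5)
The plan is to exploit the intrinsic, form-theoretic characterization of the Friedrichs extension. Because that characterization is built only from $A$ and the Hilbert space structure, it should be automatically compatible with every unitary $U\in\gR'$. Recall that affiliation means $UA\subset AU$ for every unitary $U\in \gR'$. Applying this to both $U$ and $U^\dagger=U^{-1}\in\gR'$ gives $U D(A)=D(A)$ and $UAU^\dagger=A$, as noted in the footnote on affiliation. So the first step is to record that every unitary $U\in\gR'$ maps $D(A)$ bijectively onto itself and intertwines $A$ with itself.

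Next, introduce the sesquilinear form $q(x,y):=\langle x|Ay\rangle+\langle x|y\rangle$ on $D(A)\times D(A)$. Since $A\geq 0$, $q$ is an inner product, and its norm $\|x\|_q^2=q(x,x)\geq\|x\|^2$. By the standard Friedrichs construction, the $\|\cdot\|_q$-completion of $D(A)$ embeds injectively in $\cH$ as a subspace $D(\overline{q})$, with closed form $\overline{q}$. Then $A_F+I$ is the selfadjoint operator associated with $\overline{q}$:
\[
D(A_F)=\{x\in D(\overline{q})\:|\: D(\overline{q})\ni y\mapsto \overline{q}(y,x)\ \text{is $\|\cdot\|$-continuous}\},
\]
with $\langle y|(A_F+I)x\rangle=\overline{q}(y,x)$. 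Using the first step, $q(Ux,Uy)=\langle Ux|UAy\rangle+\langle Ux|Uy\rangle=q(x,y)$. Hence $U$ is a $\|\cdot\|_q$-isometry of $D(A)$ onto itself, and it extends to a bijection of $D(\overline{q})$, since $U$ is $\|\cdot\|$-continuous and $\|\cdot\|_q$-Cauchy sequences converge in $\cH$ as well. The extension satisfies $\overline{q}(Ux,Uy)=\overline{q}(x,y)$.

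Now fix $x\in D(A_F)$. For $y\in D(\overline{q})$ we compute $\overline{q}(y,Ux)=\overline{q}(U^\dagger y,x)=\langle U^\dagger y|(A_F+I)x\rangle=\langle y|U(A_F+I)x\rangle$. This is $\|\cdot\|$-continuous in $y$, so $Ux\in D(A_F)$ and $A_FUx=UA_Fx$, i.e. $UA_F\subset A_FU$. Therefore $A_F$ is affiliated with $\gR$. Since $A_F$ is selfadjoint, $UA_FU^\dagger=A_F$ implies, via uniqueness of the PVM under unitary conjugation, $UP^{(A_F)}(E)U^\dagger=P^{(A_F)}(E)$ for all Borel $E$. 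Because $\gR'$ is the linear span of its unitaries, each $P^{(A_F)}(E)$ lies in $\gR''=\gR$.

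The main obstacle is the middle step: transporting the $U$-invariance from $D(A)$ to the abstract completion $D(\overline{q})$. One must check that the completion is realized inside $\cH$ compatibly with $U$, which relies on the closability of $q$ that is standard for positive symmetric operators, and that the extended map is still $U$ itself rather than merely some abstract isometry. I would first handle this by approximating $x\in D(\overline{q})$ with a $\|\cdot\|_q$-Cauchy sequence $x_n\in D(A)$, observing that $Ux_n$ is again $\|\cdot\|_q$-Cauchy, and using the injectivity of the embedding into $\cH$ to identify its $q$-limit with $Ux$.
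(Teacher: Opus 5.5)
Your proof is correct, but it does not follow the paper. The paper gives no argument of its own for this statement and simply refers to Corollary 5 of Kadison's work; you instead prove it directly. Your chain of steps is sound:
\begin{itemize}
\item The invariance $UD(A)=D(A)$, $UAU^\dagger=A$ for unitaries $U\in\gR'$ makes the form $q(x,y)=\langle x|Ay\rangle+\langle x|y\rangle$ invariant under $U$.
\item Your Cauchy-sequence argument correctly transports this invariance to $D(\overline{q})$, with the extension being $U$ itself, because $\|\cdot\|_{\overline{q}}\geq\|\cdot\|$ and the completion sits injectively in $\cH$.
\item The representation-theorem characterization of $A_F+I$ then yields $UA_F\subset A_FU$.
\item The spectral statement follows from uniqueness of the PVM of $UA_FU^\dagger=A_F$, together with the fact that $\gR'$ is spanned by its unitaries, so that $P^{(A_F)}(E)\in\gR''=\gR$.
\end{itemize}
What your route buys is a self-contained and transparent mechanism. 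The Friedrichs extension is canonically determined by $A$ and the Hilbert-space structure, so it inherits every unitary symmetry of $A$. You use only the form-theoretic tools the paper already employs in its proofs of Lemmas \ref{F2} and \ref{LEMMAZd}. Your argument also makes visible that closedness of $A$ plays no role beyond fitting the paper's definition of affiliation. The citation is shorter, but it leaves this mechanism hidden.
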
 
\begin{proof} Corollary 5 in \cite{Kadison}. 
\end{proof} 
Using this result, one has that, even if  $A\in {\cal T}({\cal O})$ is not essentially selfadjoint but $A\geq 0$ (or more weakly $A$ is bounded below),  its Friedrichs selfadjoint extension $A_F$ is affiliated with $\gM({\cal O})$. In particular, its spectral measure belongs to $\gM({\cal O})$ together with every (bounded) spectral function of $A_F$.  If $A\geq 0$ is even essentially selfadjoint,  its unique selfadjoint extension  must coincide with $A_F$.
\item[(2)]  As already observed, the fact that $B_{\Delta_0}(\Delta)^u_{f,\eta}\in  \gW({\cal O})$ in addition to positivity  $B_{\Delta_0}(\Delta)^u_{f,\eta} \geq 0$, when we  
use that effect on states in the whole Hilbert space instead of those in $\cH^{(1)}$,  
 immediately implies a {\em dark count} phenomenon -- which is  mathematically expressed by 
$\langle \Omega | B_{\Delta_0}(\Delta)^u_{f,\eta} \Omega \rangle >0$ -- by the general version of the Reeh-Schlieder theorem 
(d) RS2 in Proposition \ref{PROP35}.  (See \cite{FC26} for a recent physical discussion of the subject.) \hfill $\blacksquare$}
\end{itemize}
\end{remark}

\section{Conclusions and outlook}\label{CONC}  
In this work we have constructed a class of relativistic spatial localization observables
$A^u_f$
within the standard framework of quantum field theory, by exploiting the stress--energy--momentum tensor operator smeared with test functions $f^2$. The construction provides, for every fixed timelike direction, a family of POVMs defined on spacelike hypersurfaces, which are well-behaved on each $n$-particle sector and satisfy a natural relativistic causality condition of Castrigiano type.

A central outcome of the analysis is that these localization observables arise from local or quasi-local quantum-field-theoretic quantities, thereby placing previous constructions, originally formulated at a more heuristic level, on a rigorous footing. In particular, in the one-particle sector, the resulting localization scheme reduces to the observable introduced in \cite{M23} (which extended and made rigorous an original physical model due to Terno \cite{Terno}), and its first moment reproduces the Newton--Wigner position operator, independently of the shape of the smearing function under suitable normalization and centering conditions.

An important aspect concerns the analysis of the role of energy positivity. While the stress--energy tensor fails to define positive operators on the full Fock space due to the Reeh--Schlieder theorem, we have shown that suitable lower bounds can be established by means of quantum energy inequalities. This analysis makes it possible to control deviations from positivity and to define regularized families of positive operators $A^u_{f,\epsilon}$ that approximate the localization effects with arbitrary precision on states with a fixed particle number. The price to pay is, of course, that the operators $A^u_{f,\epsilon}$ are non-local; however, they arise as restrictions (more precisely as {\em compressions} to the $n$-particle spaces)  of non-local, bounded from below but  non-positive operators $\sA^u_{f,\epsilon}$ acting on the full Fock space.

The second main result of the paper is the construction of conditional localization observables associated with finite laboratories. Taking energy inequalities into account once again, by introducing suitably modified local energy operators and considering their unique  selfadjoint extensions, we have defined conditional POVMs. These POVMs are related in a natural way to the spatial localization observables constructed above. However, certain intertwining unitary operators $V^u_{f,\epsilon,\eta,\Delta_0}$ appear in the relation between the local effects and the (approximated and positive) relativistic spatial localization effects $\sA^u_{f,\epsilon,\eta}$, as in (\ref{QP}). The role of these unitary operators, although compatible with the analysis developed in \cite{Mor126}, is not yet fully understood and deserves further investigation.

Within this framework, and under appropriate causal separation assumptions on the laboratories, we have shown that the effects of conditional localization observables commute and belong to local von Neumann algebras, in agreement with the Araki--Haag--Kastler description of locality and in accordance with the analysis in \cite{Mor126}. This provides a concrete realization, in a quantum-field-theoretic setting, of the idea that commutativity of localization observables should be recovered only at the level of conditional measurements performed in spacetime regions of finite extent.

Several directions for further investigation naturally emerge from the present analysis. From a mathematical viewpoint, it would be desirable to extend the construction beyond Minkowski spacetime, in particular to globally hyperbolic curved spacetimes and Hadamard states, where the stress--energy tensor and quantum energy inequalities are available in a suitably generalized form. Another relevant issue concerns the dependence on the smearing function and the extent to which different choices lead to physically equivalent localization schemes.

On the physical side, a more detailed analysis of measurement procedures implementing the proposed observables would be of interest, especially in connection with indirect measurement models and detector-based formulations.

Finally, the interplay between localization, energy conditions, and causality constraints suggests that the framework developed here may be useful in addressing more general questions concerning the operational meaning of localization in relativistic quantum systems and its compatibility with the structure of local quantum field theory.

There remain, however, some open issues that deserve further investigation.

Castrigiano (see especially \cite{Castrigiano2,C23}) introduced several relativistic spatial localization observables for bosons and fermions which appear to arise as restrictions to the one-particle space of a more general structure defined on the full Fock space. In that framework, the relevant local observables seem to be given by the electric current rather than the stress--energy tensor. However, it is not clear how a smearing procedure applied to such local observables at the level of the Fock space could reproduce, after spatial integration, the corresponding localization effects. Clarifying this point would be of particular interest and will be addressed elsewhere.

Another related issue concerns the possible existence of connections with the localization observables introduced by Lechner and de Oliveira \cite{LdO}, in which modular theory plays a central role. Understanding whether, and in which sense, the present construction can be related to that approach may shed further light on the structural aspects of localization in quantum field theory.

\section*{Acknowledgments} The author is grateful to D. Castrigiano, C. De Rosa, C. Fewster, N. Pinamonti, and A. Schenkel for many useful discussions, over the years, on the issues addressed in this work. The author is also grateful to an anonymous referee for a careful reading of this quite technical paper, for the content of Remark \ref{REMREF}, and for many helpful suggestions to improve the content of this paper. This work was written within the activities of INdAM-GNFM.

\section*{Declaration statements}
{\bf Conflict of Interest}: The authors declare that they have no conflict of interest.\\
{\bf Ethical Statement}: This work does not involve human participants, animals, or sensitive data, and therefore no ethical approval was required.\\
{\bf Informed Consent}: Not applicable.\\
{\bf Data Availability}: No datasets were generated or analyzed in this study. All relevant information is contained within the article.\\
{\bf Funding}: This research received no external funding.

\section*{Appendix}
\appendix
%
%

\section{Appendix: Properties of normally ordered quadratic forms }

\begin{proposition}\label{PROPaa}
{The normally ordered quadratic forms as in Definition \ref{DEFF} satisfy the following elementary properties.
\begin{itemize}
\item[(1)] Directly from the definition,
\beq \left\langle \Psi'\left| \prod_{j=1}^N a^\dagger_{p_j}\prod_{r=1}^M a_{k_r}\Psi\right. \right\rangle =
\overline{\left\langle \Psi\left|\prod_{r=1}^M a^\dagger_{k_r} \prod_{j=1}^N a_{p_j}\Psi'\right. \right\rangle}
\:.\label{quadAGG}\eeq 
\item[(2)] If $\sigma : \{1,\ldots, N\} \to \{1,\ldots, N\}$ and $\pi : \{1,\ldots, M\} \to \{1,\ldots, M\}$ are arbitrary permutations (i.e., bijective functions):
\beq \left\langle \Psi'\left| \prod_{j=1}^N a^\dagger_{p_j}\prod_{r=1}^M a_{k_r}\Psi\right. \right\rangle =
 \left\langle \Psi'\left| \prod_{j=1}^N a^\dagger_{p_{\sigma(j)}}\prod_{r=1}^M a_{k_{\pi(r)}}\Psi\right. \right\rangle\:.
\eeq 
\item[(3)] Taking (\ref{rep1}) and (\ref{rep}) into account:
for $(\Lambda, a)\in IO(1,3)_+$,
\beq \left\langle U_{(\Lambda, a)}\Psi'\left| \prod_{j=1}^N  a^\dagger_{p_j}\prod_{r=1}^M a_{k_r} U_{(\Lambda, a)}\Psi\right. \right\rangle =
\left\langle \Psi'\left| \prod_{j=1}^N e^{ia\cdot p_j} a^\dagger_{\Lambda^{-1} p_j}\prod_{r=1}^M e^{-ia\cdot k_r}a_{\Lambda^{-1} k_r} \Psi\right. \right\rangle\:.\label{KOV}
\eeq 
\item[(4)] 
 If $\Psi,\Psi' \in \gS_0$, defining $K:= N+M$,  the map \beq\label{fund} \sV_{m,+}^{N+M} \ni (p_1,\ldots, p_N, k_1,\ldots, k_M) \mapsto  \left\langle \Psi'\left| \prod_{j=1}^N a^\dagger_{p_j}\prod_{r=1}^M a_{k_r}\Psi\right. \right\rangle\in \bC \:\:\mbox{is 
  $\cS( \sV_{m,+}^{K} )$}\eeq
\item[(5)] From (\ref{prop}), for $f_j,g_r\in \cS(\sV_{m,+})$, $\Psi,\Psi' \in \gS_0$:
$$
\left\langle \Psi'\left| \prod_{j=1}^N a^\dagger(f_j)\prod_{r=1}^M a( g_r)\Psi\right. \right\rangle  =$$\beq \label{manc}
=  \int \prod_{j=1}^N  f_j(p_j)
\prod_{r=1}^M \overline{g_r(k_r)} \left\langle \Psi'\left| \prod_{j=1}^N a^\dagger_{p_j}\prod_{r=1}^M a_{k_r}\Psi\right. \right\rangle \; d^N\mu_m(p) d^M\mu_m(k)\label{PPRO}
\eeq
\item[(6)] More generally, if $h: \sV_{m,+}^{N+M} \to \bC$ is measurable and polynomially bounded, then the quadratic\footnote{ Theorem X.44  of \cite{RS2} establishes that a quadratic form as in (\ref{Tf}) uniquely defines a closable operator if  $h \in L^2$. However,   we shall need a more general type of quadratic form which does {\em not} satisfy the hypotheses of that theorem.} form on $\gS_0$
$$
\gS_0\times \gS_0 \ni (\Psi',\Psi) \mapsto 
$$
\beq
 \int h(p_1, \ldots , p_N, k_1, \ldots , k_M)  \left\langle \Psi'\left| 
\prod_{j=1}^N a^\dagger_{p_j}\prod_{r=1}^M a_{k_r}\Psi\right. \right\rangle \; d^N\mu_m(p) d^M\mu_m(k)\label{Tf}\:.
\eeq
\item[(7)] By direct inspection and referring to Lemma \ref{lemmaH}, if $v \in \sV$ and $\Psi,\Psi' \in \gS_0$, then
\beq
\langle \Psi | H^v \Psi'\rangle = -\int_{\sV_{m,+}} v\cdot p \langle \Psi| a^\dagger_p a_p \Psi'\rangle d\mu_m(p) \:.  \label{QuadHv}
\eeq
\end{itemize}}
\end{proposition}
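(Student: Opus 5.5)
The plan is to reduce everything to explicit formulas on each $n$-particle sector, using only the definition (\ref{prop}) of $a_p$ on $\gS_0$ and Definition \ref{DEFF}. Since $\Psi,\Psi'\in\gS_0$ have finitely many non-zero components, every quadratic form in question is a finite sum over $n$. By iterating (\ref{prop}), for $\Psi\in\cH^{(n)}\cap\gS_0$ and $M\le n$ I will first record
$$\Bigl(\prod_{r=1}^M a_{k_r}\Psi\Bigr)(q_1,\ldots,q_{n-M})=\sqrt{n(n-1)\cdots(n-M+1)}\;\Psi(k_M,\ldots,k_1,q_1,\ldots,q_{n-M}),$$
and zero if $M>n$. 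It follows that
$$\Bigl\langle \Psi'\Bigl| \prod_j a^\dagger_{p_j}\prod_r a_{k_r}\Psi\Bigr\rangle=c_{n,N,M}\int \overline{\Psi'_{n'}(p_N,\ldots,p_1,Q)}\,\Psi_n(k_M,\ldots,k_1,Q)\,d^{n-M}\mu_m(Q),$$
with $n'-N=n-M$. This master formula is the source of all seven items.

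Items (1) and (2) are then immediate. For (1), exchange the roles of $(\Psi',N,p)$ and $(\Psi,M,k)$ and take complex conjugates of the inner product. For (2), use the complete symmetry of $\Psi_n$ and $\Psi'_{n'}$ in their arguments. For (3), I will use (\ref{rep1}) and (\ref{rep}): on each sector $U_{(\Lambda,a)}$ acts by multiplication by $e^{-ia\cdot\sum p_i}$ composed with $\Lambda^{-1}$ in every argument. Inserting this into the master formula and changing variables $Q\mapsto \Lambda Q$ produces the stated phases. The change of variables is justified by the $O(1,3)_+$-invariance of $\mu_m$, and the remaining common phases in $Q$ cancel between bra and ket.

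For (5), I will expand $a(g_r)$ and $a^\dagger(f_j)$ via (\ref{aa*}) on $\cH^{(n)}$. The projector $S_n$ can be dropped against the symmetric $\Psi'$, and the $\sqrt n$ factors match those in the master formula. Fubini, applicable because the integrands are Schwartz, then gives (\ref{manc}). Item (7) follows from (\ref{Hv}) together with the master formula at $N=M=1$: $\int h(p)\langle\Psi|a^\dagger_pa_p\Psi'\rangle d\mu_m(p)=\langle\Psi|\sum_i h(p_i)\Psi'\rangle$ by symmetry, and one takes $h=-v\cdot p$.

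The only step that needs genuine estimates is (4), with (6) as a consequence. For (4) I must show that the partial integral $\int\overline{\Psi'}(P,Q)\Psi(K,Q)\,d\mu_m(Q)$ lies in $\cS(\bR^{3(N+M)})$. The plan is to identify $\sV_{m,+}\equiv\bR^3$ with the weight $1/E(q)$, which is smooth with bounded derivatives. Any weight $(1+|P|+|K|)^\ell$, and any derivatives in $P$ and $K$, can then be placed on the factors $\overline{\Psi'}$ and $\Psi$. Bounding each factor by $C(1+|P|)^{-\ell}(1+|Q|)^{-4}$ and $C(1+|K|)^{-\ell}(1+|Q|)^{-4}$ and integrating in $Q$ gives every Schwartz seminorm. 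Differentiation under the integral is legitimate by dominated convergence with the same bounds.

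For (6), polynomial boundedness of $h$ against this Schwartz function makes the integral (\ref{Tf}) absolutely convergent, so the quadratic form is well defined and sesquilinear on $\gS_0\times\gS_0$. I expect the Schwartz estimate in (4), in particular the uniformity of the bounds in $Q$, to be the main obstacle. Everything else is bookkeeping of combinatorial constants.
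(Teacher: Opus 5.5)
Your proposal is correct and is essentially the paper's route: the paper gives no separate proof and asserts these items as direct consequences of (\ref{prop}), Definition \ref{DEFF}, (\ref{rep1}) and (\ref{Hv}), and your sector-wise master formula is exactly the computation left to the reader. There is one cosmetic slip in (4): since $Q\in\bR^{3(n-M)}$, the per-factor decay $(1+|Q|)^{-4}$ gives a product that is not integrable once $3(n-M)\geq 8$, so you should take an exponent larger than $3(n-M)/2$ for each factor, which the Schwartz decay of $\Psi$ and $\Psi'$ permits.
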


\section{Appendix: Proofs of technical propositions}\label{AAA}

\noindent {\bf Proof of Proposition \ref{PROP1}}. First statement and (a). We notice that, if operators  $:\spa\hat{\phi}^2\spa:[f] : \gS_0 \to \gF_s(\cH_m)$,
 $:\spa \hat{T}_{\mu\nu} \spa:[f] : \gS_0 \to \gF_s(\cH_m)$ exist such that (\ref{PP}) and (\ref{TT}) hold, i.e., they define the corresponding quadratic forms on $\gS_0$, then they must be unique, since $\gS_0$ is dense. 
We choose a Minkowskian reference frame, which we shall use henceforth.  We indicate by  $\hat{f} =\hat{f}(k^0, \vec{k})$  the Fourier transform (\ref{Fourier}) of $f\in \cS_\bR(\bR^4)$. We want to associate a corresponding operator with each addend in the decomposition (\ref{DEC}), whose sum amounts to $:\spa \hat{T}_{\mu\nu} \spa:[f]$. We start with the definition of an operator representing 
$$:\spa \hat{T}_{\mu\nu} \spa:[f]_{aa} = (\mbox{formally})=\int \left( \int e^{i (p+k)\cdot x} f(x) d^4x\right)
\frac{ a_{p}a_{k}}{2(2\pi)^3}  t_{\mu\nu}(p,-k)d^2\mu_m(p,k) $$ $$ = \int d^2\mu_m(p,k)\:\frac{ \hat{f}(-E(p)-E(k), -\vec{p}-\vec{k})}{4\pi }
 t_{\mu\nu}(p,-k) a_{p}a_{k} \:. $$
Notice that the smooth function $$\sV_{m,+}^2 \ni (p,k) \mapsto  h_{\mu\nu}(p,k) :=\frac{\hat{f}(E(p)+E(k), \vec{p}+\vec{k})}{4\pi} t_{\mu\nu}(p,-k)\in \bC$$ is a function in $\cS(\sV_{m,+}^2)$, since, for instance, as $\hat{f} \in \cS(\bR^4)$ and $t_{\mu\nu}(p,k)$ is polynomially bounded, $$|h_{\mu\nu}(p,k)| \leq C_N ( (\vec{p}+\vec{k})^2+(E(p)+E(k))^2)^{-N/2}
\leq  C'_N ( (E(p)+E(k))^2)^{-N/2}
\leq C''_N  (\vec{p}^2+\vec{k}^2)^{-N/2}$$ for every $N \in \bN$. Furthermore, the same argument 
 is valid for every derivative, because $\hat{f}\in \cS(\bR^4)$ and the derivatives of the smooth functions $E(p)$, $E(k)$, $t_{\mu\nu}(p,-k)$ are polynomially bounded.
Let us define $:\spa \hat{T}_{\mu\nu} \spa:[f]_{aa}: \gS_0 \to \gS_0$  as the linear extension of the map $\cH^{(n)} \cap \gS_0 \to \cH^{(n-2)} \cap \gS_0$, which vanishes if $n<2$,
 and where we use the notation $d^2\mu_m(p) := d^2\mu_m(p_1,p_2)$
\beq (:\spa \hat{T}_{\mu\nu} \spa:[f]_{aa} \Psi)({k}_1, \ldots , {k}_{n- 2})  := \sqrt{n(n-1)} \int \sp h_{\mu\nu}({p}_1,  {p}_2) \Psi({p}_1,  {p}_2,
{k}_{1}, \ldots , {k}_{n-2}) d^2\mu_m(p)\label{tag2}.\eeq
 This is a well-defined operator $\gS_0 \to \gS_0$ with the property that, if $f \in \cD(\bM)$, a direct use of Fubini--Tonelli theorem yields
$$\langle \Psi' | :\spa \hat{T}_{\mu\nu} \spa:[f]_{aa} \Psi \rangle =
\int \left.\left( \int e^{i (p+k)\cdot x} f(x) d^4x\right)\right|_{\sV_{m,+}^2}
\sp \frac{\langle \Psi'| a_{p}a_{k}\Psi\rangle}{2(2\pi)^3}  t_{\mu\nu}(p,-k)d^2\mu_m(p,k)\:.
 $$
Analogously, referring to the complex-conjugated function $\overline{h}_{\mu\nu} \in \cS(\sV_{m,+}^2)$ we can define
$:\spa \hat{T}_{\mu\nu} \spa:[f]_{a^\dagger   a^\dagger}: \gS_0 \to \gS_0$  as the linear extension of the map $\cH^{(n)}  \cap \gS_0\to \cH^{(n+2)} \cap \gS_0$
\beq ( :\spa \hat{T}_{\mu\nu} \spa:[f]_{a^\dagger a^\dagger} \Psi)({k}_1, \ldots , {k}_{n+2})  := \sqrt{(n+1)(n+2)} S_{n+2} \left( \overline{h_{\mu\nu}(k_1,k_2)} \Psi(
k_{3}, \ldots , k_{n+2}) \right).\eeq
$S_n : \cH^n\to \cH^{(n)} $ is the symmetrization projector.
 With this definition we have, as before, that if $f\in \cD(\bM)$
$$\langle \Psi' | :\spa \hat{T}_{\mu\nu} \spa:[f]_{a^\dagger a^\dagger} \Psi \rangle =
\int \left.\left( \int e^{-i (p+k)\cdot x} f(x) d^4x\right)\right|_{\sV_{m,+}^2}\sp
\frac{\langle \Psi'| a^\dagger_{p}a^\dagger_{k}\Psi\rangle}{2(2\pi)^3}  t_{\mu\nu}(p,-k)d^2\mu_m(p,k)\:.
 $$
We now have to construct the operators associated with the operators $a_p^\dagger a_k$ and $a^\dagger_ka_p$ in the quadratic form of the stress-energy tensor. Let us start by defining the smooth function $$\sV_{m,+}^2 \ni (p,k) \mapsto  s_{\mu\nu}^+(p,k) :=\frac{\hat{f}(E(p)-E(k), \vec{p}-\vec{k})}{4\pi} t_{\mu\nu}(p,k)\in \bC\:,$$ which, contrary to $h$, is polynomially bounded but not necessarily in $\cS(\sV_{m,+}^2)$, due to the minus sign in the temporal entry of $\hat{f}$. Nevertheless the map $(k_1,\ldots, k_n) \mapsto  \int_{\bR^{3}}\spa s^+_{\mu\nu}(k_1,p) \Psi(p,
{k}_{2}, \ldots , \vec{k}_{n}) d\mu_m(p)$ is in $\cS(\sV_{m,+}^l)$ as proved in Lemma \ref{bastard}  below.
The wanted operator is therefore the linear extension of the operator $\cH^{(n)} \cap \gS_0 \to \cH^{(n)} \cap \gS_0$ such that it vanishes for $n=0$ and 
\beq ( :\spa \hat{T}_{\mu\nu} \spa:[f]^+_{a^\dagger a} \Psi)({k}_1, \ldots , {k}_{n})  :=n S_n \left(\int_{\bR^{3}}\sp s^+_{\mu\nu}(k_1,p) \Psi(p,
{k}_{2}, \ldots ,{k}_{n}) d\mu_m(p)\right)\label{tag20} \:.\eeq
This is a well-defined operator $\gS_0 \to \gS_0$ with the property that
$$\langle \Psi' | :\spa \hat{T}_{\mu\nu} \spa:[f]^+_{a^\dagger a} \Psi \rangle =
\int \left.\left( \int e^{i (k-p)\cdot x} f(x) d^4x\right)\right|_{\sV_{m,+}^2}
\sp \frac{\langle \Psi'| a^\dagger_{p}a_{k}\Psi\rangle}{2(2\pi)^3}  t_{\mu\nu}(p,k)d^2\mu_m(p,k)\:.
 $$
With the same procedure, defining
$$\sV_{m,+}^2 \ni (p,k) \mapsto  s^-_{\mu\nu}(p,k) :=\frac{\hat{f}(E(k)-E(p), \vec{k}-\vec{p})}{4\pi} t_{\mu\nu}(p,k)\in \bC\:,$$
and
\beq ( :\spa \hat{T}_{\mu\nu} \spa:[f]^-_{a^\dagger a} \Psi)_{n}({k}_1, \ldots , {k}_{n})  :=n S_n \left( \int_{\bR^{3}}\sp s^-_{\mu\nu}(k_1,p) \Psi(p,
{k}_{2}, \ldots ,{k}_{n}) d\mu_m(p)\right)\label{tagp2} \eeq
we find
$$\langle \Psi' | :\spa \hat{T}_{\mu\nu} \spa:[f]^-_{a^\dagger a} \Psi \rangle =
\int \left.\left( \int e^{-i (k-p)\cdot x} f(x) d^4x\right)\right|_{\sV_{m,+}^2}
\sp \frac{\langle \Psi'| a^\dagger_{k}a_{p}\Psi\rangle}{2(2\pi)^3}  t_{\mu\nu}(p,k)d^2\mu_m(p,k)
 $$
$$=\langle \Psi' | :\spa \hat{T}_{\mu\nu} \spa:[f]^+_{a^\dagger a} \Psi \rangle\:.$$
The wanted operator, which also satisfies (a) by construction, is therefore 
\beq :\spa \hat{T}_{\mu\nu} \spa:[f] :=  :\spa \hat{T}_{\mu\nu} \spa:[f]_{aa}+   :\spa \hat{T}_{\mu\nu} \spa:[f]_{a^\dagger a^\dagger }
+ 2 :\spa \hat{T}_{\mu\nu} \spa:[f]^+_{a^\dagger a}\:.\label{DECTT}
\eeq
The case of $\phi^2$ is essentially identical: it is sufficient to remove the tensor $t_{\mu\nu}$ from each step of the above proof.
The final statement is obvious from the fact that the quadratic form of the stress-energy tensor is a smooth function of $x$ and that (\ref{PP}) and (\ref{TT}) hold.\\
(b)  The identities
$$
 \overline{\langle \Psi' |:\spa \hat{\phi}^2 \spa :(x) \Psi \rangle} =  \langle \Psi |:\spa \hat{\phi}^2 \spa :(x) \Psi' \rangle \:, \quad  \overline{\langle \Psi' |:\spa \hat{T}_{\mu\nu} \spa :(x) \Psi \rangle} =  \langle \Psi |:\spa \hat{T}_{\mu\nu} \spa :(x) \Psi' \rangle$$
and consequently
 easily follow from (\ref{DECPhi}) and (\ref{DEC}), taking (\ref{aggF}) into account. At this point, (\ref{TT}) proves that the associated operator satisfies $:\spa \hat{T}_{\mu\nu} \spa:[\overline{f}] \subset :\spa \hat{T}_{\mu\nu} \spa:[f]^\dagger $ since the domain of $ :\spa \hat{T}_{\mu\nu} \spa:[f]$ is dense and thus the operator $:\spa \hat{T}_{\mu\nu} \spa:[f]$ admits an adjoint. The case of $\phi^2$ is identical.
As an immediate consequence of what has been established, $:\spa \hat{T}_{\mu\nu} \spa:[f]$ and $:\spa \hat{\phi}^2 \spa:[f]$ are symmetric if $f$ is
 real. \\
(c) From (\ref{DEC}), since $t_{\mu\nu}$ is symmetric we have $\langle \Psi' |:\spa \hat{T}_{\mu\nu} \spa:(x) \Psi \rangle =\langle \Psi' |:\spa \hat{T}_{\nu\mu} \spa:(x) \Psi \rangle$ for every $\Psi,\Psi'\in \gS_0$ and $x\in \bM$.
At this point, the wanted identity follows from (\ref{TT}) using density of the domain.\\
(d) Starting from the decomposition (\ref{DEC}), taking (\ref{KOV}) into account, and using $IO(1,3)_+$-invariance of the scalar product and the measure on the mass shell, we find that it holds 
$\langle U^{-1}_g\Psi | :\sp \hat{T}_{\mu\nu}\sp:(x) U_g^\dagger \Psi'\rangle  ={(\Lambda^{-1})_\mu}^\alpha {(\Lambda^{-1})_\nu}^\beta \spa \langle \Psi |  :\sp \hat{T}_{\alpha\beta} \sp:(gx)\Psi' \rangle$ for $x\in \bM\:, \Psi,\Psi' \in \gS_0$.
At this point (\ref{TT}) gives (\ref{covT}) as a consequence of domain density and Poincar\'e invariance of the measure of $\bM$.\\
(e) It is sufficient to prove that
$\partial^\mu \langle \Psi' |:\spa \hat{T}_{\mu\nu} \spa:(x) \Psi \rangle =0$ for $x \in \bM, \Psi,\Psi'\in \gS_0$.
The thesis then immediately arises from (\ref{TT}) by integrating by parts, using the fact that the quadratic form is a bounded function in spacetime. Concerning the above identity, which is actually equivalent to the thesis, it is an immediate consequence of (\ref{DEC}) when passing the derivative under the symbol of integration (which is permitted by the rapid decay of the integrated functions and their derivatives, exploiting the dominated convergence theorem) and noticing that, on shell,
$(p+k)^\mu t_{\mu\nu}(p,-k) =0$ and $(p-k)^\mu t_{\mu\nu}(p,k)=0$. \\
(f) Essential self-adjointness of $:\spa\hat{\phi}^2\spa:[f]$ defined on $\gS_0$, with $f \in \cD(\bM)$ real 
arises (for the massive scalar boson) 
from  \cite{LangerholcSchroer1965}.
In Appendix 1 the authors  prove
that every vector in $\gS_0$ is an analytic
vector for $\alpha:\spa\hat{\phi}^2\spa:[f] + \beta \hat{\phi}[f]$, with  $\alpha,\beta \in \bC$. Hence, by Nelson's
analytic-vector theorem, $:\spa\hat{\phi}^2\spa:[f]$ is essentially self-adjoint on that
domain whenever $f$ is real.\\
Thm.5.2 in \cite{Ko13} proves that $:\spa\hat{T}_{ab} \spa:[f^{ab}]$ is essentially selfadjoint in every globally hyperbolic static spacetime using the unique ground state as reference state. Minkowski spacetime and Minkowski vacuum are  special cases.
$f^{ab}$ is any real valued smooth compactly supported tensor field. In our case $f^{ab}= f u^av^b$.
In \cite{Ko13}, essential selfadjointness is established in $\gG_0$ whereas the domain of our operators is $\gS_0$.
Since $\gG_0 \subset \gS_0$, the selfadjoint closure $\overline{:\spa\hat{T}_{ab} \spa:[f^{ab}] \rest_{\gG_0}}$ is  extended by  the closure $\overline{:\spa\hat{T}_{ab} \spa:[f^{ab}]}$ itself. The latter is symmetric, being the closure of a symmetric operator. Since selfadjoint operators are maximally symmetric, the two closures must coincide. In other words  $\overline{:\spa\hat{T}_{ab} \spa:[f^{ab}]}$ is selfadjoint, which means that the symmetric operator $:\spa\hat{T}_{ab} \spa:[f^{ab}]$ is essentially selfadjoint.\\
This completes the proof. \hfill $\Box$\\

\begin{lemma}\label{bastard}
Let $E(q):= \sqrt{\vec{q}^2+m^2}$ for a fixed constant $m>0$, consider $f \in \cS(\bR^4)$,
and define
$$h(\vec{k},\vec{p}) := \int_{\bR^4} e^{i(\vec{k}-\vec{p})\cdot \vec{x}}  e^{-i(E(k)-E(p))x^0} f(x^0, \vec{x}) d^4x\:, \quad 
T_{\mu\nu}(\vec{k},\vec{p}):= t_{\mu\nu}((E(k), \vec{k}),(E(p),\vec{p}))\:, $$ where 
$$t_{\mu\nu}(p,k) :=\frac{1}{2}[k_\mu p_\nu +  p_\mu k_\nu - g_{\mu\nu}(p\cdot k + m^2)]\:.$$
If $\psi=\psi(\vec{k}, K)$, with $K:=(\vec{k}_1, \ldots, \vec{k}_N )$, is a function in $\cS(\bR^{3(N+1)})$, $N\geq 0$, then the function
$$\phi(\vec{k},K) := \int_{\bR^3} h(\vec{k}, \vec{p}) T_{\mu\nu}(\vec{k}, \vec{p})  \psi(\vec{p}, K) \frac{d^3p}{E(p)}$$
belongs to $\cS(\bR^{3(N+1)})$.
\end{lemma}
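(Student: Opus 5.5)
The plan is to prove smoothness and rapid decay of $\phi$ by writing $h$ as a Fourier transform evaluated on the mass shell. Then integrate by parts in $\vec{x}$ to generate decay in $|\vec{k}-\vec{p}|$, and combine this with the Schwartz decay of $\psi$ in $\vec{p}$. Concretely, $h(\vec{k},\vec{p}) = (2\pi)^2\hat{f}(E(k)-E(p),\vec{k}-\vec{p})$ up to sign conventions, with $\hat f\in\cS(\bR^4)$. The key elementary fact is $|E(k)-E(p)|\leq |\vec{k}-\vec{p}|$ (since $E$ is $1$-Lipschitz). Hence for every $M$ there is $C_M$ such that
$$|h(\vec{k},\vec{p})|\leq C_M (1+|\vec{k}-\vec{p}|)^{-M}\:.$$
In particular $h$ decays in the spatial difference, although not jointly in $(\vec{k},\vec{p})$. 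This is exactly why, contrary to the $aa$ term in the proof of Proposition \ref{PROP1}, a separate argument is needed.

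\emph{Step 1 (decay).} I will bound $|T_{\mu\nu}(\vec{k},\vec{p})|/E(p)\leq C(1+|\vec{k}|)(1+|\vec{p}|)$, which is polynomially bounded. Then Peetre's inequality gives $(1+|\vec{k}|)^{L}\leq (1+|\vec{k}-\vec{p}|)^{L}(1+|\vec{p}|)^{L}$, and similarly $(1+|K|)^L$ is carried entirely by $\psi$. This yields
$$(1+|\vec{k}|+|K|)^{L}|\phi(\vec{k},K)|\leq C\int_{\bR^3}(1+|\vec{k}-\vec{p}|)^{L+1-M}(1+|\vec{p}|)^{L+1}(1+|K|)^{L}|\psi(\vec{p},K)|\,d^3p\:.$$
Choosing $M\geq L+1$, the right-hand side is finite and bounded uniformly, because $\psi$ is Schwartz.

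\emph{Step 2 (derivatives).} Derivatives in $K$ fall only on $\psi$ and are harmless. For a derivative $\partial_{\vec{k}}^\alpha$, I differentiate under the integral sign, justified by dominated convergence using the bounds of Step 1. The derivative falls on $h$, on $T_{\mu\nu}$, or on nothing else. Derivatives of $E(k)$ of any order are bounded by polynomials (indeed bounded for order $\geq1$). So $\partial^\beta_{\vec{k}} h(\vec{k},\vec{p})$ is a finite sum of terms of the form $(\partial^\gamma\hat f)(E(k)-E(p),\vec{k}-\vec{p})$ times polynomially bounded factors in $\vec{k}$. Each $\partial^\gamma\hat f$ satisfies the same difference-decay estimate as $\hat f$. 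Derivatives of $T_{\mu\nu}$ in $\vec{k}$ stay polynomially bounded. Repeating the Peetre argument of Step 1 for every term therefore bounds $\sup(1+|\vec{k}|+|K|)^L|\partial^\alpha_{\vec{k}}\partial^{\beta}_{K}\phi|$, which proves $\phi\in\cS(\bR^{3(N+1)})$.

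The main obstacle is Step 1's decay estimate for $h$. One must notice that the time argument $E(k)-E(p)$ does not spoil the decay: it is controlled by the spatial difference, so $\hat f$ is evaluated on a set where its decay in $|\vec{k}-\vec{p}|$ survives. One also has to keep track of the fact that growth in $\vec{k}$, coming both from $T_{\mu\nu}$ and from derivatives of $E(k)$, is transferred to $\vec{p}$ via Peetre's inequality and then absorbed by $\psi$. Everything else is routine dominated-convergence bookkeeping.
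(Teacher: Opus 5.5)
Your proof is correct. It uses the same mechanism as the paper's proof, but works on the momentum side where the paper works on the position side.

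The paper first uses Fubini to write $\phi$ as an absolutely convergent double integral over $(x,\vec p)$ with kernel $e^{i(\vec k-\vec p)\cdot\vec x}e^{-i(E(k)-E(p))x^0}$. It then proceeds in three steps:
\begin{itemize}
\item It absorbs $T_{\mu\nu}$ into new Schwartz functions $f'$, $\psi'$. It does the same with the factors $x^\gamma$ produced when $\partial_{\vec k}$ hits the exponentials.
\item It handles a weight $k^a$ exactly. It writes $k^a=(k^a-p^a)+p^a$ and uses $(k^a-p^a)e^{i(\vec k-\vec p)\cdot\vec x}=-i\partial_{x^a}e^{i(\vec k-\vec p)\cdot\vec x}$ to integrate by parts onto $f$. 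The leftover $p^a$ goes into $\psi$.
\item The final bound is the trivial one, $\sum_j c_j\int|f_j(x)\psi_j(\vec p,K)|\,d^4x\,d^3p$. The constants $c_j$ bound products of derivatives of $E(k)$ of order $\geq 1$.
\end{itemize}

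Your pointwise decay of $\hat f$ in $\vec k-\vec p$ is the Fourier-dual form of that integration by parts. Peetre's inequality plays the role of the algebraic split $\vec k=(\vec k-\vec p)+\vec p$.

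Your route gives explicit weighted sup bounds. It also avoids tracking how $T_{\mu\nu}$ is redistributed between $f'$ and $\psi'$, since you simply bound it polynomially. The paper's route needs no pointwise information on $\hat f$ at all, only $L^1$ integrability of $x^\gamma\partial^\delta f$ and of $\psi$.

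One minor point: the Lipschitz bound $|E(k)-E(p)|\leq|\vec k-\vec p|$ is not needed. Schwartz decay of $\hat f$ jointly in all four variables already gives $|\hat f(\omega,\vec q)|\leq C_M(1+|\vec q|)^{-M}$ uniformly in $\omega$. So the temporal argument could never have spoiled the decay, and the ``main obstacle'' you flag is not really one.
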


\begin{proof} It is easy to see that $\phi$ is smooth as a consequence of the mean value theorem and the dominated convergence theorem when computing the $\vec{k}$ and $K$ derivatives of any order and passing them under the symbol of integration, since these derivatives are locally uniformly bounded by integrable functions of the variable $\vec{p}$ only.   Furthermore, taking advantage of Fubini--Tonelli theorem,
$$\phi(\vec{k},K) =  \int_{\bR^3} \int_{\bR^4} e^{i(\vec{k}-\vec{p})\cdot \vec{x}}  e^{-i(E(k)-E(p))x^0} f(x^0, \vec{x})  {T}_{\mu\nu}(\vec{k}, \vec{p}) \frac{\psi(\vec{p}, K) }{E(p)}d^4x d^3p \:.$$
For any choice of the indices $\mu,\nu$, $\phi$ can be rewritten as 
\beq \label{espa}\phi(\vec{k},K) =  \int_{\bR^3} \int_{\bR^4} e^{i(\vec{k}-\vec{p})\cdot \vec{x}}  e^{-i(E(k)-E(p))x^0} f'(x^0, \vec{x}) \psi'(\vec{p}, K) d^4x d^3p\:,\label{114} \eeq
where the terms of $\hat{T}_{\mu\nu}$ have been embodied in the new Schwartz functions $f'$ and $\psi'$ by integrating by parts in the variables $\vec{x}$ and $x^0$. These functions depend on the choice of the indices $\mu,\nu$. The new function $\psi'$ also embodies the factor $E(p)^{-1}$.\\
At this point the thesis is equivalent to showing that 
 \beq |k^{\alpha} \partial^{|\alpha|}_{k^\alpha}  K^{\beta} \partial^{|\beta|}_{K^\beta} \phi(\vec{k},K)| \leq C_{\alpha\beta} \quad \mbox{for}\quad  (\vec{k},K) \in \bR^{3(N+1)}\label{115}\eeq
for every choice of the multiindices $\alpha, \beta$ and associated constants $ C_{\alpha\beta} <+\infty$. Concerning 
the operators of type  $K^\beta \partial^{|\beta|}_{K^\beta}$, passing the operator under the integral symbol,  their action is simply to change $\psi'$ to a different function in $\cS(\bR^{3(N+1)})$. Therefore we focus attention on the operators $k^{\alpha} \partial^{|\alpha|}_{k^\alpha} $. \\
The action of an operator $ \partial^{|\alpha|}_{k^\alpha} $ on $\phi$ produces a linear combination of functions as in (\ref{espa}) with the following changes in the integrand: (a) factors $x^\beta$,
(b) factors given by products of $k$-derivatives (of order $1$ at least) of the function $E(k)$.  The terms of type (a) can be accommodated into a new definition of the Schwartz function $f'$. We end up with a linear combination of functions 
$$Q(\vec{k})\int_{\bR^3} \int_{\bR^4} e^{i(\vec{k}-\vec{p})\cdot \vec{x}}  e^{-i(E(k)-E(p))x^0} f''(x^0, \vec{x}) \psi'(\vec{p}, K) d^4x d^3p$$
where $Q(\vec{k})$ is a polynomial in $k$-derivatives (of first order at least) of the function $E(k)$. These derivatives, and thus the polynomial itself, are bounded by some $c< +\infty$  as it is easy to prove by direct inspection. The action of the multiplicative operator $k^\alpha$ in the components of $\vec{k}$ can be worked out by integrating by parts, obtaining factors $x^\gamma$ which can be embodied in a new function $f''\in \cS(\bR^4)$ and factors $p^\omega$ which can be included in a new function $\psi''\in \cS(\bR^{3(N+1)})$.
In summary
$$k^{\alpha} \partial^{|\alpha|}_{k^\alpha}  K^{\beta} \partial^{|\beta|}_{K^\beta} \phi(\vec{k},K) =
\sum_{j=1}^J Q_j(\vec{k})\int_{\bR^3} \int_{\bR^4} e^{i(\vec{k}-\vec{p})\cdot \vec{x}}  e^{-i(E(k)-E(p))x^0} f_j(x^0, \vec{x}) \psi_j(\vec{p}, K) d^4xd^3p\:,$$
so that, if $| Q_j(\vec{k})| \leq c_j < +\infty$ as said above,
$$|k^{\alpha} \partial^{|\alpha|}_{k^\alpha}  K^{\beta} \partial^{|\beta|}_{K^\beta} \phi(\vec{k},K)| \leq \sum_{j=1}^J c_j
\int_{\bR^3} \int_{\bR^4} |f_j(x^0, \vec{x}) \psi_j(\vec{p}, K)| d^4xd^3p < +\infty$$
concluding the proof.
\end{proof}

\noindent {\bf Proof of Lemma \ref{lemmad}}. It is sufficient to prove that the matrix elements of the two sides of (\ref{mat})
computed on elements $\Psi_n \in \cH^{(n)} $, $\Psi_l \in \cH^{(l)} $ are identical. This identity amounts to proving that 
$\langle\Psi_n | [a(\kappa_m \overline{f}),a^\dagger (\kappa_m g)] \Psi_l \rangle = \langle\Omega | [a(\kappa_m \overline{f}), a^\dagger (\kappa_m g)]\Omega \rangle \langle\Psi_n |  \Psi_l \rangle$. This identity is true as an easy consequence of the definitions of $a(\psi)$ and $a^\dagger (\psi)$. The last identity is now obvious.
\hfill $\Box$\\

\noindent {\bf Proof of Proposition \ref{PROP2X}}.  According to Propositions \ref{PROP35} and \ref{PROP41}, the map $\cD(\bM) \ni g \mapsto \langle \Psi |:\spa \hat{\phi}[f]\hat{\phi}[g]\spa: \Psi' \rangle = \langle \hat{\phi}[\overline{f}]\Psi | \hat{\phi}[g] \Psi'\rangle -  \langle \hat{\phi}[\overline{f}]\Omega | \hat{\phi}[g] \Omega\rangle 
 \langle \Psi | \Psi'\rangle$ is continuous for every given $f\in \cD(\bM)$. 
With a similar argument,  $\langle \Psi |:\spa \hat{\phi}[f_n]\hat{\phi}[g]\spa: \Psi' \rangle \to 0$ if $f_n \to 0$ in $\cD(\bM)$ for every given $g\in \cD(\bM)$. 
Schwartz' kernel theorem implies that there exists a unique distribution in $\cD'(\bM\times \bM)$ which extends the map $\cD(\bM)\times \cD(\bM) \ni (f,g) \mapsto \langle \Psi| :\spa \hat{\phi}[f]\hat{\phi}[g]\spa: \Psi'\rangle$.
We denote its kernel by 
$\langle \Psi |:\spa \hat{\phi}(x)\hat{\phi}(y)\spa: \Psi' \rangle$. At this point, using the definition (\ref{PHI}), (\ref{mat}), and (\ref{PPRO}) one finds
that the bounded smooth $x$-$y$ symmetric function (as the quadratic forms $(p,k) \mapsto \langle \Psi' | a^\#_{p}a^\#_{k}\Psi \rangle$ are Schwartz functions on $\sV_{m,+}^2$)
$$K(x,y) := \int
\frac{ e^{i (p\cdot x +k\cdot y)}\langle \Psi' | a_{p}a_{k}\Psi\rangle}{2(2\pi)^3} d^2\mu_m(p,k)+ \int  \frac{ e^{-i (p\cdot x + k\cdot y) } \langle \Psi' | a^\dagger _{p}a^\dagger _{k}\Psi\rangle}{2(2\pi)^3} d^2\mu_m(p,k)$$ \beq + \int e^{i (k\cdot y-p\cdot x)} \frac{ \langle \Psi' | a^\dagger _{p}a_{k}\Psi \rangle}{2(2\pi)^3}d^2\mu_m(p,k)  +  \int e^{i (p\cdot x-k\cdot y)} \frac{ \langle \Psi' |a^\dagger _{k}a_{p}\Psi \rangle}{2(2\pi)^3}d^2\mu_m(p,k)\:,\label{DECPhdi}\eeq
satisfies 
$$\int f(x)g(y) K(x,y) d^4xd^4y =\langle \Psi |:\spa \hat{\phi}[f]\hat{\phi}[g]\spa: \Psi' \rangle\:. $$
By uniqueness, $\langle \Psi |:\spa \hat{\phi}(x)\hat{\phi}(y)\spa: \Psi' \rangle = K(x,y)$, which is a smooth bounded function, so that it can be smeared 
with a distribution with compact support such as $f(x) \delta(x,y)$ for $f\in \cD(\bM)$. At this point (\ref{ONE}) immediately arises by comparing the action of this distribution on $ K(x,y)$ and (\ref{DECPhi}). A straightforward modification of this argument yields (\ref{TWO}) from (\ref{DEC})
when using a distribution $D^*_{\mu,\nu}(x,y)f(x) \delta(x,y) =D_{\mu,\nu}(x,y)f(x) \delta(x,y)$. \hfill $\Box$\\

\noindent {\bf Proof of Proposition \ref{PROP4}}.  We start by proving the commutativity for 
two square fields $ :\spa\hat{\phi}^2\spa:[\overline{h}]$.
By direct use of (c3) in Proposition \ref{PROP35}, we have that 
$$[\hat{\phi}[h_1]\hat{\phi}[h_2], \hat{\phi}[h'_1]\hat{\phi}[h'_2]]=0$$ if $supp(h_i)\subset {\cal O}_1$ and  $supp(h'_i)\subset  {\cal O}_2$. Definition (\ref{NORP}) immediately implies that  also $$[:\spa \hat{\phi}[h_1]\hat{\phi}[h_2]\spa:, :\spa\hat{\phi}[h'_1]\hat{\phi}[h'_2]\spa:]=0$$ so that
$$\langle:\spa \hat{\phi}[\overline{h_2}]\hat{\phi}[\overline{h_1}]\spa: \Psi | :\spa\hat{\phi}[h'_1]\hat{\phi}[h'_2]\spa: \Psi' \rangle 
- \overline{\langle  :\spa\hat{\phi}[h_1]\hat{\phi}[h_2]\spa: \Psi' | :\spa \hat{\phi}[\overline{h'_2}]\hat{\phi}[\overline{h'_1}]\spa:  \Psi \rangle}=0$$
  Proposition \ref{PROP2X} finally entails
$$\int ( \langle:\spa \hat{\phi}[\overline{h_2}]\hat{\phi}[\overline{h_1}]\spa: \Psi | :\spa\hat{\phi}(x)\hat{\phi}(y)\spa: \Psi' \rangle
- \overline{\langle  :\spa\hat{\phi}[h_1]\hat{\phi}[h_2]\spa:  \Psi' | :\spa \hat{\phi}(y)\hat{\phi}(x)\spa: \Psi \rangle}) h_1'(x) h'_2(y) d^4xd^4y
 =0$$
A standard argument based on the continuity of the function before  $h_1'(x) h'_2(y)$ in the integrand, 
 arbitrariness of the functions $h'_1,h'_2 \in \cD( {\cal O}_2)$ and the product topology of  $ {\cal O}_2\times  {\cal O}_2$ yields
$$\langle:\spa \hat{\phi}[\overline{h_2}]\hat{\phi}[\overline{h_1}]\spa: \Psi | :\spa\hat{\phi}(x)\hat{\phi}(y)\spa: \Psi' \rangle
- \overline{\langle  :\spa\hat{\phi}[h_1]\hat{\phi}[h_2]\spa:  \Psi' | :\spa \hat{\phi}(y)\hat{\phi}(x)\spa: \Psi \rangle}
 =0\quad \mbox{if $(x,y) \in  {\cal O}_2\times  {\cal O}_2$.}$$ 
As the function on the left-hand side is smooth, smearing both sides with $h'(x)\delta(x,y)$ where $supp(h')\subset {\cal O}_2$ produces
$$\langle:\spa \hat{\phi}[\overline{h_1}]\hat{\phi}[\overline{h_2}]\spa: \Psi | :\spa\hat{\phi}^2\spa:[h'] \Psi' \rangle
-\overline{\langle:\spa \hat{\phi}[h_1]\hat{\phi}[h_2]\spa: \Psi | :\spa\hat{\phi}^2\spa:[\overline{h'}] \Psi' \rangle }
 =0\:.$$
Taking the complex conjugate and repeating the argument for the pair $h_1,h_2$, and taking $supp(h) \subset  {\cal O}_1$, we end up with 
$$\langle \Psi |[ :\spa\hat{\phi}^2\spa:[h] , :\spa\hat{\phi}^2\spa:[h'] ] \Psi' \rangle =0$$
 which is the thesis for $ :\spa\hat{\phi}^2\spa:$ because $\Psi$ ranges over a dense set. The same procedure applies to the proof of commutativity of $ :\spa\hat{\phi}^2\spa:$ and $:\spa \hat{T}_{\mu\nu}\spa:$ and to the case of two operators of type $:\spa \hat{T}_{\mu\nu}\spa:$, just by applying the operators $D_{\mu\nu}(x,y)$ and $D_{\alpha\beta}(x',y')$ at the obvious steps of the proof and taking (\ref{TWO}) into account. The cases involving operators $\hat{\phi}$ are simplified versions of the above proof.   At this point the main proof follows straightforwardly. \hfill $\Box$\\

\noindent {\bf Proof of Proposition \ref{PROPTTR}}.  The proofs of (a) and (c) are straightforward, by applying the relevant definitions and taking advantage of (\ref{TT}) and (e) in Proposition \ref{PROP1}.  The second bound in (b) will be discussed below. The first bound in (b) easily arises by integrating by parts
the explicit expression of ($x$-derivatives of) $\langle \Psi|  :\spa \hat{T}_{\mu\nu} \spa: [f_x]| \Psi' \rangle$, which reads $$ \int_{\bR^6}\sp\sp e^{i (\vec{p}+\vec{k})\cdot \vec{x}} e^{-i (E(p)+E(k))x^0}\hat{f}(-p-k)
\frac{ \langle \Psi | a_{p}a_{k}\Psi'\rangle}{4\pi }  t_{\mu\nu}(p,-k)\frac{d^3pd^3k}{E(p)E(k)} $$ $$ +\int_{\bR^6}\sp\sp  e^{-i (\vec{p}+\vec{k})\cdot \vec{x}} e^{i (E(p)+E(k))x^0} \hat{f}(p+k)\frac{  \langle \Psi | a^\dagger _{p}a^\dagger _{k}\Psi'\rangle}{4\pi}  t_{\mu\nu}(p,-k)\frac{d^3pd^3k}{E(p)E(k)}$$ $$ +\spa \int_{\bR^6} \sp \sp e^{i (\vec{k}-\vec{p})\cdot \vec{x}} e^{i (E(p)-E(k))x^0} \hat{f}(p-k)\frac{ \langle \Psi | a^\dagger _{p}a_{k}\Psi' \rangle}{4\pi} t_{\mu\nu}(p,k)\frac{d^3pd^3k}{E(p)E(k)} $$ \beq +\int_{\bR^6}\sp\sp e^{i (\vec{p}-\vec{k})\cdot \vec{x}} e^{i (E(k)-E(p))x^0}\hat{f}(k-p)\frac{ \langle \Psi |a^\dagger_{k}a_{p}\Psi' \rangle}{4\pi} t_{\mu\nu}(p,k)\frac{d^3pd^3k}{E(p)E(k)} \label{FINT}\:,\eeq
where $p \equiv (E(p), \vec{p})$, $k \equiv (E(k), \vec{k})$ with $E(p) = p^0= \sqrt{\vec{p}^2 +m^2}$ and $E(k) = k^0= \sqrt{\vec{k}^2 +m^2}$.
The integrated functions of $\vec{p}$ and $\vec{k}$ are of Schwartz type in $\bR^3\times \bR^3 \equiv \sV_{m,+}^2$ by construction.  Thus, in particular, one can pass the $x$-derivatives under the symbol of integration. Let us discuss the proof of (d). First of all we observe that the integrals on both sides are finite since the integrated functions are Schwartz, in view of the bound (\ref{BBBO}), which is valid in two Minkowskian coordinate systems comoving with the reference frames $u_\Sigma$ and $u_{\Sigma'}$ respectively.  Define $J^\nu :=  v_\mu\langle \Psi|  :\spa \hat{T}^{\mu\nu} \spa: [f_x] \Psi' \rangle$. We observe that $J$ is smooth and conserved, $\partial_\mu J^\mu(x)=0$, by (c).
Now consider the case where $\Sigma$ and $\Sigma'$ are parallel. In this case we can use a common system of Minkowskian coordinates, where $\Sigma$ is the plane $x^0=0$ and $\Sigma'$ is the analogous plane at $x^0=T>0$ (or {\em vice versa}).
If $C_r \subset \bR^4$ is the cylinder with axis $\vec{x}=0$ and basis $B_r:= \{\vec{x}\in \bR^3 \:|\: |\vec{x}|<r\}$ in $\Sigma$, we can use the divergence theorem for the compact cylinder $G_r \subset C_r$ bounded by $B_r$ and the analogous basis $B'_r$ in $\Sigma'$. 
We move on to compute the flux of $J$ across $\partial G_r$. This flux must be $0$ since $J$ is conserved, taking the divergence theorem into account. The flux of $J$ on the lateral surface of $G_r$ 
is bounded by $c T r^2$,
so that it tends to $0$ as $r\to +\infty$ due to (\ref{BBBO}). The outward fluxes on the bases respectively tend to 
 $-\int_\Sigma v^\mu  \langle \Psi|  :\spa \hat{T}_{\mu\nu} \spa: [f_x] \Psi' \rangle n^\nu_\Sigma d\Sigma$ and
$\int_{\Sigma'} v^\mu  \langle \Psi|  :\spa \hat{T}_{\mu\nu} \spa: [f_x] \Psi' \rangle n^\nu_{\Sigma'} d\mu_{\Sigma'}$ in view of the dominated convergence theorem.   This proves the thesis for the case under consideration.  It remains to consider the case where $\Sigma$ and $\Sigma'$ are not parallel and thus meet in a $2$-plane.
Let us arrange a Minkowskian reference frame with the origin contained in this intersection and $\Sigma$ described by the plane $x^0=0$ in $\bR^4$.
With a further spatial rotation of the coordinates if necessary, we can describe the $3$-plane $\Sigma'$ as the plane $x^0= \gamma x^1$ in $\bR^4$ where $\gamma \in (0,1)$, since $\Sigma'$ is spacelike. We assume that we deal with vectors $\Psi,\Psi' \in \gS_0$ such that their (finite smooth) components $\Psi^n, \Psi'^n$ have compact support in the corresponding spaces $\bR^{3n} \equiv \sV_{m,+}^n$. At this point, using the same argument as in the proof of Lemma 23 in  \cite{CDRM} for the integrals (\ref{FINT}), we have that the second bound in (b) is satisfied:
 $$|\langle \Psi|  :\spa \hat{T}_{\mu\nu} \spa: [f_x]| \Psi' \rangle | \leq \frac{C}{(1+ |\vec{x}|)^{3+\epsilon}}\quad \mbox{if $|\vec{x}|> |x^0|$, for some constants $\epsilon>0$, $C<+\infty$.}$$
We can now apply the divergence theorem to the cylindrical compact solid with a basis $B_r$ on $\Sigma$ as before, lateral surface normal to $\Sigma$, and a basis on $\Sigma'$ made of two parts (one above $\Sigma$ and the other below it). The area of the lateral surface is bounded by $c r^3$ for some constant $c>0$. Due to the found estimate, since this lateral surface stays in the region $|\vec{x}|> |x^0|$ because $0<\gamma <1$,  the flux of $J$ across the lateral surface is bounded by $cr^3\frac{C}{(1+ r)^{3+\epsilon}}$ which vanishes as $r\to +\infty$.
The remaining part of the flux, in the limit as $r\to +\infty$, yields (\ref{INTS}) for the compactly supported vectors $\Psi,\Psi' \in \gD_0$.
 To conclude the proof, we have to remove this compactness requirement. We observe that the space of smooth compactly supported functions $\cD(\bR^{3n})$ is dense in the Schwartz topology of $\cS(\bR^{3n})\equiv \cS(\sV_{m,+}^n)$. Therefore, $\Psi,\Psi'\in \gS_0$ admit sequences $\gD_0 \ni \Psi^n \to \Psi$, $ \gD_0 \ni \Psi'^n \to \Psi'$ (as $n\to +\infty$) such that the convergence, component by component, is in the relevant Schwartz topology.
As a consequence $$e^{i (E(p)+E(k))x^0} \hat{f}(p+k)\frac{  \langle \Psi_n | a^\dagger_{p}a^\dagger_{k}\Psi'_n\rangle}{2(2\pi)^3E(p)E(k)}  t_{\mu\nu}(p,-k)$$ $$ \to e^{i (E(p)+E(k))x^0} \hat{f}(p+k)\frac{  \langle \Psi | a^\dagger_{p}a^\dagger _{k}\Psi'\rangle}{2(2\pi)^3E(p)E(k)}  t_{\mu\nu}(p,-k) \quad \mbox{as $n\to +\infty$}$$
 in the Schwartz topology 
of functions of $(\vec{p},\vec{k}) \in \bR^6$,  and the same result is valid 
for the other similar terms in  (\ref{FINT}).
Using (\ref{FINT}), and the fact that the Fourier transform (of functions of the variables $\vec{u}:=\vec{k}\pm \vec{p}$ and $\vec{v}:=\vec{k}\mp \vec{p}$) is continuous with respect to the Schwartz topology,  we have that, as $n\to +\infty$,
$$\int_\Sigma n^\mu  \langle \Psi^n|  :\spa \hat{T}_{\mu\nu} \spa: [f_x] \Psi'^n \rangle n^\nu_\Sigma d\Sigma(x) \to \int_\Sigma n^\mu  \langle \Psi|  :\spa \hat{T}_{\mu\nu} \spa: [f_x] \Psi' \rangle n^\nu_\Sigma d\Sigma(x)\:,$$
and the same is valid for $\Sigma'$ when referring to a Minkowskian coordinate system adapted to it.  This fact concludes the proof of (\ref{INTS}) for the general case $\Psi,\Psi'\in \gS_0$. 
Concerning (e), the proof immediately arises from (\ref{bff22}), since $u_\mu u'_\nu \langle \Psi | :\spa \hat{T}^{\mu\nu} \spa: [f_x^2] \Psi \rangle
=$
$$
 u_\mu u'_\nu \langle U_{(I,x)}^{-1} \Psi | :\spa \hat{T}^{\mu\nu} \spa: [f^2]  U_{(I,x)}^{-1}\Psi \rangle
 \geq -  u_\mu u'_\nu  b_f^{\mu\nu} || U_{(I,x)}^{-1}\Psi||^2 = -  u_\mu u'_\nu  b_f^{\mu\nu} ||\Psi||^2 \:.
$$
This completes the proof.
\hfill $\Box$\\

\noindent {\bf Proof of Proposition \ref{LLAST}}.
First of all, we observe that
$ \frac{1}{\sqrt{H^u_{\epsilon}}} :\spa \hat{T}_{\mu\nu} \spa: [f^2_x]   \frac{1}{\sqrt{H^u_{\epsilon}}} \Psi$ is well defined if $\Psi \in \gS_0$ because the bounded operators
$\frac{1}{\sqrt{H^u +\epsilon I}}$ are everywhere defined and leave $\gS_0$ invariant ((a) Lemma \ref{lemmaH}).
Further comments on the proof are in order.
Existence of $\sA_{f,\epsilon}^u(\Sigma)$ and (b) are trivial consequences of Proposition \ref{TEO54}, just referring to state vectors of the form 
$ \frac{1}{\sqrt{H^u + \epsilon I}} \Psi $ instead of $\Psi \in \gS_0$, remembering that $ \frac{1}{\sqrt{H^u + \epsilon I}}$ leaves $\gS_0$ invariant. Given that, if  $\Delta\in \cB_0(\Sigma)$ and
$|\Delta|:= +\infty$,
we define  $\sA_{f,\epsilon}^u(\Delta) :=\sA_{f,\epsilon}(\Sigma)- \sA_{f,\epsilon}^u(\Sigma \setminus \Delta)$ provided $\sA_{f,\epsilon}^u(E)$ exists when $|E|<+\infty$, as we shall prove below. Notice that with this definition (\ref{FOND1}) is satisfied in all cases.  
In summary, it is sufficient to prove that $\sA_{f,\epsilon}^u(\Delta)$ does exist when $\Delta$ has finite measure.
Uniqueness follows from  (\ref{FOND1}): $\sA_{f,\epsilon}^u(\Delta)$ is defined on a dense domain. Since it is bounded, it is unique.
 Regarding (d), additivity next follows from (\ref{FOND1}) directly, and weak $\sigma$-additivity then easily follows from the $\sigma$-additivity
of the right-hand side of (\ref{FOND1}). This, in turn, is an obvious consequence of the fact that $\Sigma \ni \vec{x}\mapsto |\langle \Psi|  :\spa \hat{T}_{\mu\nu} \spa: [f^2_x]| \Psi' \rangle | $
is integrable in view of (b)(1) of Proposition \ref{PROPTTR}.
Concerning (a), we observe that $\sA_{f,\epsilon}^u(\Delta)$, if any, is selfadjoint because it is bounded and Hermitian on a dense domain ($\gS_0$). Hermiticity on $\gS_0$ follows from 
(\ref{FOND1}) since the quadratic form in the integral is real for $\Psi=\Psi'$, so that the integral and $\langle \Psi| \sA_{f,\epsilon}^u(\Delta) \Psi\rangle$ itself are real as well. An elementary polarization argument implies that $\sA_{f,\epsilon}^u(\Delta)$ is Hermitian thereon. We shall prove (a1) later.
(c) is a consequence of (d) of Proposition \ref{PROP1}, (\ref{FOND1}), and continuity arguments if $|\Delta|<+\infty$, and it directly follows from (\ref{NORMa}), additivity, the definition of $H^u$ in the remaining 
cases, and elementary spectral calculus.

To go on with the rest of the proof of existence for the case $|\Delta|<+\infty$ we shall make use of a couple of lemmata whose proofs appear below in this appendix.\\

\begin{lemma}\label{LEMMAR} Consider a map $K: \cD_1 \times \cD_2 \to \bC$ which is linear in the right argument and antilinear in the left one, where $\cD_i \subset \cH_i$ is a dense subspace in a complex Hilbert space $\cH_i$, for $i=1,2$. Assume that $|K(\psi_1,\psi_2)| \leq C||\psi_2||_2 ||\psi_1||_1$ for some $C< +\infty$ and every pair  $(\psi_1,\psi_2) \in   \cD_1\times \cD_2$. Then there exists a unique bounded operator $A_K : \cH_2 \to \cH_1$ such that $\langle \psi_1 |A_K \psi_2 \rangle_1 = K(\psi_1,\psi_2)$ for every pair $(\psi_1,\psi_2) \in   \cD_1\times \cD_2$. Finally $||A_K|| \leq C$.
\end{lemma}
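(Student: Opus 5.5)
This is the bounded sesquilinear form version of the Riesz representation theorem, so the plan is to first extend $K$ by continuity to all of $\cH_1\times\cH_2$ and then represent it by an operator.

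\emph{Extension.} Take $(\psi_1,\psi_2)\in\cH_1\times\cH_2$ and choose sequences $\cD_1\ni\psi_1^{(n)}\to\psi_1$ and $\cD_2\ni\psi_2^{(n)}\to\psi_2$. By sesquilinearity,
$$K(\psi_1^{(n)},\psi_2^{(n)})-K(\psi_1^{(k)},\psi_2^{(k)})=K(\psi_1^{(n)}-\psi_1^{(k)},\psi_2^{(n)})+K(\psi_1^{(k)},\psi_2^{(n)}-\psi_2^{(k)}).$$
The assumed bound then shows this is a Cauchy sequence, because the norms of the approximating sequences are uniformly bounded. An interlacing argument shows the limit does not depend on the chosen sequences. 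The resulting map $\tilde K$ is sesquilinear, satisfies $|\tilde K(\psi_1,\psi_2)|\le C\|\psi_1\|_1\|\psi_2\|_2$ by passing to the limit, and agrees with $K$ on $\cD_1\times\cD_2$.

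\emph{Representation.} Fix $\psi_2\in\cH_2$. The map $\psi_1\mapsto\overline{\tilde K(\psi_1,\psi_2)}$ is a bounded linear functional on $\cH_1$ with norm at most $C\|\psi_2\|_2$. The Riesz lemma gives a unique vector, which we call $A_K\psi_2\in\cH_1$, such that $\langle\psi_1|A_K\psi_2\rangle_1=\tilde K(\psi_1,\psi_2)$ for every $\psi_1$. Linearity of $A_K$ follows from linearity of $\tilde K$ in the right argument together with uniqueness in the Riesz lemma. Setting $\psi_1=A_K\psi_2$ gives $\|A_K\psi_2\|_1^2\le C\|A_K\psi_2\|_1\|\psi_2\|_2$, hence $\|A_K\|\le C$.

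\emph{Uniqueness.} Suppose a bounded $B$ also satisfies the defining identity on $\cD_1\times\cD_2$. Then $\langle\psi_1|(A_K-B)\psi_2\rangle_1=0$ on that product. Both sides are jointly continuous, so the identity extends to all of $\cH_1\times\cH_2$, which forces $A_K=B$.

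The only point needing any care is that the extension is well defined, and this is handled by the uniform boundedness of the approximating sequences.
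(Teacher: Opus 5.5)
Your proof is correct and follows essentially the paper's route: a density extension plus the Riesz lemma, with $\|A_K\|\le C$ read off from the form bound. The only difference is the order of the steps. You first extend the sesquilinear form jointly to $\cH_1\times\cH_2$ and then apply Riesz. The paper extends only in the first argument for each fixed $\psi_2\in\cD_2$, applies Riesz there, and then extends the resulting bounded operator from $\cD_2$ to $\cH_2$ by linearity and continuity.
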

\begin{proof} See below.
\end{proof}
\begin{lemma} \label{LEMMAS}
Suppose that the direct Hilbert decomposition holds $\cH = \oplus_{j\in J} \cH_j$ for a complex Hilbert space and mutually orthogonal closed subspaces $\cH_j$. Assume that there are bounded operators $A_j : \cH_j \to \cH$ for $j\in J$ such that $Ran(A_j) \perp Ran(A_k)$ if $j\neq k$ and 
$\sup_{j \in J}||A_j|| <+\infty$. Then there exists a unique $A \in \gB(\cH)$ such that $A\spa\rest_{\cH_j}= A_j$ for $j\in J$. Furthermore 
$||A||= \sup_{j \in J}||A_j||$. If $J$ is countable, then $A\psi = \lim_{N\to +\infty} \sum_{j\leq N} A_j P_j\psi$, where $P_j: \cH \to \cH$ is the orthogonal projector onto $\cH_j$ and $\psi \in \cH$.
\end{lemma}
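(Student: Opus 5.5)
We need to prove Lemma \ref{LEMMAS}: existence of $A$ defined by direct sum of $A_j$ with orthogonal ranges, with $\|A\|=\sup_j\|A_j\|$, and the limit formula.

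The plan is to define $A$ first on the algebraic direct sum and then extend by continuity. Let $\cD_0\subset\cH$ be the subspace of vectors $\psi=\sum_{j\in F}\psi_j$ with $F\subset J$ finite and $\psi_j\in\cH_j$. On $\cD_0$ we set $A\psi:=\sum_{j\in F}A_j\psi_j$, which is well defined and linear because the decomposition of $\psi$ into components $\psi_j=P_j\psi$ is unique. Since the ranges $Ran(A_j)$ are mutually orthogonal, Pythagoras' theorem gives
$$\|A\psi\|^2=\sum_{j\in F}\|A_j\psi_j\|^2\le \Big(\sup_{j\in J}\|A_j\|\Big)^2\sum_{j\in F}\|\psi_j\|^2 = \Big(\sup_{j\in J}\|A_j\|\Big)^2\|\psi\|^2 ,$$
where the last equality uses the orthogonality of the $\cH_j$. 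Hence $A$ is bounded on $\cD_0$ with norm at most $M:=\sup_j\|A_j\|$. Since $\cD_0$ is dense in $\cH=\oplus_j\cH_j$ (this is the definition of the Hilbert direct sum), $A$ extends uniquely to an operator in $\gB(\cH)$ with $\|A\|\le M$. By construction $A\rest_{\cH_j}=A_j$.

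For the reverse inequality, $\|A\|\ge\|A\rest_{\cH_j}\|=\|A_j\|$ for every $j$, so $\|A\|\ge M$, and therefore $\|A\|=M$. Uniqueness follows because any bounded $A'$ with $A'\rest_{\cH_j}=A_j$ agrees with $A$ on $\cD_0$ by linearity, and hence everywhere by continuity and density.

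For countable $J$, fix $\psi\in\cH$. Then $\sum_{j\le N}P_j\psi\to\psi$ in $\cH$, so by continuity
$$A\psi=\lim_{N\to+\infty} A\sum_{j\le N}P_j\psi=\lim_{N\to+\infty}\sum_{j\le N}A_jP_j\psi ,$$
where the last step uses linearity of $A$ together with $A\rest_{\cH_j}=A_j$.

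No step is a real obstacle. The only point requiring care is the norm estimate, and it relies on both orthogonality hypotheses: the orthogonality of the ranges, which gives equality in the first step, and the orthogonality of the subspaces $\cH_j$, which gives the last equality.
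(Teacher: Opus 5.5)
Your proof is correct and follows essentially the same route as the paper. You define $A$ on the algebraic direct sum and bound it using the orthogonality of the ranges and of the $\cH_j$, then extend by continuity and obtain the reverse norm inequality from $\|A_j\|=\|A|_{\cH_j}\|\le\|A\|$; the only cosmetic difference is in the countable case, where you use continuity of $A$ on the partial sums $\sum_{j\le N}P_j\psi\to\psi$ instead of the paper's explicit tail estimate $\|A\psi-A^{(N)}\psi\|^2=\sum_{j>N}\|A_j\psi_j\|^2$.
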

\begin{proof} See below.
\end{proof}
\noindent Let us move on to prove the main thesis. Referring to a Minkowskian coordinate system $x^0,x^1,x^2,x^3$ where $\Sigma$ corresponds to $x^0=0$, we decompose the quadratic form
\beq\int_{\Delta}\sp u^\mu \left\langle \frac{1}{\sqrt{H^u + \epsilon I}} \Psi \left|  :\spa \hat{T}_{\mu\nu} \spa: [f^2_x] \right.  \frac{1}{\sqrt{H^u + \epsilon I}} \Psi' \right \rangle u^\nu_\Sigma \:\: d\Sigma(x) = \sum_{i=1}^3 T^\Delta_{\epsilon,i}(\Psi,\Psi')\label{expand}\eeq
according to the decomposition (\ref{FINT}), where  we collected together the last two terms 
and 
we omitted the indices $f,u$ which are supposed to be given. The quadratic forms $T^\Delta_{\epsilon,i}(\Psi,\Psi')$ are well defined for the generic case $\Delta \in \cB(\Sigma)$, not only for $\Delta \in \cB_0(\Sigma)$. Furthermore, they are also well defined if $\epsilon = 0$ provided  $\Psi,\Psi' \in \cH^{(0)\perp}$, and interpreting 
$\frac{1}{\sqrt{H^u}}$ in the spectral sense in $\cH^{(0)\perp} $ and as the zero operator in $\cH^{(0)} $.
At least in the case $\Delta \in \cB_0(\Sigma)$ and $\epsilon>0$, we expect to find a decomposition corresponding to (\ref{expand})  of the  operator $\sA_{f,\epsilon}^u(\Delta)$ we are looking for,  where, again, we omit the indices $f,u$ which are supposed to be given,
\beq
\sA_{f,\epsilon}^u(\Delta) = \sum_{i=1}^3 \sA_{\epsilon, i}(\Delta)\:.\label{expand2}
\eeq
We start by focusing on the case $i=3$ where, as we shall see, the operator $\sA_{\epsilon,3}(\Delta)$ turns out to be defined also for $\epsilon=0$ and $\Delta \in \cB(\Sigma)$, possibly unbounded with unbounded complement.
Now $T^\Delta_{\epsilon, 3}(\Psi,\Psi'):= T^\Delta_{\epsilon, 3}(\Psi,\Psi')_+ + T^\Delta_{\epsilon,3}(\Psi,\Psi')_-$ where 
$$ T^\Delta_{\epsilon,3}(\Psi,\Psi')_+:=  \int_\Delta \int_{\bR^6}\sp  e^{i (\vec{k}-\vec{p})\cdot \vec{x}}  \widehat{f^2}(p-k)\frac{ \langle W^\epsilon_u\Psi | a^\dagger_{p}a_{k}W^\epsilon_u\Psi' \rangle}{4\pi} u^\mu t_{\mu 0}(p,k)\frac{d^3pd^3k}{E(p)E(k)} d^3 x\:,$$ \beq  T^\Delta_{\epsilon, 3}(\Psi,\Psi')_- :=  \int_\Delta\int_{\bR^6} 
\sp\sp e^{i (\vec{p}-\vec{k})\cdot \vec{x}}\hat{f^2}(-p+k)\frac{ \langle W^\epsilon_u\Psi |a^\dagger_{k}a_{p}W^\epsilon_u\Psi' \rangle}{4\pi}u^\mu t_{\mu0}(p,k)\frac{d^3pd^3k}{E(p)E(k)} d^3x \nonumber\:,\eeq
and where $W^\epsilon_u := (H^u +  \epsilon I)^{-1/2} \in \gB(\gF_s(\cH_m))$.  Evidently $T^\Delta_{\epsilon,3}(\Psi,\Psi')_+= T^\Delta_{\epsilon,3}(\Psi,\Psi')_-$.
The quadratic form vanishes unless both vectors belong to the same subspace $\Psi,\Psi' \in \cH^{(n)}\cap \gS_0$. We henceforth restrict ourselves to the study of the quadratic form on $(\cH ^{(n)}\cap \gS_0)\times (\cH^{(n)}\cap \gS_0)$ and we want to prove that it is induced by an operator $\sA_{\epsilon, 3,n}(\Delta)\in \gB(\cH^{(n)})$, taking advantage of Lemma \ref{LEMMAR}. If $n=0$ there is nothing to prove because 
the quadratic form vanishes: $\sA_{\epsilon, 3,0}(\Delta)=0$. We therefore assume $n>0$ so that $\Psi = \Psi(p,Q)$ and $\Psi' = \Psi'(k,Q)$ are functions in $\cS(\sV_{m,+}^n)$ and $Q:= (q^2,\ldots, q^{n})$.   $T^\Delta_3(\Psi,\Psi')$ can be rearranged to 
$$\frac{n}{2\pi} \int_\Delta \int_{\bR^6}  \int_{\bR^{3(n-1)}} \sp  e^{i (\vec{k}-\vec{p})\cdot \vec{x}}  \widehat{f^2}(p-k)\frac{\overline{\Psi(p, Q)} \Psi'(k,Q)}{\sqrt{{E^n_{\epsilon, u}(p,Q)} E^n_{\epsilon,u}(k,Q)}} u^\mu t_{\mu 0}(p,k)\frac{d^3pd^3k d^{3(n-1)}q}{E(p)E(k)\prod_i E(q_i)} d^3 x$$
where, for $\epsilon\geq 0$,
$$E^n_{\epsilon,u}(p,Q) := -u\cdot \left(p + \sum_{r=2}^n q_r\right) +  \epsilon \:.$$
Using the Hilbert space isomorphism $$ J_n : L^2(\sV^{n}_{m,+}, d^n\mu_m) \ni \Psi (p_1,\ldots p_n) \mapsto 
 \Phi (\vec{p}_1,\ldots \vec{p}_n)$$ \beq  := \frac{\Psi (p_1,\ldots p_n)}{ \sqrt{E(p_1) \cdots E(p_n)} } \in L^2(\bR^{3n}, d^{3n}p) \label{UM}\eeq
and swapping two integrations of Schwartz functions, the written integral can be rephrased as (where $\vec{Q}:= (\vec{q}_2, \ldots, \vec{q}_n)$)  $T^\Delta_{\epsilon,3}(\Psi,\Psi')= S_{\epsilon,3}(\Phi,\Phi')$
\beq  :=\frac{n}{2\pi} \int_\Delta  \int_{\bR^{3(n-1)}}\sp  \int_{\bR^6} e^{i (\vec{k}-\vec{p})\cdot \vec{x}}  \widehat{f^2}(p-k)\frac{\overline{\Phi(\vec{p}, \vec{Q})} \Phi'(\vec{k},\vec{Q})}{\sqrt{{E^n_{\epsilon,u}(p,Q)} E^n_{\epsilon,u}(k,Q)}} u^\mu t_{\mu 0}(p,k)\frac{d^3pd^3k d^{3(n-1)}q}{\sqrt{E(p)E(k)}} d^3 x\:. \label{NEW}\eeq
Finally, the convolution theorem and the fact that $f$ is real imply that 
$$(2\pi)^2 \widehat{f^2}(p-k) = \int_{\bR^4} \hat{f}(p-k -u) \hat{f}(u) d^4 u= \int_{\bR^4} \hat{f}(p-u) \hat{f}(u-k) d^4 u=
\int_{\bR^4}\overline{\hat{f}(u-p)}\hat{f}(u-k) d^4u\:.$$
According to this result, if $\Psi,\Psi' \in \cH^{(n)} \cap \gS_0$ and $\Phi,\Phi'$ are respectively associated by means of the unitary map (\ref{UM}), then $T^\Delta_{\epsilon,3}(\Psi,\Psi')= S_{\epsilon, 3}(\Phi,\Phi')$ can be rearranged to
\beq T^\Delta_{\epsilon,3}(\Psi,\Psi') = n\int_\Delta \sp\int_{\bR^{3(n-1)}} \int_{\bR^6} \int_{\bR^4} \sp \frac{\overline{\hat{f}(u-p)\Phi(\vec{p}, \vec{Q})} e^{i(\vec{k}-\vec{p})\cdot \vec{x}} \hat{f}(u-k)\Phi'(\vec{k},\vec{Q})}{(2\pi)^3\sqrt{{E^n_{\epsilon,u}(p,Q)} E^n_{\epsilon,u}(k,Q)}}\nonumber \\
\times \frac{u^\mu t_{\mu 0}(p,k)}{\sqrt{E(p)E(k)}}d^4u d^3pd^3k d^{3(n-1)}q d^3 x\:.\label{OST}\eeq
We now consider the reduced quadratic form
$$S'_3(\Psi,\Psi') := n\int_\Delta \sp  \int_{\bR^{3(n-1)}}  \int_{\bR^6}\int_{\bR^4} \sp \frac{\overline{\hat{f}(u-p)\Phi(\vec{p}, \vec{Q})} e^{i(\vec{k}-\vec{p})\cdot \vec{x}} \hat{f}(u-k)\Phi'(\vec{k},\vec{Q})}{(2\pi)^3}d^4u d^3pd^3k d^{3(n-1)}q d^3 x\:.$$
We observe that the internal integrals in $d^4u$ and $d^3pd^3k$ can be swapped by Fubini's theorem because
$$ \int_{\bR^6}\int_{\bR^4} \sp |\hat{f}(u-p)\Phi(\vec{p}, \vec{Q}) \hat{f}(u-k)\Phi'(\vec{k},\vec{Q})|d^4u d^3pd^3k $$
$$=  \int_{\bR^6}  |\Phi(\vec{p}, \vec{Q})\Phi'(\vec{k},\vec{Q})|\int_{\bR^4} \sp |\hat{f}(u+k-p) \hat{f}(u)| d^4u d^3pd^3k$$
$$\leq (\sup |\hat{f}|) \int_{\bR^6}  |\Phi(\vec{p}, \vec{Q})\Phi'(\vec{k},\vec{Q})| d^3pd^3k  \int_{\bR^4} \sp | \hat{f}(u)| d^4u  <+\infty\:.$$
The final formula for $S'_3(\Psi,\Psi')$ is
$$  n\int_\Delta \sp  \int_{\bR^{3(n-1)}} \int_{\bR^4} \int_{\bR^6} \sp \frac{\overline{\hat{f}(u-p)\Phi(\vec{p}, \vec{Q})} e^{i(\vec{k}-\vec{p})\cdot \vec{x}} \hat{f}(u-k)\Phi'(\vec{k},\vec{Q})}{(2\pi)^3} d^3pd^3k d^4u d^{3(n-1)}q d^3 x\:.$$
We can interpret this formula in the Hilbert space $L^2(\bR^3 \times \bR^4 \times \bR^{3(n-1)}, d^3p  d^4ud^{3(n-1)}q)$ as follows. 
If defining the function $F_\Phi := F_\Phi(\vec{p}, u, \vec{Q}) := \hat{f}(u^0 - E(p),\vec{u}-\vec{p})\Phi(\vec{p}, \vec{Q})$, we have that, for $s>0$
$$\int |F_\Phi|^s d^4ud^3pd^{3(n-1)}q= \int  |\hat{f}(u^0 - E(p),\vec{u}-\vec{p})|^s|\Phi(\vec{p}, \vec{Q})|^s d^3p d^4ud^{3(n-1)}q$$
$$= \int \left( \int  |\hat{f}(u^0 - E(p),\vec{u}-\vec{p})|^s d^4u\right) |\Phi(\vec{p}, \vec{Q})|^s d^3pd^{3(n-1)}q $$
\beq =  \left( \int  |\hat{f}(u^0 ,\vec{u})|^s d^4u\right) \int  |\Phi(\vec{p}, \vec{Q})|^s d^3pd^{3(n-1)}q < +\infty \:.\label{stim}\eeq
First of all, (\ref{stim}) implies that $F_\Phi, F_{\Phi'} \in L^s(\bR^3 \times \bR^4 \times \bR^{3(n-1)}, d^3p  d^4ud^{3(n-1)}q)$ if $s> 0$ and, in particular, for $s=2$, $||F_\Phi||^2 = ||\hat{f}||^2 ||\Phi||^2 $, where the norms are those of the relevant $L^2$-Hilbert spaces.
At this point, consider the PVM $$\cB(\bR^3)\ni \Delta \mapsto P(\Delta) \otimes I \otimes I \in \gB(L^2(\bR^3 \times \bR^4 \times \bR^{3(n-1)}, d^3p  d^4ud^{3(n-1)}q))$$
where $P(\Delta)$ is the joint PVM of the three standard position operators in $L^2(\bR^3, d^3x)$ in momentum representation (see, e.g., \cite{Moretti2}). Notice that this reasoning is valid even if $|\Delta| =+\infty$.
 Using the fact that the Fourier-Plancherel unitary operator in $L^2$ is the standard Fourier transform in $L^1$ and 
$\bR^3 \ni \vec{p} \mapsto F_\Phi(\vec{p}, u,\vec{Q})$ is Schwartz for every choice of $(u,\vec{Q})$,
$$S'_3(\Phi,\Phi')  =
 n \int_\Delta d^3x \sp  \int_{\bR^{3(n-1)}} \sp\sp\sp \sp  d^{3(n-1)}q \int_{\bR^4} d^4u \overline{\int_{\bR^3} \sp \frac{e^{i\vec{p}\cdot \vec{x}}}{(2\pi)^{3/2}}F_\Phi(\vec{p}, u,\vec{Q})d^3p}  
\int_{\bR^3} \sp \frac{e^{i\vec{k}\cdot \vec{x}}}{(2\pi)^{3/2}}F_{\Phi'}(\vec{k}, u,\vec{Q})d^3k \:.$$
$$ = n \langle F_\Phi| (P(\Delta) \otimes I\otimes I)  F_{\Phi'} \rangle\:.$$
 As a consequence,  
$$|S'_3(\Phi,\Phi')| \leq n || P(\Delta) \otimes I\otimes I||\, ||F_{\Phi}||\, ||F_{\Phi'}|| \leq n \left(\int_{\bM} f^2 d^4x\right) ||\Phi||\, ||\Phi'|| $$
where we also used the Plancherel theorem.
To conclude, we observe that if $\Psi,\Psi'\in \gS_0\cap \cH^{(n)} $,
$$T^\Delta_{\epsilon, 3}(\Psi,\Psi') = S_{\epsilon, 3}(\Phi,\Phi')= \sum_{a=0}^N S'_3(M_{\epsilon, a}\Phi, M'_{\epsilon, a}\Phi')$$
where, for some (of first order at most and $n$-independent) polynomials $P_a(p)$ in the components of $p\equiv (E(p), \vec{p})$, constructed by decomposing $u^\mu t_{\mu\nu}(p,k)u_\Sigma^\nu$,
$$(M_{\epsilon,a} \Phi)(\vec{p}, \vec{Q}) =\frac{P_a(p)}{\sqrt{ E^n_{\epsilon, u}(p,Q) E(p)}} \Phi(\vec{p}, \vec{Q})\:.$$
It holds
$$||M_{\epsilon,a}|| = \sup_{(\vec{p}, \vec{Q})\in \bR^{3n}} \left| \frac{P_a(p)}{\sqrt{ E^n_{\epsilon,u}(p,Q) E(p)}} \right| \leq n^{-1/2} K_a\:,$$
for some constant $ K_a\geq 0$ independent of $\epsilon$ and $n$ (but depending on $u, u_\Sigma$).
Indeed, change coordinates in order that the temporal axis coincides with $u$. Observe that the components of $p$ are however bounded by $p^0$ (referred to the new basis), and  $(u_\Sigma^0 - \frac{\vec{p}\cdot\vec{u}_\Sigma}{p^0})^2$ is bounded below when $p$ varies in $\sV_{m,+}$ and $u_\Sigma \in \sT_+$ is given\footnote{In fact, $(u_\Sigma^0 - \frac{\vec{p}\cdot\vec{u}_\Sigma}{p^0})^2= \left(\cosh\chi - \frac{\vec{p} \cdot \vec{e}}{\sqrt{\vec{p}^2 + m^2}} \sinh \chi \right)^2$ where $u_\Sigma \equiv (\cosh \chi, \sinh \chi \vec{e})$ for $\chi \geq 0$ given and some unit $\vec{e}$ $g$-normal to $u\equiv (1,\vec{0})$ and where $\vec{p}\in \bR^3$. Varying $\vec{p}$, the function is bounded below by $\cosh \chi -\sinh \chi >0$.}. Putting together these facts we have
$$ \left| \frac{P_a(p)}{\sqrt{ E^n_{\epsilon, u}(p,Q) E(p)}} \right| ^4 \leq  \frac{(p^0)^4 C_a}{(p^0 + q^0_2 +\cdots + q^0_n)^2}
\frac{1}{(u_\Sigma^0p^0 - \vec{p}\cdot\vec{u}_\Sigma)^2} $$ $$\leq 
 \frac{(p^0)^2 C_a}{(p^0 + q^0_2 +\cdots + q^0_n)^2} 
\frac{1}{(u_\Sigma^0 - \frac{\vec{p}\cdot\vec{u}_\Sigma}{p^0})^2} \leq \frac{(p^0)^2C_a}{(p^0 + q^0_2 +\cdots + q^0_n)^2}  \:.$$
The latter function is defined for $p^0, q^0_2, \ldots, q_n^0 \geq m$ and reaches its maximum value
$C_an^{-2}$
 on this boundary as follows by direct inspection.
Putting all together, if $\Psi,\Psi'\in \gS_0\cap \cH^{(n)} $, the $\epsilon, n$-uniform bound holds
$$|S_{\epsilon,3}(\Phi,\Phi')|\leq  \sum_a K_a \left(\int_{\bM} f^2 d^4x\right) ||\Phi|| ||\Phi'||\:.$$
We end up with the estimate, for some $C< +\infty$ independent of $\epsilon\geq 0$, $n\in \bN$, $f\in \cD_\bR(\bM)$, and $\Delta$ (but depending on $u,u_\Sigma$),
$$|T^\Delta_{\epsilon, 3}(\Psi,\Psi')|\leq   \left(\int_{\bM} f^2 d^4x\right)  C ||\Psi|| ||\Psi'|| \quad \mbox{if}\:\: \Psi,\Psi' \in \cH^{(n)}  \cap \gS_0\:.$$
In view of Lemma \ref{LEMMAR} we have that, if $n=0,1,\ldots$, there is a unique $\sA_{\epsilon, 3,n}(\Delta) \in \gB(\cH^{(n)})$ such that $\langle \Psi| \sA_{\epsilon, 3,n}(\Delta) \Psi' \rangle = T^\Delta_{\epsilon, 3}(\Psi,\Psi')$ for $\Psi, \Psi' \in \cH^{(n)} \cap \gS_0$.  
 $\sA_{\epsilon, 3,0}$ is the zero operator. These maps are defined for $\epsilon\geq 0$ and also if $|\Delta|=+\infty$.
We are now allowed to view the found operators as bounded linear maps
$\sA_{\epsilon, 3,n}(\Delta) : \cH^{(n)} \to \gF_s(\cH_m)$, so that $\cH^{(n)}$ is an invariant space, and this is consistent with the identity $\langle \Psi| \sA_{\epsilon,3,n}(\Delta) \Psi' \rangle = T^\Delta_{\epsilon,3}(\Psi,\Psi')$ for $\Psi'\in \cH^{(n)} \cap \gS_0$ and $\Psi\in \gS_0$, because the quadratic form 
only sees the $n$-th component.
Since the norms of the operators $\sA_{\epsilon,3,n}(\Delta)$ are $\epsilon,n$-uniformly bounded, the spaces $\cH^{(n)}$ orthogonally decompose $\gF_s(\cH_m)$, and the images of operators $\sA_{\epsilon,3,n}(\Delta)$ are mutually orthogonal, Lemma \ref{LEMMAS} entails that, for every $\epsilon\geq 0$ and $\Delta \in \cB(\Sigma)$, there is a unique $\sA_{\epsilon,3}(\Delta) \in \gB(\gF_s(\cH_m))$ which extends these operators. By construction 
\beq \langle \Psi| \sA_{\epsilon,3}(\Delta) \Psi' \rangle = T^\Delta_{\epsilon,3}(\Psi,\Psi') \quad  \mbox{$\Psi, \Psi' \in \gS_0$}, \label{TONE}\eeq
with $T^\Delta_{0,3}(\Psi,\Psi')$ defined as discussed under (\ref{expand}).
In particular, for $\Delta \in \cB(\Sigma)$ and $\epsilon\geq 0$,
\beq |\langle \Psi| \sA_{\epsilon,3}(\Delta) \Psi' \rangle |\leq   \left(\int_{\bM} f^2 d^4x\right)  C ||\Psi|| ||\Psi'|| \quad \mbox{if}\:\: \Psi,\Psi' \in \gF_s(\cH_m) \label{TONE3}\:.\eeq
Equivalently, there is $C<+\infty$ independent of $f$ such that
$$||\sA_{\epsilon,3}(\Delta) || \leq C  \left(\int_{\bM} f^2 d^4x\right) \quad \mbox{for every $\epsilon \geq 0$, $\Delta \in \cB(\Sigma)$.}$$
This inequality proves (a1), when also assuming $\epsilon>0$ and $|\Delta|<+\infty$ to have a well-defined $\sA^u_{f,\epsilon}(\Delta)$,
since $P_n \sA^u_{f,\epsilon}(\Delta)P_n= P_n \sA_{\epsilon,3}(\Delta)P_n$. The result trivially extends to the case $|\Sigma \setminus \Delta|<+\infty$. 

We consider the further term $ T^\Delta_{\epsilon,2}$ in the expansion (\ref{expand}), and the corresponding $\sA_{\epsilon,2}(\Delta)$ in (\ref{expand2}) where we explicitly assume that $\Delta$ has finite measure and $\epsilon>0$.\beq  T^\Delta_{\epsilon, 2}(\Psi,\Psi') =  \int_\Delta\int_{\bR^6}\sp\sp e^{i (\vec{p}+\vec{k})\cdot \vec{x}}\widehat{f^2}(-p-k)\frac{ \langle W^\epsilon_u\Psi |a_{p}a_{k}W^\epsilon_u\Psi' \rangle}{4\pi} u^\mu t_{\mu0}(p,-k)\frac{d^3pd^3k}{E(p)E(k)} d^3x \nonumber\:,\eeq
 where, as before, $W^\epsilon_u= (H^u +  \epsilon I)^{-1/2} \in \gB(\gF_s(\cH_m))$.  
 In this case the quadratic form vanishes unless $\Psi' \in \cH^{(n+2)} \cap \gS_0$ and $\Psi \in \cH^{(n)} \cap \gS_0$, therefore we will assume this henceforth. With the same procedure as above, taking advantage of the isomorphisms of Hilbert spaces $J_n$ defined in (\ref{UM}),
 $T^\Delta_{\epsilon,2}(\Psi,\Psi')= S_{\epsilon,2}(\Phi,\Phi')$ can be rearranged to
\beq C_n  \int_{\bR^6} \int_{\bR^{3n}}\sp
\frac{\overline{\Phi(\vec{Q})}}{\sqrt{E^n_{\epsilon,u}(Q)}}
 \frac{{ \left(\int_\Delta e^{i(\vec{p}+\vec{k})\cdot \vec{x}} d^3x\right) \widehat{f^2}(-p-k)u^\mu t_{\mu 0}(p,-k) \Phi'(\vec{p},\vec{k} , \vec{Q})}}{\sqrt{{E^{n+2}_{\epsilon,u}(p,k, Q)} E(p)E(k) }}
 d^3pd^3k d^{3n}q \:, \label{MO}\eeq
where $E^n_{\epsilon,u}(Q):= - \sum_{j=1}^n u\cdot q_j+ \epsilon$ and
$C_n :=(4\pi)^{-1} \sqrt{(n+1)(n+2)}$ and we have also interchanged the integral in $x$ with the others since $|\Delta|<+\infty$ and in the remaining variables the functions are of Schwartz type, including $(\vec{p}, \vec{k}) \mapsto  \hat{f^2}(-p-k)$ as already observed in the proof of Proposition \ref{PROP1}.
Defining the function $H_\Delta(\vec{p},\vec{k}):= \left(\int_\Delta e^{i(\vec{p}+\vec{k})\cdot \vec{x}} d^3x\right) \hat{f^2}(-p-k)u^\mu t_{\mu 0}(p,-k)  $, this (symmetric) element of $ L^2(\bR^6, d^3pd^3k)$ induces a bounded linear map
$$L_{n+2, H_{\Delta}} : J_{n+2}(\cH^{(n+2)}) \ni \Phi \mapsto \Phi_{H_{\Delta}} \in J_n (\cH^{(n)})\:,\quad  \Phi_{H_\Delta} := \int_{\bR^6} d^3pd^3k H_\Delta(\vec{p},\vec{k})\Phi(\vec{p},\vec{k}, \vec{Q})\:.$$
The said operator 
transforms Schwartz functions to Schwartz functions and 
can be written as $L_{n+2, H_{\Delta}}  = \langle H_\Delta |\otimes I : L^2(\bR^{6}, d^3pd^3k)  \otimes L^2(\bR^{3n}, d^{3n}q) \to L^2(\bR^{3n}, d^{3n}q)$,
so that in view of Riesz' lemma $$||L_{n+2, H_{\Delta}} || =||\langle H_\Delta |\otimes I || = || \langle H_\Delta |\:|| \: ||I||=  ||H_\Delta||_{L^2(\bR^6, d^3pd^3k)} <+\infty\:.$$
Looking at (\ref{MO}) we can already conclude that, if the Hilbert space isomorphisms $J_n$ are defined in (\ref{UM}), a bounded operator which implements the considered quadratic form is $ \sA_{\epsilon,2, n+2}(\Delta) : \cH^{(n+2)}\to \cH^{(n)}$ such that, if $n\geq 0$ and $\epsilon> 0$,
$$ \sA_{\epsilon, 2, n+2}(\Delta) = \frac{1}{4\pi}\sqrt{(n+1)(n+2)}J_n^{-1} S_{\epsilon,n} L_{n+2,H_\Delta} D_{\epsilon, n+2} J_{n+2}$$
where $S_{\epsilon,n}$ is the multiplicative operator defined by the function $S_{\epsilon,n}: \vec{Q}\mapsto \frac{1}{\sqrt{E^n_{\epsilon,u}(Q)}}$, bounded by $1/\sqrt{nm + \epsilon}$, and $D_{\epsilon, n+2}$ is the analogous operator defined by the function 
$D_{\epsilon, n+2} :(\vec{p},\vec{k}, \vec{Q})\mapsto  \frac{1}{\sqrt{{E^{n+2}_{\epsilon,u}(p,k, Q)} E(p)E(k) }}$ bounded by $1/(m\sqrt{(n+2)m+\epsilon})$. We therefore have the bound for $\Psi \in \cH^{(n+2)}$ and $\Psi' \in \cH^{(n)}$ with $n\geq 0$ and $\epsilon>0$
$$||\sA_{\epsilon,2, n+2}(\Delta)|| ||\Psi|| ||\Psi'|| \leq  \frac{ ||H_\Delta||}{4\pi m^2 }\frac{\sqrt{(n+1)(n+2)}}{\sqrt{(n+m^{-1}\epsilon)(n+2+m^{-1}\epsilon)}} ||\Psi|| ||\Psi'|| \:.$$
In summary, if $\epsilon>0$ and $n=0,1,\ldots$ we have an $n$-uniform bound
$$||\sA_{\epsilon,2, n+2}(\Delta)|| \leq   C_{\epsilon,\Delta}\:.$$
(Notice that however $C_{\epsilon, \Delta} \to +\infty$ for $\epsilon \to 0^+$.)
Let us interpret each operator $\sA_{\epsilon,2,n}(\Delta)$ as defined on the domain $\cH^{(n)}$ to the whole $\gF_s(\cH_m)$, and define 
$\sA_{\epsilon,2,0}(\Delta)= \sA_{\epsilon,2,1}(\Delta)=0$. Since $Ran(\sA_{\epsilon,2,r}) \perp Ran(\sA_{\epsilon,2,s})$ if $r\neq s$,
with the same procedure as for the case of $T^\Delta_{\epsilon,3}$ based on the two initially proved lemmata, we can conclude that there exists 
 a unique $\sA_{\epsilon,2}(\Delta) \in \gB(\gF_s(\cH_m))$ which extends the family of operators $\sA_{\epsilon, 2, n}(\Delta):  \cH^{(n)} \to \cH^{(n-2)} \subset \gF_s(\cH_m)$, $n=0,1,2, \ldots$ (where $\cH^{(n-2)}=\{0\}$ if $n<2$).  By construction, if $\epsilon>0$,
\beq T^\Delta_{\epsilon,2}(\Psi,\Psi') = \langle \Psi| \sA_{\epsilon,2}(\Delta) \Psi' \rangle \quad   \mbox{for $\Psi, \Psi' \in \gS_0$ and} \quad || \sA_{\epsilon,2}(\Delta) ||  \leq C_{\epsilon,\Delta}\:.   \label{TTWO}\eeq
Differently from the case of $\sA_{\epsilon,3}(\Delta)$, the norm of this operator diverges as $\epsilon \to 0^+$. This is only 
due to the restriction $\sA_{\epsilon,2}(\Delta)\spa\rest_{\cH^{(2)}}= \sA_{\epsilon,2,2}(\Delta):\cH^{(2)}\to \cH^{(0)}$.

We conclude with the analysis of the term $T^\Delta_{\epsilon,1}$ in the expansion (\ref{expand}), i.e. $\sA_{\epsilon,1}(\Delta)$ in (\ref{expand2}), where, again, we explicitly assume that $\Delta$ has finite measure.\beq  T^\Delta_{\epsilon,1}(\Psi,\Psi'):=  \int_\Delta\int_{\bR^6} \sp\sp e^{-i (\vec{p}+\vec{k})\cdot \vec{x}}\widehat{f^2}(p +k)\frac{ \langle W^\epsilon_u\Psi |a^\dagger_{p}a^\dagger_{k}W^\epsilon_u\Psi' \rangle}{4\pi} u^\mu t_{\mu0}(p,-k)\frac{d^3pd^3k}{E(p)E(k)} d^3x \nonumber\:,\eeq
 where, as before, $W^\epsilon_u= (H^u +  \epsilon I)^{-1/2} \in \gB(\gF_s(\cH_m))$.  An elementary procedure based on the observation that $\overline{\hat{f^2}(q)}= \hat{f^2}(-q)$ because $f$ is real, and taking the property 
$ \langle W^\epsilon_u\Psi |a^\dagger_{p}a^\dagger_{k}W^\epsilon_u\Psi' \rangle =
 \overline{\langle W^\epsilon_u\Psi' |a_{p}a_{k}W^\epsilon_u\Psi \rangle}$ into account, proves that the unique wanted operator $\sA_{\epsilon,1}(\Delta) \in \gB(\gF_s(\cH_m))$ such that 
\beq \langle \Psi| \sA_{\epsilon,1}(\Delta) \Psi' \rangle = T^\Delta_{\epsilon,1}(\Psi,\Psi') \quad  \mbox{$\Psi, \Psi' \in \gS_0$}. \label{TTHREE}\eeq
is exactly $\sA_{\epsilon,1}(\Delta) = \sA_{\epsilon,2}(\Delta)^\dagger$.

Item (e) can be finally proved as follows. The operators are bounded from below as an immediate consequence of their definition, the uniform bound from below established in item (e) of Proposition \ref{PROPTTR} and, obviously, the fact that $(H^u_\epsilon)^{-1}\leq \epsilon^{-1}I$. \\

\noindent {\bf Proof of Lemma \ref{LEMMAR}}.
 If $\psi_2 \in \cD_2$ extend by continuity $K(\cdot, \psi_2)$ to the whole $\cH_1$.
Use the Riesz lemma to define $A\psi_2\in \cH_1$, then show that it is $\psi_2$-linear and bounded by $||A\psi_2|| \leq ||\psi_2|| C$, and (uniquely) extend $\cD_2 \ni \psi_2 \mapsto A\psi_2 \in \cH_1$ to the whole $\cH_2$ by linearity and continuity.  \hfill $\Box$\\

\noindent {\bf Proof of Lemma \ref{LEMMAS}}. Define $\cD$ as the subspace of finite linear combinations of elements in the spaces $\cH_j$. $\cD$ is a dense subspace of $\cH$. As a consequence, every $A\in \gB(\cH)$ is fixed by its restriction to $\cD$, and also by the restrictions to the single spaces $\cH_j$, $j\in J$. This proves that, if an operator exists as in the hypothesis, it must be unique.
Now consider $\psi = \sum_j \psi_j  \in \cD$, where $\psi_j \in \cH_j$, and define $A': \cD \to \cH$ as $A'\psi := \sum_j A_j\psi_j$ (the sum being finite by construction). Since the vectors $A_j\psi_j$ are mutually orthogonal, it holds
$||A'\psi||^2 = \sum_j ||A_j\psi_j||^2 \leq \sum_j ||A_j||^2 ||\psi_j||^2 \leq (\sup_{j\in J} ||A_j||)^2 \sum_j ||\psi_j||^2 \leq  (\sup_{j\in J} ||A_j||)^2 ||\psi||^2$, where we also used $||\psi||^2 = \sum_j ||\psi_j||^2$. Therefore it must be $||A'|| \leq \sup_{j\in J} ||A_j|| < +\infty$ and, since $\cD$ is dense in $\cH$, it extends to a unique $A \in \gB(\cH)$ with $||A||= ||A'|| \leq \sup_{j\in J} ||A_j||$. $A$ is the wanted operator. If it were $||A|| =s  <  \sup_{j\in J} ||A_j||$, we would have 
$||A_j||  =  ||A\spa\rest_{\cH_j}|| \leq ||A|| = s$ and thus the contradiction $\sup_{j\in J} ||A_j|| \leq s < \sup_{j\in J} ||A_j||$. 
Now, according to the above orthogonal decompositions, define $A^{(N)}\psi := \sum_{j\leq N} A_j \psi_j$. We have $||A\psi -A^{(N)}\psi||^2= 
\sum_{j>N} ||A_j\psi_j||^2 \leq
\sum_{j>N} ||A||^2 ||\psi_j||^2 \to 0$ if $N\to +\infty$.
 \hfill $\Box$\\

\noindent{\bf Proof of Proposition \ref{COS}}.
We start with the existence and uniqueness proof and (a). Taking a Minkowskian coordinate system where $\Sigma$ is described by $x^0=0$, define $f_\Delta(x^0, \vec{x}):= \int_{\Delta} f^2(x^0, \vec{x}-\vec{y}) d^3y$ with $f\in \cD_\bR(\bR^4)$. Since the Borel set $\Delta\subset \bR^3$ is bounded, we have that $f_\Delta \in \cD_\bR(\bR^4)$. At this point define $\sH_f^{u}(\Delta)$ directly by (\ref{PART}) so that invariance of $\gS_0$ is automatic as well as the symmetry property, since $f_\Delta$ is real and (b) of Proposition \ref{PROP1} holds.
Now, (\ref{HD}) (namely (\ref{PART})) immediately follows from (\ref{TT}) by invoking Fubini's theorem.  
This concludes the existence part of the proof, the uniqueness being obvious since 
 $\Psi$ in (\ref{HD}) varies in the dense domain $\gS_0$. The covariance property in (a) immediately follows from the general covariance property of the stress-energy tensor operator. (f) Proposition \ref{PROP1}, taking (\ref{PART}) into account, proves essential selfadjointness.\\
(b) {If  $\sH_f^{u}(\Delta) =  :\spa \hat{T}_{\mu\nu} \spa: [f_\Delta] u^\mu   u^\nu_\Sigma \neq 0$, then  it  cannot be positive as  immediately follows from 
Proposition \ref{XYZ}.}\\
(c)  (\ref{HD2gen}) is a consequence of the divergence theorem and the conservation equation in (c) of Proposition \ref{PROPTTR}.\\
 (d)
First of all notice that the composition 
$\sH_f^{u}(\Delta) (H_\epsilon^{u})^{-1/2}$ is well defined on $\gS_0$ since $(H_\epsilon^{u})^{-1/2}$ leaves that space invariant.
The stated identity is immediate since the two operators in the thesis coincide on $\gS_0$ by construction.  $\Box$\\

\noindent {\bf Proof of Lemma \ref{LEMMALEMME}}. It is sufficient to apply Eq.(3) in \cite{F12} referring to the explicit computations presented in Sec.2.5 therein, specialized to the Minkowski spacetime and using as reference state the Poincar\'e-invariant state we indicated by $\Omega$. (The only necessary hypothesis is that the two-point function of the considered normalized state vector $\Psi$ is such that $\bM\times \bM \ni (x,y) \mapsto \langle \Psi| \hat{\phi}(x) \hat{\phi}(y)\Psi \rangle - \langle \Omega| \hat{\phi}(x) \hat{\phi}(y) \Omega\rangle  \in C^\infty(\bM\times \bM)$. For $\Psi\in \gS_0$ this fact is guaranteed by Proposition \ref{PROP2X}.)
In this case the partial differential operator $Q$ is the operator 
$u^\mu u'^\nu D_{\mu\nu}$ (\ref{Dmunu}) treated as in the proof of Proposition \ref{TEOB}:
 $u^\mu u'^\nu D_{\mu\nu}= c^2D'_{00} -s^2 D'_{11} $
$$= \frac{c^2-s^2}{2}(Q_0\otimes Q_0 + Q_1\otimes Q_1) + \frac{c^2+s^2}{2}
(Q_2\otimes Q_2 + Q_3\otimes Q_3 + Q_4 \otimes Q_4)$$
($c= \cosh \chi$, $s=\sinh\chi$ for some $\chi\geq 0$ and we have omitted the specifications $_{u,u'}$ and $^{u,u'}$ used in the proof of Proposition \ref{TEOB})
so that it is a positive linear combination of products of real first-order (at most) differential operators as requested. In our concrete case $u= \partial_0$.
With these choices, following the first part of Sec.2.5 in \cite{F12} we have that, where $k= (\omega, \vec{k})$ with $\omega=\sqrt{\vec{k}^2 +m^2}$, 
$$\int \langle \Psi | :\spa T_{\mu\nu}\spa: (x^0,\vec{x})\Psi \rangle u^\mu u'^\nu h'(x^0)^2  dx^0 \geq -\int_0^{+\infty} \frac{d\alpha}{\pi} \int_0^{+\infty} \frac{d^3k}{(2\pi)^3} \frac{2\omega(u'^0\omega - \vec{k}\cdot\vec{u}')}{4\omega} |\hat{h'}(\omega +\alpha)|^2\:.$$
Integration of $\vec{k}\cdot\vec{u}'/\omega$ in the angular variables of $d^3k= k^2 \sin \theta dk d \theta d\phi$ (choosing $z$ parallel to $\vec{u}'$) gives a vanishing contribution. The remaining integration, following the same route as in the first part of Sec.2.5 in \cite{F12}, furnishes
$$\int \langle \Psi | :\spa T_{\mu\nu}\spa: (x^0,\vec{x})\Psi \rangle  u^\mu u'^\nu h'(x^0)^2  dx^0 \geq -u'^0 \frac{1}{16 \pi^3} \int_{m}^{+\infty} 
|\hat{h'}(s)|^2 s^4 Q_3(s) ds\:,$$
that is our thesis since $u' = (u'^0, \vec{u}')$ with $u'^0= \sqrt{1+ \vec{u}'^2}$. $\Box$\\

\noindent{\bf Proof of Lemma \ref{F2}}.
(a) 
Since $A$ is essentially selfadjoint and non-negative, $\overline{A}$ is a non-negative selfadjoint operator. Hence its positive
square root $\sqrt{\overline{A}}$ is well defined, with domain
$D(\sqrt{\overline{A}})$.
We claim that
$
\overline{\operatorname{Ran}(\sqrt{\overline{A}})}
=
\overline{\sqrt{\overline{A}}(D(A))}.
$
First, since
$
D(A)\subset D(\overline{A})\subset D(\sqrt{\overline{A}}),
$
we immediately have
$
\sqrt{\overline{A}}(D(A))\subset \operatorname{Ran}(\sqrt{\overline{A}}),
$
and therefore
$
\overline{\sqrt{\overline{A}}(D(A))}
\subset
\overline{\operatorname{Ran}(\sqrt{\overline{A}})}.
$
Conversely, since $A$ is non-negative and essentially self-adjoint, the
closure $\overline{A}$ coincides with the Friedrichs extension of $A$.
Equivalently, $D(A)$ is a form core for $\sqrt{\overline{A}}$. Thus $D(A)$ is
dense in $D(\sqrt{\overline{A}})$ with respect to the form norm
$
\|u\|_{\sqrt{\overline{A}}}^2
:=
\|u\|^2+\|\sqrt{\overline{A}}u\|^2.
$
Therefore, for every $u\in D(\sqrt{\overline{A}})$, there exists a sequence
$(u_n)_{n\in\mathbb N}\subset D(A)$ such that
$
u_n\to u
$
and
$
\sqrt{\overline{A}}u_n\to \sqrt{\overline{A}}u
$
in the Hilbert space norm. Hence
$
\sqrt{\overline{A}}u\in \overline{\sqrt{\overline{A}}(D(A))}.
$
Since $u\in D(\sqrt{\overline{A}})$ was arbitrary, we obtain
$
\operatorname{Ran}(\sqrt{\overline{A}})
\subset
\overline{\sqrt{\overline{A}}(D(A))}.
$
Taking closures gives
$
\overline{\operatorname{Ran}(\sqrt{\overline{A}})}
\subset
\overline{\sqrt{\overline{A}}(D(A))}.
$
Combining the two inclusions, we conclude that
$
\overline{\operatorname{Ran}(\sqrt{\overline{A}})}
=
\overline{\sqrt{\overline{A}}(D(A))}
$.
At this point, using the fact that $\sqrt{\overline{A}}$ is selfadjoint, we have $\overline{\sqrt{\overline{A}}(D(A))} = \overline{Ran(\sqrt{\overline{A}})} = Ker(\sqrt{\overline{A}})^\perp = Ker(\overline{A})^\perp$, the last identity arising from $Ker(B) = Ker(\sqrt{B})$ which holds if $B^\dagger =B\geq 0$ by spectral calculus using $D(B)\subset D(\sqrt{B})$.\\
(b) Since $A-cI\geq 0$ on $D(A)$, its closure $\overline{A-cI}=\overline{A}-cI$ is non-negative. Hence $\sigma(\overline{A}) \geq c >0$ by spectral calculus.
In particular $Ker(\overline{A})=\{0\}$, so that $\overline{\sqrt{\overline{A}}(D(A))} = \overline{Ran(\sqrt{\overline{A}})} = Ker(\sqrt{\overline{A}})^\perp = Ker(\overline{A})^\perp= \cH$, but also $0\in \rho(\overline{A}^\alpha)$ for $\alpha \geq 0$ and the considered operators are closed since they are selfadjoint, so that
$Ran(\overline{A}^\alpha) = \overline{Ran(\overline{A}^\alpha)} = Ker(\overline{A}^\alpha)^\perp = \cH$, in particular, for $\alpha=1/2$ the identity in (b) follows. Again, since $0\in \rho(\overline{A}^\alpha)$
it holds $\overline{A}^{-\alpha} \in \gB(\cH)$.
$\Box$\\

\noindent {\bf Proof of Lemma \ref{LEMMAZd}}.  Observing that $D(A^0) = D(A)  \subset D(\overline{A^0})\cap D(\overline{A}) \subset D(\sqrt{\overline{A^0}})$,
the only item to be proved is the first inequality in (a). Next (b) is a straightforward consequence of (a), defining the operator $1/\sqrt{\overline{A^0}}$ via functional calculus. (c) follows from Lemma \ref{F2}; in particular the last sentence is a trivial consequence of the previous part.  Let us prove the first inequality in (a).   Take $c>0$ and define the operators $A^0+cI \geq 0$ and $A+cI \geq 0$, and consider the associated forms
 $a(x) := \langle x|(A+cI)x\rangle $ and $b(x) := \langle x|(A^0+cI)x\rangle$. By construction, every Cauchy sequence in the form norm $||\cdot||_b$ is Cauchy in the form norm $||\cdot||_a$. As a consequence, passing to the form completions, which are respectively the closed forms of the Friedrichs extensions of the considered operators  $(A+cI)_F = A_F+cI= \overline{A}+cI$ and $(A^0+cI)_F =(A^0)_F+cI = \overline{A^0}+cI$ where, for the first identities  we have used  known properties of Friedrichs extensions \cite{FE}, and for the second identities the fact that the operators are essentially selfadjoint, so their closures coincide with their Friedrichs extensions. Hence
 we have $D(\sqrt{\overline{A^0}+cI})\subset D(\sqrt{\overline{A}+cI})$. It immediately follows from spectral calculus that $x \in D(\sqrt{\overline{A^0}})$ implies 
$x \in D(\sqrt{\overline{A^0}+ cI})$.
If $x \in D(\sqrt{\overline{A^0}+ cI})$ there is a $||\cdot||_b$-Cauchy sequence $D(A)\ni x_n \to x$ such that  $\langle x_n | (A^0+cI) x_n\rangle \to \langle \sqrt{\overline{A^0}+ cI} x| \sqrt{\overline{A^0}+ cI} x\rangle$.
This sequence is also Cauchy for the other norm $||\cdot||_a$ and $\langle x_n | (A+cI) x_n\rangle \to \langle \sqrt{\overline{A}+ cI} x| \sqrt{\overline{A}+ cI} x\rangle$. Since $\langle x_n | (A^0+cI) x_n\rangle \geq 
\langle x_n | (A+cI) x_n\rangle$ we conclude that $\langle \sqrt{\overline{A^0}+ cI} x| \sqrt{\overline{A^0}+ cI} x\rangle \geq \langle \sqrt{\overline{A}+ cI} x| \sqrt{\overline{A}+ cI} x\rangle$.
Passing to the spectral representations $\int_{\bR^+} (\lambda +c) \langle x|P^0(d\lambda)x\rangle \geq \int_{\bR^+} (\lambda +c) \langle x|P(d\lambda)x\rangle$ is valid for every $c>0$, so that $\int_{\bR^+} \lambda \langle x|P^0(d\lambda)x\rangle \geq \int_{\bR^+} \lambda  \langle x|P(d\lambda)x\rangle$ which means 
$\langle \sqrt{\overline{A^0}}x |  \sqrt{\overline{A^0}}x \rangle \geq \langle \sqrt{\overline{A}}x |  \sqrt{\overline{A}}x \rangle$.  By hypothesis this is valid if $x \in D(\sqrt{\overline{A^0}})$ as wanted. $\Box$\\
 
    \end{document}